\newtheorem{theorem}{Theorem}
\newtheorem{lemma}[theorem]{Lemma}
\newtheorem{proposition}[theorem]{Proposition}
\newtheorem{fact}[theorem]{Fact}
\theoremstyle{definition} 
\newtheorem{definition}{Definition}
\newcommand{\FormatEntitySet}[1]{\ensuremath{\mathsf{#1}}\xspace}
\newcommand{\FormatFormulaSet}[1]{\ensuremath{\mathcal{#1}}\xspace}
\newcommand{\FormatFormulaSetOfSets}[1]{\ensuremath{\mathfrak{#1}}\xspace}
\newcommand{\FormatFunction}[1]{\ensuremath{\mathit{#1}}\xspace}
\newcommand{\FF}[1]{\FormatFunction{#1}}
\newcommand{\FormatPredicate}[1]{\ensuremath{\mathtt{#1}}\xspace}
\newcommand{\FP}[1]{\FormatPredicate{#1}}
\newcommand{\Tuple}[1]{\ensuremath{\langle #1 \rangle}\xspace}
\newcommand{\Vector}[1]{\ensuremath{\vec{#1}}\xspace}
\newcommand{\Vx}{\Vector{x}}
\newcommand{\Vz}{\Vector{z}}
\newcommand{\Vv}{\Vector{v}}
\newcommand{\Vt}{\Vector{t}}
\newcommand{\VT}{\Vector{T}}
\newcommand{\Vu}{\Vector{u}}
\newcommand{\Va}{\Vector{a}}
\newcommand{\Vb}{\Vector{b}}
\newcommand{\VC}{\smash{\Vector{C}}}
\newcommand{\FirstItem}{(i)\xspace}
\newcommand{\SecondItem}{(ii)\xspace}
\newcommand{\ThirdItem}{(iii)\xspace}
\newcommand{\Variables}{\FormatEntitySet{Vars}}
\newcommand{\Nulls}{\FormatEntitySet{Nulls}}
\newcommand{\Predicates}{\FormatEntitySet{Preds}}
\newcommand{\Terms}{\FormatEntitySet{Terms}}
\newcommand{\EntitiesIn}[2]{\ensuremath{#1(#2)}\xspace}
\newcommand{\EI}[2]{\EntitiesIn{#1}{#2}}
\newcommand{\Arity}{\FF{Ar}}
\newcommand{\MaxArity}{{\bar{m}}} 
\newcommand{\PredsPerArity}{{\bar{n}}} 
\newcommand{\Body}{\ensuremath{\beta}\xspace}
\newcommand{\Head}{\ensuremath{\eta}\xspace}
\newcommand{\Rule}{\ensuremath{\rho}\xspace}
\newcommand{\FactSet}{\FormatFormulaSet{F}}
\newcommand{\Database}{\FormatFormulaSet{D}}
\newcommand{\D}{\Database}
\newcommand{\DatabaseAux}{\FormatFormulaSet{E}}
\newcommand{\DA}{\DatabaseAux}
\newcommand{\RuleSet}{\FormatFormulaSet{R}}
\newcommand{\R}{\RuleSet}
\newcommand{\Interpretation}{\FormatFormulaSet{I}\xspace}
\newcommand{\I}{\Interpretation}
\newcommand{\InterpretationJ}{\FormatFormulaSet{J}\xspace}
\newcommand{\J}{\InterpretationJ}
\newcommand{\Model}{\FormatFormulaSet{M}}
\newcommand{\M}{\Model}
\newcommand{\ISet}{\FormatFormulaSetOfSets{I}}
\newcommand{\IS}{\ISet}
\newcommand{\ModelSet}{\FormatFormulaSetOfSets{M}}
\newcommand{\MS}{\ModelSet}
\newcommand{\Fragment}{\FormatFormulaSetOfSets{F}}
\newcommand{\Homomorphism}{\ensuremath{\textit{h}}\xspace}
\newcommand{\Hom}{\Homomorphism}
\newcommand{\HomomorphismAux}{\ensuremath{\textit{g}}\xspace}
\newcommand{\HomA}{\HomomorphismAux}
\newcommand{\Query}{\FormatFormulaSetOfSets{Q}}
\newcommand{\Blank}{\text{\textvisiblespace}}
\newcommand{\Separator}{\mathord{\parallel}}
\newcommand{\StartingState}{\ensuremath{q_S}\xspace}
\newcommand{\AcceptingState}{\ensuremath{q_A}\xspace}
\newcommand{\RejectingState}{\ensuremath{q_R}\xspace}
\newcommand{\States}{\ensuremath{Q}\xspace}
\newcommand{\Alphabet}{\ensuremath{\Gamma}\xspace}
\newcommand{\TransitionFunction}{\ensuremath{\delta}\xspace}
\newcommand{\TM}{\ensuremath{M}\xspace}
\newcommand{\ETM}{\ensuremath{P}\xspace}
\newcommand{\PrGoal}{\FormatPredicate{Goal}}
\newcommand{\PrWorldGoal}{\FormatPredicate{Acc}}
\newcommand{\PrHalt}{\FormatPredicate{Halt}}
\newcommand{\PrBrake}{\FormatPredicate{Brake}}
\newcommand{\PrReal}{\FormatPredicate{Real}}
\newcommand{\PrBrakeBody}{\FormatPredicate{B}}
\newcommand{\PrDomain}{\FormatPredicate{DbDom}}
\newcommand{\PrFreshIDBIn}[1]{\ensuremath{\FP{In}_{#1}}\xspace}
\newcommand{\PrFreshIDBNotIn}[1]{\ensuremath{\FP{NIn}_{#1}}\xspace}
\newcommand{\PrFreshIDBInTape}[1]{\ensuremath{\FP{In}_{#1}'}\xspace}
\newcommand{\PrFreshIDBNotInTape}[1]{\ensuremath{\FP{NIn}_{#1}'}\xspace}
\newcommand{\PrFirst}{\FormatPredicate{First}}
\newcommand{\PrLast}{\FormatPredicate{Last}}
\newcommand{\PrOrder}{\FormatPredicate{LT}}
\newcommand{\PrRoot}{\FormatPredicate{Root}}
\newcommand{\PrLeaf}{\FormatPredicate{Leaf}}
\newcommand{\PrHasChild}{\FormatPredicate{Chi}}
\newcommand{\PrLink}{\FormatPredicate{Rep}} 
\newcommand{\PrEncoding}{\FormatPredicate{Enc}}
\newcommand{\PrCopy}{\FormatPredicate{Cpy}}
\newcommand{\PrCopyPlusOne}{\ensuremath{\PrCopy_{\!+1}}\xspace}
\newcommand{\PrNext}{\FormatPredicate{Nxt}}
\newcommand{\PrNextTrans}{\ensuremath{\FormatPredicate{Nxt}^{\!+}}\xspace}
\newcommand{\PrEndTape}{\FormatPredicate{End}}
\newcommand{\PrStep}{\FormatPredicate{Stp}}
\newcommand{\PrCell}[1]{\ensuremath{\FormatPredicate{S}_{#1}}}
\newcommand{\PrHead}[1]{\ensuremath{\FormatPredicate{Hd}_{#1}}}
\newcommand{\PrLoad}[1]{\ensuremath{\FormatPredicate{Ld}_{#1}}\xspace}
\newcommand{\PrLoadEncoding}{\ensuremath{\FormatPredicate{LdE}}\xspace}
\newcommand{\PrReady}[1]{\ensuremath{\FormatPredicate{Rdy}_{#1}}\xspace}
\newcommand{\PrEq}{\ensuremath{\FormatPredicate{Eq}}\xspace}
\newcommand{\PrNeq}{\ensuremath{\FormatPredicate{NEq}}\xspace}
\newcommand{\PrP}{\ensuremath{\FormatPredicate{p}}\xspace}
\newcommand{\PrQ}{\ensuremath{\FormatPredicate{q}}\xspace}
\newcommand{\PrR}{\ensuremath{\FormatPredicate{R}}\xspace}
\newcommand{\PrEdge}{\ensuremath{\FormatPredicate{ed}}\xspace}
\newcommand{\PrIns}[1]{\ensuremath{\FormatPredicate{Ins_{#1}}}\xspace}
\newcommand{\PrHatted}[1]{\ensuremath{\FormatPredicate{#1'}}\xspace}
\newcommand{\PrInit}{\ensuremath{\FormatPredicate{Init}}\xspace}
\newcommand{\PrDone}{\ensuremath{\FormatPredicate{Done}}\xspace}
\newcommand{\PrEmpty}{\ensuremath{\FormatPredicate{Empty}}\xspace}
\newcommand{\PrSubs}{\ensuremath{\FormatPredicate{Subs}}\xspace}
\newcommand{\FirstNull}{\ensuremath{u_\alpha}\xspace}
\newcommand{\LastNull}{\ensuremath{u_\omega}\xspace}
\newcommand{\FunStartingConf}{\FF{StConf}}
\newcommand{\FunComputation}{\FF{Comp}}
\newcommand{\FunOrders}{\FF{Ords}}
\newcommand{\FunCompletions}{\FF{Compls}}
\newcommand{\FunSeed}{\FF{Seed}}
\newcommand{\FunOrder}{\FF{Order}}
\newcommand{\FunInterpretation}[1]{\ensuremath{\FF{Int}_{#1}}\xspace}
\newcommand{\FunDatabase}{\FF{Db}}
\newsavebox{\@brx}
\newcommand{\llangle}[1][]{\savebox{\@brx}{\(\m@th{#1\langle}\)}%
  \mathopen{\copy\@brx\kern-0.5\wd\@brx\usebox{\@brx}}}
\newcommand{\rrangle}[1][]{\savebox{\@brx}{\(\m@th{#1\rangle}\)}%
  \mathclose{\copy\@brx\kern-0.5\wd\@brx\usebox{\@brx}}}
\title{Capturing Homomorphism-Closed Decidable Queries with Existential Rules}
\author{
Camille~Bourgaux$^1$\and
David~Carral$^2$\and
Markus~Kr\"{o}tzsch$^3$\and
Sebastian~Rudolph$^3$\and
Micha\"{e}l~Thomazo$^1$
\affiliations
$^1$ DI ENS, ENS, CNRS, PSL University \& Inria, Paris, France \\
$^2$ LIRMM, Inria, University of Montpellier, CNRS, Montpellier, France \\
$^3$ Technische Universit\"at Dresden, Dresden, Germany \\
\emails
\{camille.bourgaux, david.carral, michael.thomazo\}@inria.fr,\\ \{markus.kroetzsch,  sebastian.rudolph\}@tu-dresden.de
}
\begin{document}

\maketitle

\begin{abstract}
Existential rules are a very popular ontology-mediated query language for which the chase represents a generic computational approach for query answering. 
It is straightforward that existential rule queries exhibiting chase termination are decidable and can only recognize properties that are preserved under homomorphisms. 
In this paper, we show the converse: every decidable query that is closed under homomorphism can be expressed by an existential rule set for which the standard chase universally terminates. Membership in this fragment is not decidable, but we show via a diagonalisation argument that this is unavoidable.
%
%
%
\end{abstract}

\section{Introduction}
\label{section:introintro}


At the core of contemporary logic-based knowledge representation is the concept of \emph{querying} data sources, often using elaborate query formalisms that allow for taking background knowledge into account. The classical decision problem related to such knowledge-aware querying is \emph{Boolean query entailment}. From an abstract point of view, a Boolean query identifies a class of \emph{databases} $\Database$ -- those that satisfy the query, i.e., to which the query ``matches''. This view allows us to define and investigate properties of (abstract) queries independently from the syntax used to specify them. Such properties can be structural (morphisms, closure properties) or computational (decidability, complexity).

A very popular querying formalism are \emph{existential rules}, also referred to as \emph{tuple-generating dependencies}.
It is straightforward that the class of databases satisfying some existential rule query is closed under homomorphisms and recursively enumerable.
Conversely, it was established that \textsl{every} homomorphism-closed query that is recursively enumerable can be expressed using existential rules \cite{DBLP:conf/ijcai/RudolphT15}.
That is, plain existential rules already realize their full potential; further syntactic extensions within these boundaries do not enhance expressivity.

For questions related to automated deduction, however, a more restricted requirement than recursive enumerability is of central interest: decidability.
Therefore, the crucial question we tackle in this paper is:

\smallskip

\noindent\emph{Can we characterize an existential rules fragment capable of expressing \textsl{every} decidable homomorphism-closed query?}

\smallskip

\noindent The generic computational paradigm for existential rules, the \emph{chase} \cite{BeeriVardi:Chase84}, is based on repetitive, for\-ward-chaining rule application, starting from the database.
As this may cause the iterated introduction of new domain elements, this procedure is not guaranteed to terminate -- yet, termination is a crucial criterion for decidability.  
The chase comes in several variants, mainly differing in their (increasingly thorough) mechanisms to prevent unnecessary rule applications: 
While the \emph{Skolem} chase \cite{Marnette09:superWA} essentially just avoids duplicate rule applications, the \emph{standard} \cite{FaginKMP05} and the \emph{core} chase \cite{DNR08:corechase} check for redundancy on a local and global level, respectively.

The class of existential rule sets with terminating\footnote{We always mean universal termination, i.e., for every database.} Skolem chase has already been weighed and found wanting:
it only comprises those queries that are already expressible in plain Datalog -- and hence can be evaluated in polynomial time \cite{Marnette09:superWA,KR11:jointacyc,ZZY15:finitechase}. 
For the standard-chase-terminating and the core-chase-terminating existential rules classes, on the other hand, we only know that the former is contained in the latter \cite{DBLP:journals/fuin/GrahneO18}, but little more than that \cite{DBLP:conf/icdt/KrotzschMR19}.
In this paper, we clarify the situation significantly by showing the following:

\smallskip

\noindent\emph{Standard-chase-terminating existential rules capture the class of all decidable homomorphism-closed queries.}

\smallskip

Notably, this implies that standard-chase-terminating and core-chase-terminating existential rule queries are equally expressive and no decidable enhancement of this formalism that preserves homomorphism-closedness (e.g. by allowing disjunction in rule heads) can be strictly more expressive. 

As a downside, the existential rules fragment thus identified is not even semi-decidable, but we show via a diagonalisation argument that this downside is, in fact, unavoidable.


Additional proofs and details are given in the appendix. 
\section{Preliminaries}\label{sec_prelims}

\paragraph*{Rules}
We consider first-order formulas over countably infinite sets \Variables of variables
and $\Predicates$ of \emph{predicates}, where
each $\PrP \in \Predicates$ has an arity $\Arity(\PrP) \geq 0$.
Lists of variables are denoted $\vec{x}=x_1,\ldots,x_k$ and will be
treated like sets when order is not relevant.
An \emph{atom} is an expression $\PrP(\vec{x})$ with $\PrP \in \Predicates$
and $|\vec{x}|=\Arity(\PrP)$.

The fragment of \emph{disjunctive existential rules} consists of formulae of the form:
\begin{align}
\forall \vec{x}.\Big(\Body[\vec{x}] &\to \bigvee\nolimits_{\!\!i=1}^k \exists\vec{y}_i. \Head_i[\vec{x}_i,\vec{y}_i]\Big), \label{eq_rule}
\end{align}
where $\Body[\vec{x}]$ and $\Head_i[\vec{x}_i,\vec{y}_i]$ ($i=1,\ldots,k$) are conjunctions of atoms with variables
$\vec{x}$ and $\vec{x}_i\cup\vec{y}_i$, respectively. We call $\Body$ \emph{body} and $\bigvee\nolimits_{i=1}^k \exists\vec{y}_i.\Head_i$ \emph{head}. Bodies can be empty (we then omit $\to$), but heads must be non-empty.
We require that $\vec{x}$ and $\vec{y_i}$ ($i=1,\ldots,k$) are mutually disjoint and that
$\vec{x}_i\subseteq \vec{x}$ for all $i=1,\ldots,k$.
We single out the fragment of \emph{existential rules} by disallowing disjunction, i.e.\ requiring $k=1$, and \emph{Datalog rules} by disallowing existential quantifiers.
We often omit the universal quantifiers from rules and treat conjunctions of atoms as sets of atoms.

\paragraph*{Databases, Interpretations, and Entailment}
The semantics of formulas is based on logical interpretations, which we define as relational structures over a
countably infinite set $\Nulls$ of \emph{nulls}.
A \emph{schema} $\mathcal{S}$ is a finite set of predicates.
An \emph{interpretation} $\mathcal{I}$ over schema $\mathcal{S}$ is a set of expressions
$\PrP(\vec{n})$ with $\PrP \in \mathcal{S}$ and $\vec{n}$ a list of nulls of length $\Arity(\PrP)$.
We write $\EI{\Nulls}{\mathcal{I}}$ for the set of nulls in $\mathcal{I}$.
A \emph{database} is a finite interpretation.
See also the remarks on this notation below.

A \emph{homomorphism} $h:\mathcal{I}_1\to\mathcal{I}_2$ between interpretations $\mathcal{I}_1$ and $\mathcal{I}_2$
is a mapping $h$ from the nulls in $\mathcal{I}_1$ to the nulls in $\mathcal{I}_2$, such that
$\PrP(\vec{n})\in\mathcal{I}_1$ implies $\PrP(h(\vec{n}))\in\mathcal{I}_2$,
where $h(\vec{n})$ denotes the list of $h$-values over $\vec{n}$.
We also write $h(\mathcal{I})$ for $\{\PrP(h(\vec{n}))\mid \PrP(\vec{n})\in\mathcal{I}\}$.

A \emph{substitution} $\sigma$ is a mapping from variables to nulls, which we extend to lists of variables and formulas
as usual.
A rule $\rho$ as in \eqref{eq_rule} is \emph{satisfied} by interpretation $\mathcal{I}$
if every substitution $\sigma:\vec{x}\to\Nulls$ with $\sigma(\Body)\subseteq\mathcal{I}$
can be extended to a substitution $\sigma':\vec{x}\cup\vec{y}_i\to\Nulls$ for some $i\in\{1,\ldots,k\}$
such that $\sigma'(\Head_i)\subseteq\mathcal{I}$.
Otherwise, if $\sigma(\Body)\subseteq\mathcal{I}$ but no extension $\sigma'$ of $\sigma$ verifies $\sigma'(\Head_i)\subseteq\mathcal{I}$ for some $i\in\{1,\ldots,k\}$, 
then
$\Tuple{\rho,\sigma}$ is \emph{applicable} to $\mathcal{I}$.

$\mathcal{I}$ \emph{satisfies} a set $\Sigma$ of rules if it satisfies every rule in $\Sigma$.
An interpretation $\mathcal{J}$ is \emph{satisfied} by an interpretation $\mathcal{I}$ if there is a homomorphism
$h:\mathcal{J}\to\mathcal{I}$.
$\mathcal{I}$ is a \emph{model} of a rule/rule set/interpretation/database $\mathcal{X}$ if $\mathcal{X}$ is satisfied by $\mathcal{I}$,
written $\mathcal{I}\models\mathcal{X}$.
As usual, we also write $\mathcal{X}\models\mathcal{Y}$ (``$\mathcal{X}$ entails $\mathcal{Y}$'') if every model of $\mathcal{X}$ is a model of $\mathcal{Y}$, 
where $\mathcal{X}$ and $\mathcal{Y}$ might be rules, rule sets, databases, or lists of several such elements.
Note that the semantics of a database $\mathcal{D}$ in this context corresponds to the semantics of
a \emph{Boolean conjunctive query} $\exists\vec{x}.\bigwedge \{\PrP(x_{n_1},\ldots,x_{n_\ell})\mid \PrP(n_1,\ldots,n_\ell)\in\mathcal{D}\}$ -- 
we will therefore not introduce such queries as a separate notion.
Also note that entailment and satisfaction between interpretations/databases coincide.

\paragraph*{Abstract Queries, Expressivity, and Decidability}
An \emph{(abstract) query} $\Query$ over a schema $\mathcal{S}$ is a set of databases over $\mathcal{S}$
that is \emph{closed under isomorphism}, i.e., such that whenever $\mathcal{D}\in\Query$ and $\mathcal{D}'$ is
obtained from $\mathcal{D}$ by bijective renaming of nulls, then $\mathcal{D}'\in\Query$.
$\Query$ is further \emph{closed under homomorphisms} if, for all $\mathcal{D}\in\Query$ and 
homomorphisms $h:\mathcal{D}\to\mathcal{D}'$, we have $\mathcal{D}'\in\Query$.

\begin{definition}\label{def_express}
Let $\PrGoal$ be a nullary predicate. 
A query $\Query$ over $\mathcal{S}$ is \emph{expressed by} a set $\Sigma$ of rules if,
for every database $\mathcal{D}$ over $\mathcal{S}$, 
we have $\mathcal{D}\in\Query$ if and only if $\Sigma,\mathcal{D}\models\PrGoal$.
\end{definition}
 
To discuss decidability of queries, we need to conceive databases as Turing machine inputs over a fixed alphabet.
A \emph{serialisation} for a schema $\mathcal{S}$ is a word $s\in(\{0,1,\Separator\}\cup\mathcal{S})^*$
of the form $e_1\cdots e_n$ where $n\geq 0$ and 
$e_i=\PrP_i\Separator w_{i1}\Separator\cdots\Separator w_{i\Arity(\PrP_i)}\Separator$ for $w_{ij}\in\{0,1\}^+$
and $\PrP_i\in\mathcal{S}$.
Given $s$ of this form and an injection $\eta: \{0,1\}^+\to\Nulls$, let
$\eta(s)$ denote the database $\{\PrP_i(\eta(w_{i1}),\ldots,\eta(w_{i\Arity(\PrP_i)}))\mid 1\leq i\leq n\}$.
Then $s$ \emph{corresponds to} a database $\mathcal{D}$ if $\eta(s)$ is isomorphic to $\mathcal{D}$;
note that this does not depend on the choice of $\eta$.

A query $\Query$ with schema $\mathcal{S}$ is \emph{decidable} if the set of all serialisations
for $\mathcal{S}$ that correspond to some $\mathcal{D}\in\Query$ is a decidable language.

\paragraph*{Remarks on our Notation}
Many works consider constants to appear in databases (not just nulls), but complexity and expressivity is usually studied
for queries that are closed under isomorphisms, a.k.a.\ \emph{generic} \cite[Ch. 16]{DBLP:books/aw/AbiteboulHV95},
and nulls are more natural there. One can admit finitely many exceptions (elements that must not be renamed),
but such ``constants'' can be simulated by marking them with dedicated unary predicates.

Specifying logical interpretations as sets of ``atoms'' that may use nulls is a notational
convenience with some side effects: our interpretations cannot contain elements that do not stand in
any relation, but they can have an empty domain. Both aspects do not change the notion
of logical entailment \emph{on the formulas we consider}.

\paragraph*{Universal Models and the Chase}

Entailment of databases (corresponding to Boolean conjunctive queries) can be decided by considering
only a subset of all models.
Given sets $\mathfrak{I}$ and $\mathfrak{K}$ of interpretations, $\mathfrak{I}$ is \emph{universal}
for $\mathfrak{K}$ if, for all $\mathcal{K}\in\mathfrak{K}$, there is $\mathcal{I}\in\mathfrak{I}$
and a homomorphism $\mathcal{I}\to\mathcal{K}$.
Consider a rule set $\Sigma$ and database $\mathcal{D}$, and let $\mathfrak{M}$ be the set of all models
of $\Sigma,\mathcal{D}$. Then $\mathfrak{I}$ is a \emph{universal model set} for $\Sigma$ and $\mathcal{D}$
if $\mathfrak{I}\subseteq\mathfrak{M}$ and $\mathfrak{I}$ is universal for $\mathfrak{M}$.

\begin{fact}\label{fact_uniModSetCorrect}
If $\mathfrak{I}$ is a universal model set for $\Sigma$ and $\mathcal{D}$ then, for every database $\mathcal{C}$, we have
$\Sigma,\mathcal{D}\models\mathcal{C}$ iff $\mathcal{I}\models\mathcal{C}$ for all $\mathcal{I}\in\mathfrak{I}$.
\end{fact}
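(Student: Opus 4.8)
The plan is to prove the two directions of the biconditional separately, leaning on the observation recorded in the preliminaries that satisfaction between interpretations/databases coincides with the existence of a homomorphism, together with the basic fact that homomorphisms compose. Neither direction is deep; the content lies entirely in the universality hypothesis on $\mathfrak{I}$.

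For the left-to-right direction, I would assume $\Sigma,\mathcal{D}\models\mathcal{C}$. Since $\mathfrak{I}$ is a universal model set, we have $\mathfrak{I}\subseteq\mathfrak{M}$, so every $\mathcal{I}\in\mathfrak{I}$ is itself a model of $\Sigma,\mathcal{D}$. By the definition of entailment, each such $\mathcal{I}$ must then also be a model of $\mathcal{C}$, i.e.\ $\mathcal{I}\models\mathcal{C}$. This direction unfolds the definitions and uses nothing but the containment $\mathfrak{I}\subseteq\mathfrak{M}$.

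For the right-to-left direction, I would assume $\mathcal{I}\models\mathcal{C}$ for all $\mathcal{I}\in\mathfrak{I}$, fix an arbitrary model $\mathcal{K}\in\mathfrak{M}$ of $\Sigma,\mathcal{D}$, and aim to show $\mathcal{K}\models\mathcal{C}$. By universality of $\mathfrak{I}$ for $\mathfrak{M}$, there is some $\mathcal{I}\in\mathfrak{I}$ together with a homomorphism $g\colon\mathcal{I}\to\mathcal{K}$. By the assumption, $\mathcal{I}\models\mathcal{C}$, and since satisfaction of a database is witnessed by a homomorphism, there is a homomorphism $h\colon\mathcal{C}\to\mathcal{I}$. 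The composite $g\circ h\colon\mathcal{C}\to\mathcal{K}$ is again a homomorphism, so $\mathcal{K}\models\mathcal{C}$; as $\mathcal{K}\in\mathfrak{M}$ was arbitrary, $\Sigma,\mathcal{D}\models\mathcal{C}$.

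The only step that truly uses the structural hypothesis is this second direction, and its crux is simply the composition of the universality witness $g$ with the query match $h$. I do not expect a genuine obstacle here; the single load-bearing point is closure of homomorphisms under composition, which follows immediately from the atom-preservation condition in the definition of a homomorphism. I would take care to state the direction of each homomorphism explicitly (from the query into the model) so that the composition typechecks, since a reversed arrow is the only thing that could go wrong in an otherwise routine argument.
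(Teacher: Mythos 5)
Your proof is correct, and it is exactly the standard argument this statement rests on: the paper records it as a \emph{Fact} without proof, since it is folklore, and the intended justification is precisely yours---the easy direction from $\mathfrak{I}\subseteq\mathfrak{M}$, and the converse by composing the universality homomorphism $g\colon\mathcal{I}\to\mathcal{K}$ with the query match $h\colon\mathcal{C}\to\mathcal{I}$ (with the arrows oriented exactly as you have them, given that $\mathcal{I}\models\mathcal{C}$ means there is a homomorphism from $\mathcal{C}$ into $\mathcal{I}$). Nothing is missing.
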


Universal model sets can be computed with the \emph{chase} algorithm. Here,
we consider a variation of the \emph{standard} (or \emph{restricted}) chase for rules with disjunction, introduced by \cite{DBLP:conf/ijcai/CarralDK17}.

\begin{definition}\label{definition_chase}
A \emph{chase tree} for a rule set $\Sigma$ and database $\mathcal{D}$ is a (finite or infinite) tree 
where each node is labelled by a database, such that:
\begin{enumerate}
\item The root is labelled with $\mathcal{D}$.
\item For every node with label $\mathcal{E}$ that has $\ell$ children labelled $\mathcal{C}_1,\ldots,\mathcal{C}_\ell$, there is a rule $\Rule\in \Sigma$ of the form \eqref{eq_rule} and a substitution $\sigma:\vec{x}\to\Nulls$ such that
(i) $\Tuple{\Rule, \sigma}$ is applicable to $\mathcal{E}$,
(ii) $\Rule$ has $k=\ell$ head disjuncts, and
(iii) $\mathcal{C}_i=\mathcal{E}\cup\sigma_i(\Head_i)$ where $\sigma_i$ extends $\sigma$ by mapping each variable $y\in\vec{y}_i$ to a fresh null.
\item For each rule $\Rule\in\Sigma$ and each substitution $\sigma$,
there is $i\geq 1$ such that $\Tuple{\Rule, \sigma}$ is not applicable to the label of any node of depth $\geq i$.
%
\end{enumerate}
The \emph{result} that corresponds to a chase tree is the set of all interpretations that
can be obtained as the union of all interpretations along a path in the tree.
%
\end{definition}

Condition (3) ensures fair, exhaustive rule application, but different orders of
application can lead to different chase trees, which can also have different results.
Nevertheless, every result is semantically correct in the following sense:

\begin{fact}\label{fact_chase-univ-model-sets}
Every result of a chase on a rule set $\Sigma$ and database $\mathcal{D}$ is a universal model set for $\Sigma$ and $\mathcal{D}$.
\end{fact}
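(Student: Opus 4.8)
The plan is to verify the two defining conditions of a universal model set separately: first, that every interpretation in the result is a model of $\Sigma,\mathcal{D}$ (so the result is contained in $\mathfrak{M}$), and second, that the result is universal for $\mathfrak{M}$. Throughout I would fix a chase tree, write $\mathcal{E}_0=\mathcal{D},\mathcal{E}_1,\mathcal{E}_2,\dots$ for the labels along a (possibly infinite) path $\pi$ from the root, and set $\mathcal{I}_\pi=\bigcup_n\mathcal{E}_n$. The basic structural observation I would record first is that labels grow monotonically along any path: by condition~(2) each child label is $\mathcal{E}\cup\sigma_i(\Head_i)\supseteq\mathcal{E}$, so $\mathcal{E}_0\subseteq\mathcal{E}_1\subseteq\cdots$ and every atom of $\mathcal{I}_\pi$ already appears in some $\mathcal{E}_n$.

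For the model property, I would argue that each $\mathcal{I}_\pi$ satisfies $\Sigma$ and $\mathcal{D}$. Since $\mathcal{D}=\mathcal{E}_0\subseteq\mathcal{I}_\pi$, the identity is a homomorphism $\mathcal{D}\to\mathcal{I}_\pi$, so $\mathcal{I}_\pi\models\mathcal{D}$. For satisfaction of a rule $\rho$ of the form~\eqref{eq_rule}, suppose $\sigma(\Body)\subseteq\mathcal{I}_\pi$ for some $\sigma$. As $\sigma(\Body)$ is finite, monotonicity gives a depth $d$ with $\sigma(\Body)\subseteq\mathcal{E}_d$. The fairness condition~(3) yields an index $i$ beyond which $\langle\rho,\sigma\rangle$ is not applicable to any label; taking a node on $\pi$ at depth $\max(d,i)$ with label $\mathcal{E}'$, we have $\sigma(\Body)\subseteq\mathcal{E}'$ while $\langle\rho,\sigma\rangle$ is not applicable to $\mathcal{E}'$. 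By the definition of applicability, a matched body that is not applicable forces some head disjunct to be already satisfied, i.e.\ some extension $\sigma'$ of $\sigma$ has $\sigma'(\Head_i)\subseteq\mathcal{E}'\subseteq\mathcal{I}_\pi$. Hence $\rho$ is satisfied, and $\mathcal{I}_\pi\models\Sigma$.

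For universality, I would take an arbitrary model $\mathcal{M}$ of $\Sigma,\mathcal{D}$ and construct a path $\pi$ together with a compatible chain of homomorphisms $h_n:\mathcal{E}_n\to\mathcal{M}$ by induction on depth. Since $\mathcal{M}\models\mathcal{D}$, a homomorphism $h_0:\mathcal{D}\to\mathcal{M}$ exists. At a branching node with label $\mathcal{E}_n$, generated by $\langle\rho,\sigma\rangle$ with disjuncts $\Head_1,\dots,\Head_k$, applicability gives $\sigma(\Body)\subseteq\mathcal{E}_n$, so $h_n\circ\sigma$ matches $\Body$ in $\mathcal{M}$; because $\mathcal{M}\models\rho$, this match extends to witness some disjunct $j$, i.e.\ there is an extension mapping $\vec{y}_j$ into $\mathcal{M}$ with $\Head_j$ satisfied. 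I would then follow the $j$-th child $\mathcal{C}_j=\mathcal{E}_n\cup\sigma_j(\Head_j)$ and extend $h_n$ to $h_{n+1}$ by sending the fresh nulls $\sigma_j(\vec{y}_j)$ to their $\mathcal{M}$-images; this is well-defined because the $\vec{y}_j$-nulls are fresh, hence outside $\mathrm{dom}(h_n)$. The union $h=\bigcup_n h_n$ is then a homomorphism $\mathcal{I}_\pi\to\mathcal{M}$, since every atom of $\mathcal{I}_\pi$ lies in some $\mathcal{E}_n$ and is mapped correctly by $h_n$. As $\mathcal{I}_\pi$ belongs to the result, this shows the result is universal for $\mathfrak{M}$.

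The routine parts are the monotonicity bookkeeping and the verification that the limit $h$ is a homomorphism. I expect the main obstacle to be the universality direction in the disjunctive setting: one must show the branch choices can be made \emph{consistently along a single path} using only local satisfaction of each applied rule in $\mathcal{M}$, and that passing to the union over a possibly infinite path does not break the homomorphism -- which hinges on the freshness of existential nulls guaranteeing that each $h_n$ only ever fixes values that later stages leave untouched. A secondary point to handle carefully is the exact reading of applicability in condition~(3): \emph{not applicable while the body matches} must be equated with \emph{some head disjunct already holds}, which is precisely what drives the model-property argument.
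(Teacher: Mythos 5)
The paper never proves this statement itself: it is presented as background (Fact~\ref{fact_chase-univ-model-sets}), inherited from the cited literature on the disjunctive restricted chase, so there is no in-paper proof to compare against. Judged on its own merits, your argument is the standard one and both halves are essentially sound: the model property follows from monotonicity of labels along a path, fairness, and the observation that a pair $\langle\rho,\sigma\rangle$ whose body is matched but which is \emph{not} applicable must already have some satisfied head disjunct (this is exactly the paper's definition of applicability); universality follows by inductively selecting, at each branching node, the child whose disjunct is witnessed in the given model $\mathcal{M}$, extending the homomorphism over the fresh existential nulls (well-defined precisely because they are fresh), and passing to the union over the constructed maximal path.

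One point needs tightening. In the model-property half you ``take a node on $\pi$ at depth $\max(d,i)$,'' which presupposes that $\pi$ reaches that depth. If $\pi$ is finite and its leaf lies at depth below $\max(d,i)$, condition~(3) of Definition~\ref{definition_chase} as literally stated gives you nothing: it asserts non-applicability at nodes of depth $\geq i$, and this is vacuous along $\pi$ beyond its leaf. (Read entirely literally, condition~(3) does not even force leaves to be saturated -- a single-node tree labelled $\mathcal{D}$ satisfies all three conditions vacuously, yet its result need not be a model set -- so any correct proof must invoke the intended ``fair, exhaustive'' reading under which no pair $\langle\rho,\sigma\rangle$ remains applicable at a leaf.) You should therefore add the finite-path case explicitly: the leaf label contains $\sigma(\Body)$ by monotonicity, exhaustiveness means $\langle\rho,\sigma\rangle$ is not applicable to the leaf, and non-applicability yields the satisfied disjunct, exactly as in your argument for the infinite case. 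With that case added the proof is complete; the universality half needs no change, since the path you construct there is built node by node and is maximal by construction.
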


The pair $\Tuple{\Sigma,\mathcal{D}}$ is \emph{chase-terminating} if all its chase trees are finite -- by König's Lemma, this is equivalent to 
all chase results for $\Tuple{\Sigma,\mathcal{D}}$ containing 
only finite interpretations; this corresponds to \emph{all-strategy termination}. 
$\Sigma$ is \emph{chase-terminating} if $\Tuple{\Sigma, \Database}$ is chase-terminating for every database $\mathcal{D}$;
 this corresponds to \emph{universal termination}.

\paragraph*{Turing Machines}

We will use (deterministic) Turing machines (TM), denoted as a tuple $\TM=\Tuple{\States, \Alphabet, \TransitionFunction}$,
with states $\States$, tape alphabet $\Alphabet$ with blank $\Blank\in\Alphabet$, and transition function $\TransitionFunction$.
$\TM$ has a distinguished initial state $\StartingState\in\States$, and accepting and rejecting halting states
$\AcceptingState,\RejectingState\in\States$. For all states $q\in\States\setminus \{\AcceptingState, \RejectingState\}$
and tape symbols $a\in\Alphabet$,
there is exactly one transition $(q, a)\mapsto (r, b, D) \in \TransitionFunction$.
We assume that TM tapes are unbounded to the right but bounded to the left, and
that TMs will never attempt to move left on the first
position of the tape (this is w.l.o.g., since one can modify any TM to insert a marker at the tape start
to recognise this case).

\section{On the Expressivity of Disjunctive Rules}\label{sec_disRules}

In this section, we show how to express a homomorphism-closed, decidable query with a disjunctive rule set.
This construction will be the basis for finding a chase-terminating set of (deterministic) existential rules.
Throughout this section, $\Query$ is a fixed but arbitrary homomorphism-closed query over signature $\mathcal{S}$,
and $\TM = \Tuple{\States, \Alphabet, \TransitionFunction}$ is a TM that decides $\Query$.

To express $\Query$, we specify five rule sets
$\R_1\subseteq\R_2\subseteq\R_3\subseteq\R_4\subseteq\R_5$ (see Figures~\ref{figure:rule-set-1}--\ref{figure:rule-set-5}),
which will be explained later. We want to show the following result:

\begin{theorem}\label{theo_expressivity}
The set $\R_5$ of disjunctive existential rules expresses the query $\Query$.
\end{theorem}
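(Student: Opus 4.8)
The plan is to reduce the entailment condition to a statement about the chase and then analyse it branch by branch. By Fact~\ref{fact_chase-univ-model-sets} any chase result for $\langle\R_5,\D\rangle$ is a universal model set, so by Fact~\ref{fact_uniModSetCorrect} and the fact that $\PrGoal$ is nullary, $\R_5,\D\models\PrGoal$ holds if and only if $\PrGoal$ is derived on every branch of one (equivalently, any) chase tree. I would therefore first fix the intended operational reading of a branch: the disjunctive rules of $\R_5$ are meant to guess, for the nulls occurring in $\D$, a linear order (predicate $\PrOrder$, with endpoints $\PrFirst$ and $\PrLast$), from which a canonical serialisation of the structure visible on that branch is assembled on a simulated tape; the deterministic rules inherited from the lower layers $\R_1,\dots,\R_4$ then simulate $\TM$ step by step on this serialisation, and a branch derives $\PrGoal$ exactly when the simulated run reaches the accepting state. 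The whole proof is then the verification that this reading is both sound and complete.

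For completeness I would show that $\D\in\Query$ forces $\PrGoal$ on \emph{every} branch. The point is that whatever substructure a branch reads off the data is obtained through rule matches, i.e.\ through a homomorphism from the guessed pattern into $\D$; hence its serialisation corresponds to a homomorphic image of $\D$, which lies in $\Query$ by closure under homomorphisms whenever $\D$ does. Since $\TM$ decides $\Query$ it halts on every input, so the simulation on each branch always reaches a verdict, and on each branch that verdict is ``accept''; thus $\PrGoal$ appears everywhere and is entailed. Here the two hypotheses on $\Query$ do complementary jobs: homomorphism-closedness guarantees that even ``collapsed'' or partial branches still see a member of $\Query$, while decidability (totality of $\TM$) guarantees termination of the per-branch simulation.

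For soundness I would argue the contrapositive: if $\D\notin\Query$, I exhibit a single branch on which $\PrGoal$ never appears. Concretely, take the branch whose guessed order is an honest total order on the pairwise distinct elements of $\D$, with no identifications; its serialisation then corresponds to $\D$ itself, $\TM$ rejects it (as $\D\notin\Query$, and since $\Query$ is closed under isomorphism the answer is order-independent), the simulation never enters the accepting state, and the union along that branch is a model of $\R_5,\D$ that falsifies $\PrGoal$. Hence $\PrGoal$ is not entailed, completing the equivalence.

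The main obstacle is reconciling these two directions against the disjunctive guessing. Every admissible disjunctive choice — including degenerate guesses that collapse elements, serialise only a subdatabase, or order things ``wrongly'' — must simultaneously (i) never spuriously produce $\PrGoal$ when $\D\notin\Query$, and (ii) never block $\PrGoal$ when $\D\in\Query$. Requirement (ii) on collapsing and partial branches is precisely what homomorphism-closedness buys, but it has to be engineered so that each branch really does serialise a homomorphic image of $\D$ and nothing spurious; this is the role of the bookkeeping predicates (e.g.\ $\PrReal$ to track genuine data elements and $\PrBrake$ to stop unbounded generation once a serialisation is complete) and of the per-branch acceptance marker $\PrWorldGoal$. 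Establishing that the layered invariants of $\R_1\subseteq\cdots\subseteq\R_5$ indeed yield a faithful, terminating $\TM$-simulation on every branch — correct tape encoding, a correct successor/step relation, and correct propagation of acceptance to $\PrGoal$ — is where the bulk of the technical work lies.
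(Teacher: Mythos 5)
Your high-level skeleton is the paper's: reduce entailment to checking $\PrGoal$ in all elements of a universal model set (Facts~\ref{fact_chase-univ-model-sets} and~\ref{fact_uniModSetCorrect}), then handle the two directions separately, using homomorphism-closedness when $\D\in\Query$ and an ``honest'' order-and-completion branch when $\D\notin\Query$. However, there is a genuine gap in your non-entailment direction, caused by a misreading of what a single chase branch contains. One model of $\R_5$ and $\D$ does not carry \emph{one} serialisation: the representation tree built by $\R_2$ has one path for every increasing subsequence $T_{z_1},\ldots,T_{z_p}$ of the guessed order (with $z_1=1$, $z_p=k$), $\R_4$ builds a separate initial tape for \emph{every} leaf of that tree, and rule~\eqref{rul_tm_accept} derives $\PrGoal$ as soon as \emph{any} of these runs accepts. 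This misreading is harmless for the $\D\in\Query$ direction (there it suffices that the one maximal path, which represents every $T_i$ and hence receives a homomorphism from $\D$, yields an accepting run), but it is fatal for the $\D\notin\Query$ direction: for your honest branch it is not enough that the maximal path serialises a copy of $\D$ and is rejected; you must also show that none of the skipping paths, which serialise only the facts over a subset of the elements, is accepted. This is exactly where the paper's Lemma~\ref{lemma:completeness} invokes homomorphism-closedness a \emph{second} time: every partial branch database maps homomorphically into the maximal one (isomorphic to $\D$), so if any of them were in $\Query$, closure would force $\D\in\Query$, a contradiction. Your proposal uses closure only in the entailment direction and never makes this argument, so the non-entailment direction does not go through as written.

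Two further inaccuracies are worth flagging. First, your ``main obstacle'' paragraph demands that \emph{every} disjunctive choice ``never spuriously produce $\PrGoal$ when $\D\notin\Query$''. That requirement is neither needed (non-entailment only requires \emph{some} model lacking $\PrGoal$) nor true of the construction: a model that collapses elements serialises a homomorphic image of $\D$, which may lie in $\Query$ even when $\D$ does not, and such a model legitimately derives $\PrGoal$. Second, the bookkeeping predicates you invoke --- $\PrReal$, $\PrBrake$, and $\PrWorldGoal$ --- do not occur in $\R_5$ at all; they belong to the emergency-brake construction of Section~\ref{sec_brake} and the disjunction-removal construction after it, whereas in $\R_5$ acceptance reaches $\PrGoal$ directly from $\PrHead{\AcceptingState}$. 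Relatedly, the chase on $\R_5$ need not terminate (a cyclic $\PrOrder$ guess lets rule~\eqref{rul_NextTreeRep} build an infinite path), so a branch-by-branch operational analysis of an arbitrary chase tree must also cope with infinite, ill-behaved branches; the paper avoids this entirely by constructing an explicit universal model set $\MS=\IS_5$ of finite, well-behaved models in Section~\ref{sec_disRules} and verifying $\PrGoal$-membership only there.
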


To show this, we fix an arbitrary database $\D$ over $\mathcal{S}$. Theorem~\ref{theo_expressivity} then follows from 
Fact~\ref{fact_uniModSetCorrect} and the next lemma:

\begin{lemma}\label{lemma:expressivity-universal-model}
There is a universal model set \MS of $\R_5$ and $\D$ such that $\D\in\Query$ iff $\PrGoal\in\mathcal{I}$ for every $\mathcal{I}\in\MS$.
\end{lemma}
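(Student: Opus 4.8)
The plan is to realise $\MS$ as the result of one conveniently chosen chase tree for $\langle\R_5,\D\rangle$: by Fact~\ref{fact_chase-univ-model-sets} every chase result is a universal model set, and since the lemma only claims the \emph{existence} of a suitable $\MS$, I may fix whatever fair strategy makes the analysis cleanest. I would use the strategy that proceeds in three phases — first exhausting the rules of $\R_5$ that lay down a linear order and binary addresses on the nulls of $\D$, then those that build the simulated tape and load the corresponding input, and finally those that run $\TM$ — so that the structure along each path is constructed in the order matching its intended meaning.

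Next I would pin down the branches of this tree. All non-determinism originates from the disjunctive rules, whose purpose is to supply the data that deterministic rules cannot canonically produce: a strict total order on $\EI{\Nulls}{\D}$ (through $\PrOrder$, $\PrFirst$, $\PrLast$, and $\PrNext$) and the binary addressing it induces (through $\PrEncoding$). The key claim here is that no branch is ever inconsistent or stuck: the disjunctive rules are designed so that each application extends an already-valid partial order to a still-valid one (inserting the next element into one of its gaps), and the fairness condition~(3) of Definition~\ref{definition_chase} forces every maximal path to complete a genuine total order. Consequently the interpretations $\mathcal{I}\in\MS$ correspond to the total orders of the domain of $\D$, and each determines a word $s$ that, by construction, \emph{corresponds to} $\D$ in the sense fixed in the preliminaries.

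The technical heart, and the step I expect to be the main obstacle, is a simulation invariant for the third phase: I would prove by induction on the number of simulated steps that, along a branch $\mathcal{I}$, the facts over $\PrStep$ together with the head- and tape-cell predicates encode exactly the configuration of $\TM$ after that many steps on input $s$. This requires checking that (i) the loading rules write precisely $s$ onto the tape, (ii) each transition in $\TransitionFunction$ is mirrored by a single rule application, and (iii) no branch stalls before halting — which holds because $\TM$ decides $\Query$ and therefore halts on every input, so the $\PrHalt$ and $\PrBrake$ bookkeeping drives each branch to a halting configuration and the chase terminates there. I would then read off from the invariant that $\PrGoal$ is derived in $\mathcal{I}$ if and only if the simulated run halts accepting.

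Finally I would combine these facts. For every branch the word $s$ corresponds to $\D$, so, since $\TM$ decides the language of serialisations of members of $\Query$ and this language is isomorphism-invariant, $\TM$ accepts $s$ if and only if $\D\in\Query$; crucially this verdict does not depend on the guessed order or addressing and is thus uniform over all branches. Together with the simulation invariant this yields $\PrGoal\in\mathcal{I}$ iff $\D\in\Query$ for every $\mathcal{I}\in\MS$, which is the asserted equivalence. I note that homomorphism-closedness of $\Query$ plays no role in this lemma — only decidability (hence halting of $\TM$) and isomorphism-invariance are used; homomorphism-closedness becomes essential only later, when the disjunctive guessing of $\R_5$ is to be replaced by deterministic existential rules.
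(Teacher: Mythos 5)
Your proposal founders on a single but fatal misconception: the claim that along every branch of the chase the serialised word ``corresponds to $\D$'', and the ensuing conclusion that homomorphism-closedness of $\Query$ plays no role in this lemma. Condition (2) of Definition~\ref{definition_chase} forces every application of a disjunctive rule to spawn one child \emph{per disjunct}, so no choice of strategy (however convenient) can prune the ``wrong'' guesses: the chase result for $\R_5,\D$ unavoidably contains interpretations in which rule \eqref{rul_EqOrNeq} has equated distinct nulls of $\D$ (collapsed orders) and in which rule \eqref{rul_DbCompletion} has added facts $\PrFreshIDBIn{\PrP}(\Vt)$ that are absent from $\D$ (proper extensions). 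In such interpretations the Turing machine is run on serialisations of databases that are homomorphic images or supersets of $\D$, \emph{not} isomorphic copies of it, so isomorphism-invariance of the decided language gives you no control over the verdict, and your ``uniform over all branches'' step collapses. Moreover, a single interpretation $\mathcal{I}\in\MS$ contains a whole representation tree whose several leaves each trigger a separate TM run on a (generally different) branch database, and $\PrGoal\in\mathcal{I}$ as soon as \emph{any} of them accepts; your proposal treats each chase branch as hosting exactly one run on one input.

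This is precisely where the paper needs homomorphism-closedness, in \emph{both} directions. For soundness (Lemma~\ref{lemma:soundness}): rule \eqref{rul_copyPrP} yields a homomorphism $\D\to\FF{branchDb}(\mathcal{I},u_w)$ into the maximal branch, so closure under homomorphisms puts that branch database in $\Query$ and forces $\PrGoal\in\mathcal{I}$ even when the guesses were ``wrong''. For completeness (Lemma~\ref{lemma:completeness}): one selects the particular $\mathcal{I}$ whose guessed completion is exactly $\D$ and whose guessed order keeps all nulls distinct, so that the maximal branch is isomorphic to $\D$ and is rejected; closure under homomorphisms (applied to the homomorphisms $\FF{branchDb}(\mathcal{I},u_v)\to\FF{branchDb}(\mathcal{I},u_w)$ from the other, shorter branches) is then what rules out an accepting run on any other leaf of the same interpretation. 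Without this, $\PrGoal$ could be derived in every $\mathcal{I}\in\MS$ even though $\D\notin\Query$. A further, smaller slip: the predicates $\PrHalt$ and $\PrBrake$ you invoke belong to the emergency-brake construction of Section~\ref{sec_brake} ($\R_6$), not to $\R_5$; termination of the chase is neither available nor needed here --- the paper proves $\IS_5$ is a universal model set by direct construction, and explicitly notes that the chase on these rules may fail to terminate.
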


The universal model set \MS is a complicated structure that we describe step by step,
by specifying five sets of interpretations -- $\IS_1$, $\IS_2$, $\IS_3$, $\IS_4$, and $\IS_5$ --,
such that
(1) $\IS_i$ is a universal model set of $\R_i, \D$ for each $1\leq i\leq 5$,
(2) $|\IS_1| = |\IS_2| = |\IS_3| = |\IS_4|= |\IS_5|$, and
(3) for each $1\leq i<j\leq 5$ and each $\mathcal{I}\in\IS_i$, there is exactly one $\mathcal{J}\in\IS_j$ with $\mathcal{I}\subseteq\mathcal{J}$. 
%
Lemma~\ref{lemma:expressivity-universal-model} can then be shown using $\MS=\IS_5$. 

Before dwelling into the details of each rule set and universal model, we give an overview of the construction: 
$\R_1$ constructs all possible linear orders over the nulls in $\D$, as well as all possible completions of $\D$ with facts built using these nulls; 
$\R_2\setminus\R_1$ extracts successor relations from the linear orders;  
$\R_3\setminus\R_2$ associates to nulls representations of their positions in successor relations; 
$\R_4\setminus\R_3$ encodes all initial TM configurations corresponding to some linear order and completion, and $\R_5\setminus\R_4$ simulates the run of the TM on these configurations.

\begin{figure}[t]%
\setlength{\abovedisplayskip}{0pt}\setlength{\belowdisplayskip}{0pt}
\begin{align}%
& \to \exists y. \PrFirst(y) \wedge \PrDomain(y) \label{rul_createFirst}\\
& \to \exists z.\PrLast(z) \wedge \PrDomain(z) \label{rul_createLast}\\
\PrP(\Vx) &\textstyle \to \PrFreshIDBIn{\PrP}(\Vx) \wedge \bigwedge_{x \in \Vx} \PrDomain(x) \label{rul_copyPrP}\\
\PrDomain(x) &\to \PrEq(x, x) \label{rul_DomToEq}\\
\PrEq(x, y) &\to \PrEq(y, x)  \label{rul_ReflEq}\\
\PrNeq(x, y) &\to \PrNeq(y, x) \\
\PrR(\Vx) \wedge \PrEq(x_i, y) &\to \PrR(\Vx_{x_i\mapsto y}) \label{rul_EqCongruence}\\
\PrDomain(x) \wedge \PrDomain(y) &\to \PrEq(x, y) \vee \PrNeq(x, y) \label{rul_EqOrNeq}\\
\PrOrder(x,y)\wedge\PrOrder(y,z) & \to \PrOrder(x,z) \label{rul_TransLT}\\
\PrFirst(x)\wedge\PrNeq(x, y) &\to \PrOrder(x,y) \label{rul_FirstLessThan}\\
\PrNeq(x, y)\wedge\PrLast(y) &\to \PrOrder(x,y)\label{rul_LessThanLast}\\
\PrNeq(x, y) &\to \PrOrder(x, y) \vee \PrOrder(y, x) \label{rul_LTchoice}
 \\
\textstyle
\bigwedge_{x \in \Vx} \PrDomain(x) &\to \PrFreshIDBIn{\PrP}(\Vx) \vee \PrFreshIDBNotIn{\PrP}(\Vx)\label{rul_DbCompletion}
\end{align}
\caption{The rule set $\RuleSet_1$, where rules \eqref{rul_copyPrP} and \eqref{rul_DbCompletion} are instantiated
for each $\PrP\in\mathcal{S}$, and rules \eqref{rul_EqCongruence} are instantiated for 
each $\PrR\in\{\PrFirst,\PrLast,\PrEq,\PrNeq,\PrOrder\} \cup \{\PrFreshIDBIn{\PrP},\PrFreshIDBNotIn{\PrP} \mid \PrP\in\mathcal{S}\}$
and $1\leq i\leq\Arity(\PrR)$, and $\Vx_{x_i\mapsto y}$ denotes $\Vx$ with $x_i$ replaced by $y$.}
\label{figure:rule-set-1}
\end{figure}

\begin{figure}[t]%
\setlength{\abovedisplayskip}{0pt}\setlength{\belowdisplayskip}{0pt}
\begin{align}%
\PrFirst(x) &\to \exists u.\PrRoot(u)\wedge\PrLink(x,u) \label{rul_RootCreation}\\
\PrLink(x,v) \wedge \PrOrder(x,z) &\to \exists w.\PrHasChild(v,w)\wedge\PrLink(z,w) \label{rul_NextTreeRep}\\
\PrLast(x) \wedge \PrLink(x,u) &\to \PrLeaf(u) \label{rul_LeafCreation}\\
\PrLink(x,u) \wedge \PrEq(x,y) &\to\PrLink(y,u) \label{rul_RepEq}\\
\textstyle
\PrFreshIDBIn{\PrP}(\vec{x}) \wedge \bigwedge_{i=1}^{|\vec{x}|}& \PrLink(x_i,u_i) \to \PrFreshIDBInTape{\PrP}(\vec{u})\label{rul_makeTreeIn}\\
\textstyle
\PrFreshIDBNotIn{\PrP}(\vec{x}) \wedge \bigwedge_{i=1}^{|\vec{x}|}& \PrLink(x_i,u_i) \to \PrFreshIDBNotInTape{\PrP}(\vec{u})\label{rul_makeTreeNotIn}
\end{align}
\caption{The rule set $\RuleSet_2$ contains $\RuleSet_1$ (see Figure~\ref{figure:rule-set-1}) and all above rules, where
\eqref{rul_makeTreeIn} and \eqref{rul_makeTreeNotIn} are instantiated for each $\PrP\in\mathcal{S}$.
}
\label{figure:rule-set-2}
\end{figure}
\begin{figure*}[t]%
\setlength{\abovedisplayskip}{0pt}\setlength{\belowdisplayskip}{0pt}
\begin{align}%
\PrRoot(u) &\to \exists y_1, y_2 . \PrEncoding(u, y_1, y_2) \wedge \PrCell{0}(y_1) \wedge \PrNext(y_1, y_2) \wedge \PrCell{1}(y_2) \label{rul_encRoot}\\
\PrEncoding(u, y_1, y_\_) \wedge \PrHasChild(u,v) &\to \exists z_1, z_\_ . \PrEncoding(v, z_1, z_\_)  \wedge \PrCopyPlusOne(y_1, y_\_, z_1, z_\_) \label{rul_encChild}\\
\PrCopyPlusOne(y_1, y_2, z_1, z_\_) \wedge \PrCell{0}(y_1) \wedge \PrNext(y_1, y_2) &\to \PrCell{1}(z_1) \wedge \PrNext(z_1, z_\_) \wedge \PrCell{1}(z_\_) \label{rul_CpyPlusOneCell0}\\
\PrCopyPlusOne(y_1, y_2, z_1, z_\_) \wedge \PrCell{1}(y_1) \wedge \PrNext(y_1, y_2) &\to \exists z_2 . \PrCell{0}(z_1) \wedge \PrNext(z_1, z_2) \wedge \PrCell{0}(z_2) \wedge \PrNext(z_2, z_\_) \wedge \PrCell{1}(z_\_) \label{rul_CpyPlusOneCell1}\\
\!\!\PrCopyPlusOne(y_1, y_\_, z_1, z_\_) \wedge \PrCell{0}(y_1) \wedge \PrNext(y_1, y_2) \wedge \PrNext(y_2, y_3)
	&\to \exists z_2. \PrCopy(y_2, y_\_, z_2, z_\_) \wedge \PrCell{1}(z_1) \wedge \PrNext(z_1, z_2) \label{rul_CpyPlusOneCell0Next}\\
\!\!\PrCopyPlusOne(y_1, y_\_, z_1, z_\_) \wedge \PrCell{1}(y_1) \wedge \PrNext(y_1, y_2) \wedge \PrNext(y_2, y_3)
	&\to \exists z_2 . \PrCopyPlusOne(y_2, y_\_, z_2, z_\_) \wedge \PrCell{0}(z_1) \wedge \PrNext(z_1, z_2) \label{rul_CpyPlusOneCell1Next}\\
\PrCopy(y_1, y_2, z_1, z_2) \wedge \PrCell{*}(y_1) \wedge \PrNext(y_1, y_2)
	&\to \PrCell{*}(z_1) \wedge \PrNext(z_1, z_2) \wedge \PrCell{1}(z_2) \label{rul_copyBase}\\
\PrCopy(y_1, y_\_, z_1, z_\_) \wedge \PrCell{*}(y_1) \wedge \PrNext(y_1, y_2) \wedge \PrNext(y_2, y_3)
	&\to \exists z_2 . \PrCopy(y_2, y_\_, z_2, z_\_) \wedge \PrCell{*}(z_1) \wedge \PrNext(z_1, z_2) \label{rul_copyRec}
\end{align}
\caption{The rule set $\RuleSet_3$ contains $\RuleSet_2$ (Figure~\ref{figure:rule-set-2}) and all of the above rules, where \eqref{rul_copyBase} and \eqref{rul_copyRec} are instantiated for each $* \in \{0, 1\}$.
}
\label{figure:rule-set-3}
\end{figure*}

\paragraph{$\mathbf{\R_1}$: Linear Order and Database Completion}

$\R_1$ serves two distinct purposes:
(1) predicates $\PrFirst$, $\PrLast$, $\PrEq$ (``$=$''), $\PrNeq$ (``$\neq$''), and
$\PrOrder$ (``$<$'') encode representations of possible linear orders over nulls in $\D$
(collected in predicate $\PrDomain$); and 
(2) predicates $\PrFreshIDBIn{\PrP}$ and $\PrFreshIDBNotIn{\PrP}$ for each $\PrP\in\mathcal{S}$ explicitly
encode positive and negative (absent) facts in $\D$.
Both purposes require disjunctive reasoning. Possible models include representations of strict, total linear orders (1) and
the exact database completion (2), but also models for collapsed orders and inconsistent completions.
The latter is not problematic since we consider homomorphism-closed queries.

We define the interpretation set $\IS_1$ on the set $\Delta=\EI{\Nulls}{\D}\cup\{\FirstNull,  \LastNull\}$,
for fresh nulls $\FirstNull, \LastNull\notin\EI{\Nulls}{\D}$.
A (partially collapsed) linear order can be represented by an ordered partition $\VT = T_1, \ldots, T_k$ ($k\geq 1$)
of $\Delta$ where $\FirstNull\in T_1$ and $\LastNull\in T_k$.
Let $\FunOrders$ be the set of all such $\VT$, and let $\FunCompletions(\VT)$ 
be the set of all interpretations with nulls $\Delta$ that are set-minimal among the models
of $\RuleSet_1$ and $\D$, and that contain the database
\begin{align*}
\{&\PrFirst(u) \mid u \in T_1\} \cup \{\PrLast(u) \mid u \in T_k\} \cup {}\\ 
\{&\PrEq(t, u) \mid 1 \leq i \leq k; t, u \in T_i\} \cup{} \\\{&\PrOrder(t,\! u),
  \PrNeq(t,\! u), \PrNeq(u,\! t) \,{\mid}\, 1 \,{\leq}\, i \,{<}\, j \,{\leq}\, k; t \,{\in}\, T_i; u \,{\in}\, T_j\}.
\end{align*}
By minimality, each $\mathcal{I}\in\FunCompletions(\VT)$ contains exactly one of 
$\{\PrFreshIDBIn{\PrP}(\Vt), \PrFreshIDBNotIn{\PrP}(\Vt)\}$ for every $\Vt\subseteq\Delta$ with $|\Vt| = \Arity(\PrP)$.

\begin{lemma}\label{lem:univ-model-set-1}
$\IS_1=\bigcup_{\VT \in \FunOrders}\FunCompletions(\VT)$ is a universal model set of $\R_1$ and $\D$.
\end{lemma}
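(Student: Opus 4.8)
The plan is to establish the two defining properties of a universal model set separately: \emph{soundness}, i.e.\ $\IS_1\subseteq\mathfrak{M}$ where $\mathfrak{M}$ is the set of all models of $\R_1,\D$ (every $\mathcal{I}\in\IS_1$ is such a model), and \emph{universality}, i.e.\ every model of $\R_1,\D$ receives a homomorphism from some $\mathcal{I}\in\IS_1$. Soundness is almost immediate, since each $\mathcal{I}\in\FunCompletions(\VT)$ is \emph{by construction} a model of $\R_1,\D$. The only thing to check is that $\FunCompletions(\VT)$ is non-empty for every $\VT\in\FunOrders$, i.e.\ that the base database of $\VT$ extends to a model over the finite domain $\Delta$: the two existential rules \eqref{rul_createFirst} and \eqref{rul_createLast} are already witnessed because $T_1,T_k$ are non-empty and carry $\PrFirst$, resp.\ $\PrLast$; every remaining non-disjunctive rule only derives facts over $\Delta$ and can be applied exhaustively on this finite set; and the disjunctive rules \eqref{rul_EqOrNeq}, \eqref{rul_LTchoice}, \eqref{rul_DbCompletion} are satisfiable by the already-determined $\PrEq/\PrNeq/\PrOrder$ facts together with one $\PrFreshIDBIn{\PrP}/\PrFreshIDBNotIn{\PrP}$ choice per tuple. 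A set-minimal such model then exists, which also yields the claimed ``exactly one of $\PrFreshIDBIn{\PrP},\PrFreshIDBNotIn{\PrP}$ per tuple'' property.

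For universality I would take an arbitrary model $\mathcal{K}\in\mathfrak{M}$ and read an ordered partition and a completion off it. Since $\mathcal{K}\models\D$ there is a homomorphism $h_0\colon\D\to\mathcal{K}$, and by \eqref{rul_createFirst}/\eqref{rul_createLast} there are nulls $a,b$ with $\PrFirst(a),\PrLast(b)\in\mathcal{K}$. Extend $h_0$ to $g\colon\Delta\to\EI{\Nulls}{\mathcal{K}}$ by $g(\FirstNull)=a$, $g(\LastNull)=b$; rule \eqref{rul_copyPrP} guarantees $\PrDomain(g(n))\in\mathcal{K}$ for every $n\in\Delta$. Define $n\sim m$ iff $\PrEq(g(n),g(m))\in\mathcal{K}$; using \eqref{rul_DomToEq}, \eqref{rul_ReflEq} and the $\PrEq$ instance of the congruence rule \eqref{rul_EqCongruence}, this is an equivalence relation. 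On its classes define $B\prec B'$ iff $\PrOrder(g(n),g(m))\in\mathcal{K}$ for some representatives; by \eqref{rul_EqCongruence} ``some'' is equivalent to ``all'', so $\prec$ is well-defined, \eqref{rul_TransLT} makes it transitive, and \eqref{rul_EqOrNeq} together with \eqref{rul_LTchoice} makes it total between distinct classes. Rules \eqref{rul_FirstLessThan} and \eqref{rul_LessThanLast} place the class of $\FirstNull$ below, and that of $\LastNull$ above, every other class. Linearising $\prec$ (topologically sorting its condensation) yields $\VT=T_1,\dots,T_k\in\FunOrders$ with $\FirstNull\in T_1$ and $\LastNull\in T_k$. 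Finally I would pick the $\mathcal{I}\in\FunCompletions(\VT)$ whose $\PrFreshIDBIn{\PrP}/\PrFreshIDBNotIn{\PrP}$ facts are the $g$-preimages of those in $\mathcal{K}$ (consistent because $\mathcal{K}$ is closed under the congruence rules), and verify that $g$ is a homomorphism $\mathcal{I}\to\mathcal{K}$ by checking each family of base facts and each completion fact against the rules that $\mathcal{K}$ satisfies.

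I expect the main obstacle to be the treatment of \emph{collapsed and inconsistent} models $\mathcal{K}$ — exactly the case the surrounding text flags as ``not problematic''. When $\PrEq$ merges elements (possibly even identifying the witnesses $a$ and $b$) or when $\PrOrder$ holds in both directions between classes, $\prec$ is only a total preorder and the ordered-partition format is strictly less expressive than a general model; one must argue that the chosen $\VT$ and completion nonetheless map homomorphically into $\mathcal{K}$, crucially exploiting that $g$ maps \emph{into} $\mathcal{K}$, so that superfluous or mutually contradictory facts in $\mathcal{K}$ can only help. The most delicate sub-point is to confirm that the endpoint requirements $\FirstNull\in T_1$ and $\LastNull\in T_k$ remain simultaneously satisfiable in these degenerate configurations: if $\FirstNull\sim\LastNull$ but some class lies strictly between them, the linearisation is forced to collapse, and one must show that a model of $\R_1$ cannot actually exhibit this pattern. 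Pinning down precisely which $\sim$-classes and order patterns a model of $\R_1$ can realise — and thereby closing this degenerate case — is where I would concentrate the proof effort.
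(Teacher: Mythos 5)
Your proposal takes a genuinely different route from the paper: you verify the two defining properties of a universal model set directly, reading an ordered partition and a completion off an \emph{arbitrary} model $\mathcal{K}$ of $\R_1,\D$. The paper never analyses arbitrary models at all: it builds a chase tree for $\R_1$ and $\D$ that prioritises rules in their order of appearance in Figure~\ref{figure:rule-set-1}, invokes Fact~\ref{fact_chase-univ-model-sets} to conclude that the resulting set of interpretations is a universal model set, and then argues that every interpretation in that chase result is isomorphic to a unique element of $\IS_1$ (the priority of the equality and congruence rules over the disjunctive choices is what is supposed to keep chase results ``clean''). Your treatment of the generic case (distinct $\PrEq$-classes for $\FirstNull$ and $\LastNull$, condensation of the $\PrOrder$-preorder, rules \eqref{rul_FirstLessThan} and \eqref{rul_LessThanLast} pinning the endpoints) is sound, and you are right that the whole difficulty concentrates in the degenerate case you flag.

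However, the way you propose to close that case is a genuine gap: you plan to show that a model of $\R_1$ \emph{cannot} merge the $\PrFirst$- and $\PrLast$-witnesses while keeping other elements in separate classes, but such models do exist. Take $\mathcal{S}=\{\PrP\}$ with $\PrP$ unary, $\D=\{\PrP(n)\}$, and
\begin{align*}
\mathcal{K} ={}& \{\PrP(c),\, \PrFreshIDBIn{\PrP}(c),\, \PrFreshIDBNotIn{\PrP}(a),\, \PrFreshIDBNotIn{\PrP}(b)\} \cup \{\PrDomain(x)\mid x\in\{a,b,c\}\}\\
&{}\cup \{\PrFirst(a),\PrFirst(b),\PrLast(a),\PrLast(b)\} \cup \{\PrEq(a,b),\PrEq(b,a)\}\cup\{\PrEq(x,x)\mid x\in\{a,b,c\}\}\\
&{}\cup \{\PrNeq(a,c),\PrNeq(c,a),\PrNeq(b,c),\PrNeq(c,b)\} \cup \{\PrOrder(x,y)\mid x,y\in\{a,b,c\}\}.
\end{align*}
One checks rule by rule that $\mathcal{K}\models\R_1$ and $\mathcal{K}\models\D$ (via $n\mapsto c$): rule \eqref{rul_EqOrNeq} is satisfied for the pair $(a,b)$ by $\PrEq(a,b)$, and crucially no rule of $\R_1$ can ever derive $\PrNeq(a,b)$, since $\PrNeq$ occurs in heads only in \eqref{rul_EqOrNeq}, the $\PrNeq$-symmetry rule, and congruence \eqref{rul_EqCongruence}, and every $\PrNeq$-fact of $\mathcal{K}$ involves $c$. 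So the pattern you hope to exclude is realisable. Worse, the gap cannot be repaired by a cleverer choice of partition or homomorphism within your framework: any homomorphism from any $\mathcal{I}\in\IS_1$ into $\mathcal{K}$ must send $n$ to $c$ (the only $\PrP$-fact), so $n\notin T_1\cup T_k$ because $\PrFirst,\PrLast$ fail on $c$; hence $k\geq 2$, hence $\mathcal{I}$ contains $\PrNeq(\FirstNull,\LastNull)$ from the base database, and its image would have to be a $\PrNeq$-fact between two elements of $\{a,b\}$ --- of which $\mathcal{K}$ has none. So this degenerate configuration is not a loose end but an actual obstruction to the strategy as formulated; any correct completion has to confront it head-on (for instance by admitting additional ``collapsed'' interpretations, with $\PrOrder$ in both directions, alongside $\IS_1$), rather than by proving it impossible. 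Note that the paper's own proof must tame exactly the same configuration; it does so by asserting that the prioritised chase never produces a pair carrying both $\PrEq$ and $\PrNeq$ (resp.\ a two-way $\PrOrder$), the delicate branch being the one where \eqref{rul_EqOrNeq} chooses $\PrEq$ between the two fresh witnesses and congruence then feeds \eqref{rul_FirstLessThan} and \eqref{rul_LessThanLast} --- so you have in fact put your finger on the soft spot of this lemma, but your proposed resolution of it is provably not available.
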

\begin{proof}[Proof Sketch]
We can construct a chase tree for $\R_1$ and $\D$ that prioritises the application of
rules in the order of their appearance in Figure~\ref{figure:rule-set-1}.
The result of that chase $\mathfrak{K}$ is a finite universal model set (Fact~\ref{fact_chase-univ-model-sets})
and we can show that every $\mathcal{K}\in\mathfrak{K}$ is isomorphic to a unique $\mathcal{I}\in\IS_1$.
\end{proof}

Each set $\IS_i$ for some $2 \leq i \leq 5$ is obtained as
$\IS_i = \{\FunInterpretation{i}(\mathcal{I}) \mid \mathcal{I} \in \IS_{i-1}\}$ for a function
$\FunInterpretation{i}$ as defined below.
Every $\mathcal{I}\in\IS_i$ with $1 \leq i \leq 5$ contains a unique interpretation $\FunSeed(\mathcal{I})\in\IS_1$,
and there is a unique ordered partition $\FunOrder(\mathcal{I})\in\FunOrders$ such that
$\FunSeed(\mathcal{I}) \in \FunCompletions(\FunOrder(\mathcal{I}))$.

\paragraph{$\mathbf{\R_2}$: Representative Tree}
The purpose of $\R_2$ is to extract successor relations from the transitive
linear order $\PrOrder$.
Given an ordered partition $\FunOrder(\mathcal{I})=T_1,\ldots,T_k$ of some model $\mathcal{I}$ of $\R_1$,
$\R_2$ constructs a finite tree structure -- defined using predicates $\PrRoot$, $\PrHasChild$ (``child''), and $\PrLeaf$ --
where each path represents a sub-sequence $T_{z_1}, \ldots, T_{z_p}$ of $T_1, \ldots, T_k$ with $z_1=1$ and $z_p=k$.
Unavoidably, some paths will skip some $T_i$, but it suffices for our purposes if one path is complete.
The elements of any set $T_j$ are related to the tree nodes that represent $T_j$ by a predicate $\PrLink$.
Moreover, nodes are related via predicates $\PrFreshIDBInTape{\PrP}$ and $\PrFreshIDBNotInTape{\PrP}$
that reflect the relations for $\PrFreshIDBIn{\PrP}$ and $\PrFreshIDBNotIn{\PrP}$ that hold between the represented elements
(in the database completion of the considered model $\mathcal{I}$ of $\R_1$).

Given $\mathcal{I}$ with $|\FunOrder(\mathcal{I})|=k$ as above,
let $\mathbf{Z}$ be the set of all words $z_1\cdots z_p\in\{1,\ldots,k\}^*$ such that $z_1=1$ and $z_i<z_{i+1}$ for all
$i\in\{1,\ldots,p-1\}$. Moreover, let $\FormatFunction{end}(z_1\cdots z_p):=z_p$.
Using fresh nulls $\{u_w\mid w\in\mathbf{Z}\}$, we define
$\FormatFunction{Tree}(\mathcal{I})$ to be the interpretation
\begin{align*}
&\{\PrRoot(u_1)\} \cup
\{\PrHasChild(u_w,u_{wz})\mid wz\in\mathbf{Z}, 2\leq z\leq k\} \cup{} \\
&\{\PrLeaf(u_w)\mid w\in\mathbf{Z}; \FormatFunction{end}(w)=k\} \cup{}\\
&\{\PrLink(x,u_w)\mid w\in\mathbf{Z}; x\in T_{\FormatFunction{end}(w)}\}.
\end{align*}
Now $\FunInterpretation{2}(\mathcal{I})$ is the least interpretation that 
contains $\mathcal{I}$ and $\FormatFunction{Tree}(\mathcal{I})$, and that 
further satisfies rules \eqref{rul_makeTreeIn} and \eqref{rul_makeTreeNotIn}.
We can extend Lemma~\ref{lem:univ-model-set-1} as follows. The required
universal model set is obtained by any chase that prioritises rule \eqref{rul_RepEq}.

\begin{lemma}\label{lem:univ-model-set-2}
$\IS_2 = \{\FunInterpretation{2}(\mathcal{I}) \mid \mathcal{I} \in \IS_1\}$ is a universal model set of $\R_2$ and $\D$.
\end{lemma}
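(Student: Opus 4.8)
The plan is to obtain $\IS_2$ as the isomorphic image of the result of a suitably scheduled chase, so that Fact~\ref{fact_chase-univ-model-sets} immediately yields that it is a universal model set for $\R_2$ and $\D$. Since $\R_1\subseteq\R_2$ and every rule in $\R_2\setminus\R_1$ is non-disjunctive, I would first run the chase of Lemma~\ref{lem:univ-model-set-1} (prioritising the rules of Figure~\ref{figure:rule-set-1} in order), yielding a finite chase tree whose branches produce interpretations isomorphic to the members of $\IS_1$; I would then continue each branch with the rules \eqref{rul_RootCreation}--\eqref{rul_makeTreeNotIn}, giving \eqref{rul_RepEq} priority over the existential rules \eqref{rul_RootCreation} and \eqref{rul_NextTreeRep}. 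As these additional rules carry no disjunction, they create no further branching, so the branches of the combined tree stand in bijection with those of the $\R_1$-chase, hence with $\IS_1$.

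Next I would fix a branch whose $\R_1$-part is isomorphic to some $\mathcal{I}\in\IS_1$ with $\FunOrder(\mathcal{I})=T_1,\ldots,T_k$, and show by induction on word length over $\mathbf{Z}$ that its continuation produces exactly an isomorphic copy of $\FunInterpretation{2}(\mathcal{I})$. In the base case, rule \eqref{rul_RootCreation} is applicable for every $x\in T_1$ (those with $\PrFirst(x)$); its first application creates a single root $u_1$ with $\PrLink(x,u_1)$, whereupon prioritised \eqref{rul_RepEq} propagates $\PrLink(y,u_1)$ to all $y\in T_1$ (all $\PrEq$-related), so the restricted chase finds \eqref{rul_RootCreation} already satisfied for every other $x'\in T_1$ and creates no second root, matching the unique node for the word $1$. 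In the inductive step, a node $u_w$ already linked exactly to $T_{\FormatFunction{end}(w)}$ enables \eqref{rul_NextTreeRep} precisely for the blocks $T_j$ with $j>\FormatFunction{end}(w)$; its first application for a given target block creates one child, and prioritised \eqref{rul_RepEq} again propagates $\PrLink$ to all of $T_j$, blocking any duplicate child for that block. Thus $u_w$ acquires exactly one child $u_{wj}$ per $j>\FormatFunction{end}(w)$, in bijection with the extensions $wj\in\mathbf{Z}$. Fairness (condition (3) of Definition~\ref{definition_chase}) guarantees every applicable instance eventually fires, so every word of $\mathbf{Z}$ is realised; rule \eqref{rul_LeafCreation} marks exactly the nodes whose block is $T_k$, and the Datalog rules \eqref{rul_makeTreeIn}, \eqref{rul_makeTreeNotIn} add precisely the $\PrFreshIDBInTape{\PrP}$/$\PrFreshIDBNotInTape{\PrP}$ facts. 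Together this identifies the branch result with $\mathcal{I}\cup\FormatFunction{Tree}(\mathcal{I})$ closed under \eqref{rul_makeTreeIn}, \eqref{rul_makeTreeNotIn}, i.e.\ with $\FunInterpretation{2}(\mathcal{I})$ up to renaming of fresh tree nulls.

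Finally, since $\mathbf{Z}$ is finite (each $T_i$ is finite), the continuation terminates on every branch and the combined tree is a legitimate chase tree for $\R_2,\D$. By Fact~\ref{fact_chase-univ-model-sets} its result, which is isomorphic to $\IS_2$, is a universal model set for $\R_2$ and $\D$; as modelhood and universality are preserved under this isomorphism, $\IS_2$ itself is a universal model set, as claimed.

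I expect the delicate point to be establishing that the restricted chase creates exactly one tree node per word of $\mathbf{Z}$ -- neither merging distinct words nor producing a spurious duplicate node for a single block. This hinges on the interplay between the redundancy check of the restricted chase for \eqref{rul_RootCreation}/\eqref{rul_NextTreeRep} and the prioritisation of \eqref{rul_RepEq}: one must verify that whenever a fresh match of \eqref{rul_NextTreeRep} is considered, the $\PrLink$-closure under $\PrEq$ has already been computed, so that an existing child is detected and the match is found satisfied. Pinning down this scheduling invariant precisely, and checking it is compatible with fairness so that no genuinely needed node is ever suppressed, is the technical heart of the argument.
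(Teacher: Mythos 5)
Your proof is correct, and its technical heart coincides with the paper's: both arguments hinge on scheduling the chase so that rule \eqref{rul_RepEq} is prioritised over the existential rules \eqref{rul_RootCreation} and \eqref{rul_NextTreeRep}, whereby the restricted chase's applicability check suppresses duplicate roots and duplicate children, yielding exactly one tree node per word of $\mathbf{Z}$ and hence a copy of $\FunInterpretation{2}(\mathcal{I})$ — the delicate point you correctly single out. Where you differ is the assembly. The paper first proves an auxiliary lemma stating that, for each $\mathcal{I}\in\IS_1$, $\FunInterpretation{2}(\mathcal{I})$ is isomorphic to the result of a chase over $\R_2$ with $\mathcal{I}$ \emph{itself} taken as the input database; universality of $\IS_2$ then follows by composition: any model $\mathcal{M}$ of $\R_2,\D$ is in particular a model of $\R_1,\D$, so Lemma~\ref{lem:univ-model-set-1} supplies $h_1:\mathcal{I}_1\to\mathcal{M}$ for some $\mathcal{I}_1\in\IS_1$, and universality of the non-disjunctive chase from $\mathcal{I}_1$ supplies $g:\FunInterpretation{2}(\mathcal{I}_1)\to\mathcal{M}$ extending $h_1$. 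You instead splice the deterministic continuation onto the chase tree built inside the proof of Lemma~\ref{lem:univ-model-set-1}, obtaining a single chase tree for $\R_2$ and $\D$ whose result is element-wise isomorphic to $\IS_2$, and invoke Fact~\ref{fact_chase-univ-model-sets} once. Both routes are sound; yours is more direct, but it re-opens the construction inside Lemma~\ref{lem:univ-model-set-1} rather than using it as a black box, and it must certify that the combined tree is a fair chase tree, whereas the paper's modular pattern (per-element deterministic chase plus homomorphism composition) is what gets reused verbatim for $\R_3$, $\R_4$, and $\R_5$. One point you leave implicit and should state explicitly: the rules \eqref{rul_RootCreation}--\eqref{rul_makeTreeNotIn} derive only predicates that never occur in rule bodies of $\R_1$, so second-stage applications cannot re-enable any $\R_1$ rule; this is exactly what makes your two-stage schedule fair (the paper records this observation at the start of its auxiliary lemma).
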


\begin{figure*}[t]%
\setlength{\abovedisplayskip}{0pt}\setlength{\belowdisplayskip}{0pt}
\begin{align}%
\PrLoad{\PrP}(u,t,\Vv) \wedge \PrFreshIDBInTape{\PrP}(\Vv) &\to\textstyle \exists \Vx, y . \PrCell{\PrP}(t) \wedge \PrNext(t, x_1) \wedge \bigwedge_{i=1}^{|\Vv|} \PrLoadEncoding(v_i, x_i, x_{i+1}) \wedge \PrNext(x_{|\Vv|+1}, y) \wedge \PrReady{\PrP}(u, y, \Vv) \label{rul_loadAllFacts}\\
\PrLoadEncoding(v, x_s, x_e) \wedge \PrEncoding(v, y_1, y_\_) &\to \exists z_1, z_\_ . \PrCell{\Separator}(x_s) \wedge \PrNext(x_s, z_1) \wedge \PrCopy(y_1, y_\_, z_1, z_\_) \wedge \PrNext(z_\_, x_e) \wedge \PrCell{\Separator}(x_e)  \label{rul_encodeFact}
\end{align}
\caption{The rule set $\RuleSet_4$ contains $\RuleSet_3$ (see Figure~\ref{figure:rule-set-3}), the rules from Figure~\ref{figure:rule-set-4-part}, and the above rules instantiated for all $\PrP\in\mathcal{S}$.}
\label{figure:rule-set-4}
\end{figure*}

\begin{figure}[t]%
\setlength{\abovedisplayskip}{0pt}\setlength{\belowdisplayskip}{0pt}
\begin{align}%
%
\PrLeaf(u) \to \exists t.\PrLoad{1}(u,&t,u) \wedge \PrHead{\StartingState}(t) \label{rul_startTape}\\
\PrLoad{\ell}(u, t, \Vv) &\to \PrLoad{\PrP^\ell_1}(u, t, \Vv) \label{rul_firstPredEll}\\
\PrReady{\PrP_{j}^\ell}(u, t, \Vv) &\to \PrLoad{\PrP_{j+1}^{\ell}}(u, t, \Vv) \label{rul_nextPredEll}\\
\PrReady{\PrP_{\PredsPerArity}^\ell}(u, t, \Vv) &\to \PrReady{\ell}(u, t, \Vv) \label{rul_lastPredEll}\\
\begin{split}
\textstyle
\!\!\!\PrReady{\ell}(u, t, \Vv) \wedge \bigwedge_{i=k+1}^{\ell} \PrRoot&(v_i) \wedge \PrHasChild(w,v_k) \\
	\to \PrLoad{\ell}(u, t, &v_1,{\scriptstyle\cdots}, v_{k-1}, w, u,{\scriptstyle\cdots}, u)\!
\end{split} \label{rul_tapeNextVector}\\
\textstyle
\!\!\!\PrReady{\ell}(u, t, \Vv) \,{\wedge}\bigwedge_{i=1}^\ell \PrRoot(v_i) &\to \PrLoad{\ell+1}(u, t, u,{\scriptstyle\cdots},u )\!  \label{rul_tapeNextLevel}\\[0.5mm]
\textstyle
\!\!\!\PrReady{\MaxArity}(u, t, \Vv)  \,{\wedge}\bigwedge_{i=1}^{\MaxArity} \PrRoot(v_i) &\to \PrCell{\Blank}(t)\wedge\PrEndTape(t) \label{rul_endTape}\\
\PrLoad{\PrP}(u, t, \Vv) \wedge \PrFreshIDBNotInTape{\PrP}(\Vv) &\to \PrReady{\PrP}(u, t, \Vv) \label{rul_copyNotInPred}
\end{align}
\caption{Some rules of $\R_4$, to be instantiated for all $1\leq j\leq \PredsPerArity-1$, $1\leq k\leq\ell\leq \MaxArity$, and $\PrP\in\mathcal{S}$.}
\label{figure:rule-set-4-part}
\end{figure}

\paragraph{$\mathbf{\R_3}$: Position Binary Encodings}
The purpose of $\R_3$ is to associate each node in the tree of $\R_2$ with a binary encoding
of its distance from the root (the root starts with ``distance'' $2$ for technical reasons).
Encodings start at the least significant bit and always end in $1$ (i.e., have no leading $0$s).
To simplify upcoming steps, encodings take the form of little TM tapes, represented
by a $\PrNext$-connected chain of nulls with unary predicates $\PrCell{0}$ and $\PrCell{1}$
encoding the symbol at each position. Nodes $u$ relate to the first and last null $t_s$ and $t_e$ of their ``tape''
through facts $\PrEncoding(u,t_s,t_e)$.
Facts $\PrCopy(a_s, a_e, b_s, b_e)$ are used to create a tape between $b_s$ and $b_e$ that contains
a copy of the information on the tape between $a_s$ and $a_e$. Predicate $\PrCopyPlusOne$ is analogous, but
creates a representation of the successor of the number that is copied.

Consider a model $\mathcal{I}$ of $\R_2$ and define the set of sequences $\mathbf{Z}$
as before.
For $w\in\mathbf{Z}$ of length $|w|$,
and $b_1\cdots b_\ell$ the binary representation of $|w|+1$, let $\FormatFunction{EncPos}(w)$ be the database
\begin{align*}
\{&\PrEncoding(u_w, e_w^1, e_w^\ell)\} \cup \{\PrCell{b_i}(e_w^{i}) \mid 1 \leq i \leq \ell \} \cup{} \\
\{&\PrNext(e_w^{i-1}, e_w^i) \mid 2 \leq i \leq \ell\}.
\end{align*}
Let $\mathcal{J}=\mathcal{I}\cup\bigcup_{w\in\mathbf{Z}}\FormatFunction{EncPos}(w)$.
We define $\FunInterpretation{3}(\mathcal{I})$ as the smallest superset
of $\mathcal{J}$ that satisfies all rules in $\R_3$ while including only the nulls in 
$\mathcal{J}$.
$\FunInterpretation{3}(\mathcal{I})$ extends $\mathcal{J}$ only by missing
$\PrCopy$ and $\PrCopyPlusOne$ relations, which can be inferred by slightly rewritten
rules. For example, rule \eqref{rul_encChild} is satisfied when applying the following rule to $\mathcal{J}$:
\begin{align*}
\PrEncoding(u, y_1, y_\_) \wedge \PrHasChild(u,v) &\wedge \PrEncoding(v, z_1, z_\_) \\
&\to \PrCopyPlusOne(y_1, y_\_, z_1, z_\_).
\end{align*}
All other rules can be rewritten analogously, since every existentially quantified variable is used 
in unique ways with predicates other than $\PrCopy$ and $\PrCopyPlusOne$.

%
%
%
%
%
For every $\mathcal{I}\in\IS_2$, we show that $\FunInterpretation{3}(\mathcal{I})$ is isomorphic to a result of the chase on $\R_3$ and $\mathcal{I}$. The next result then follows from Lemma~\ref{lem:univ-model-set-2}.

\begin{lemma}\label{lem:univ-model-set-3}
$\IS_3 = \{\FunInterpretation{3}(\mathcal{I})\mid\mathcal{I}\in\IS_2\}$ is a universal model set of $\R_3$ and $\D$.
\end{lemma}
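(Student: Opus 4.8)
The plan is to prove the lemma by a modular, per-seed argument: first reduce the global claim to a single-seed statement via a composition principle for universal model sets, and then identify the relevant single-seed chase result with $\FunInterpretation{3}(\mathcal{I})$.

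First I would isolate the following general composition principle. If $\mathfrak{I}$ is a universal model set of a rule set $\Sigma$ and $\D$, if $\Sigma\subseteq\Sigma'$, and if for every $\mathcal{I}\in\mathfrak{I}$ the set $\mathfrak{J}_{\mathcal{I}}$ is a universal model set of $\Sigma'$ and $\mathcal{I}$, then $\bigcup_{\mathcal{I}\in\mathfrak{I}}\mathfrak{J}_{\mathcal{I}}$ is a universal model set of $\Sigma'$ and $\D$. Both directions are routine: soundness holds because any $\mathcal{J}\in\mathfrak{J}_{\mathcal{I}}$ satisfies $\Sigma'$ and, since $\mathcal{I}\models\D$ and $\mathcal{J}\models\mathcal{I}$, also satisfies $\D$; universality holds because any model $\mathcal{M}$ of $\Sigma',\D$ is in particular a model of $\Sigma,\D$, hence receives a homomorphism from some $\mathcal{I}\in\mathfrak{I}$, which makes $\mathcal{M}$ a model of $\Sigma',\mathcal{I}$ and thus the target of a homomorphism from some $\mathcal{J}\in\mathfrak{J}_{\mathcal{I}}$. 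Applying this with $\Sigma=\R_2$, $\Sigma'=\R_3$, and $\mathfrak{I}=\IS_2$ (a universal model set by Lemma~\ref{lem:univ-model-set-2}), it remains to show that for every $\mathcal{I}\in\IS_2$ the singleton $\{\FunInterpretation{3}(\mathcal{I})\}$ is a universal model set of $\R_3$ and $\mathcal{I}$.

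By Fact~\ref{fact_chase-univ-model-sets} it then suffices to exhibit a chase of $\R_3$ on $\mathcal{I}$ whose result is $\{\FunInterpretation{3}(\mathcal{I})\}$ up to isomorphism. Here I would use that $\mathcal{I}\models\R_2$, so no rule of $\R_2$ is applicable at the start, and that every rule in $\R_3\setminus\R_2$ is non-disjunctive with a head built only from the fresh predicates $\PrEncoding$, $\PrCell{\cdot}$, $\PrNext$, $\PrCopy$, and $\PrCopyPlusOne$, none of which occurs in any body of $\R_2$. Firing $\R_3\setminus\R_2$ rules therefore creates no new body match for any $\R_2$ rule (in particular the fresh nulls never enter $\PrDomain$, so no disjunctive rule of $\R_1$ is triggered), so $\mathcal{I}\models\R_2$ is preserved throughout and only $\R_3\setminus\R_2$ rules ever fire. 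Being non-disjunctive, they make the chase tree degenerate to a single finite path whose union is a single interpretation $\mathcal{K}$.

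Finally I would prove $\mathcal{K}\cong\FunInterpretation{3}(\mathcal{I})$ by induction over $w\in\mathbf{Z}$, equivalently over the depth of the node $u_w$ in the tree of $\mathcal{I}$, establishing the invariant that the chase attaches to $u_w$ an $\PrEncoding$-tape that is, up to a bijection of the freshly introduced nulls, exactly the little-endian binary encoding of $|w|+1$ given by $\FormatFunction{EncPos}(w)$. The base case is rule~\eqref{rul_encRoot}, which encodes $2$ at the root; the inductive step uses~\eqref{rul_encChild} to spawn the child's tape endpoints together with a $\PrCopyPlusOne$ fact, after which rules~\eqref{rul_CpyPlusOneCell0}--\eqref{rul_copyRec} stream along the parent tape and write the binary successor on the child tape, so that from $|w|+1$ one obtains $|w|+2=|wz|+1$ as required. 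I expect the main effort to lie precisely here: verifying that the $\PrCopyPlusOne$/$\PrCopy$ gadget correctly implements binary increment, including carry propagation (which lengthens the tape through the existential $z_2$ of~\eqref{rul_CpyPlusOneCell1}), that it terminates, and that it introduces no cells beyond those of the target tape. Granting the invariant, the chase produces exactly the tapes of $\mathcal{J}=\mathcal{I}\cup\bigcup_{w\in\mathbf{Z}}\FormatFunction{EncPos}(w)$ together with the $\PrCopy$ and $\PrCopyPlusOne$ facts that the definition of $\FunInterpretation{3}(\mathcal{I})$ adds to $\mathcal{J}$, and the map fixing $\mathcal{I}$ and matching corresponding fresh nulls is the desired isomorphism. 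Combined with the composition principle, this shows that $\IS_3$ is a universal model set of $\R_3$ and $\D$.
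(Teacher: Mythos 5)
Your proposal is correct and takes essentially the same approach as the paper: the paper also reduces the claim to a per-seed statement (an appendix lemma showing that, for each $\mathcal{I}\in\IS_2$, a chase of $\R_3$ on $\mathcal{I}$ is deterministic and yields a single interpretation isomorphic to $\FunInterpretation{3}(\mathcal{I})$, proved by the same induction over tree depth with the $\PrCopy$/$\PrCopyPlusOne$ increment-gadget verification you sketch), and then lifts universality from $\IS_2$ to $\IS_3$ by composing homomorphisms exactly as in your composition principle, which the paper states concretely (in the proof of Lemma~\ref{lem:univ-model-set-2}, reused by reference) rather than as a general lemma. The only difference is this packaging; the mathematical content coincides.
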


\paragraph{$\mathbf{\R_4}$: Initial TM Configuration}
For each leaf in the tree of completions, $\R_4$ creates the representation of an
initial TM configuration. The tape is again represented by a $\PrNext$-chain,
using further unary predicates $\PrCell{\Separator}$, $\PrCell{\Blank}$, and $\PrCell{\PrP}$ (for all $\PrP\in\mathcal{S}$)
for additional tape symbols. 
$\PrHead{\StartingState}$ marks the TM's starting position and initial state $\StartingState$, and $\PrEndTape$ the end of the tape.

Let $\MaxArity$ be the maximal arity of predicates in $\mathcal{S}$.
We require that there is some $\PredsPerArity>0$ such that
$\mathcal{S}$ contains exactly $\PredsPerArity$ predicates $\PrP^i_1, \ldots, \PrP^i_\PredsPerArity$ of arity $i$, for every $1\leq i\leq \MaxArity$.
This is without loss of generality, except for the exclusion of nullary predicates.
Our results do not depend on this restriction, but it helps to simplify the presentation of our main ideas.

To serialise the data as a tape, we iterate over all predicate arities $\ell=1,\ldots,\MaxArity$ and over
all lists $\vec{v}$ of tree nodes with length $\ell$.
In this process, $\PrLoad{\ell}(u, t, \vec{v})$ expresses that, while encoding the leaf $u$, after constructing the tape
until position $t$, we continue serialising $\ell$-ary predicate data for arguments $\vec{v}$.
Analogously, $\PrReady{\ell}(u, t, \vec{v})$ means that this was completed at tape position $t$.
Similar predicates $\PrLoad{\PrP}$ and $\PrReady{\PrP}$ are used to consider a specific predicate $\PrP\in\mathcal{S}$
during this process. The rules in Figure~\ref{figure:rule-set-4-part} start the serialisation \eqref{rul_startTape},
proceed over all predicates \eqref{rul_firstPredEll}--\eqref{rul_lastPredEll},
iterate over parameter vectors \eqref{rul_tapeNextVector} and arities \eqref{rul_tapeNextLevel},
and finally end the tape \eqref{rul_endTape}.

Absent facts do not need to be serialised \eqref{rul_copyNotInPred},
while present facts can be treated by copying the encodings for each of their parameters \eqref{rul_loadAllFacts} and \eqref{rul_encodeFact}.
In the latter, $\PrLoadEncoding$ states that a specific argument is serialised between two given tape positions.

The resulting TM tapes serialise facts $\PrFreshIDBInTape{\PrP}(\vec{u})$ as introduced by $\R_2$, i.e., where $\vec{u}$ 
are nodes in the representative tree.
Given a model $\mathcal{I}\in\IS_3$ with some $\PrLeaf(u_w)\in\mathcal{I}$,
let $\FF{branch}(u_w)$ be the set of all nodes $u_{w'}$ on the branch of $u_w$, i.e.
all nulls $u_{w'}$ where $w'$ is a prefix of $w$.
Elements of $\FF{branch}(u_w)$ are totally ordered by setting $u_{w_1}\prec u_{w_2}$ if $|w_1|>|w_2|$.
Predicates are totally ordered by setting $\PrP^a_i\prec\PrP^b_j$ if either $a<b$, or both $a=b$ and $i<j$.
We can then order facts as $\PrP(\vec{u})\prec\PrQ(\vec{v})$ if $\vec{u},\vec{v}\subseteq\FF{branch}(u_w)$
and $\Tuple{\PrP,\vec{u}}$ is lexicographically before $\Tuple{\PrQ,\vec{v}}$.

Now let $\FF{branchDb}(\mathcal{I},u_w)=\{\PrP(\vec{u})\mid \PrFreshIDBInTape{\PrP}(\vec{u})\in\mathcal{I}, \vec{u}\subseteq\FF{branch}(u_w)\}$
denote the set of all facts on the branch with leaf $u_w$,
and let $\FF{branchTape}(\mathcal{I},u_w)$ denote the TM tape serialisation (as defined in Section~\ref{sec_prelims})
of $\FF{branchDb}(\mathcal{I},u_w)$ according to the total order $\prec$ and representing each
node $u_w$ by the binary representation of $|w|+1$ as before.
Given $S=\FF{branchTape}(\mathcal{I},u_w)$, let $\FF{startConf}(\mathcal{I},u_w)$ be the following interpretation:
\begin{align*}
\{&\PrLoad{1}(u_w, t_w^1, u_w), \PrHead{\StartingState}(t^1_w), \PrEndTape(t_w^{|S|+1})\} \cup{} \\
\{&\PrNext(t_w^{j-1}, t_w^{j}) \mid 2\leq j \leq |S|+1\} \cup{} \\
%
\{&\PrCell{a}(t_w^{j}) \mid  1\leq j\leq |S|, a=S[j]\}\cup\{\PrCell{\Blank}(t_w^{|S|+1})\}.
\end{align*}
Let $\mathcal{J}$ be the extension of $\mathcal{I}$ with $\FF{startConf}(\mathcal{I},u_w)$
for every $\PrLeaf(u_w)\in\mathcal{I}$.
We define $\FunInterpretation{4}(\mathcal{I})$ to be the smallest superset of $\mathcal{J}$
that satisfies all rules in $\R_4$ while including only the nulls in 
$\mathcal{J}$.
As in the case of $\FunInterpretation{3}$, the missing relations can easily be inferred
using the original rules or, for \eqref{rul_loadAllFacts} and \eqref{rul_encodeFact},
with simple rewritings thereof.

\begin{lemma}\label{lem:univ-model-set-4}
$\IS_4 = \{\FunInterpretation{4}(\mathcal{I}) \mid \mathcal{I}\in\IS_3\}$ is a universal model set of $\R_4$ and $\D$.
\end{lemma}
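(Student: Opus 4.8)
The plan is to follow exactly the two-step pattern already used for $\R_2$ and $\R_3$: first show that, for each fixed $\mathcal{I}\in\IS_3$, the interpretation $\FunInterpretation{4}(\mathcal{I})$ is isomorphic to the result of some chase of $\R_4$ on $\mathcal{I}$, and then lift this to a statement about $\R_4$ and $\D$ via Lemma~\ref{lem:univ-model-set-3}. Since every $\mathcal{I}\in\IS_3$ already satisfies $\R_2$, and hence all disjunctive rules of $\R_1\subseteq\R_2$, the only rules that can still fire in a chase of $\R_4$ on $\mathcal{I}$ are the deterministic (existential, non-disjunctive) rules of $\R_4\setminus\R_2$. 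Every such chase tree is therefore a single path, and its result is a single interpretation, which by Fact~\ref{fact_chase-univ-model-sets} is a singleton universal model set of $\R_4$ and $\mathcal{I}$. The task reduces to identifying this interpretation with $\FunInterpretation{4}(\mathcal{I})$.

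The lifting step is a general composition argument that I would state once and reuse for all four inductive lemmas. Assume $\IS_3$ is a universal model set of $\R_3,\D$ and that, for each $\mathcal{I}\in\IS_3$, the interpretation $\FunInterpretation{4}(\mathcal{I})\supseteq\mathcal{I}$ is a chase result of $\R_4$ on $\mathcal{I}$. Each $\FunInterpretation{4}(\mathcal{I})$ is then a model of $\R_4$, and since $\mathcal{I}\models\D$ and $\mathcal{I}\subseteq\FunInterpretation{4}(\mathcal{I})$ it is also a model of $\D$, so $\IS_4$ consists of models of $\R_4,\D$. For universality, let $\mathcal{M}\models\R_4,\D$. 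As $\R_3\subseteq\R_4$ we get $\mathcal{M}\models\R_3,\D$, so by universality of $\IS_3$ there are $\mathcal{I}\in\IS_3$ and a homomorphism $\mathcal{I}\to\mathcal{M}$. The existence of this homomorphism is precisely the statement $\mathcal{M}\models\mathcal{I}$; hence $\mathcal{M}$ is actually a model of $\R_4,\mathcal{I}$, and applying Fact~\ref{fact_chase-univ-model-sets} to the chase result $\FunInterpretation{4}(\mathcal{I})$ yields a homomorphism $\FunInterpretation{4}(\mathcal{I})\to\mathcal{M}$. Thus $\IS_4$ is universal for the models of $\R_4,\D$.

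It remains to identify $\FunInterpretation{4}(\mathcal{I})$ with a chase result. I would first argue well-definedness exactly as for $\FunInterpretation{3}$: over the fixed null set of $\mathcal{J}$ (namely $\mathcal{I}$ extended by the skeleton $\FF{startConf}(\mathcal{I},u_w)$ for every leaf $u_w$), the existential variables of the rules in $\R_4\setminus\R_3$ are used in unique ways, so these rules admit existential-free rewritings and the least superset of $\mathcal{J}$ closed under $\R_4$ over this domain is well-defined. I would then exhibit a chase strategy on $\R_4,\mathcal{I}$ that (i) serialises the branch of each leaf $u_w$ in the order $\prec$, stepping through arities $\ell=1,\dots,\MaxArity$ and parameter vectors $\vec v$ via \eqref{rul_startTape}--\eqref{rul_endTape} and \eqref{rul_copyNotInPred}, encoding each present fact through \eqref{rul_loadAllFacts} and \eqref{rul_encodeFact}; and (ii) whenever an existential rule fires, folds its witnesses onto the pre-placed nulls $t_w^j$ of $\FF{startConf}$, so that the restricted chase creates no nulls beyond those of $\mathcal{J}$ and adds only the auxiliary $\PrLoad{\ell}$, $\PrReady{\ell}$, $\PrLoadEncoding$, $\PrCopy$, and $\PrCopyPlusOne$ relations present in $\FunInterpretation{4}(\mathcal{I})$. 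Termination is then immediate, as there are finitely many leaves, arities, and parameter vectors, and each tape is finite of length $|S|$.

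The main obstacle is carrying out (i) and (ii) rigorously. One has to prove, by induction along the serialisation process, that the bookkeeping predicates $\PrLoad{\ell}$/$\PrReady{\ell}$ and $\PrLoad{\PrP}$/$\PrReady{\PrP}$ advance through the parameter vectors in exactly the lexicographic order underlying $\prec$, that each present fact $\PrFreshIDBInTape{\PrP}(\vec u)$ is written using the position encodings copied from $\R_3$ so that the cell contents of the chain $t_w^1,t_w^2,\dots$ coincide with $\FF{branchTape}(\mathcal{I},u_w)$, and that the restricted chase indeed reuses the skeleton nulls rather than creating fresh ones. Establishing this tape-level correctness — and in particular checking that the witness of each existential rule is forced, so that folding onto $\FF{startConf}$ is the only move the restricted chase can make — is the delicate part; once it is in place, the isomorphism between the chase result and $\FunInterpretation{4}(\mathcal{I})$, and hence the lemma, follow.
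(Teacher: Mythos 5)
Your proposal is correct and takes essentially the same approach as the paper: the paper likewise first shows (appendix Lemma~\ref{app:int4-datalog-first-chase}) that $\FunInterpretation{4}(\mathcal{I})$ is isomorphic to the unique interpretation in the result of a Datalog-first chase of $\R_4$ on each $\mathcal{I}\in\IS_3$ (single-path, since only non-disjunctive rules can still fire), naming the freshly created nulls so that the result coincides with the skeleton of $\FF{startConf}(\mathcal{I},u_w)$ together with the auxiliary $\PrLoad{\ell}$/$\PrReady{\ell}$/$\PrLoadEncoding$/$\PrCopy$ relations, and then lifts this to universality over models of $\R_4$ and $\D$ by exactly your homomorphism-composition argument via Lemma~\ref{lem:univ-model-set-3} and Fact~\ref{fact_chase-univ-model-sets}. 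The tape-level induction you flag as the delicate remaining step --- that the bookkeeping predicates traverse the parameter vectors in $\prec$-order so that the cell contents of the chain $t_w^1,t_w^2,\ldots$ spell out $\FF{branchTape}(\mathcal{I},u_w)$ --- is precisely what the paper's appendix proof carries out explicitly with its index functions $j^s_S$ and $j^e_S$.
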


\begin{figure}
\setlength{\abovedisplayskip}{0pt}
\setlength{\belowdisplayskip}{4pt} 
\begin{align}
\PrHead{\AcceptingState}(x) & \to \PrGoal \label{rul_tm_accept}\\
\PrNext(x, y) &\to \PrNextTrans(x, y) \label{rul_tm_nexttrans_init}\\
\PrNextTrans(x, y) \wedge \PrNextTrans(y, z) &\to \PrNextTrans(x, z) \label{rul_tm_nexttrans_rec}\\
\PrNext(x, y) \wedge \PrStep(x, z) \wedge \PrStep(y, w) &\to \PrNext(z, w) \label{rul_tm_new_nxt}
\end{align}
\begin{align}
\!\!\!\!\PrEndTape(x)\,{\wedge}\,\PrStep(x, z) &\to \exists v.\PrNext(z, v) \,{\wedge}\, \PrCell{\Blank}(v) \,{\wedge}\, \PrEndTape(v)
\label{rul_tm_add_mem}
\end{align}
\begin{align}
\PrHead{q}(x) \wedge \PrCell{a}(x) &\to \exists z.\PrStep(x, z)\,{\wedge}\,\PrCell{b}(z) \label{rul_tm_new_symb}\\
%
\!\!\PrHead{q}(x) \,{\wedge}\, \PrNextTrans(x, y) \,{\wedge}\, \PrCell{c}(y) &\to \exists z.\PrStep(y, z) \,{\wedge}\,  \PrCell{c}(z) \label{rul_tm_right_mem}\\
\!\!\PrHead{q}(x) \,{\wedge}\, \PrNextTrans(y, x) \,{\wedge}\, \PrCell{c}(y) &\to \exists z.\PrStep(y, z) \,{\wedge}\,  \PrCell{c}(z)
\label{rul_tm_left_mem}
%
\end{align}
\begin{align}
\PrHead{q}(x) \wedge \PrCell{a}(x) \wedge \PrStep(x, z) \wedge \PrNext(z, w) &\to \PrHead{r}(w) \label{rul_tm_move_right}\\
%
\PrHead{q}(x) \wedge \PrCell{a}(x) \wedge \PrStep(x, z) \wedge \PrNext(w, z) &\to \PrHead{r}(w) \label{rul_tm_move_left}
%
\end{align}
\caption{The rule set $\RuleSet_5$ contains $\RuleSet_4$ (see Figure~\ref{figure:rule-set-4}) and the above rules,
where we instantiate rules \eqref{rul_tm_new_symb}--\eqref{rul_tm_left_mem} for all transitions $(q, a) \mapsto (r, b, X) \in \TransitionFunction$
and $c\in\Alphabet$;
rule \eqref{rul_tm_move_right} for all $(q, a) \mapsto (r, b, +1) \in \TransitionFunction$; and 
rule \eqref{rul_tm_move_left} for all $(q, a) \mapsto (r, b, -1) \in \TransitionFunction$.}
\label{figure:rule-set-5}
\end{figure}

\paragraph{$\mathbf{\R_5}$: TM Run}
The purpose of $\R_5$ is to simulate the run of the deterministic TM
$\Tuple{\States, \Alphabet, \TransitionFunction}$
on each of the initial tapes created by $\R_4$.
We continue to use predicate $\PrNext$ for neighbouring tape cells (augmented with its transitive
closure $\PrNextTrans$), $\PrCell{b}$ to encode tape symbols $b\in\Alphabet$, and
$\PrHead{q}$ to encode head position and current state $q\in\States$.
Predicate $\PrStep$ connects tape cells in each configuration to the
corresponding tape cells in the next configuration (provided the TM performs another step).
The rules in Figure~\ref{figure:rule-set-5} are a standard TM encoding, with the slight
exception of rule \eqref{rul_tm_add_mem}, which adds a new blank tape cell in each step (even if not used by the TM).
Our rules use the assumptions on TMs in Section~\ref{sec_prelims}.

Consider some $\mathcal{I}\in\IS_4$.
It is easy and only mildly laborious to define interpretations $\FF{Run}(u_w)$
that represent all successor configurations of the starting configuration
$\FF{startConf}(\mathcal{I},u_w)$, appropriately connected with $\PrStep$ and the
transitive closure $\PrNextTrans$.
Moreover, let $\mathcal{J}$ be the extension of $\mathcal{I}$ with
all $\PrNextTrans$ required to satisfy \eqref{rul_tm_nexttrans_init} and \eqref{rul_tm_nexttrans_rec}
(note that $\PrNext$ also occurs in encodings from $\R_3$).
We define $\FunInterpretation{5}(\mathcal{I})$ as the union of
$\mathcal{J}$ with the interpretations $\FF{Run}(u_w)$ for all 
$u_w$ with $\PrLeaf(u_w)\in\mathcal{I}$.

\begin{lemma}\label{lem:univ-model-set-5}
$\IS_5= \{\FunInterpretation{5}(\mathcal{I}) \mid \mathcal{I}\in\IS_4\}$ is a universal model set of $\R_5$ and $\D$.
\end{lemma}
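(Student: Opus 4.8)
The plan is to reuse the compositional argument underlying the proofs of Lemmas~\ref{lem:univ-model-set-2}--\ref{lem:univ-model-set-4}. Since $\R_4\subseteq\R_5$ and, by Lemma~\ref{lem:univ-model-set-4}, $\IS_4$ is a universal model set for $\R_4,\D$, it suffices to establish the following: for every $\mathcal{I}\in\IS_4$, the interpretation $\FunInterpretation{5}(\mathcal{I})$ is, up to isomorphism, the unique member of a result of some chase of $\R_5$ with starting database $\mathcal{I}$. Granting this, membership of the elements of $\IS_5$ in the models of $\R_5,\D$ is immediate (a chase result from $\mathcal{I}\models\D$ is a model of $\R_5$ by Fact~\ref{fact_chase-univ-model-sets} and still satisfies $\D$), and universality follows by a standard composition step: any model $\mathcal{M}$ of $\R_5,\D$ is in particular a model of $\R_4,\D$, so Lemma~\ref{lem:univ-model-set-4} provides $\mathcal{I}\in\IS_4$ with a homomorphism $\mathcal{I}\to\mathcal{M}$; then $\mathcal{M}$ is a model of $\R_5,\mathcal{I}$, and since $\{\FunInterpretation{5}(\mathcal{I})\}$ is a universal model set for $\R_5,\mathcal{I}$ (Fact~\ref{fact_chase-univ-model-sets}) we obtain a homomorphism $\FunInterpretation{5}(\mathcal{I})\to\mathcal{M}$, as required.

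First I would argue that the chase of $\R_5$ from $\mathcal{I}$ is deterministic and terminating. As $\mathcal{I}\models\R_4$, no $\R_4$-rule is applicable initially, and one checks that the facts produced by the rules in $\R_5\setminus\R_4$ (over $\PrGoal,\PrNextTrans,\PrNext,\PrStep,\PrCell{b},\PrHead{q},\PrEndTape$) never re-enable an $\R_4$-rule; hence only $\R_5\setminus\R_4$ ever fires, and since these rules are non-disjunctive the chase tree is a single path with a single-element result. Termination holds because $\TM$ decides $\Query$ and therefore halts on every input: the symbol- and tape-copying rules \eqref{rul_tm_new_symb}--\eqref{rul_tm_left_mem} and the memory rule \eqref{rul_tm_add_mem} all require a defined transition $(q,a)\mapsto(r,b,X)$, which exists only for non-halting states, so as soon as $\AcceptingState$ or $\RejectingState$ is reached no further $\PrStep$ atom, and hence no further configuration, is generated. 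Thus only finitely many configurations, each of finite tape length, arise for each of the finitely many leaves $u_w$.

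Next I would verify that this unique chase result equals $\FunInterpretation{5}(\mathcal{I})$, by induction on the number of simulated TM steps. For each step, rule \eqref{rul_tm_new_symb} rewrites the scanned cell to the new symbol, rules \eqref{rul_tm_right_mem} and \eqref{rul_tm_left_mem} copy all cells to the right and left of the head (selected via $\PrNextTrans$) unchanged into the successor configuration, rule \eqref{rul_tm_new_nxt} rebuilds its $\PrNext$-chain, rule \eqref{rul_tm_add_mem} appends one fresh blank cell, rules \eqref{rul_tm_move_right}/\eqref{rul_tm_move_left} place the new head and state, and rules \eqref{rul_tm_nexttrans_init}/\eqref{rul_tm_nexttrans_rec} saturate $\PrNextTrans$. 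Identifying the fresh nulls introduced by these existential rules with the nulls used in the definition of $\FF{Run}(u_w)$ yields the claimed isomorphism, with the $\PrNextTrans$-saturation matching the extra facts collected into $\mathcal{J}$. Finally, $\PrGoal$ is derived by \eqref{rul_tm_accept} exactly when some configuration carries $\PrHead{\AcceptingState}$, i.e.\ iff $\TM$ accepts $\FF{branchTape}(\mathcal{I},u_w)$.

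I expect the main obstacle to be the bookkeeping in this induction: keeping apart the three successor-style predicates ($\PrNext$ within a configuration, $\PrStep$ across configurations, and the closure $\PrNextTrans$) and confirming that they interact as intended. A particularly delicate point is that the cell predicates $\PrCell{0}$ and $\PrCell{1}$ also occur on the position-encoding tapes built by $\R_3$, so one must check that no TM rule misfires there; this is benign because every TM rule is anchored by $\PrHead{q}$, $\PrStep$, or $\PrEndTape$, which occur only on genuine TM tapes, and because distinct tapes form distinct $\PrNext$-components so that $\PrNextTrans$ never links a TM cell to an encoding cell. A secondary subtlety is justifying that applying \eqref{rul_tm_add_mem} at every step produces exactly the tape lengths recorded in $\FF{Run}(u_w)$ and triggers nothing further once the machine has halted.
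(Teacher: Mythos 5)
Your proposal is correct and follows essentially the same route as the paper: the paper's own proof also rests on an appendix lemma (Lemma~\ref{app:int5-datalog-first-chase}) showing that $\FunInterpretation{5}(\mathcal{I})$ is isomorphic to the unique interpretation in a result of some chase over $\R_5$ and $\mathcal{I}$ -- proved by the same step-by-step, Datalog-first TM-simulation induction, with the same observations about $\PrNextTrans$-saturation of the $\R_3$ encodings and termination via the decider halting -- and then derives universality by exactly your composition: pull a model $\mathcal{M}$ of $\R_5,\D$ back to some $\mathcal{I}\in\IS_4$ via Lemma~\ref{lem:univ-model-set-4}, and compose the resulting homomorphism with the one given by Fact~\ref{fact_chase-univ-model-sets} applied to the chase of $\R_5$ on $\mathcal{I}$. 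No gaps worth noting.
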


\paragraph{Proving Lemma~\ref{lemma:expressivity-universal-model}}
To complete the proof of Lemma~\ref{lemma:expressivity-universal-model}, 
we set $\MS = \IS_5$. 
For $\mathcal{I}\in\MS$, let $\FunDatabase(\mathcal{I})=\{\PrP(\Vt)\mid \PrFreshIDBIn{\PrP}(\Vt)\in\mathcal{I}\}$ denote
the completed database created by $\R_1$.
Due to rule \eqref{rul_copyPrP} in Figure~\ref{figure:rule-set-1}, there is a
homomorphism $\D\to\FunDatabase(\mathcal{I})$.
Moreover, the representation tree constructed for $\mathcal{I}$ by $\R_2$ has a branch that is
maximal, i.e., has $|\FunOrder(\mathcal{I})|$ nodes; this branch has a leaf $u_w$ with $|w|=|\FunOrder(\mathcal{I})|$. 
We obtain a homomorphism $\FunDatabase(\mathcal{I})\to\FF{branchDb}(\mathcal{I},u_w)$.
Lemma~\ref{lemma:expressivity-universal-model} now follows from Lemma~\ref{lem:univ-model-set-5} and
Lemmas \ref{lemma:completeness} and \ref{lemma:soundness} below.

\begin{lemma}\label{lemma:soundness}
If $\D\in\Query$, then $\PrGoal\in\mathcal{I}$ for each $\mathcal{I}\in\MS$.
\end{lemma}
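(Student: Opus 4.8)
The plan is to show, for an arbitrary $\mathcal{I}\in\MS=\IS_5$, that $\PrGoal\in\mathcal{I}$ by isolating a single branch of the representative tree whose associated initial tape is accepted by $\TM$, so that the simulation encoded by $\R_5$ eventually produces an atom $\PrHead{\AcceptingState}(x)$ and rule~\eqref{rul_tm_accept} fires. Concretely, I would first fix the seed $\FunSeed(\mathcal{I})\in\IS_1$ together with its ordered partition $\FunOrder(\mathcal{I})=T_1,\ldots,T_k$, and then select the leaf $u_w$ of the \emph{maximal} branch, i.e.\ the one with $|w|=k$ that visits every block $T_1,\ldots,T_k$ (such a branch exists by the construction of $\FF{Tree}$). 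Maximality is what makes $\FF{branch}(u_w)$ contain exactly one representative node for each block, so that every domain element is represented on the branch.

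Next I would assemble the homomorphism chain already indicated in the running text, $\D\to\FunDatabase(\mathcal{I})\to\FF{branchDb}(\mathcal{I},u_w)$: the first homomorphism comes from rule~\eqref{rul_copyPrP}, and the second maps each null $x\in T_j$ to the branch node representing $T_j$, which is a homomorphism because the $\PrFreshIDBInTape{\PrP}$ facts on branch nodes are precisely the images under $\PrLink$ of the $\PrFreshIDBIn{\PrP}$ facts (rule~\eqref{rul_makeTreeIn}). Since $\D\in\Query$ and $\Query$ is closed under homomorphisms, composing these maps yields $\FF{branchDb}(\mathcal{I},u_w)\in\Query$. By definition, $\FF{branchTape}(\mathcal{I},u_w)$ is a serialisation of this database, and because $\TM$ decides $\Query$ while acceptance does not depend on the chosen serialisation, $\TM$ halts in state $\AcceptingState$ when started on it.

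It then remains to transfer this machine-level acceptance back to the interpretation. I would argue that the run interpretation $\FF{Run}(u_w)\subseteq\mathcal{I}$ faithfully mirrors the computation of $\TM$ on the initial configuration $\FF{startConf}(\mathcal{I},u_w)$: by induction on the number of steps, each configuration of $\TM$ corresponds to a $\PrNext$-chain of cells carrying the correct $\PrCell{a}$ symbols, a unique head atom $\PrHead{q}(x)$ recording state and position, and consecutive configurations linked by $\PrStep$ (with rule~\eqref{rul_tm_add_mem} guaranteeing the tape is always long enough). Hence the accepting configuration reached by $\TM$ produces an atom $\PrHead{\AcceptingState}(x)\in\mathcal{I}$, and since $\PrGoal$ is nullary, rule~\eqref{rul_tm_accept} forces $\PrGoal\in\mathcal{I}$, as required.

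The main obstacle I expect is the faithfulness of this simulation, i.e.\ making the step-by-step correspondence between machine configurations and their encodings precise. This is the ``mildly laborious'' induction: one must verify that rules~\eqref{rul_tm_new_symb}--\eqref{rul_tm_move_left} rewrite the scanned symbol, propagate the unchanged cells, and reposition the head exactly as $\TransitionFunction$ prescribes, relying on the left-boundedness and single-transition assumptions on $\TM$ from Section~\ref{sec_prelims}. By contrast, the homomorphism bookkeeping and the appeal to homomorphism-closure of $\Query$ are comparatively routine; the delicate part is pinning down $\FF{Run}(u_w)$ and checking that it is exactly the encoded run rather than merely some model of the $\R_5$-rules.
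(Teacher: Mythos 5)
Your proposal is correct and takes essentially the same route as the paper: the homomorphism chain $\D\to\FunDatabase(\mathcal{I})\to\FF{branchDb}(\mathcal{I},u_w)$ for the maximal branch leaf $u_w$ with $|w|=|\FunOrder(\mathcal{I})|$, homomorphism-closure of $\Query$, the fact that $\TM$ decides $\Query$ (independently of the chosen serialisation), and finally the correctness of the TM simulation to obtain $\PrGoal\in\mathcal{I}$. The only organizational difference is that the paper defines $\FF{Run}(u_w)$ directly from the computation $\FunComputation(\TM,S)$, so that $\PrGoal\in\FF{Run}(u_w)\subseteq\mathcal{I}$ holds by construction when the run accepts; the step-by-step faithfulness you identify as the "main obstacle" is thus discharged in the separately proven Lemma~\ref{lem:univ-model-set-5} (via Lemma~\ref{app:int5-datalog-first-chase}) rather than by an inline induction inside this proof.
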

\begin{proof}
As shown above, there is a homomorphism $\D\to\FF{branchDb}(\mathcal{I},u_w)$
for the node $u_w$ where $|w|=|\FunOrder(\mathcal{I})|$.
Since $\Query$ is closed under homomorphisms, 
$\D\in\Query$ implies $\FF{branchDb}(\mathcal{I},u_w)\in\Query$.
By the correctness of our TM simulation, we obtain $\PrGoal\in\mathcal{I}$.
\end{proof}

\begin{lemma}
\label{lemma:completeness}
If $\D\notin\Query$, then $\PrGoal\notin\mathcal{I}$ for some $\mathcal{I}\in\MS$.
\end{lemma}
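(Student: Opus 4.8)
The plan is to exhibit a single witness interpretation $\mathcal{I}\in\MS$ with $\PrGoal\notin\mathcal{I}$, chosen to represent a \emph{genuine} strict total order together with the \emph{exact} completion of $\D$. This is the ``intended'' model in which no nulls are collapsed and no spurious facts are invented, so that every tape the construction feeds to $\TM$ encodes a substructure of $\D$. Note the duality with the soundness argument: there one used the \emph{maximal} branch and a homomorphism $\D\to\FF{branchDb}$, whereas here I use \emph{all} branches and homomorphisms in the reverse direction, $\FF{branchDb}\to\D$.

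Concretely, I first pick $\mathcal{I}_1\in\IS_1$ as follows. Using the characterisation from Lemma~\ref{lem:univ-model-set-1}, choose an ordered partition $\VT=T_1,\ldots,T_k\in\FunOrders$ in which every block $T_i$ is a singleton; such a $\VT$ exists because $\Delta=\EI{\Nulls}{\D}\cup\{\FirstNull,\LastNull\}$ can be linearly ordered with $\FirstNull$ first and $\LastNull$ last, giving $k=|\Delta|$. Among the interpretations in $\FunCompletions(\VT)$, take the one $\mathcal{I}_1$ whose completion is \emph{exact}, i.e.\ $\PrFreshIDBIn{\PrP}(\Vt)\in\mathcal{I}_1$ iff $\PrP(\Vt)\in\D$ (and $\PrFreshIDBNotIn{\PrP}(\Vt)\in\mathcal{I}_1$ otherwise). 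This is a legitimate member of $\FunCompletions(\VT)$: rule~\eqref{rul_copyPrP} forces $\PrFreshIDBIn{\PrP}$ for exactly the facts of $\D$, rule~\eqref{rul_DbCompletion} is satisfied since each tuple receives exactly one label, and because all blocks are singletons the relation $\PrEq$ is the identity, so the congruence rules~\eqref{rul_EqCongruence} collapse nothing. I then propagate this choice deterministically to the witness $\mathcal{I}:=\FunInterpretation{5}(\FunInterpretation{4}(\FunInterpretation{3}(\FunInterpretation{2}(\mathcal{I}_1))))\in\IS_5=\MS$.

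The key step is to show that for every leaf $u_w$ with $\PrLeaf(u_w)\in\mathcal{I}$, the branch database $\FF{branchDb}(\mathcal{I},u_w)$ admits a homomorphism into $\D$. Because every block $T_i$ is a singleton, $\PrLink$ links each tree node $u_{w'}$ to a single null $t_{\FF{end}(w')}\in\Delta$, so rule~\eqref{rul_makeTreeIn} yields $\PrFreshIDBInTape{\PrP}(\Vu)\in\mathcal{I}$ exactly when $\PrFreshIDBIn{\PrP}(\Vt)\in\mathcal{I}_1$ for the represented nulls $\Vt$; by exactness this holds iff $\PrP(\Vt)\in\D$. Hence the assignment $u_{w'}\mapsto t_{\FF{end}(w')}$ maps every fact of $\FF{branchDb}(\mathcal{I},u_w)$ to a fact of $\D$, i.e.\ it is a homomorphism $\FF{branchDb}(\mathcal{I},u_w)\to\D$; the nodes representing $\FirstNull$ and $\LastNull$ occur in no fact, since these fresh nulls lie outside $\D$, so they need not be mapped. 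Since $\Query$ is closed under homomorphisms, $\D\notin\Query$ forces $\FF{branchDb}(\mathcal{I},u_w)\notin\Query$ for \emph{every} leaf.

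Finally, since $\TM$ decides $\Query$ and $\FF{branchTape}(\mathcal{I},u_w)$ serialises $\FF{branchDb}(\mathcal{I},u_w)$, the machine rejects every branch tape. By the correctness of the TM simulation underlying Lemma~\ref{lem:univ-model-set-5}, $\PrHead{\AcceptingState}$ is therefore never derived along any branch, so rule~\eqref{rul_tm_accept} never fires and $\PrGoal\notin\mathcal{I}$, as required. The main obstacle is the key step above: one must verify that the exact-completion and singleton-order choice really introduces no spurious $\PrFreshIDBInTape{\PrP}$ facts on a branch, so that $\FF{branchDb}(\mathcal{I},u_w)$ embeds into $\D$ rather than into some larger structure; everything else reduces to the (reverse direction of the) already-invoked TM-simulation correctness.
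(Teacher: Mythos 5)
Your proof is correct and follows essentially the same route as the paper: both pick the witness $\mathcal{I}\in\MS$ built from a strict (singleton-block) total order with the exact completion of $\D$, and both conclude via homomorphism-closure of $\Query$ and correctness of the TM simulation that no branch can derive $\PrGoal$. The only cosmetic difference is that the paper maps every non-maximal branch into the maximal branch (which is isomorphic to $\D$), whereas you map each branch directly into $\D$ — the two arguments coincide after composing with that isomorphism.
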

\begin{proof}
Consider some $\mathcal{I}\in\MS$ such that $\FunDatabase(\mathcal{I})=\D$
and $\PrNeq(t, u) \in \mathcal{I}$ for each $t, u \in\EI{\Nulls}{\Database}$ with $t \neq u$.
Let $u_w$ denote the leaf node with $|w|=|\FunOrder(\mathcal{I})|$ as before.
Then $\FF{branchDb}(\mathcal{I},u_w)$ is isomorphic to $\FunDatabase(\mathcal{I})=\D$.
By the correctness of our TM simulation, $\PrGoal$ is not derived from this maximal branch.
Moreover, for all other leaf nodes $u_v$ with $\PrLeaf(u_v)\in\mathcal{I}$, there is a
homomorphism $\FF{branchDb}(\mathcal{I},u_v)\to\FF{branchDb}(\mathcal{I},u_w)$.
Since $\Query$ is closed under homomorphisms, the TM does not accept any such $\FF{branchDb}(\mathcal{I},u_v)$,
so $\PrGoal\notin\mathcal{I}$.
\end{proof}

\section{Ensuring Chase Termination}\label{sec_brake}

While the rules in Section~\ref{sec_disRules} are semantically correct, the disjunctive chase may
not terminate on them.
Many known fragments of existential rules can guarantee chase termination, including for expressive cases
where termination might be exponential \cite{DBLP:conf/ijcai/CarralDKL19}, but they are not applicable
to our case, since the runtime of TMs that decide a query can in general not be bounded by any elementary function.
Indeed, we rely on the TM to stop ``naturally'', by virtue of being a decider.
Nevertheless, our rules lead to infinite chase trees, e.g., if the disjunctive guessing of
$\PrOrder$ leads to a cycle, which enables rule \eqref{rul_NextTreeRep} to create an infinite
path in the representation tree. We will now show that this can be avoided:

\begin{theorem}\label{theo_expressivity_terminating}
Every homomorphism-closed decidable query is expressed by a
set of disjunctive rules that is chase-terminating for all databases over the schema of the query.
\end{theorem}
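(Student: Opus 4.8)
The plan is to keep the rule set $\R_5$ essentially intact and to \emph{add} a small ``braking'' gadget that neutralises the one source of divergence, while arguing that adding rules cannot destroy the query we already express. First I would localise where infinitude can arise. On the good interpretations collected in $\IS_5$ the relation $\PrOrder$ is acyclic (it stems from an ordered partition of the finite set $\Delta=\EI{\Nulls}{\D}\cup\{\ldots\}$), so every path in the representative tree is a strictly increasing chain of length at most $|\Delta|$, every position encoding has length $O(\log|\Delta|)$, every branch tape is finite, and---crucially---the simulated machine halts because it is a \emph{decider}. Hence on acyclic branches the whole of $\R_5$ terminates for free. The \emph{only} way to obtain an infinite chase tree is a cyclic guess for $\PrOrder$: once transitivity \eqref{rul_TransLT} closes a cycle it exposes a self-loop $\PrOrder(x,x)$, which feeds \eqref{rul_NextTreeRep} and lets it spawn an unbounded path in the tree (and, downstream, unbounded encodings, tapes, and machine runs).

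The brake therefore watches for exactly this. I would add $\PrOrder(x,x)\to\PrBrake$ (a cycle is detected as soon as fairness forces transitivity to reveal the self-loop) together with a rule that, conditioned on $\PrBrake$, creates a single \emph{sink} null $\BrakeNull$ and \emph{fully saturates} it, i.e.\ asserts $\PrP(\BrakeNull,\ldots,\BrakeNull)$ for \emph{every} predicate $\PrP$ occurring in $\R_5$. The point of a fully saturated sink is that the head of every existential rule becomes satisfiable by sending its existential variables to $\BrakeNull$, so the standard chase's built-in redundancy check declines to introduce any fresh null once the relevant frontier atoms already point at $\BrakeNull$. The remaining task is to make those frontier atoms available \emph{before} the offending rule is tried, so that braking \emph{pre-empts} rather than merely eventually-matches generation. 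Concretely I would route every tree node to the sink at the moment it is born---extending \eqref{rul_NextTreeRep} so that, under $\PrBrake$, a newly linked node is created already carrying $\PrHasChild(\cdot,\BrakeNull)$ and $\PrLink(\cdot,\BrakeNull)$, and equipping $\BrakeNull$ with $\PrLink$-edges from all domain elements---so that any further application of \eqref{rul_NextTreeRep} at such a node finds its head already witnessed by $\BrakeNull$ and is never applied.

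For semantics I would argue two one-line facts. Soundness is monotone and automatic: since the braked set $\R'$ satisfies $\R_5\subseteq\R'$, every model of $\R',\D$ is a model of $\R_5,\D$, so Theorem~\ref{theo_expressivity} yields $\R',\D\models\PrGoal$ whenever $\D\in\Query$ (it is harmless if braked branches additionally derive $\PrGoal$ from an accepting head on the saturated sink). For completeness I would reuse the ``good'' interpretation of Lemma~\ref{lemma:completeness}, the one whose order is the genuine strict total order on $\EI{\Nulls}{\D}$ and whose completion is exactly $\D$: it has no $\PrOrder$ self-loop, so $\PrBrake$ never fires, it contains no sink, and it is therefore still a model of $\R'$; by correctness of the machine simulation it withholds $\PrGoal$ when $\D\notin\Query$, giving $\R',\D\not\models\PrGoal$.

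The hard part is all-strategy termination: showing that \emph{every} chase tree of $\R'$---including adversarial ones firing generation as fast as possible---is finite. The delicate issue is the race between creating fresh nulls and installing the sink connections. A naive post-hoc brake that only \emph{eventually} supplies the $\BrakeNull$-witness does not suffice, because a strategy can keep extending the tree, each extension self-satisfying the previous rule instance via the node it just created, so that fairness is respected yet the branch is infinite; my pre-emptive design is meant to close exactly this gap, since the sink connection is part of the very head that creates a node. I would make this precise by (i) bounding the pre-brake prefix (while $\PrOrder$ is acyclic the tree has depth $\le|\Delta|$, and a cycle forces $\PrBrake$ at finite depth by fairness) and (ii) showing that once $\PrBrake$ holds the saturated sink renders every existential rule of $\R_3,\R_4,\R_5$ redundant, so only finitely many sink-routing facts over the already-present elements are added and the chase reaches a fixpoint. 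Care is needed to absorb \emph{all} generating rules uniformly (the sink must saturate $\PrEncoding$, $\PrNext$, $\PrStep$, and the $\PrCell{\cdot}$ and $\PrHead{\cdot}$ families as well), since a partially saturated sink would reintroduce loops---a bare self-loop $\PrHasChild(\BrakeNull,\BrakeNull)$ would relaunch \eqref{rul_encChild}---and full saturation is exactly what prevents this.
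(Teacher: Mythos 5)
Your proposal shares the paper's high-level strategy (detect a defect, then ``brake'' so that existential rules become pre-satisfied), but two of its key steps have genuine gaps. First, braking only on $\PrOrder(x,x)$ is not enough. The paper's termination proof (Lemma~\ref{lemma_termDisjRules_terminating}) works by showing that on every branch where \PrHalt is \emph{not} derived, the restriction of the label to the predicates of $\R_1$ is a genuine element of $\IS_1$, so that the deterministic, provably terminating construction of Section~\ref{sec_disRules} applies; this requires halting on \emph{all four} defects, i.e.\ rules \eqref{rul_halt_inconsistentDb}--\eqref{rul_halt_beforeFirst}, not just cycles. Under your cycles-only brake, a non-braked branch may still contain an inconsistent completion, i.e.\ both $\PrFreshIDBIn{\PrP}(\Vt)$ and $\PrFreshIDBNotIn{\PrP}(\Vt)$; then \eqref{rul_loadAllFacts} and \eqref{rul_copyNotInPred} both fire, the serialised ``tape'' forks, cells carry several $\PrCell{a}$ labels, and the rules of Figure~\ref{figure:rule-set-5} then run on a structure that is not the serialisation of any word. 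Your appeal to ``the machine halts because it is a decider'' is invalid there: a decider is only guaranteed to halt from genuine initial configurations on genuine word inputs, and it is easy to build deciders that loop forever from configurations unreachable from any word input --- configurations which such pseudo-runs on forked, multi-labelled tapes can realise. So your claim that ``on acyclic branches the whole of $\R_5$ terminates for free'' is exactly the kind of statement that needs (and in this form does not admit) a proof; the paper's four halt rules are designed precisely to avoid ever having to analyse the chase on malformed structures.

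Second, your ``pre-emptive'' sink mechanism is not coherent as described. If the sink connections are placed in the head of the node-generating rules themselves, those rules need \PrBrake (hence the sink) in their bodies and can never fire on good branches, so the representation tree is never built --- completeness fails, and your soundness-by-monotonicity argument (which needs $\R_5\subseteq\R'$ with the original rules untouched) collapses as well. If instead the connections are supplied by separate \PrBrake-conditioned rules, the race you yourself identify reappears: a fair strategy can keep firing \eqref{rul_NextTreeRep} on ever-new nodes, each such instance being satisfied once fired, while the connecting-rule instances are each \emph{eventually} applied; fairness is respected, yet the branch is infinite. The paper's Definition~\ref{def_brake} is exactly the machinery that escapes this dilemma: the brake null exists unconditionally from the start \eqref{rul_brake_init}; every rule is split into a Datalog rule \eqref{rul_brake_body} that \emph{atomically} installs the brake-null placeholder witness (on hatted copies of the predicates) together with the trigger $\PrBrakeBody_\rho^i$, and a deferred existential rule \eqref{rul_brake_head} whose head becomes pre-satisfied the moment \PrHalt makes the brake null \PrReal via \eqref{rul_brake_halt}; and hatted atoms are reflected back to the original predicates only over \PrReal elements \eqref{rul_brake_real}. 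This hatted/\PrReal indirection also avoids your residual pollution problem (a sink saturated on the \emph{original} predicates forces $\PrGoal$ and every other fact on braked branches). Without this indirection, or an equivalent device, your construction does not achieve all-strategy termination while preserving the semantics, so the proposal as written does not prove the theorem.
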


To show this, we refine and generalise the ``emergency brake'' technique of
\citeauthor{DBLP:conf/icdt/KrotzschMR19} (\citeyear{DBLP:conf/icdt/KrotzschMR19}), and
re-formulate it as a general rule set transformation.
This not only yields a generic method that is of independent interest, but it also allows
us to address potential termination problems in our prior modelling.

\begin{definition}\label{def_brake}
Consider a rule set $\Sigma$ and a nullary predicate $\PrHalt$ that does not occur in $\Sigma$.
For every predicate $\PrP$ in $\Sigma$, let $\hat{\PrP}$ be a fresh predicate of the same arity,
and, for any formula $\psi$, let $\hat{\psi}$ be $\psi$ with all predicates $\PrP$ replaced by $\hat{\PrP}$.
%
%
%
%
Now the set $\FF{brake}(\Sigma,\PrHalt)$ consists of the following rules:
\begin{align}
 &\textstyle\to\exists v.\PrBrake(v) \label{rul_brake_init}
 \\
 \PrHalt\wedge\PrBrake(x)  &\to \PrReal(x)\label{rul_brake_halt}\\
\textstyle\hat{\PrP}(\vec{x})\wedge\bigwedge_{x\in\vec{x}} \PrReal(x) & \to \PrP(\vec{x}) \qquad \text{for all $\PrP$ in $\Sigma$}
\label{rul_brake_real}
\end{align}
For every rule $\rho: \Body[\vec{x}]\to\bigvee_{i=1}^k \exists\vec{y}_i.\Head_i[\vec{x}_i,\vec{y}_i]$:
\begin{align}
\begin{split}
\Body[\vec{x}]\,{\wedge}\,\PrBrake(v)
	&\textstyle\to \bigvee_{i=1}^k  \big(\PrBrakeBody_\rho^i(\vec{x}_i)\,{\wedge}\,\hat{\Head}_i[\vec{x_i},\vec{y}_i\mapsto v]
				\,{\wedge}\\
	&\textstyle\phantom{{}\to \bigvee_{i=1}^k (}\bigwedge_{x\in\vec{x}_i} \PrReal(x)\big) 
\end{split}\label{rul_brake_body}\\
\PrBrakeBody_\rho^i(\vec{x}_i) & \textstyle\to \exists\vec{y}_i.\hat{\Head}_i[\vec{x_i},\vec{y_i}]\wedge\bigwedge_{y\in\vec{y}_i} \PrReal(y)
\label{rul_brake_head}
\end{align}
where $\hat{\Head}_i[\vec{x_i},\vec{y}_i\mapsto v]$ is $\hat{\Head}_i$ with each variable $y\in\vec{y}_i$ replaced by $v$,
and $\PrBrake$, $\PrReal$, and all $\PrBrakeBody_\rho^i$ are fresh predicates with arities as indicated.
\end{definition}


Note that $\FF{brake}(\Sigma,\PrHalt)$ does not define rules to derive $\PrHalt$, and indeed the transformation largely
preserves the models of $\Sigma$ in the following sense:

\begin{lemma}\label{lemma_brake_conservative}
Consider a rule set $\Sigma$ and database $\D$ over predicates that occur in $\Sigma$.
For every model $\mathcal{I}$ of $\FF{brake}(\Sigma,\PrHalt)$ and $\D$, the set $\mathcal{I}^-=\{\PrP(\vec{n})\mid \PrP(\vec{n})\in\mathcal{I}, \PrP\text{ occurs in }\Sigma\}$ is a model of $\Sigma$ and $\D$, and every model $\mathcal{J}$ of $\Sigma$ and $\D$ is of this form.
\end{lemma}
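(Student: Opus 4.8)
The plan is to establish the two directions separately: first, that $\mathcal{I}^-$ is a model of $\Sigma$ and $\D$ whenever $\mathcal{I}$ models $\FF{brake}(\Sigma,\PrHalt)$ and $\D$; and second, that every model $\mathcal{J}$ of $\Sigma$ and $\D$ arises as some such $\mathcal{I}^-$. The guiding intuition is that $\FF{brake}(\Sigma,\PrHalt)$ carries two parallel copies of the data: the genuine predicates $\PrP$ (which form $\mathcal{I}^-$) and their shadows $\hat{\PrP}$, linked by the propagation rule \eqref{rul_brake_real}, which converts a shadow fact into a genuine one exactly when all of its arguments are flagged $\PrReal$. The $\PrReal$ flag, in turn, is attached to frontier arguments by \eqref{rul_brake_body} and to existential witnesses by \eqref{rul_brake_head}.

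For the first direction, I would fix a rule $\rho\in\Sigma$ and a substitution $\sigma$ with $\sigma(\Body)\subseteq\mathcal{I}^-\subseteq\mathcal{I}$ and chase the brake rules forward inside $\mathcal{I}$. Rule \eqref{rul_brake_init} supplies a null $v_0$ with $\PrBrake(v_0)\in\mathcal{I}$; feeding $\sigma$ together with $v\mapsto v_0$ into \eqref{rul_brake_body} yields some disjunct $i$ for which $\PrBrakeBody_\rho^i(\sigma(\vec{x}_i))\in\mathcal{I}$ and every frontier argument $\sigma(x)$, $x\in\vec{x}_i$, is $\PrReal$. Applying \eqref{rul_brake_head} to $\PrBrakeBody_\rho^i(\sigma(\vec{x}_i))$ then supplies an extension $\sigma'$ of $\sigma$ with the full shadow head $\hat{\Head}_i[\sigma'(\vec{x}_i),\sigma'(\vec{y}_i)]\subseteq\mathcal{I}$ and every existential witness $\sigma'(y)$, $y\in\vec{y}_i$, flagged $\PrReal$. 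Since all arguments occurring in $\hat{\Head}_i$ are now $\PrReal$, rule \eqref{rul_brake_real} fires on each shadow atom and places the corresponding genuine atom into $\mathcal{I}$, hence into $\mathcal{I}^-$; thus $\sigma'(\Head_i)\subseteq\mathcal{I}^-$ and $\rho$ is satisfied. Finally $\mathcal{I}^-\models\D$ because $\D$ uses only $\Sigma$-predicates, so any homomorphism $\D\to\mathcal{I}$ already lands in $\mathcal{I}^-$.

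For the converse, given $\mathcal{J}\models\Sigma,\D$, I would explicitly build a witness $\mathcal{I}$ with $\mathcal{I}^-=\mathcal{J}$. Take a fresh null $v_0\notin\EI{\Nulls}{\mathcal{J}}$, flag every null of $\mathcal{J}$ (but \emph{not} $v_0$) as $\PrReal$, assert $\PrBrake(v_0)$, and do not assert $\PrHalt$. Add the shadow $\hat{\PrP}(\vec{n})$ of every $\PrP(\vec{n})\in\mathcal{J}$. For each rule $\rho$ and each body match $\sigma$ into $\mathcal{J}$, use that $\mathcal{J}\models\rho$ to pick a disjunct $i$ and an extension $\sigma'$ witnessing satisfaction, and add both $\PrBrakeBody_\rho^i(\sigma(\vec{x}_i))$ and the collapsed shadow head $\hat{\Head}_i[\sigma(\vec{x}_i),\vec{y}_i\mapsto v_0]$. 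One then checks every brake rule: \eqref{rul_brake_init} and \eqref{rul_brake_halt} hold immediately (the latter vacuously, as $\PrHalt\notin\mathcal{I}$); \eqref{rul_brake_body} holds because the chosen disjunct $i$ is present with frontier arguments real and the collapsed head over $v_0$ in place; \eqref{rul_brake_head} holds using the witnesses $\sigma'$, whose shadow heads lie in the shadow copy of $\mathcal{J}$; and $\mathcal{I}\models\D$ since $\mathcal{J}\subseteq\mathcal{I}$. It remains to confirm $\mathcal{I}^-=\mathcal{J}$.

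The main obstacle is precisely this last clause: keeping $\mathcal{I}^-$ equal to $\mathcal{J}$ despite rule \eqref{rul_brake_real}. The collapsed head atoms $\hat{\Head}_i[\sigma(\vec{x}_i),\vec{y}_i\mapsto v_0]$ are shadow facts, and were all of their arguments $\PrReal$, then \eqref{rul_brake_real} would force spurious genuine facts into $\mathcal{I}^-$. The construction avoids this by leaving $v_0$ \emph{not} $\PrReal$, which is exactly why $\PrHalt$ must be withheld (otherwise \eqref{rul_brake_halt} would flag $v_0$): any collapsed atom that carries a genuine existential argument contains $v_0$ and hence does not trigger \eqref{rul_brake_real}, whereas a head atom built only from frontier variables coincides with $\sigma(\Head_i)$ and thus already lies in $\mathcal{J}$ via $\sigma'$, so firing \eqref{rul_brake_real} on it adds nothing new. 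Carrying out this case distinction, and verifying that the shadow copy of $\mathcal{J}$ likewise produces no genuine facts outside $\mathcal{J}$, is the crux of the argument.
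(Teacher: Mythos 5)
Your proof is correct and follows essentially the same route as the paper: the forward direction chases rules \eqref{rul_brake_init}, \eqref{rul_brake_body}, \eqref{rul_brake_head}, and \eqref{rul_brake_real} inside the given model exactly as the paper does, and the converse builds a witness model by adding $\PrBrake$ on a fresh null, the chosen $\PrBrakeBody_\rho^i$ atoms, the collapsed hatted heads over that null, and $\PrReal$ flags on the nulls of $\mathcal{J}$. The one place where you diverge is worth highlighting: you additionally include the shadow copy $\hat{\PrP}(\vec{n})$ of every $\PrP(\vec{n})\in\mathcal{J}$, which the paper's construction does not mention; this addition is in fact \emph{needed}, since without it rule \eqref{rul_brake_head} would be violated whenever $\vec{y}_i\neq\emptyset$ --- the only hatted head atoms available would then be the collapsed ones over the brake null, which is deliberately not flagged $\PrReal$. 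Your closing case analysis for \eqref{rul_brake_real} (collapsed atoms containing $v_0$ never fire; purely-frontier head atoms coincide with $\sigma'(\Head_i)\subseteq\mathcal{J}$; shadows of $\mathcal{J}$ are harmless) correctly verifies that this extra shadow copy forces no spurious genuine facts, so $\mathcal{I}^-=\mathcal{J}$ is preserved and your witness is genuinely a model. In short: same approach as the paper, carried out with a repair that makes the converse direction fully rigorous.
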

\begin{proof}
Consider a rule $\rho\in\Sigma$ as in Definition~\ref{def_brake}, and let $\sigma$ be a
substitution such that $\sigma(\Body)\subseteq\mathcal{I}^-$. Then we can apply rules
\eqref{rul_brake_body}, \eqref{rul_brake_head}, and finally \eqref{rul_brake_real}
to derive $\sigma'(\Head_i)\subseteq\mathcal{I}^-$ for a suitable extension $\sigma'$ of $\sigma$.
Hence $\mathcal{I}^-\models\Sigma$.

Conversely, let $\mathcal{J}\models\Sigma$. A model $\mathcal{I}$ of $\FF{brake}(\Sigma,\PrHalt)$
can be found by adding, for each matching body $\sigma(\Body)\subseteq\mathcal{J}$ of rule $\rho$,
an atom $\sigma(\PrBrakeBody_\rho^i(\vec{x}_i))$ for some $i$ such that $\sigma'(\Head_i)\subseteq\mathcal{J}$
for an extension $\sigma'$ of $\sigma$. To obtain the required model $\mathcal{I}$ of $\FF{brake}(\Sigma,\PrHalt)$,
it remains to add facts $\PrBrake(b)$ for a fresh null $b$, 
$\sigma(\hat{\Head}_i[\vec{x_i},\vec{y}_i\mapsto b])$ as in \eqref{rul_brake_body} for every 
$\sigma(\PrBrakeBody_\rho^i(\vec{x}_i))\in\mathcal{I}$, and
$\PrReal(n)$ for every $\PrP(\vec{n})\in\mathcal{J}$ and $n\in\vec{n}$.
\end{proof}

For $\FF{brake}(\Sigma,\PrHalt)$ to be useful, we need to add rules that can ``pull the brake'' by deriving
$\PrHalt$. Doing so stops the chase in the following sense:

\begin{lemma}\label{lemma_brake_termination}
Consider a rule set $\Sigma$, a database $\D$ over predicates that occur in $\Sigma$, 
and a set $\Pi$ of rules of the form $\Body\to\PrHalt$ where $\Body$ only uses predicates in $\Sigma$.
If $\mathcal{I}$ is the label of a node in a chase tree for $\Sigma\cup\Pi$ and $\D$ such 
that $\PrHalt\in\mathcal{I}$, then the tree starting at the node of $\mathcal{I}$ is finite.
\end{lemma}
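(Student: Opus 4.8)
The plan is to establish that every root-to-leaf path below the node labelled $\mathcal{I}$ is finite and then conclude via König's Lemma, using that the chase tree has finite branching (each node has exactly as many children as the applied rule has head disjuncts). The starting observation is that, among the rules of $\FF{brake}(\Sigma,\PrHalt)\cup\Pi$, the only ones with existential quantifiers are \eqref{rul_brake_init} and \eqref{rul_brake_head}: the rules of $\Pi$ as well as \eqref{rul_brake_halt}, \eqref{rul_brake_real}, and \eqref{rul_brake_body} are Datalog rules. First I would note that \eqref{rul_brake_init} fires at most once along any path, since its head $\exists v.\PrBrake(v)$ is satisfied as soon as one $\PrBrake$-atom exists; hence there is a unique brake null, which I denote $v_0$, and every later match of $\PrBrake$ binds $v_0$. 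All fresh-null creation below $\mathcal{I}$ must therefore come from \eqref{rul_brake_head}, and the crux is to show that this rule is eventually disabled.

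The hard part, and the heart of the emergency brake, is the following blocking claim: once $\PrReal(v_0)$ has been derived, no instance of \eqref{rul_brake_head} is applicable anywhere below. To prove it I would record the invariant that $\PrBrakeBody_\rho^i(\vec{a})$ is produced \emph{only} by rule \eqref{rul_brake_body}, and that this same application simultaneously produces $\hat{\Head}_i[\vec{a},\vec{y}_i\mapsto v_0]$ (both atoms sit in the $i$-th disjunct and persist downward). Thus, at any node, the presence of $\PrBrakeBody_\rho^i(\vec{a})$ forces the presence of $\hat{\Head}_i$ with all frontier positions set to $\vec{a}$ and all existential positions set to $v_0$. Consequently, whenever $\PrReal(v_0)$ holds, the extension $\vec{y}_i\mapsto v_0$ already witnesses the head $\exists\vec{y}_i.\hat{\Head}_i\wedge\bigwedge_{y}\PrReal(y)$ of \eqref{rul_brake_head}, so by the restricted-chase applicability check that instance is not applicable. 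I expect verifying this interplay between the pre-seeded head $\hat{\Head}_i[\ldots\mapsto v_0]$ and the standard-chase notion of applicability to be the main obstacle.

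Next I would use the fairness condition~(3) of Definition~\ref{definition_chase} to guarantee that $\PrReal(v_0)$ appears after finitely many steps on every path starting at $\mathcal{I}$. Since $\PrHalt\in\mathcal{I}$ and labels only grow downward, $\PrHalt$ holds throughout the subtree. Fairness applied to \eqref{rul_brake_init} forces a $\PrBrake(v_0)$-atom to exist from some finite depth on, after which fairness applied to the instance $\langle\eqref{rul_brake_halt},\{x\mapsto v_0\}\rangle$ — whose body $\PrHalt\wedge\PrBrake(v_0)$ then persists — forces $\PrReal(v_0)$ at some finite depth $d_0$ along the path.

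Finally I would assemble the bound. Above depth $d_0$ only finitely many nulls are introduced, namely at most one from \eqref{rul_brake_init} and finitely many from the finitely many firings of \eqref{rul_brake_head}; from depth $d_0$ on, the blocking claim forbids any further fresh nulls and \eqref{rul_brake_init} is already satisfied, so the active domain along the path is a fixed finite set. Over this finite domain and the finite signature of $\FF{brake}(\Sigma,\PrHalt)\cup\Pi$ there are only finitely many atoms, and each chase step adds at least one new atom to the label; hence the path cannot be infinite. König's Lemma then yields finiteness of the entire subtree rooted at $\mathcal{I}$.
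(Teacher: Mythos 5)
Your proof is correct and follows essentially the same route as the paper's: fairness forces $\PrReal$ of the brake null to be derived, after which rule \eqref{rul_brake_head} is permanently blocked because each application of \eqref{rul_brake_body} pre-seeds the head that \eqref{rul_brake_head} would otherwise have to create, and the remaining rules admit only finitely many applications over a now-fixed finite domain. Your write-up is in fact slightly more careful than the paper's, since you explicitly dispose of rule \eqref{rul_brake_init} (which the paper's remark that rules other than \eqref{rul_brake_head} ``do not contain existential quantifiers'' glosses over) and you spell out the concluding König's Lemma step.
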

\begin{proof}
Since $\PrHalt\in\mathcal{I}$, there is a substitution $\sigma$ such that $\Tuple{\rho_{\eqref{rul_brake_halt}},\sigma}$ is
applicable (for 
$\rho_{\eqref{rul_brake_halt}}$ in \eqref{rul_brake_halt}).
By fairness, $\PrReal(\sigma(x))$ will be derived at some depth 
of the tree.
From this depth on, no rule of form \eqref{rul_brake_head} is applicable:
given $\PrReal(\sigma(x))$, the head of rules of form \eqref{rul_brake_body} already satisfies
the head of the rule \eqref{rul_brake_head} that could be applied to a newly derived atom for $\PrBrakeBody_\rho^i$.
Rules other than \eqref{rul_brake_head} do not contain existential quantifiers 
thus can 
only
be applied a finite number of times before the chase on this part of the tree terminates.
\end{proof}

If $\PrHalt$ is derived, the semantic correspondence of Lemma~\ref{lemma_brake_conservative} is weakened,
but 
suffices to preserve entailments:

\begin{lemma}\label{lemma_brake_preservation}
Consider $\Sigma$, $\D$, and $\Pi$ as in Lemma~\ref{lemma_brake_termination}.
For every model $\mathcal{I}$ of $\FF{brake}(\Sigma,\PrHalt)\cup\Pi$ and $\D$, 
$\mathcal{I}^-$ (as in Lemma~\ref{lemma_brake_conservative}) is a model of $\Sigma$ and $\D$.
\end{lemma}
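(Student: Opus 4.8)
The plan is to obtain this directly from the already-established Lemma~\ref{lemma_brake_conservative}, exploiting monotonicity of model sets under rule-set inclusion, rather than re-analysing the brake construction from scratch. The crucial observation is that $\FF{brake}(\Sigma,\PrHalt)\cup\Pi$ is a \emph{superset} of $\FF{brake}(\Sigma,\PrHalt)$, so satisfying the former is at least as demanding as satisfying the latter.

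First I would note that, by the definition of satisfaction of a rule set (an interpretation satisfies a set of rules iff it satisfies every rule in it), any $\mathcal{I}$ with $\mathcal{I}\models\FF{brake}(\Sigma,\PrHalt)\cup\Pi$ satisfies in particular every rule of $\FF{brake}(\Sigma,\PrHalt)$; hence $\mathcal{I}\models\FF{brake}(\Sigma,\PrHalt)$. Together with the hypothesis $\mathcal{I}\models\D$, this meets exactly the hypotheses of the ``for every model'' direction of Lemma~\ref{lemma_brake_conservative}. Applying that direction yields immediately that $\mathcal{I}^-$ is a model of $\Sigma$ and $\D$, which is precisely the claim. No further computation is required, since $\mathcal{I}^-$ and the property ``$\mathcal{I}^-\models\Sigma$'' both depend only on the predicates occurring in $\Sigma$ and are therefore insensitive to whether $\Pi$ fired or whether $\PrHalt$ (and its brake consequences on the hatted predicates, $\PrReal$, and $\PrBrake$) was derived.

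There is no real obstacle here; the only conceptual point is to recognise that adding the rules in $\Pi$ can only \emph{remove} models, as each rule $\Body\to\PrHalt$ is an additional constraint, and that this never affects the surviving one-directional conclusion. What genuinely changes relative to Lemma~\ref{lemma_brake_conservative} is its converse -- that every model of $\Sigma$ and $\D$ arises as some $\mathcal{I}^-$ -- which can now fail: a model of $\Sigma$ and $\D$ whose facts match the body of some rule in $\Pi$ would be forced, in any extending model of $\FF{brake}(\Sigma,\PrHalt)\cup\Pi$, to contain $\PrHalt$ and hence the full effect of pulling the brake, so it need not extend. This is exactly why the present statement retains only the ``preservation'' direction and is therefore weaker than Lemma~\ref{lemma_brake_conservative}, as announced.
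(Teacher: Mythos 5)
Your proposal is correct and is essentially identical to the paper's own proof: both observe that any model of $\FF{brake}(\Sigma,\PrHalt)\cup\Pi$ and $\D$ is in particular a model of $\FF{brake}(\Sigma,\PrHalt)$ and $\D$, and then invoke the forward direction of Lemma~\ref{lemma_brake_conservative}. The extra remarks on why the converse direction may fail are accurate but not needed for the claim.
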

\begin{proof}
This is immediate from Lemma~\ref{lemma_brake_conservative} and the fact that every model of
$\FF{brake}(\Sigma,\PrHalt)\cup\Pi$ and $\D$ is also a model of $\FF{brake}(\Sigma,\PrHalt)$ and $\D$.
\end{proof}

Having established the key properties of the emergency brake construction, we can now apply
it to show Theorem~\ref{theo_expressivity_terminating}.
Given the rule set $\R_5$ as defined for a query $\Query$ in Section~\ref{sec_disRules},
let $\R_6$ denote the extension of $\FF{brake}(\R_5,\PrHalt)$ with the following rules:
\begin{align}
\PrFreshIDBIn{\PrP}(\Vx) \wedge \PrFreshIDBNotIn{\PrP}(\Vx)&\to \PrHalt \label{rul_halt_inconsistentDb}\\
\PrOrder(x, x) &\to \PrHalt \label{rul_halt_cycle}\\
\PrLast(x) \wedge \PrOrder(x, y) &\to \PrHalt \label{rul_halt_afterLast}\\
\PrOrder(x, y) \wedge \PrFirst(y) &\to \PrHalt \label{rul_halt_beforeFirst}
\end{align}

\begin{lemma}\label{lemma_termDisjRules_correct}
$\R_6$ expresses the query $\Query$.
\end{lemma}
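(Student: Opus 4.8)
The plan is to establish the two implications of Definition~\ref{def_express} for $\R_6$ separately, namely that $\D\in\Query$ iff $\R_6,\D\models\PrGoal$ for every database $\D$ over $\mathcal{S}$. The two ingredients are Theorem~\ref{theo_expressivity}, stating that $\R_5$ already expresses $\Query$, and the model-theoretic Lemmas~\ref{lemma_brake_conservative} and~\ref{lemma_brake_preservation} relating models of $\R_5$ to those of the brake construction. Writing $\Pi$ for the four halting rules \eqref{rul_halt_inconsistentDb}--\eqref{rul_halt_beforeFirst}, I first observe that each of them has the form $\Body\to\PrHalt$ with $\Body$ over predicates of $\R_5$ only, so that $\R_6=\FF{brake}(\R_5,\PrHalt)\cup\Pi$ meets the hypotheses of Lemmas~\ref{lemma_brake_termination}--\ref{lemma_brake_preservation}; moreover $\PrGoal$ occurs in $\R_5$ (rule \eqref{rul_tm_accept}).

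For the ``only if'' direction I would argue as follows. Assume $\D\in\Query$ and let $\mathcal{I}$ be any model of $\R_6$ and $\D$. By Lemma~\ref{lemma_brake_preservation}, its reduct $\mathcal{I}^-$ to the predicates of $\R_5$ is a model of $\R_5$ and $\D$. Since $\R_5$ expresses $\Query$ and $\D\in\Query$, we have $\R_5,\D\models\PrGoal$, so $\PrGoal\in\mathcal{I}^-\subseteq\mathcal{I}$. As $\mathcal{I}$ was arbitrary, $\R_6,\D\models\PrGoal$.

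For the ``if'' direction I would instead construct one countermodel. Assume $\D\notin\Query$; by Lemma~\ref{lemma:completeness} there is some $\mathcal{I}'\in\IS_5$ with $\PrGoal\notin\mathcal{I}'$, which by Lemma~\ref{lem:univ-model-set-5} is a model of $\R_5$ and $\D$. The crucial property is that every interpretation of $\IS_5$ is built over an ordered partition $T_1,\dots,T_k$ of the domain with a set-minimal completion, so in $\mathcal{I}'$ the predicate $\PrOrder$ relates only elements of distinct classes $T_i,T_j$ with $i<j$, and exactly one of $\PrFreshIDBIn{\PrP}(\Vt),\PrFreshIDBNotIn{\PrP}(\Vt)$ holds per tuple. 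Consequently none of the $\Pi$-bodies is matched in $\mathcal{I}'$: no loop $\PrOrder(x,x)$, nothing strictly below $\PrFirst$ or above $\PrLast$, and no $\PrFreshIDBIn{\PrP}/\PrFreshIDBNotIn{\PrP}$ clash. Applying the ``conversely'' construction from the proof of Lemma~\ref{lemma_brake_conservative} to $\mathcal{I}'$ yields a model $\mathcal{I}$ of $\FF{brake}(\R_5,\PrHalt)$ and $\D$ with $\mathcal{I}^-=\mathcal{I}'$. Because each $\Pi$-body uses only $\R_5$-predicates, its satisfaction in $\mathcal{I}$ depends solely on $\mathcal{I}^-=\mathcal{I}'$, where it is unmatched; hence $\mathcal{I}\models\Pi$ and thus $\mathcal{I}\models\R_6,\D$. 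Finally, $\PrGoal$ being an $\R_5$-predicate, $\PrGoal\in\mathcal{I}$ would entail $\PrGoal\in\mathcal{I}^-=\mathcal{I}'$, which is false, so $\R_6,\D\not\models\PrGoal$.

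The hard part will be the ``if'' direction, and in particular the need to start from a model in $\IS_5$ rather than an arbitrary $\PrGoal$-free model of $\R_5$. The brake stores the effect of $\R_5$'s existential rules in \emph{hatted} predicates with all invented nulls collapsed onto the brake null $v$, and promotes them to un-hatted facts only over $\PrReal$ elements, with $v$ turning $\PrReal$ exactly once $\PrHalt$ is derived via \eqref{rul_brake_halt}. Had the base model matched some $\Pi$-body, I would be forced to add $\PrHalt$, thereby making $v$ real and, through \eqref{rul_brake_real}, promoting the collapsed simulation to real facts that could run the TM on a degenerate tape and spuriously produce $\PrGoal$. Choosing $\mathcal{I}'$ from $\IS_5$, where no halting condition is present, is exactly what keeps the brake idle, $\PrHead{\AcceptingState}$ (hence $\PrGoal$) absent, and the reduct argument intact.
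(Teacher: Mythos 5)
Your proof is correct and follows essentially the same route as the paper: the forward direction combines Theorem~\ref{theo_expressivity} with Lemma~\ref{lemma_brake_preservation}, and the converse direction picks a $\PrGoal$-free model $\mathcal{U}\in\MS=\IS_5$ (via Lemma~\ref{lemma:completeness}), lifts it with the converse construction of Lemma~\ref{lemma_brake_conservative}, and observes that none of the rules \eqref{rul_halt_inconsistentDb}--\eqref{rul_halt_beforeFirst} applies by the structure of $\MS$. Your additional justification that $\Pi$-body satisfaction in $\mathcal{I}$ depends only on the reduct $\mathcal{I}^-$ merely spells out a step the paper leaves implicit.
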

\begin{proof}
For a database $\D$ over $\mathcal{S}$, let $\MS$ be the universal model set constructed in Section~\ref{sec_disRules}.
If $\D\in\Query$, then $\R_5,\D\models\PrGoal$ by Theorem~\ref{theo_expressivity}. Then
$\R_6,\D\models\PrGoal$ since any model of $\R_6$ and $\D$ must contain $\PrGoal$ by Lemma~\ref{lemma_brake_preservation}.

Conversely, if $\D\notin\Query$, then there is $\mathcal{U}\in\MS$ with $\PrGoal\notin\mathcal{U}$.
By Lemma~\ref{lemma_brake_conservative}, there is a model $\mathcal{I}$ of $\FF{brake}(\R_5,\PrHalt)$
with $\mathcal{I}^-=\mathcal{U}$, and hence $\PrGoal\notin\mathcal{I}$.
By construction of $\MS$, none of the rules \eqref{rul_halt_inconsistentDb}--\eqref{rul_halt_beforeFirst} applies to
$\mathcal{U}$, and hence $\mathcal{I}$ is also a model of $\R_6$, i.e., $\R_6,\D\not\models\PrGoal$.
\end{proof}

\begin{lemma}\label{lemma_termDisjRules_terminating}
$\R_6$ is chase-terminating for all databases over the schema $\mathcal{S}$ of the query \Query.
\end{lemma}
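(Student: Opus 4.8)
The plan is to combine the termination guarantee of Lemma~\ref{lemma_brake_termination} with a structural analysis of the branches on which $\PrHalt$ is never produced. Fix a database $\D$ over $\mathcal{S}$ and an arbitrary chase tree $\mathfrak{T}$ for $\R_6$ and $\D$. Since every rule of $\R_6$ has a bounded number of head disjuncts, $\mathfrak{T}$ is finitely branching, so by K\"{o}nig's Lemma it suffices to show that every branch is finite. I split on whether $\PrHalt$ occurs along the branch. If it does, say at a node of depth $d$, then applying Lemma~\ref{lemma_brake_termination} with $\Sigma=\R_5$ and $\Pi$ the halting rules \eqref{rul_halt_inconsistentDb}--\eqref{rul_halt_beforeFirst} (whose bodies use only predicates of $\R_5$, so that $\R_6=\FF{brake}(\R_5,\PrHalt)\cup\Pi$ is precisely the set the lemma speaks about) shows that the subtree below that node is finite, and the path above it has length $d$, so the branch is finite.

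The substantive case is a branch $B$ on which $\PrHalt$ never appears. Then rule \eqref{rul_brake_halt} never fires, so the brake null created by \eqref{rul_brake_init} never becomes $\PrReal$. Writing $\mathcal{I}_B$ for the union of labels along $B$, a real (unhatted) atom can enter $\mathcal{I}_B$ only through \eqref{rul_brake_real}, from a hatted atom all of whose arguments are $\PrReal$; and the only fresh nulls of $\R_6$ are created by \eqref{rul_brake_init} (once) and by \eqref{rul_brake_head}. The key observation is that, since a rule instance is applied only when its head is not already satisfied, the applicability of \eqref{rul_brake_head} on an atom $\PrBrakeBody_\rho^i(\vec{x}_i)$ --- namely that no $\PrReal$ witnesses already satisfy $\hat{\Head}_i$ over $\vec{x}_i$ --- coincides exactly with the applicability of the $i$-th disjunct of $\rho$ in the standard chase of $\R_5$, because the all-$\PrReal$ hatted atoms are in bijection with the real atoms. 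Consequently the real atoms of $\mathcal{I}_B$, together with the moments at which fresh nulls are introduced, trace out a standard-chase run of $\R_5$ and $\D$ whose disjunctive choices are exactly those taken along $B$. In particular, the collapsed hatted atoms produced by \eqref{rul_brake_body} (whose existential positions hold the non-$\PrReal$ brake null) never trigger spurious existential steps, so the fresh nulls of $B$ are in bijection with the existential witnesses of that $\R_5$-run.

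It remains to argue that this induced $\R_5$-run is finite. Because $\PrHalt$ is not derived on $B$, fairness implies that no halting rule is ever applicable, so $\mathcal{I}_B$ contains no real atom $\PrOrder(x,x)$, no pair $\PrLast(x),\PrOrder(x,y)$, no pair $\PrOrder(x,y),\PrFirst(y)$, and no pair $\PrFreshIDBIn{\PrP}(\vec{t}),\PrFreshIDBNotIn{\PrP}(\vec{t})$. Since $\PrOrder$ is transitive by \eqref{rul_TransLT} and relates only elements of $\PrDomain$ (the nulls of $\D$ plus the two order endpoints, a finite set), the absence of any loop $\PrOrder(x,x)$ makes $\PrOrder$ a strict partial order on a finite domain; thus every chain in $\PrOrder$ is finite, and rule \eqref{rul_NextTreeRep} builds a representation tree of finite depth and finite branching. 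Therefore the position encodings, the initial configurations, and --- because $\TM$ is a decider and so halts on every tape --- the simulated TM runs are all finite, so the $\R_5$-run terminates and introduces only finitely many nulls. As $\R_6$ has finitely many predicates of bounded arity, only finitely many atoms can be built over finitely many nulls, and since each chase step adds at least one new atom, $B$ is finite.

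The main obstacle I expect is the second paragraph: making rigorous the claim that the real atoms along $B$ faithfully simulate a standard chase of $\R_5$ and that the emergency brake introduces no extra existential witnesses. This hinges on verifying that the skip condition of \eqref{rul_brake_head} agrees with that of the original rule of $\R_5$ --- so that the surplus $\PrBrakeBody_\rho^i$ atoms created by \eqref{rul_brake_body} never spawn fresh nulls --- and on checking that the four halting conditions cover every way in which the $\R_5$-chase can diverge, namely a malformed linear order driving the tree construction \eqref{rul_NextTreeRep}; the TM simulation terminates for free by the decider assumption.
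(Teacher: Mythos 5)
Your overall strategy is sound, and its first half coincides with the paper's proof: both arguments split chase branches according to whether $\PrHalt$ is derived, and both dispatch the halting branches via Lemma~\ref{lemma_brake_termination} applied with $\Sigma=\R_5$ and $\Pi$ the rules \eqref{rul_halt_inconsistentDb}--\eqref{rul_halt_beforeFirst}. For the non-halting branches, however, you take a genuinely different route. The paper uses Lemma~\ref{lemma_brake_conservative} together with the non-applicability of \eqref{rul_halt_inconsistentDb}--\eqref{rul_halt_beforeFirst} to conclude that the $\R_1$-reduct of the branch limit is an element of $\IS_1$; it then reorders the chase so that the rules of $\FF{brake}(\R_1,\PrHalt)$ are applied first, and appeals to the deterministic construction of $\IS_5$ from $\IS_1$ in Section~\ref{sec_disRules} for termination of the remaining non-disjunctive chase, handling separately the complication that different orderings of \eqref{rul_RootCreation}, \eqref{rul_NextTreeRep}, and \eqref{rul_RepEq} create isomorphic duplicate tree nodes. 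You instead argue in place: you claim the real atoms along the branch trace a standard chase of $\R_5$, and then give a direct structural termination argument (irreflexivity and transitivity of $\PrOrder$ on the finite $\PrDomain$ set $\Rightarrow$ finite representation tree $\Rightarrow$ finite encodings and tapes $\Rightarrow$ finite runs since $\TM$ is a decider). Your route is more self-contained, and your structural bound absorbs the duplicate-node caveat automatically (finite depth and branching hold under any application order); the paper's route is shorter because it reuses the machinery already established for Lemma~\ref{lemma:expressivity-universal-model}. Both are comparably terse on the final structural step.

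The one claim you must repair is the asserted exact correspondence. The ``bijection between all-$\PrReal$ hatted atoms and real atoms'' is false as stated: atoms of the input database never acquire hatted counterparts, because predicates of $\mathcal{S}$ occur in no head of $\R_5$, hence $\hat{\PrP}$ for $\PrP\in\mathcal{S}$ occurs in no head of $\FF{brake}(\R_5,\PrHalt)$. Consequently ``coincides exactly'' also fails: rule \eqref{rul_brake_body} can fire for a body match whose original head is already satisfied by real atoms (the $\R_5$ chase would skip that application), since its own head additionally demands a $\PrBrakeBody_\rho^i$ atom that does not yet exist. What your termination argument actually needs is only one implication: whenever \eqref{rul_brake_head} is applicable to $\PrBrakeBody_\rho^i(\vec{x}_i)$, the $i$-th head disjunct of $\rho$ is not yet satisfied by real atoms, so each fresh null created on the branch injects into a legal (not necessarily fair) restricted-chase step of $\R_5$ over the current real atoms. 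This implication does hold here, for two reasons specific to the construction: real atoms over derived predicates are produced only by \eqref{rul_brake_real}, so a hatted copy with $\PrReal$ arguments always precedes its real counterpart; and heads of $\R_5$ use no schema predicates, so satisfaction of a head by real atoms always transfers to hatted atoms with $\PrReal$ witnesses, blocking \eqref{rul_brake_head}. With the claim weakened to this one-directional form, and with your third paragraph read as a bound on arbitrary legal application sequences rather than on a single fair chase, your proof goes through.
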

\begin{proof}
Consider a chase over $\R_6$ and input database $\D$.
Chase branches where $\PrHalt$ is eventually part of a node label terminate by Lemma~\ref{lemma_brake_termination}.
Let $b$ denote any branch of the chase where $\PrHalt$ is not derived,
and let $\mathcal{I}$ be the union of all node labels on that branch.
We want to show that $\mathcal{I}$ (and hence $b$) is finite.

By Lemma~\ref{lemma_brake_conservative}, $\mathcal{I}\models \R_1$.
Moreover, since $\PrHalt\notin\mathcal{I}$, rules \eqref{rul_halt_inconsistentDb}--\eqref{rul_halt_beforeFirst} are
not applicable to $\mathcal{I}$. Both properties together suffice to show that the set
$\mathcal{I}_{\R_1}=\{\PrP(\vec{n})\in\mathcal{I}\mid \PrP$ is a predicate in $\R_1\}$
is an element of $\IS_1$ defined in Section~\ref{sec_disRules}.

Since the predicates in rule bodies of $\R_1$ do not occur in any rule head in $\R_5\setminus\R_1$,
we can assume without loss of generality (and without affecting chase termination), that the corresponding rules
of $\FF{brake}(\R_1,\PrHalt)\subseteq \R_6$ have been applied first.
This shows that $\mathcal{I}$ is equal to the result of a chase with non-disjunctive rules
$\R_6\setminus \FF{brake}(\R_1,\PrHalt)$ on a database $\mathcal{I}_{\R_1}\in\IS_1$.
The claim follows by noting that any such chase must terminate:
this was shown in Section~\ref{sec_disRules}, where we described a deterministic
process of defining the elements in the universal model set $\IS_5$ from those in $\IS_1$.
Each steps in this construction is fully determined and introduces isomorphic sets of nulls irrespectively of
the order of rule applications.
The only exception are application of rules \eqref{rul_RootCreation}, \eqref{rul_NextTreeRep}, and \eqref{rul_RepEq}.
For example, given facts $\PrFirst(n_1)$, $\PrFirst(n_2)$, and $\PrEq(n_1,n_2)$, the standard model
of Section~\ref{sec_disRules} contains one fact $\PrRoot(u_{1})$ with $\PrLink(n_1,u_{1})$ and $\PrLink(n_2,u_{1})$,
which can be obtained using \eqref{rul_RootCreation} (on $n_1$) and \eqref{rul_RepEq}.
If we apply \eqref{rul_RootCreation} to both $n_1$ and $n_2$ before applying \eqref{rul_RepEq},
we obtain two distinct $\PrRoot(u_{1})$ and $\PrRoot(u_{1}')$. Similar variations can occur with other tree nodes
if \eqref{rul_NextTreeRep} is applied before \eqref{rul_RepEq}. If is easy to see that this does not endanger termination,
but merely leads to several isomorphic paths in the representation tree.
%
\end{proof}

Together, Lemmas~\ref{lemma_termDisjRules_correct} and \ref{lemma_termDisjRules_terminating} show Theorem~\ref{theo_expressivity_terminating}.

\section{Removing Disjunctions}\label{sec:disjunction removal}
\begin{figure*}
\setlength{\belowdisplayskip}{0pt}\setlength{\abovedisplayskip}{0pt}
\begin{align}
\label{world-init}
  &\rightarrow \exists w.\PrInit(w) \wedge \PrDone(w) \wedge \PrEmpty(w) \\
\PrDone(w) \wedge \PrInit(w) \wedge \PrP(\vec{x})  &\rightarrow \exists w'.\PrIns{\PrP}(\vec{x},w,w') 
\wedge \PrSubs(w',w') \wedge \PrInit(w')
\label{collect-p}\\
\textstyle\PrDone(w) \wedge \bigwedge_{\PrP(\vec{x}) \in\Body  } \PrIns{\PrP}(\vec{x},w,w) &\rightarrow \exists w_1.\PrIns{\PrP_1}(\vec{x_1},w,w_1) 
\wedge \PrSubs(w_1,w_1)
 \label{complete-r1}\\
\textstyle \PrDone(w) \wedge \bigwedge_{\PrP(\vec{x}) \in \Body } \PrIns{\PrP}(\vec{x},w,w) &\rightarrow \exists w_2.\PrIns{\PrP_2}(\vec{x_2},w,w_2) 
\wedge \PrSubs(w_2,w_2)
\label{complete-r2} \\
 \PrIns{\PrP}(\vec{x},w_0,w_1) \wedge \PrSubs(w_1,w_2) &\rightarrow \PrIns{\PrP}(\vec{x},w_2,w_2) \wedge \PrHatted{\PrP}(\vec{x},w_2) 
 \wedge \PrSubs(w_0,w_2) 
\label{propagate-p} \\
\label{rule-done}
 \PrEmpty(w) \wedge \PrSubs(w,w') &\rightarrow \PrDone(w')
\end{align}
\caption{The rule set $\Sigma'_1$, where we instantiate \eqref{collect-p} and \eqref{propagate-p} for all $\PrP \in \Predicates$ in $\Sigma_1\cup\Sigma_2$, and \eqref{complete-r1} and \eqref{complete-r2} for all $\Body\rightarrow \PrP_1(\vec{x_1}) \vee \PrP_2(\vec{x_2})\in\Sigma_1$.}
\label{figure:rule-set-disjunction-removal-r1}
\end{figure*}
\begin{figure}
\setlength{\belowdisplayskip}{0pt}\setlength{\abovedisplayskip}{0pt}
\begin{align}
\setlength{\abovedisplayskip}{0pt}
\setlength{\belowdisplayskip}{4pt}
\PrDone(w) \wedge \textstyle\bigwedge_{\PrP(\vec{x}) \in \Body  } \PrHatted{\PrP}(\vec{x},w) &\textstyle\rightarrow \exists \vec{z}.\bigwedge_{\PrQ(\vec{y}) \in \Head } \PrHatted{\PrQ}(\vec{y},w)
\label{rule-sigma-2}
\end{align}
\caption{The rule set $\Sigma'_2$, where we instantiate 
\eqref{rule-sigma-2} for all $\Body   \rightarrow \exists \vec{z}. \Head \in \Sigma_2$.
}
\label{figure:rule-set-disjunction-removal-r2}
\end{figure}
\begin{figure}
\setlength{\belowdisplayskip}{0pt}\setlength{\abovedisplayskip}{0pt}
\begin{align}
\label{goal-rule-disj}
\PrHatted{\PrGoal}(w) &\to \PrWorldGoal(w)\\	
\begin{split}
\label{propagate-ri}
\PrIns{\PrP_1}(\vec{x}_1,w,w_1)\wedge \PrWorldGoal(w_1) \wedge{} \\
\PrIns{\PrP_2}(\vec{x}_2,w,w_2)\wedge \PrWorldGoal(w_2) \wedge{}\\
 \textstyle\bigwedge_{\PrP(\vec{x}) \in \Body  }  \PrIns{\PrP}(\vec{x},w,w) 
  &\to
  \PrWorldGoal(w)
\end{split}\\
%
\label{final-rule-disj}
\PrInit(w) \wedge \PrWorldGoal(w) &\to \PrGoal 
\end{align}
\caption{The rule set $\Sigma'_3$, where \eqref{propagate-ri} is instantiated for all rules $\Body\rightarrow \PrP_1(\vec{x_1}) \vee \PrP_2(\vec{x_2})\in\Sigma_1$}
\label{figure:rule-set-disjunction-removal-r3}
\end{figure}

Our main result is that any decidable homomorphism-closed query is expressible by a  chase-terminating existential rule set.
To conclude the proof of this statement, we remove the disjunction from the rule set $\R_6$ of Section~\ref{sec_brake}.
We present this as a general technique of expressing disjunctive Datalog using existential rules,
which is also of independent interest.

For a rule set $\Sigma$, the \emph{input schema} $\mathcal{S}_{\mathsf{in}}(\Sigma)$ is the set of 
all predicates in $\Sigma$ that do not occur in any rule head.
We focus on rule sets that can be split into a disjunctive part and an existential part,
such that it is admissible to completely apply the disjunctive rules first, and the existential
ones afterwards:

\begin{definition}\label{def_split}
A \emph{split} of a set $\Sigma$ of disjunctive existential rules
consists of a set $\Sigma_1$ of disjunctive Datalog rules and
a set $\Sigma_2$ of existential rules, such that $\Sigma=\Sigma_1\cup\Sigma_2$ and:
\begin{quote}
For every database $\D$ over $\mathcal{S}_{\mathsf{in}}(\Sigma)$,
and for every chase result $\MS$ over $\Tuple{\Sigma,\D}$,
there is a chase result $\MS_1$ over $\Tuple{\Sigma_1,\D}$, such that
$\MS=\{\mathcal{C}_{\mathcal{I}}\mid \mathcal{I}\in\MS_1\}$ where each
$\mathcal{C}_{\mathcal{I}}$ is the (unique) interpretation resulting from some
chase over $\Tuple{\Sigma_2,\mathcal{I}}$.
\end{quote}
\end{definition}

\begin{lemma}\label{lemma-disjunction-removal}
Consider a rule set $\Sigma$ with split $\Tuple{\Sigma_1,\Sigma_2}$.
There is a set $\Sigma'$ of existential rules, such that,
for every database $\D$ over $\mathcal{S}_{\mathsf{in}}(\Sigma)$, we have:
\begin{enumerate}
\item $\Database,\Sigma \models \PrGoal$ iff $\Database, \Sigma' \models \PrGoal$, and
\item if $\Tuple{\Sigma_2, \D_2}$ is chase-terminating for every database $\D_2$ over $\mathcal{S}_{\mathsf{in}}(\Sigma_2)$,
then $\Tuple{\Sigma',\D}$ is also chase-terminating.
\end{enumerate}
\end{lemma}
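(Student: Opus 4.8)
The plan is to take $\Sigma'=\Sigma'_1\cup\Sigma'_2\cup\Sigma'_3$ from Figures~\ref{figure:rule-set-disjunction-removal-r1}--\ref{figure:rule-set-disjunction-removal-r3} and show that its (deterministic) chase simulates the two-phase chase guaranteed by the split $\Tuple{\Sigma_1,\Sigma_2}$: the existential rules of $\Sigma'_1$ materialise a ``world'' null $w$ for every interpretation arising in a disjunctive chase of $\Sigma_1$, the hatted rule of $\Sigma'_2$ runs the existential chase of $\Sigma_2$ separately inside each world, and the Datalog rules of $\Sigma'_3$ compute, by upward propagation through the branching structure, whether $\PrGoal$ holds in every model descending from a world.

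First I would pin down the combinatorial structure produced by $\Sigma'_1$ on $\D$. By induction on chase length I would show that $\PrSubs$ is the extension order between worlds, that the ``present facts'' of a world $w$ --- recorded as $\PrIns{\PrP}(\vec{x},w,w)$ and hence as $\PrHatted{\PrP}(\vec{x},w)$ via \eqref{propagate-p} --- are exactly the atoms of an interpretation reachable in a disjunctive chase of $\Sigma_1$ on the sub-database of $\D$ collected so far, and that \eqref{complete-r1}/\eqref{complete-r2} branch a done world precisely when a disjunctive rule of $\Sigma_1$ is applicable to it. Two points need care: re-adding an already present fact via \eqref{collect-p} is blocked, since its head is satisfied by taking $w'=w$ (because $\PrSubs(w,w)$ and $\PrInit(w)$ hold), so every collection path is finite; and different collection orders may duplicate worlds with the same fact set, which is harmless. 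Appealing to Definition~\ref{def_split}, the ``complete'' worlds --- those admitting no further disjunctive application --- then cover the interpretations of a chase result $\MS_1$ of $\Tuple{\Sigma_1,\D}$.

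Next I would transfer the $\Sigma_2$-phase into worlds: since \eqref{rule-sigma-2} is the rule of $\Sigma_2$ with every predicate hatted and parameterised by a fixed $w$, the hatted atoms accumulated at a complete world $w$ are exactly the atoms of the unique $\Sigma_2$-chase $\mathcal{C}_{\mathcal{I}}$ of the corresponding $\mathcal{I}$, so $\PrHatted{\PrGoal}(w)$ holds iff $\PrGoal\in\mathcal{C}_{\mathcal{I}}$. I would then prove the central invariant of $\Sigma'_3$ by induction over the branching structure, from complete worlds upward: $\PrWorldGoal(w)$ is derived iff $\PrGoal\in\mathcal{C}_{\mathcal{I}}$ for every model $\mathcal{I}$ reachable from $w$. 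The base case is \eqref{goal-rule-disj}; the inductive step is \eqref{propagate-ri}, whose body identifies the two disjunctive children of $w$ and fires exactly when both accept, mirroring that a disjunction entails $\PrGoal$ iff both disjuncts do. Together with Fact~\ref{fact_uniModSetCorrect} (a universal model set entails $\PrGoal$ iff all its members contain it), the invariant at a fully collected init world, combined with \eqref{final-rule-disj}, yields $\Database,\Sigma\models\PrGoal$ iff $\Database,\Sigma'\models\PrGoal$. The only wrinkle is that \eqref{final-rule-disj} may also fire at a partially collected init world; this is harmless because entailment of $\PrGoal$ is monotone in the database, so acceptance of a sub-database is both sound (the full database then also entails $\PrGoal$) and never missed (the fully collected world is accepting whenever entailment holds).

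For the termination claim I would argue that each of the three layers adds only finitely much. As $\Sigma_1$ is disjunctive Datalog, no fresh domain elements enter the fact space, so over the finite active domain of $\D$ there are finitely many derivable atoms and the disjunctive chase of $\Sigma_1$ is finite; with the blocking behaviour noted above this bounds the number of worlds created by $\Sigma'_1$. Within each world, \eqref{rule-sigma-2} is an isomorphic copy of $\Sigma_2$ run on that world's input database over $\mathcal{S}_{\mathsf{in}}(\Sigma_2)$, which terminates by hypothesis, and $\Sigma'_3$ is pure Datalog over the now-finite set of worlds and atoms; summing these finite contributions shows $\Tuple{\Sigma',\D}$ is chase-terminating. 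The main obstacle is the first step: establishing the exact correspondence between the structure of worlds built by the restricted chase of $\Sigma'_1$ and the disjunctive chase of $\Sigma_1$ --- in particular controlling the interleaving of fact collection and disjunctive branching, and confirming that world reuse under the restricted chase never conflates interpretations that the goal propagation must keep apart.
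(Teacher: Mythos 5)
Your proposal follows the same construction as the paper ($\Sigma'=\Sigma'_1\cup\Sigma'_2\cup\Sigma'_3$) and much of the same architecture: the analysis of the world tree built by $\Sigma'_1$ (the paper's Lemma~\ref{lemma-finite-structure} and Proposition~\ref{proposition-r1-disjunction-removal}, including the same blocking argument for rule~\eqref{collect-p}), the per-world hatted simulation of $\Sigma_2$ (Lemma~\ref{lemma-context-to-existential}), the monotonicity fix for \eqref{final-rule-disj} firing at partially collected init worlds (the paper's Proposition~\ref{proposition-soundness-disj-removal} works with a sub-database $\D'\subseteq\D$ for exactly this reason), and the layered termination argument (Lemmas~\ref{lemma-hat-reordering} and \ref{lemma-termination}). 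Where you diverge is the correctness argument: you replace the paper's two explicit chase simulations (Propositions~\ref{proposition-completeness-disj-removal} and \ref{proposition-soundness-disj-removal}) by a single ``central invariant'': $\PrWorldGoal(w)$ is derived iff $\PrGoal\in\mathcal{C}_{\mathcal{I}}$ for \emph{every} complete world $\mathcal{I}$ reachable from $w$.

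There is a genuine gap in the soundness direction of that invariant, and your one-line justification (``a disjunction entails $\PrGoal$ iff both disjuncts do'') does not close it. Rule~\eqref{propagate-ri} derives $\PrWorldGoal(w)$ from \emph{one} applicable disjunctive rule whose two children both accept; your invariant, however, quantifies over \emph{all} complete worlds below $w$ -- including those reached by first applying a \emph{different} disjunctive rule, or a different interleaving of rule applications. Nothing you state covers these: Definition~\ref{def_split} is one-directional (every chase result of $\Sigma$ decomposes through \emph{some} $\Sigma_1$-chase result; it does not say that every $\Sigma_1$-chase result, post-composed with $\Sigma_2$-chases, yields a universal model set), and for an arbitrary complete world $\mathcal{J}$ below $w$ the interpretation $\mathcal{C}_{\mathcal{J}}$ need not even be a model of $\Sigma_1$, so you cannot obtain $\PrGoal\in\mathcal{C}_{\mathcal{J}}$ from entailment of $\PrGoal$ alone. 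The step can be repaired -- e.g., the accepting sub-tree's leaves form a chase result of $\Sigma_1$ from $\mathsf{world}(w)$, hence a universal model set by Fact~\ref{fact_chase-univ-model-sets}; then for any other complete world $\mathcal{J}$ there is a homomorphism from some accepting leaf $\mathcal{L}$ into $\mathcal{J}$, and since $\PrGoal\in\mathcal{C}_{\mathcal{L}}$, composing homomorphisms gives $\Sigma_2,\mathcal{J}\models\PrGoal$ and thus $\PrGoal\in\mathcal{C}_{\mathcal{J}}$ -- but this argument is precisely the missing content. The paper sidesteps the issue entirely: it never states a global invariant, but instead extracts from the $\PrWorldGoal$-derivation a single prefix of a chase tree for $\Sigma$ over $\D'$ (using the augmented rule~\eqref{propagate-ri-with-goal}, which records which rule and which children fired), extends it fairly (monotonicity keeps $\PrGoal$ in all leaves), and invokes Fact~\ref{fact_uniModSetCorrect} only at the level of the input database $\D'$, where the split hypothesis actually applies.
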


To construct this set $\Sigma'$, we assume w.l.o.g.\ that all disjunctive rules have exactly two
disjuncts in the head.
We define $\Sigma'$ as the union of sets $\Sigma'_1$, $\Sigma'_2$  and $\Sigma'_3$ as shown in
Figures~\ref{figure:rule-set-disjunction-removal-r1}, \ref{figure:rule-set-disjunction-removal-r2} and \ref{figure:rule-set-disjunction-removal-r3}, which we explain below.

$\Sigma'_1$ uses a technique for modelling sets with chase-terminating existential rules \cite[Fig.~2]{DBLP:conf/icdt/KrotzschMR19}.
We adapt this to sets of ground atoms, called \emph{worlds} and denoted by variables $w$ in the figures.
Facts $\PrIns{\PrP}(\vec{t},w,w')$ express that world $w'$ is obtained by adding $\PrP(\vec{t})$
to world $w$. In particular, $\PrIns{\PrP}(\vec{t},w,w)$ states that $\PrP(\vec{t})$ is in $w$,
and we define $\mathsf{world}(w)=\{\PrP(\Va) \mid \PrIns{\PrP}(\Va,w,w) \in \I\}$ for any interpretation $\I$.
Worlds are created by adding database facts \eqref{collect-p} or by applying
rules to existing worlds \eqref{complete-r1}--\eqref{complete-r2}.
Worlds containing only database facts are marked with $\PrInit$.
Predicate $\PrSubs$ defines the subset relation on worlds.
Rules \eqref{propagate-p}--\eqref{rule-done} copy all prior facts to a new world before marking it
$\PrDone$. 

%
%

\begin{proposition}
\label{proposition-r1-disjunction-removal}
 $\Sigma'_1$ is chase-terminating and for every $\Database$ over $\mathcal{S}_{\mathsf{in}}(\Sigma)$, 
 the (unique) interpretation $\I$ resulting from some chase over $\Sigma'_1$ and $\Database$ is such that:
 \begin{itemize}
  \item if $\PrP(\Va) \in \Database$ and $\PrDone(w)\in\I$, there exists $w'$ such that $\PrIns{\PrP}(\Va,w,w') \in\I$ and $\mathsf{world}(w') = \mathsf{world}(w) \cup \{\PrP(\Va)\}$;
  \item if $\Rule \in \Sigma_1$ is applicable to $\mathsf{world}(w)$, creating $\PrP_1(\Va)$ or $\PrP_2(\Vb)$, there exists $w_1$ and $w_2$ such that $\{\PrIns{\PrP_1}(\Va,w,w_1),\PrIns{\PrP_2}(\Vb,w,w_2)\} \subseteq \I$, $\mathsf{world}(w_1) = \mathsf{world}(w) \cup \{\PrP_1(\Va)\}$ and 
  $\mathsf{world}(w_2) = \mathsf{world}(w) \cup \{\PrP_2(\Vb)\}$.
 \end{itemize}

\end{proposition}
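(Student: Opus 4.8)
The plan is to pin down exactly what the chase of $\Sigma'_1$ produces by proving a single structural invariant about the terminating result $\I$, and then read off both bullet points; termination I would treat separately but using the same invariant. Writing $\mathsf{world}(w)=\{\PrP(\Va)\mid\PrIns{\PrP}(\Va,w,w)\in\I\}$ as in the statement, the invariant I would maintain is: (i) $\PrSubs(w,w')\in\I$ holds exactly along the reflexive--transitive closure of the creation steps and implies $\mathsf{world}(w)\subseteq\mathsf{world}(w')$; (ii) whenever $w'$ is created from $w$ by one of \eqref{collect-p}, \eqref{complete-r1}, \eqref{complete-r2} by adding a fact $c$, we have $\mathsf{world}(w')=\mathsf{world}(w)\cup\{c\}$ and $\PrHatted{\PrP}(\Va,w')\in\I$ for each $\PrP(\Va)\in\mathsf{world}(w')$; and (iii) $\PrDone(w)\in\I$ implies that the back-propagation from the empty world has completed, so that $\PrIns{\PrP}(\Va,w,w)\in\I$ for every $\PrP(\Va)\in\mathsf{world}(w)$. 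The engine behind (ii)--(iii) is rule \eqref{propagate-p}: starting from the reflexive $\PrSubs(w',w')$ created together with $w'$, repeated application walks back along the creation chain of $w'$, copying each previously added fact into $w'$ (as a membership atom $\PrIns{\PrP}(\Va,w',w')$ and a $\PrHatted{\PrP}$ atom) and extending $\PrSubs$ by transitivity, until $\PrSubs(w_0,w')$ reaches the empty world $w_0$, which via \eqref{rule-done} is exactly what licenses $\PrDone(w')$.

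For termination I would argue as follows. Since $\Sigma'_1$ never derives a fresh content atom $\PrP(\Va)$ over the input schema (such atoms occur only inside $\PrIns{\PrP}$/$\PrHatted{\PrP}$ and are ultimately drawn from $\Database$ and the heads of the Datalog rules $\Sigma_1$, all over the finite domain of $\Database$), the set $G$ of possible content atoms is finite, so $\mathsf{world}(w)\subseteq G$ for every $w$. The only rules introducing fresh world nulls are \eqref{world-init}, \eqref{collect-p}, \eqref{complete-r1} and \eqref{complete-r2}, and each of the latter three fires on a $\PrDone$ world $w$. By invariant (iii) such a world already records all of its members via $\PrIns{\PrP}(\Va,w,w)$; consequently the restricted chase never adds a fact already present, because the head of the firing rule is then satisfied by taking the new world to be $w$ itself. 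Hence every creation step strictly enlarges the represented set, the creation forest rooted at $w_0$ has depth at most $|G|$ and finite branching, so only finitely many world nulls arise; over finitely many nulls only finitely many atoms are expressible, and the chase reaches a fixpoint.

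The two structural properties then follow from fairness of the terminating chase, i.e.\ from the fact that $\I$ satisfies every rule of $\Sigma'_1$. For the first bullet, the case $\PrP(\Va)\in\mathsf{world}(w)$ is witnessed by $w'=w$ via invariant (iii); when $\PrP(\Va)\notin\mathsf{world}(w)$ and $\PrInit(w)$ holds (the collection phase, which is the only situation in which a $\PrDone$ world still lacks a database fact), saturation of \eqref{collect-p} supplies $w'$ with $\PrIns{\PrP}(\Va,w,w')\in\I$, and invariant (ii) gives $\mathsf{world}(w')=\mathsf{world}(w)\cup\{\PrP(\Va)\}$. For the second bullet, if $\Rule\in\Sigma_1$ is applicable to $\mathsf{world}(w)$ then all its body atoms lie in $\mathsf{world}(w)$, which by invariant (iii) (using $\PrDone(w)$) means the membership atoms $\PrIns{\PrP}(\vec{x},w,w)$ hold, so the bodies of \eqref{complete-r1} and \eqref{complete-r2} are satisfied; their heads yield $w_1$ and $w_2$ with the claimed $\PrIns$ atoms, and invariant (ii) fixes $\mathsf{world}(w_1)$ and $\mathsf{world}(w_2)$.

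I expect the main obstacle to be a clean proof of invariants (ii) and (iii): one must show that \eqref{propagate-p} copies into $w'$ exactly the facts on its creation chain --- neither too few (completeness of the back-walk, needed so that $\PrDone$ truly certifies full membership) nor too many (no spurious $\PrIns{\PrP}(\Va,w',w')$ can arise, needed for the equalities $\mathsf{world}(w')=\mathsf{world}(w)\cup\{c\}$). This is best handled by induction on the length of the creation chain of $w'$, tracking jointly the $\PrSubs$, $\PrIns$ and $\PrHatted$ atoms produced; the same book-keeping underpins the ``strictly enlarges the set'' step used for termination, so the two halves of the argument share their technical core.
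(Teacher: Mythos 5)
Your proof follows essentially the same route as the paper's: your invariants (i)--(iii) are the content of the paper's Lemma~\ref{lemma-finite-structure} (the world structure of a chase result is a finite tree rooted at the empty world, whose edges add exactly one new fact, and in which \PrDone certifies that back-propagation along the creation chain via \eqref{propagate-p} and \eqref{rule-done} is complete); your termination argument (strict growth of worlds along creation chains, a finite space of content atoms, and blocking of redundant creations in the restricted chase because the head is already satisfied by the world itself) is exactly the paper's depth/arity bound for that tree; and your derivation of the two bullets from satisfaction of \eqref{collect-p}, \eqref{complete-r1} and \eqref{complete-r2} in the final model is the paper's Lemma~\ref{lemma-disjunctive-to-hat}.

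One claim you make is false, however: the parenthetical assertion that the collection phase ``is the only situation in which a \PrDone world still lacks a database fact.'' Rules \eqref{complete-r1}/\eqref{complete-r2} can fire on a partial \PrInit world (one not yet containing all of \Database), since nothing forces the collection of \Database to finish before disjunctive rules are simulated; the resulting child $w$ eventually becomes \PrDone but is not \PrInit, and it can lack a database fact $\PrP(\Va)$. For such a $w$ no rule can ever produce an atom $\PrIns{\PrP}(\Va,w,w')$: rule \eqref{collect-p} requires $\PrInit(w)$ in its body, rules \eqref{complete-r1}/\eqref{complete-r2} only insert head predicates of $\Sigma_1$ (which, by definition of $\mathcal{S}_{\mathsf{in}}(\Sigma)$, are not database predicates), and \eqref{propagate-p} only copies facts lying on the creation chain of $w$. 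So the first bullet, read literally for all \PrDone worlds, is not provable --- it is in fact false in general. This is a defect of the proposition's statement rather than of your overall argument: the paper's own proof (Lemma~\ref{lemma-finite-structure}, third item) establishes the first bullet only under the guard $\PrInit(w)$, and that guarded version is what the rest of the paper actually uses. With the same restriction your case analysis is exactly right, and the remainder of your proof is sound.
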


Note that we cannot distinguish worlds that are not containing all database facts,
and that some worlds may contain more facts than needed to satisfy all disjunctive heads.

$\Sigma'_2$ now simulates the application of rules from $\Sigma_2$ in any of the worlds.
Computations relative to different worlds are independent from each other. 
Finally, $\Sigma'_3$ aggregates results from all worlds:
a world is accepting ($\PrWorldGoal$) if either $\PrGoal$ was derived locally \eqref{goal-rule-disj}
or it has two successor worlds for a disjunctive rule that are both accepting \eqref{propagate-ri}.
$\PrGoal$ is a consequence if any initial world is accepting \eqref{final-rule-disj}.
This finishes the construction of $\Sigma'$ as the main ingredient for proving Lemma~\ref{lemma-disjunction-removal}.

Finally, we can apply Lemma~\ref{lemma-disjunction-removal} to $\R_6=\FF{brake}(\R_5,\PrHalt)\cup\Pi$ from Section~\ref{sec_brake},
where $\Pi$ denotes rules \eqref{rul_halt_inconsistentDb}--\eqref{rul_halt_beforeFirst}.
Intuitively, a possible split is $\FF{brake}(\R_1,\PrHalt)$ and $\FF{brake}(\R_5\setminus\R_1,\PrHalt)\cup\Pi$.
Formally, however, $\FF{brake}(\R_1,\PrHalt)$ is not disjunctive Datalog due to existential rules
\eqref{rul_createFirst}, \eqref{rul_createLast}, and \eqref{rul_brake_init}. However, our result easily extends to such
rules with empty body: we can just add them to $\Sigma_1'$ and treat their inferences like facts from the initial database.
The other properties of Definition~\ref{def_split} are easy to verify. The fact that both rule sets have some common rules, such as
\eqref{rul_brake_halt}, is no concern.
Finally, it remains to argue termination for $\Sigma_2$ as required for item (2) in Lemma~\ref{lemma-disjunction-removal}.
This is slightly stronger than Lemma~\ref{lemma_termDisjRules_terminating} since we must also consider databases
that use some inferred predicates of $\R_1$. However, the proof of Lemma~\ref{lemma_termDisjRules_terminating}
and the emergency brake technique in general served the main purpose of safeguarding against problematic
structures among the inferred predicates of $\R_1$, and it is not hard to see that this already showed what we require here.
Combining all of our insights, we finally obtain:

\begin{theorem}\label{theo_mainresult}
Chase-terminating existential rules capture the class of all decidable homomorphism-closed queries.
\end{theorem}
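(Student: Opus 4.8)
The statement asserts that the class of queries expressible by chase-terminating existential rules equals the class of decidable homomorphism-closed queries, so the plan is to prove two inclusions. The easier inclusion is soundness: any query $\Query$ expressed by a chase-terminating existential rule set $\Sigma$ is both homomorphism-closed and decidable. For homomorphism-closedness I would show that Boolean entailment of the nullary $\PrGoal$ is monotone under database homomorphisms: given $h:\D\to\D'$ and any model $\I\models\Sigma,\D'$, there is a homomorphism $\D'\to\I$ witnessing $\I\models\D'$, and composing it with $h$ yields a homomorphism $\D\to\I$, so $\I\models\D$ and hence $\I\models\Sigma,\D$; since $\Sigma,\D\models\PrGoal$ forces $\PrGoal\in\I$, we conclude $\Sigma,\D'\models\PrGoal$, i.e.\ $\D'\in\Query$ whenever $\D\in\Query$. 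For decidability, chase termination guarantees that the chase of $\Sigma$ on any input $\D$ yields a single finite interpretation (non-disjunctive rules do not branch), which can be computed by mechanical forward chaining; by Fact~\ref{fact_uniModSetCorrect}, deciding $\D\in\Query$ reduces to checking $\PrGoal$ in this finite universal model set, giving a decision procedure.

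The harder inclusion is completeness, and here I would chain the results already assembled in the paper. By Theorem~\ref{theo_expressivity_terminating}, an arbitrary decidable homomorphism-closed query $\Query$ is expressed by the disjunctive rule set $\R_6=\FF{brake}(\R_5,\PrHalt)\cup\Pi$, which is chase-terminating. It then remains to eliminate the disjunctions while preserving both expressivity and termination, which is exactly what Lemma~\ref{lemma-disjunction-removal} delivers once a suitable split of $\R_6$ is exhibited. I would take the split $\Tuple{\Sigma_1,\Sigma_2}$ with $\Sigma_1$ the disjunctive-Datalog part coming from $\FF{brake}(\R_1,\PrHalt)$ and $\Sigma_2=\FF{brake}(\R_5\setminus\R_1,\PrHalt)\cup\Pi$, folding the finitely many empty-body existential rules \eqref{rul_createFirst}, \eqref{rul_createLast}, and \eqref{rul_brake_init} into the Datalog part and treating their fixed inferences like additional database facts. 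Verifying the split condition of Definition~\ref{def_split} is routine because the $\R_1$-predicates that feed the disjunctive guessing never reappear in the rule heads of $\R_5\setminus\R_1$. Applying Lemma~\ref{lemma-disjunction-removal} then produces an existential rule set $\Sigma'$ with $\D,\R_6\models\PrGoal$ iff $\D,\Sigma'\models\PrGoal$ (item~1), so $\Sigma'$ expresses $\Query$; and $\Sigma'$ is chase-terminating (item~2) provided $\Sigma_2$ is.

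The main obstacle is the last proviso: establishing that $\Sigma_2$ is chase-terminating for \emph{every} database over $\mathcal{S}_{\mathsf{in}}(\Sigma_2)$, which is strictly stronger than Lemma~\ref{lemma_termDisjRules_terminating}, since $\mathcal{S}_{\mathsf{in}}(\Sigma_2)$ includes the inferred predicates of $\R_1$ and such databases may populate them in arbitrary, unstructured ways rather than only as they arise from a genuine $\R_1$-chase. The emergency-brake construction of Section~\ref{sec_brake} was designed precisely to neutralise the configurations that cause non-termination---most notably order cycles that would otherwise let rule~\eqref{rul_NextTreeRep} spin out an infinite representation tree---by firing the halting rules $\Pi$ (rules \eqref{rul_halt_inconsistentDb}--\eqref{rul_halt_beforeFirst}) and then freezing the chase via Lemma~\ref{lemma_brake_termination}. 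I would therefore argue that on any branch where $\PrHalt$ is not derived, the relevant $\R_1$-fragment of the input is already well-behaved (its $\PrOrder$ facts encode a genuine order with correctly placed $\PrFirst$ and $\PrLast$), so the deterministic, finitely-terminating construction of $\IS_5$ from $\IS_1$ used in the proof of Lemma~\ref{lemma_termDisjRules_terminating} applies essentially verbatim, while every branch that does derive $\PrHalt$ terminates by the brake. Combining this completeness argument with the soundness direction yields the claimed capture result.
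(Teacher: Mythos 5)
Your proof is correct and follows essentially the same route as the paper: the completeness direction chains Theorem~\ref{theo_expressivity_terminating} with Lemma~\ref{lemma-disjunction-removal} using exactly the paper's split -- $\FF{brake}(\R_1,\PrHalt)$ (with the empty-body existential rules folded into the Datalog side and their inferences treated as database facts) versus $\FF{brake}(\R_5\setminus\R_1,\PrHalt)\cup\Pi$ -- including the same argument that the emergency-brake rules discharge the strengthened termination requirement on $\Sigma_2$ over databases mentioning inferred $\R_1$-predicates. The only addition is that you spell out the soundness direction (homomorphism-closure by composing homomorphisms into models, decidability by materialising the finite terminating chase and checking $\PrGoal$), which the paper states as straightforward rather than proving.
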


\newcommand{\FormulaSet}{\FormatFormulaSet{F}}
\newcommand{\FS}{\FormulaSet}

\section{Limitations of Semi-Decidable Languages}
\label{section:expressivity-limits}

A \emph{query language} \Fragment over a schema $\mathcal{S}$ is a function from a set $L$ to $2^{\mathfrak{D}_{\mathcal{S}} }$, where $\mathfrak{D}_{\mathcal{S}}$ is the set of all databases over schema $\mathcal{S}$.
We say that \Fragment is \emph{semi-decidable} if membership to $L$ is semi-decidable, and that its \emph{query answering problem is decidable} if there exists a TM $\TM_\Fragment$ that takes as input some $(l, \Database) \in (L \times \mathfrak{D}_{\mathcal{S}})$ and decides whether $\Database \in \Fragment(l)$.


The set of chase-terminating existential rule sets is a query language that is not semi-decidable \cite{DBLP:journals/fuin/GrahneO18} and for which the query answering problem is decidable (by running the chase).
In fact, we show that one cannot find a semi-decidable query language with similar properties.

\begin{theorem}
\label{theorem:expressivity-limits}
There are no semi-decidable query languages that \FirstItem express all decidable, homomorphism-closed queries and \SecondItem for which query answering is decidable.
\end{theorem}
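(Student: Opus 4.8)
The plan is to argue by contradiction via diagonalisation. Suppose some semi-decidable query language $\Fragment\colon L\to 2^{\mathfrak{D}_{\mathcal{S}}}$ satisfied both \FirstItem and \SecondItem. Since membership in $L$ is semi-decidable, $L$ is recursively enumerable, so (assuming $L\neq\emptyset$, as otherwise \FirstItem fails outright) I fix a total computable enumeration $l_0,l_1,l_2,\ldots$ of $L$, and I let $\TM_\Fragment$ be the total decider for query answering provided by \SecondItem. The goal is to construct a single decidable, homomorphism-closed query $\Query^{\ast}$ such that $\Query^{\ast}\neq\Fragment(l_i)$ for every $i$, contradicting \FirstItem.

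Next I would choose the witnesses. I fix a computable family $(\D_i)_{i\in\NaturalNumbers}$ of finite databases over $\mathcal{S}$ that are \emph{(a)} acyclic (contain no directed cycle), \emph{(b)} pairwise homomorphically incomparable, i.e.\ $\D_i\to\D_j$ implies $i=j$, and \emph{(c)} such that every homomorphism from $\D_i$ into an acyclic database is injective. When $\mathcal{S}$ has a binary predicate $\PrEdge$ and two unary predicates $A,B$, one can take $\D_i$ to be the labelled directed path spelling the word $A\,B^i\,A$: homomorphisms between such paths correspond to the substring relation (whence \emph{(b)}, since these words form a substring-antichain), and a directed walk in an acyclic target never repeats a vertex (whence \emph{(c)}, so $\D_i\to\D$ with $\D$ acyclic forces $|\EI{\Nulls}{\D}|\geq i+2$). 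A binary predicate alone would already suffice, using directed cycles $\D_i$ of distinct prime lengths and replacing ``acyclic'' by ``having no strongly connected component whose cycle lengths have greatest common divisor $1$''. I then define $\Query^{\ast}=\{\D\mid \D\text{ has a directed cycle}\}\cup\bigcup\{\,{\uparrow}\D_i\mid \D_i\notin\Fragment(l_i)\,\}$, where ${\uparrow}\D_i=\{\D\mid \D_i\to\D\}$. Both parts are homomorphism-closed — directed cycles map to closed walks and hence to directed cycles, and each ${\uparrow}\D_i$ is an up-set — so $\Query^{\ast}$ is a homomorphism-closed query.

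I would then verify decidability and the diagonal property together. Given $\D$: if $\D$ has a directed cycle, accept (decidable); otherwise $\D$ is acyclic, so by \emph{(c)} the set $\{i\mid \D_i\to\D\}$ is finite and computable (all such $i$ satisfy $i\le|\EI{\Nulls}{\D}|$, and each homomorphism test is decidable), and I run $\TM_\Fragment$ on each $(l_i,\D_i)$ — which halts by \SecondItem — accepting iff some such $i$ has $\D_i\notin\Fragment(l_i)$. Hence $\Query^{\ast}$ is decidable. For the diagonal property, fix $i$: the database $\D_i$ is acyclic, and by \emph{(b)} we have $\D_i\in{\uparrow}\D_j$ iff $j=i$, so $\D_i\in\Query^{\ast}$ iff $\D_i\notin\Fragment(l_i)$. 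Thus $\Query^{\ast}$ and $\Fragment(l_i)$ differ on $\D_i$, so $\Query^{\ast}\neq\Fragment(l_i)$; as $(l_i)_i$ exhausts $L$, no $l\in L$ satisfies $\Fragment(l)=\Query^{\ast}$, contradicting \FirstItem.

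The crux — and the step I expect to be the main obstacle — is reconciling the three demands on $\Query^{\ast}$: homomorphism-closure propagates membership \emph{upward}, which simultaneously forbids naive pointwise flipping (forcing homomorphically incomparable witnesses) and threatens decidability (infinitely many incomparable witnesses may map into a single ``cyclic'' input). Conditions \emph{(b)} and \emph{(c)} deliver independent, size-bounded witnesses, while sweeping \emph{all} databases with a directed cycle into $\Query^{\ast}$ is a homomorphism-closed, decidable ``catch-all'' that neutralises precisely the inputs receiving unboundedly many witnesses; making these two devices coexist is the heart of the argument.
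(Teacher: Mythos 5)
Your proof is correct, and although it is, like the paper's, a diagonalisation over a family of pairwise homomorphically incomparable witnesses, the decomposition is genuinely different. The paper interposes an abstraction: it defines the class $\mathcal{M}$ of all TM deciders of homomorphism-closed queries over the one-binary-predicate schema $\{\PrEdge\}$ (Definition~\ref{definition:tm-class}), shows that a language satisfying \FirstItem and \SecondItem would make $\mathcal{M}$ enumerable up to equivalence (Lemma~\ref{lemma:expressivity-limits-1}), and separately proves that no such enumeration exists (Lemma~\ref{lemma:expressivity-limits-2}), diagonalising with directed cycles of distinct prime lengths. You collapse this into a single step: enumerate $L$ directly and exhibit an explicit diagonal query $\Query^{\ast}$, built as a union of principal up-sets plus a ``contains a directed cycle'' catch-all. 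The two constructions are dual in where they pay for the tension you correctly identify as the crux: the paper's $\TM_d$ \emph{rejects} inputs that map \emph{into} a small accepted witness, so decidability is immediate from the size bound but homomorphism-closure requires a structural lemma (a database with fewer than $p$ nulls mapping into the $p$-cycle already maps into a path), whereas your $\Query^{\ast}$ \emph{accepts} inputs that rejected witnesses map \emph{into}, so closure is free and the work shifts to decidability, discharged by your injectivity property \emph{(c)} and the catch-all. What the paper's route buys is scope: it works over the bare schema $\{\PrEdge\}$, which is essentially the minimal setting where the theorem can hold (over purely unary schemas there are only finitely many homomorphism-closed queries, so a finite --- hence semi-decidable --- language expresses them all), while your fully worked construction needs the extra unary predicates $A,B$, leaving the $\{\PrEdge\}$-only case to your prime-cycle sketch; that sketch is in fact sound (images of strongly connected components land in single strongly connected components and can only decrease the gcd of cycle lengths, and an embedding of a prime cycle into a database all of whose components have gcd $>1$ must be injective), but it is exactly the point where your argument reconverges with the paper's witness family, so a complete writeup should spell it out. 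What your route buys is a shorter, more self-contained argument that produces a concrete unexpressible decidable query, rather than a statement about enumerability of machines.
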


To show this result, we define a set $\mathcal{M}$ of TMs (cf. Definition~\ref{definition:tm-class}), show that $\mathcal{M}$ can be enumerated up to equivalence if there is a semi-decidable language that satisfies \FirstItem and \SecondItem above (cf. Lemma~\ref{lemma:expressivity-limits-1}), and finally prove that the consequence of this implication does not hold (cf. Lemma~\ref{lemma:expressivity-limits-2}).

\begin{definition}
\label{definition:tm-class}
Consider the set $\mathcal{M}$ of all TMs $\TM$ such that:
\FirstItem The TM \TM halts on all inputs.
\SecondItem If $\TM$ accepts some word $w$, then $w$ corresponds to a database over schema $\{\PrEdge\}$.
\ThirdItem Consider some words $w$ and $v$ that correspond to some 
$\D$ and $\DA$ in $\mathfrak{D}_{\{\PrEdge\}}$, respectively.
If \TM accepts $w$ and there is a homomorphism $h : \D \to \DA$, then \TM accepts $v$.
\end{definition}

Intuitively, $\mathcal{M}$ is the set of all deciders that solve homomorphism-closed queries over databases in $\mathfrak{D}_{\{\PrEdge\}}$.

\begin{lemma}
\label{lemma:expressivity-limits-1}
If there is a semi-decidable query language \Fragment that satisfies \FirstItem and \SecondItem in Theorem~\ref{theorem:expressivity-limits}, then $\mathcal{M}$ is enumerable up to equivalence.
\end{lemma}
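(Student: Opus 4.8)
The plan is to use the hypothetical query language $\Fragment$ as a uniform ``generator'' of deciders: since its index set $L$ is semi-decidable it is recursively enumerable, and since its query-answering machine $\TM_\Fragment$ is a decider, each index yields a \emph{total} Turing machine deciding the corresponding query. I will show that the resulting computable sequence of total machines already covers $\mathcal{M}$ up to equivalence, which is precisely what ``enumerable up to equivalence'' asks for. Throughout I take the schema of $\Fragment$ to be $\{\PrEdge\}$; this is w.l.o.g., as I only need that $\Fragment$ expresses the decidable homomorphism-closed queries over $\{\PrEdge\}$, which it does by property~\FirstItem of Theorem~\ref{theorem:expressivity-limits}.

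First I would fix an enumerator for $L$: since membership in $L$ is semi-decidable, there is a TM producing an infinite list $l_0, l_1, l_2, \ldots$ whose range is $L$ (and $L\neq\emptyset$, since by~\FirstItem it must contain an index for, e.g., the empty query). Next, for each $i$ I would define a machine $M_i$ that, on an input word $w$: \FirstItem checks whether $w$ is a serialisation of some database $\Database$ over $\{\PrEdge\}$ and rejects if not; and \SecondItem otherwise decodes $w$ to $\Database$ and runs $\TM_\Fragment(l_i,\Database)$, accepting iff it accepts. Three facts are then immediate and form the core of the argument: each $M_i$ is total (the validity check is decidable and $\TM_\Fragment$ is a decider by~\SecondItem of the theorem); the language of $M_i$ is exactly the set of serialisations of the databases in $\Fragment(l_i)$; and the map $i\mapsto M_i$ is computable, since $l_i$ is obtained from the enumerator and substituted into a fixed template.

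It remains to verify coverage. Given any $\TM\in\mathcal{M}$, conditions \SecondItem and \ThirdItem of Definition~\ref{definition:tm-class} guarantee that $\TM$ accepts exactly the serialisations of the databases in a well-defined query $\Query_{\TM}\subseteq\mathfrak{D}_{\{\PrEdge\}}$ that is closed under homomorphisms (closure under isomorphism, hence well-definedness, follows from \ThirdItem applied to the mutual homomorphisms between isomorphic databases), and $\Query_{\TM}$ is decidable because $\TM$ halts on all inputs (condition~\FirstItem of Definition~\ref{definition:tm-class}). By property~\FirstItem of Theorem~\ref{theorem:expressivity-limits}, $\Fragment$ expresses $\Query_{\TM}$, i.e. $\Fragment(l)=\Query_{\TM}$ for some $l\in L$; since the enumerator lists all of $L$, we have $l=l_i$ for some $i$, and then $M_i$ accepts precisely the serialisations of the databases in $\Query_{\TM}$, so $M_i$ and $\TM$ accept the same language. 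Hence $(M_i)_{i\in\mathbb{N}}$ is a computable sequence of total deciders covering every equivalence class of $\mathcal{M}$.

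The subtle point I would flag is that $\Fragment(l_i)$ need not be homomorphism-closed for every enumerated $l_i$ (property~\FirstItem of the theorem only requires that every decidable hom-closed query be \emph{some} $\Fragment(l)$, not that every $\Fragment(l)$ be hom-closed), so individual $M_i$ may violate condition~\ThirdItem of Definition~\ref{definition:tm-class} and lie outside $\mathcal{M}$. This is harmless for the statement as I read it: ``enumerable up to equivalence'' asks for a computable list of total deciders hitting every equivalence class of $\mathcal{M}$, not a list contained in $\mathcal{M}$, and this over-approximation is exactly what the diagonalisation of Lemma~\ref{lemma:expressivity-limits-2} refutes. If the intended definition does insist that every listed machine lie in $\mathcal{M}$, the genuine obstacle surfaces precisely here: one would have to convert each $M_i$ into a hom-closed decider that still agrees with it whenever $\Fragment(l_i)$ happens to be hom-closed. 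The natural candidate, deciding the homomorphism closure $\{\Database \mid \text{there is } \Database'\in\Fragment(l_i) \text{ and a homomorphism } \Database'\to\Database\}$, is in general only semi-decidable (homomorphic preimages, such as $C_5$ mapping into $K_3$, can be strictly larger than the input), so it does not yield a total machine; capping the witness size by the size of the input restores totality and reconstructs $\Fragment(l_i)$ when the latter is hom-closed, but the resulting language fails to be hom-closed for arbitrary $l_i$. I therefore expect the clean route to be the covering formulation above, which sidesteps this difficulty entirely.
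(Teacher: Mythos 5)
Your proof is correct and takes essentially the same route as the paper's: enumerate $L$ via the semi-decidability of membership, instantiate a fixed template machine that decodes its input into a database over $\{\PrEdge\}$ and runs $\TM_\Fragment(l_i,\cdot)$, and obtain coverage of $\mathcal{M}$ up to equivalence from property \FirstItem of Theorem~\ref{theorem:expressivity-limits}. The subtlety you flag at the end is genuine but harmless here, and your resolution matches the paper's: the paper's own enumerated machines $\TM_1,\TM_2,\ldots$ are likewise total deciders that need not lie in $\mathcal{M}$ (since $\Fragment(l_i)$ need not be homomorphism-closed), so the lemma is read in the covering sense, which is exactly what the diagonalisation of Lemma~\ref{lemma:expressivity-limits-2} refutes.
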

\begin{proof}
If there is a language such as \Fragment, then there is an enumerator \ETM for $L$ that prints out a sequence $l_1, \l_2, \ldots$ and a decider $\TM_{\Fragment}$ that can be used to check if $\Database \in \Fragment(l)$ for each $(l,\Database) \in (L \times \mathfrak{D}_{\{\PrEdge\}})$.
For each $i \geq 1$, let $\TM_i$ be the TM that, on input $w$, performs the following computation: if $w$ corresponds to a database $\D\in\mathfrak{D}_{\{\PrEdge\}}$ and $\TM_\Fragment$ accepts $(l_i,\D)$, then \emph{accept}; otherwise, \emph{reject}.
By modifying \ETM we can define an enumerator that prints out the sequence $\TM_1, \TM_2, \ldots$, which contains $\mathcal{M}$ up to equivalence.
\end{proof}

\begin{lemma}
\label{lemma:expressivity-limits-2}
The set $\mathcal{M}$ is not enumerable up to equivalence.
\end{lemma}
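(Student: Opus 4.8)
The plan is to argue by diagonalisation. Suppose, for contradiction, that $\mathcal{M}$ is enumerable up to equivalence, so that there is a computable sequence $\TM_1, \TM_2, \ldots$ of total Turing machines such that the queries they decide exhaust, up to equality, all decidable homomorphism-closed queries over $\{\PrEdge\}$ (this uses that $\mathcal{M}$ contains a decider for every such query). I would then construct a single decidable homomorphism-closed query that differs from every enumerated query, contradicting this covering property. Throughout, I encode undirected graphs as databases over $\{\PrEdge\}$ using a symmetric edge relation, so that database homomorphisms between such structures coincide with undirected graph homomorphisms.

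The core device is a computable family of \emph{witness} databases $A_1, A_2, \ldots$ that is simultaneously a homomorphism antichain ($A_i \to A_j$ only if $i=j$) and \emph{size-forcing} (any $G$ admitting $A_i \to G$ has at least $\chi(A_i)$ vertices, with $\chi(A_i)\to\infty$). I obtain both from two homomorphism-monotone invariants pointing in opposite directions: chromatic number $\chi$, which is non-decreasing along homomorphisms, so $A_i\to G$ forces $\chi(G)\ge\chi(A_i)$ and hence at least $\chi(A_i)$ vertices; and odd girth $\mathrm{odg}$, which is non-increasing along homomorphisms. Choosing the $A_i$ so that both $\chi(A_i)$ and $\mathrm{odg}(A_i)$ are strictly increasing yields an antichain: for $i<j$, a map $A_i\to A_j$ would require $\mathrm{odg}(A_j)\le\mathrm{odg}(A_i)$ (false), and $A_j\to A_i$ would require $\chi(A_i)\ge\chi(A_j)$ (false). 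Such graphs exist for every pair of bounds by the Erd\H{o}s high-girth/high-chromatic-number theorem, so I can define $A_i$ effectively as the first finite graph, in a fixed enumeration of all finite graphs, whose chromatic number and odd girth both strictly exceed those of $A_{i-1}$; this search always halts, giving a computable sequence.

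With the witnesses fixed, I diagonalise. Put $S=\{i \mid \TM_i \text{ rejects the serialisation of } A_i\}$; since each $\TM_i$ is total, running it on this input halts, so $S$ is decidable. Define $\mathcal{P}=\{G \mid \exists i\in S.\ A_i \to G\}$. By construction $\mathcal{P}$ is closed under homomorphisms and isomorphisms. It is decidable: given $G$, only the finitely many $i$ with $\chi(A_i)$ at most the number of vertices of $G$ can satisfy $A_i\to G$, and for each such $i$ both the homomorphism test $A_i\to G$ and the membership $i\in S$ are decidable. By the antichain property, $A_j\in\mathcal{P}$ holds iff $j\in S$, i.e.\ iff $\TM_j$ rejects $A_j$; writing $\mathcal{Q}_j$ for the query decided by $\TM_j$, we have $A_j\in\mathcal{Q}_j$ iff $\TM_j$ accepts $A_j$, so $A_j\in\mathcal{P}\Leftrightarrow A_j\notin\mathcal{Q}_j$. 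Hence $\mathcal{P}$ disagrees with $\mathcal{Q}_j$ on $A_j$ for every $j$. But $\mathcal{P}$ is a decidable homomorphism-closed query over $\{\PrEdge\}$, so some machine in $\mathcal{M}$ decides it, and the enumeration must therefore contain some $\TM_k$ with $\mathcal{Q}_k=\mathcal{P}$, contradicting the disagreement on $A_k$.

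The main obstacle is building witnesses that are at once an antichain (so that membership of $A_j$ encodes exactly the single diagonal bit $j\in S$) and size-forcing (so that deciding $G\in\mathcal{P}$ only inspects boundedly many witnesses, keeping $\mathcal{P}$ decidable). The tempting route of forcing size through clique number fails, since a clique of size at least $3$ already forces odd girth $3$ and so collides with incomparability; the resolution is to force size through chromatic number, which coexists with large odd girth by Erd\H{o}s's theorem. I would take care that the monotonicity directions of $\chi$ and $\mathrm{odg}$ are genuinely opposite and that the symmetric encoding into $\{\PrEdge\}$ preserves homomorphisms into arbitrary (possibly non-symmetric) databases, so that all three properties—antichain, size-forcing, and decidability—survive intact.
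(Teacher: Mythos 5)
Your overall strategy is sound and structurally parallel to the paper's: both proofs diagonalise over a computable homomorphism antichain of $\{\PrEdge\}$-databases equipped with a size-forcing property that keeps the diagonal query decidable (the paper uses directed cycles of prime length and an explicit diagonal machine $\TM_d$ that it shows lies in $\mathcal{M}$; you use Erd\H{o}s-type graphs of growing chromatic number and odd girth, define an abstract query $\mathcal{P}$, and invoke the covering property). However, there is a genuine gap at exactly the point you promise to "take care" of: size-forcing, and with it your decidability argument for $\mathcal{P}$, fails on databases containing a self-loop. If $G$ contains $\PrEdge(u,u)$, then \emph{every} $A_i$ maps homomorphically into $G$ (send all nulls to $u$; every atom of the symmetric $A_i$ lands on $\PrEdge(u,u)$), no matter how few nulls $G$ has. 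Consequently the bounded search you describe --- check only those $i$ with $\chi(A_i)\le|V(G)|$ --- does not decide $\mathcal{P}$: on the one-null database $\{\PrEdge(u,u)\}$ it answers "no", whereas this database belongs to $\mathcal{P}$ as soon as $S\neq\emptyset$. Put differently, the query your algorithm actually decides is \emph{not} closed under homomorphisms (it contains $A_{i_0}$ for $i_0\in S$ but not its homomorphic image $\{\PrEdge(u,u)\}$), so its decider is not a member of $\mathcal{M}$ and the covering property yields no contradiction.

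This degenerate case is precisely what the paper's diagonal machine guards against with its explicit instruction to \emph{accept} whenever $\PrEdge(u,u)\in\D$, and your construction needs the same guard. Redefine $\mathcal{P}$ as the union of $\{G\mid G\text{ contains a self-loop}\}$ with $\{G\mid\exists i\in S.\ A_i\to G\}$. Both parts are homomorphism-closed, so the union is; your witnesses $A_j$ are loop-free (their odd girth is at least $3$), so the diagonal equivalence $A_j\in\mathcal{P}\Leftrightarrow j\in S$ survives; and the query is now uniformly decidable: accept on a self-loop, otherwise run your bounded search, which is correct because for loop-free targets $A_i\to G$ genuinely forces $\chi(A_i)\le\chi(G)\le|V(G)|$. (One can also argue non-uniformly that a decider for your original $\mathcal{P}$ exists by case distinction on whether $S=\emptyset$, but the explicit guard is cleaner.) With this one repair your proof is correct, and it is an appealing alternative to the paper's: your antichain via the opposed monotonicity of chromatic number and odd girth replaces the paper's number-theoretic argument that distinct prime cycles admit no homomorphisms between them and that any database with fewer nulls than $p$ mapping into a directed $p$-cycle must map into a path.
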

\begin{proof}[Proof Sketch]
Assume that there is an enumerator that outputs a sequence $\TM_1, \TM_2, \ldots$ that includes $\mathcal{M}$ up to equivalence.
We obtain a contradiction by defining a sequence $\D_1, \D_2, \ldots$ of databases and a TM $\TM_d \in \mathcal{M}$ that diagonalises over $\TM_1, \TM_2, \ldots$ and $\D_1, \D_2, \ldots$
Namely, for each $i \geq 1$, let $\D_i = \{\PrEdge(u_1, u_2)$, $\ldots, \PrEdge(u_{p_{i+1}}, u_1)\}$ where $p_{i+1}$ is the $(i+1)$-th prime.
Moreover, $\TM_d$ is the TM that, on input $w$, performs the computation: (1) \emph{Reject} if $w$ does not correspond to some $\D\in\mathfrak{D}_{\{\PrEdge\}}$.
(2) \emph{Reject} if $\D$ can be hom-embedded into a path over \PrEdge.
(3) \emph{Accept} if $\PrEdge(u, u) \in \D$ for some null $u$.
(4) If there is some $i \geq 1$ such that there are less nulls in $\D_i$ than in $\D$, the TM $\TM_i$ accepts some serialisation that corresponds to $\D_i$, and there is a homomorphism $h : \D \to \D_i$; then \emph{reject}.
Otherwise, \emph{accept}.\qedhere
\end{proof}

\section{Discussion and Conclusion}
\label{section:disc-conc}

In this paper, we have established a characterization of all decidable homomorphism-closed Boolean queries.
We showed that these are exactly the 
chase-terminating existential rule queries, that is, queries that can be expressed by a set of (non-disjunctive) existential rules for which the standard chase universally terminates irrespective of the 
order of rule applications (as long as it 
is fair).

By its nature, our result immediately shows that various extensions of our framework do not increase its expressivity: 

\begin{theorem}
Chase-terminating existential rule queries have the same expressivity as	
\begin{enumerate}	
\item existential rule queries with guaranteed existence of some finite chase tree (for every database),
\item existential rule queries for which the chase terminates according to some fair strategy (such as datalog-first),
\item core-chase-terminating existential rule queries,
\item disjunctive chase-terminating existential rule queries.
\end{enumerate}	
\end{theorem}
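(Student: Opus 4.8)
The plan is to prove all four equivalences uniformly by a \emph{sandwich} argument: I will show that each of the classes (1)--(4) expresses exactly the decidable homomorphism-closed queries. Since by Theorem~\ref{theo_mainresult} this is precisely the class of queries expressed by (standard-)chase-terminating existential rules, every one of the four equivalences then follows by transitivity. Each class is thus squeezed between the chase-terminating existential rule queries (as a lower expressivity bound) and the decidable homomorphism-closed queries (as an upper bound), and these two coincide.

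For the \emph{lower bound} (each class expresses \emph{at least} all decidable homomorphism-closed queries), I observe that every all-strategy-chase-terminating existential rule set is a member of each of the four classes. Indeed, if all chase trees are finite, then a finite chase tree certainly exists (giving (1)) and the chase terminates under every fair strategy, in particular datalog-first (giving (2)); standard-chase termination is contained in core-chase termination, as recalled in the introduction (giving (3)); and a non-disjunctive rule set is a degenerate disjunctive one on which the disjunctive chase coincides with the standard chase, so all-strategy termination yields disjunctive-chase termination (giving (4)). Combining these observations with Theorem~\ref{theo_mainresult}, each of the four classes can already express every decidable homomorphism-closed query.

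For the \emph{upper bound} (each class expresses \emph{only} decidable homomorphism-closed queries), two properties must be established. First, homomorphism-closedness: for any disjunctive existential rule set $\Sigma$, the query $\{\D\mid\Sigma,\D\models\PrGoal\}$ is homomorphism-closed, by the routine argument that, given a homomorphism $\D\to\D'$ and any model $\M$ of $\Sigma,\D'$, composing with the homomorphism witnessing $\M\models\D'$ yields $\M\models\D$, hence $\M\models\Sigma,\D$ and thus $\PrGoal\in\M$; this covers all four classes, which all consist of (disjunctive) existential rules. Second, decidability: in each case the assumed termination behaviour yields a decision procedure that, given the serialisation of a database $\D$, reconstructs $\D$ and inspects whether $\PrGoal$ holds throughout the corresponding finite universal model set, appealing to Fact~\ref{fact_uniModSetCorrect}. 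For (3) one runs the core chase and for (4) the disjunctive chase; both halt by assumption and return finite universal model sets.

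The main obstacle, and the only point requiring care, concerns (1) and (2), where no single canonical terminating run is available a priori. Here the plan is to \emph{dovetail}: enumerate all finite candidate trees (equivalently, simulate all fair strategies in parallel), halting as soon as a finite \emph{complete} chase tree is found. Validity of a candidate is decidable, since one need only check at its finitely many leaves that no further rule application is possible; and a complete finite tree is guaranteed to exist by the defining assumption of the class, so the search is guaranteed to terminate. Its result is a universal model set by Fact~\ref{fact_chase-univ-model-sets}, and checking $\PrGoal$ in all of its finite interpretations decides membership. (For (2) under a fixed fair strategy such as datalog-first, one may instead simply run that strategy.) This closes both bounds and establishes all four equivalences.
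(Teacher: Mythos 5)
Your proposal is correct, and its overall strategy is the same as the paper's: a sandwich argument that squeezes each of the four classes between standard-chase-terminating existential rules (lower bound, since all-strategy termination puts a rule set into every one of the four classes) and the decidable homomorphism-closed queries (upper bound), then closes the loop with Theorem~\ref{theo_mainresult}. The one place where you genuinely diverge is the upper bound for items (1) and (2). The paper disposes of these by a purely structural chain: standard-chase termination implies these weaker guarantees, which in turn imply \emph{core}-chase termination, so item (3) (where decidability comes from running the core chase) already covers them. You instead prove decidability for (1) and (2) directly, by dovetailing over all finite candidate trees and checking at the leaves that no rule--substitution pair is applicable, then invoking Facts~\ref{fact_chase-univ-model-sets} and~\ref{fact_uniModSetCorrect} on the finite result. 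Both arguments are sound; yours is more self-contained, since it does not rely on the (Deutsch--Nash--Remmel-style) fact that the existence of a finite universal model set forces core-chase termination, whereas the paper's is terser because it reuses item (3) rather than exhibiting an explicit decision procedure. Your explicit treatment of homomorphism-closedness for disjunctive rule sets likewise matches what the paper only asserts in passing for item (4).
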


\begin{proof}  
(3) Standard-chase termination implies core-chase termination. 
On the other hand, core chase termination implies decidability and 
thus our result applies. (1) and (2) Standard-chase termination implies these weaker form of guarantees, which themselve imply core chase termination.
(4) Obviously, every (non-disjunctive) existential rule set is a special case of a disjunctive one and for this special case, disjunctive chase termination coincides with termination of the (non-disjunctive) standard chase.
On the other hand, disjunctive existential rule queries are also closed under homomorphisms, and disjunctive universal chase termination obviously implies decidability. So our result applies.
\end{proof}

However, the applicability of our result does not stop at (syntactic) extensions of our framework, as it applies to arbitrary query languages and querying formalisms of different types. In particular we would like to stress the relationship to the very comprehensive existential rules fragment of \emph{bounded treewidth sets (bts) of rules} \cite{BagetLMS11} that is \emph{not} chase-terminating and encompasses a plethora of well-known existential rule fragments with decidable query entailment, including guarded \cite{CaliGK08}, frontier-guarded \cite{BagetLMS11}, and glut-guarded existential rules \cite{KR11:jointacyc}, as well as greedy bts \cite{BagetMRT11}:

\begin{theorem}
Let $\Sigma$ be a bounded-treewidth set of rules and $Q$ a conjunctive query. There is a chase-terminating set $\Sigma_Q$ of existential rules such that
$\mathcal{D},\Sigma \models Q$ iff 
$\mathcal{D},\Sigma_Q \models \PrGoal$. 
\end{theorem}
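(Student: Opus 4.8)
The plan is to reduce query entailment for a bounded-treewidth set of rules to the homomorphism-closed decidable setting that our main result (Theorem~\ref{theo_mainresult}) already handles. First I would observe that, for fixed $\Sigma$ and $Q$, the property ``$\mathcal{D},\Sigma\models Q$'' defines an abstract query $\Query_{\Sigma,Q}$ over the input schema: the set of all databases $\mathcal{D}$ that entail $Q$ under $\Sigma$. The entire argument then consists in verifying that this query is (a) closed under homomorphisms and (b) decidable, so that Theorem~\ref{theo_mainresult} yields the desired chase-terminating existential rule set $\Sigma_Q$ expressing it via $\PrGoal$.

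For closure under homomorphisms, I would use the standard fact that conjunctive query entailment under existential rules is monotone along homomorphisms of the database. Concretely, if $h:\mathcal{D}\to\mathcal{D}'$ is a homomorphism and $\mathcal{D},\Sigma\models Q$, then any model $\mathcal{M}$ of $\Sigma$ and $\mathcal{D}'$ restricts (via $h$) to witness that $\mathcal{M}$ is also a model of $\mathcal{D}$, hence $\mathcal{M}\models Q$; thus $\mathcal{D}',\Sigma\models Q$. This is a routine consequence of the semantics in Section~\ref{sec_prelims} and does not depend on the bts assumption.

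The decidability half is where the bounded-treewidth hypothesis does the work. I would invoke the known result that conjunctive query entailment under bts rule sets is decidable: a bts rule set guarantees the existence of a universal model of bounded treewidth, and entailment of a conjunctive query over such a model is decidable by standard automata-theoretic or tree-decomposition-based techniques \cite{BagetLMS11}. Since $\Sigma$ and $Q$ are fixed, this gives a decision procedure, i.e.\ a Turing machine, that on input a serialisation of $\mathcal{D}$ decides whether $\mathcal{D},\Sigma\models Q$; this is exactly what Definition of a decidable query requires.

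The main obstacle, and the step deserving the most care, is the decidability argument: one must cite or adapt the precise bts entailment-decidability result and confirm it applies uniformly to \emph{all} databases over the input schema (so that the resulting language of serialisations is genuinely decidable, not merely for databases of some restricted shape). Once closure under homomorphisms and decidability are both secured, the conclusion is immediate: Theorem~\ref{theo_mainresult} provides a chase-terminating existential rule set $\Sigma_Q$ with $\mathcal{D},\Sigma_Q\models\PrGoal$ iff $\mathcal{D}\in\Query_{\Sigma,Q}$ iff $\mathcal{D},\Sigma\models Q$, which is precisely the claimed equivalence.
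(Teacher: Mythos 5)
Your proposal is correct and follows exactly the paper's own argument: the paper proves this theorem as a ``straightforward consequence of decidability of conjunctive query entailment from bts and of homomorphism-closedness of existential rule queries in general,'' then applies Theorem~\ref{theo_mainresult}. Your write-up simply fills in the details (composition of homomorphisms for closure, the bts decision procedure from \cite{BagetLMS11} for decidability) that the paper leaves implicit.
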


While possibly surprising, this is a straightforward consequence of decidability of conjunctive query entailment from bts and of homomorphism-closedness of existential rule queries in general. Note, however, that every $Q$ would give rise to a different $\Sigma_Q$. In fact, asking for a ``uniform'' chase-terminating existential rules set $\Sigma'$ satisfying $\mathcal{D},\Sigma' \models Q$ iff $\mathcal{D},\Sigma \models Q$ would change the game \cite{ZZY15:finitechase}. Such a set will not exist in all cases.

While our result addresses many of the open questions regarding \emph{expressivity} of the terminating chase \cite{DBLP:conf/icdt/KrotzschMR19} an important avenue for future work is to investigate potential differences when it comes to the corresponding computational \emph{complexities}. We deem it likely that not all of the discussed chase variants give rise to worst-case optimal computations.

\paragraph*{Acknowledgements}
This work is partly supported by DFG
in project number 389792660 (TRR 248, \href{https://www.perspicuous-computing.science/}{Center for Perspicuous Systems}), by BMBF in the \href{https://www.scads.de}{Center for Scalable Data Analytics and Artificial Intelligence} (ScaDS.AI), by the
\href{https://cfaed.tu-dresden.de/}{Center for Advancing Electronics Dresden} (cfaed), by the ERC Consolidator Grant DeciGUT (project number 771779), and by the ANR project CQFD (ANR-18-CE23-0003).

\bibliographystyle{kr}
\bibliography{bib}

\newpage
\newgeometry{left=35mm,right=35mm}
\onecolumn
\appendix
\section{Proofs for Section \ref{sec_disRules}
}

In this section, we prove the claims made in Section \ref{sec_disRules}. In particular, besides proving Lemmas \ref{lem:univ-model-set-1}, \ref{lem:univ-model-set-2}, \ref{lem:univ-model-set-3}, \ref{lem:univ-model-set-4}, \ref{lem:univ-model-set-5},  \ref{lemma:soundness}, and \ref{lemma:completeness}, we show Lemma~\ref{app-lem:various-claims-structure} from which the next claims (given in the order they occur in the section) follow.
\begin{enumerate}
\item $|\IS_1| = |\IS_2| = |\IS_3| = |\IS_4|= |\IS_5|$.

\item For each $1\leq i<j\leq 5$ and each $\I\in\IS_i$, there is exactly one $\mathcal{J}\in\IS_j$ with $\I\subseteq\mathcal{J}$. 

\item Every $\I\in\IS_i$ with $1 \leq i \leq 5$ contains a unique interpretation $\FunSeed(\I)\in\IS_1$,
and there is a unique ordered partition $\FunOrder(\I)\in\FunOrders$ such that
$\FunSeed(\I) \in \FunCompletions(\FunOrder(\I))$. 
\end{enumerate}

Recall that $\Delta=\EI{\Nulls}{\D} \cup \{\FirstNull,  \LastNull\}$. 
We will call a \D-\emph{order} a database over schema $\{\PrFirst, \PrLast, \PrOrder, \PrEq, \PrNeq\}$ and nulls from $\Delta$, and a \D-\emph{completion} a database over schema $\{\PrFreshIDBIn{\PrP}, \PrFreshIDBNotIn{\PrP}\mid\PrP \in \mathcal{S}\}$ and nulls from $\Delta$. 
We say that a \D-completion is \emph{complete} if (i) it includes $\{\PrFreshIDBIn{\PrP}(\Vt)\mid \PrP(\Vt)\in \D\}$ and (ii) 
 for every $\PrP \in \mathcal{S}$ and tuple $\Vt\subseteq \Delta$ of matching arity, it contains exactly one of $\{\PrFreshIDBIn{\PrP}(\Vt),\PrFreshIDBNotIn{\PrP}(\Vt)\}$.

\begin{lemma}\label{app-lem:various-claims-structure}
There is a one-to-one correspondence between the interpretations in $\IS_i$ ($1\leq i\leq 5$) and the pairs of the form $(\VT,\FactSet_\D)$, where  $\VT = T_1, \ldots, T_k \in \FunOrders$ is an ordered partition of $\Delta$  with $\FirstNull \in T_1$ and $ \LastNull \in T_k$, and $\FactSet_\D$ is a complete \D-completion. 
More precisely: 
\begin{itemize}
\item An interpretation $\I\in\IS_1$ corresponds to a pair $(\VT,\FactSet_\D)$ iff $\I$ is the union of the following databases:
\begin{itemize}
\item $\D\cup\{\PrDomain(t)\mid t\in\Delta\}$,
\item the \D-order corresponding to $\VT$: $\{\PrFirst(t) \mid t \in T_1\}\cup\{\PrLast(t) \mid t \in T_k\} \cup\{\PrEq(t, u) \mid 1 \leq i \leq k; t, u \in T_i\}\cup\{\PrOrder(t, u), \PrNeq(t, u), \PrNeq(u, t) \mid 1 \leq i < j \leq k$, $t \in T_i$, $u \in T_j\} $, and
\item $\FactSet_\D$.
\end{itemize}
\item An interpretation $\I\in\IS_i$ ($2\leq i\leq 5$) corresponds to $(\VT,\FactSet_\D)$ iff it includes some $\I_1\in\IS_1$ that corresponds to $(\VT,\FactSet_\D)$.
\end{itemize}
In particular, it follows that every $\I\in\IS_i$ with $1 \leq i \leq 5$ contains a unique interpretation $\FunSeed(\I)\in\IS_1$,
and there is a unique ordered partition $\FunOrder(\I)\in\FunOrders$ such that
$\FunSeed(\I) \in \FunCompletions(\FunOrder(\I))$. 
\end{lemma}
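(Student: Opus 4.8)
The plan is to prove the statement for $i=1$ directly from the definition of $\FunCompletions(\VT)$, and then to lift it to $i=2,\ldots,5$ by induction along the defining maps $\FunInterpretation{i}$. For the base case I would read the shape of the interpretations in $\FunCompletions(\VT)$ off the rules of $\R_1$; for the inductive step I would exploit that each $\FunInterpretation{i}$ (for $i\ge 2$) only adds facts over predicates and nulls that do not occur in $\R_1$, so that it is injective and preserves the ``$\R_1$-part''. The three numbered consequences then drop out as bookkeeping over the resulting bijection.

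\textbf{Base case $i=1$.} Fix $\VT=T_1,\ldots,T_k\in\FunOrders$. The definition of $\FunCompletions(\VT)$ already pins down that every $\I\in\FunCompletions(\VT)$ contains $\D$, the facts $\PrDomain(t)$ for $t\in\Delta$ (forced by rule~\eqref{rul_copyPrP} together with the existential rules~\eqref{rul_createFirst} and~\eqref{rul_createLast}, whose witnesses are identified with $\FirstNull\in T_1$ and $\LastNull\in T_k$), and exactly the $\D$-order associated with $\VT$. Using set-minimality together with the disjunctive completion rule~\eqref{rul_DbCompletion} and the congruence rule~\eqref{rul_EqCongruence}, I would show that $\I$ contains, for each predicate/tuple $(\PrP,\Vt)$, exactly one of $\PrFreshIDBIn{\PrP}(\Vt)$, $\PrFreshIDBNotIn{\PrP}(\Vt)$, that this choice is $\PrEq$-congruent with respect to $\VT$, and that the facts of $\D$ are classified as present; hence the remainder of $\I$ is a (necessarily $\PrEq$-congruent) complete $\D$-completion $\FactSet_\D$, and $\I$ is precisely the claimed union. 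Conversely, every such union whose completion part is a complete, $\PrEq$-congruent $\D$-completion is a set-minimal model of $\R_1$ and $\D$ extending the $\D$-order of $\VT$, hence lies in $\FunCompletions(\VT)$. Injectivity of $\I\mapsto(\VT,\FactSet_\D)$ follows because $\VT$ is recoverable from $\I$ (the blocks are the $\PrEq$-classes, the block order is read off $\PrOrder$, and $T_1,T_k$ from $\PrFirst,\PrLast$), and $\FactSet_\D$ is recoverable from the $\PrFreshIDBIn{\PrP}/\PrFreshIDBNotIn{\PrP}$ facts.

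\textbf{Inductive step $2\le i\le 5$.} Here I would observe that $\FunInterpretation{i}$ is defined as the least extension of its argument by facts over predicates and nulls absent from $\R_1$ (the tree predicates $\PrRoot,\PrHasChild,\ldots$ for $i=2$, the encoding/tape predicates for $i=3,4$, and the run predicates for $i=5$). Consequently, the restriction of any $\mathcal{J}\in\IS_i$ to the predicates of $\R_1$ equals its generating seed in $\IS_{i-1}$ and, descending, in $\IS_1$. In particular $\mathcal{J}$ contains exactly one interpretation of $\IS_1$, namely $\FunSeed(\mathcal{J})$ (at most one, because two distinct $\IS_1$-elements disagree on some $\PrFreshIDBIn{\PrP}/\PrFreshIDBNotIn{\PrP}$ fact while a complete completion admits only one), and $\FunInterpretation{i}$ is injective on $\IS_{i-1}$. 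By the induction hypothesis this seed corresponds to a unique pair $(\VT,\FactSet_\D)$, which I take to be the pair associated with $\mathcal{J}$; conversely every realised pair is attained by applying $\FunInterpretation{2},\ldots,\FunInterpretation{i}$ to the corresponding $\IS_1$-element. This yields the stated one-to-one correspondence for every $i$, over the same index set.

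\textbf{Consequences and main obstacle.} The equality $|\IS_1|=\cdots=|\IS_5|$ is then immediate, since all five sets are in bijection with the same set of pairs. For $i<j$ and $\I\in\IS_i$, the unique $\mathcal{J}\in\IS_j$ with $\I\subseteq\mathcal{J}$ is the image of $\I$ under $\FunInterpretation{i+1},\ldots,\FunInterpretation{j}$ (containment holds since each map only adds facts), and uniqueness follows because any $\mathcal{J}\supseteq\I$ contains $\FunSeed(\I)\in\IS_1$, hence has that same seed, which determines $\mathcal{J}$. Finally, well-definedness of $\FunSeed(\I)$ and $\FunOrder(\I)$ is exactly the ``exactly one contained $\IS_1$-element'' statement already established. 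I expect the base case to be the hard part: one must argue precisely that set-minimality forces a single $\PrEq$-congruent choice of $\PrFreshIDBIn{\PrP}/\PrFreshIDBNotIn{\PrP}$ per tuple and that no superfluous order facts survive, while correctly identifying the existential witnesses of~\eqref{rul_createFirst}--\eqref{rul_createLast} with the endpoints $\FirstNull,\LastNull$ of $\VT$.
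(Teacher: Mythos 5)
Your proposal is correct and follows essentially the same route as the paper's proof: the base case is read off the set-minimality of the models in $\FunCompletions(\VT)$ (forcing exactly one of $\PrFreshIDBIn{\PrP}(\Vt)$, $\PrFreshIDBNotIn{\PrP}(\Vt)$ per tuple and no superfluous order atoms), and the step to $i=2,\ldots,5$ uses exactly the paper's observation that each $\FunInterpretation{i}$ adds only atoms whose predicates lie outside the vocabulary of $\R_1$ (schema predicates, $\PrDomain$, $\PrFirst$, $\PrLast$, $\PrOrder$, $\PrEq$, $\PrNeq$, and the $\PrFreshIDBIn{\PrP}$, $\PrFreshIDBNotIn{\PrP}$), so the correspondence is preserved. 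One parenthetical deserves repair: two distinct $\IS_1$-elements need not disagree on an $\PrFreshIDBIn{\PrP}$/$\PrFreshIDBNotIn{\PrP}$ fact (they may differ only in their $\PrEq$/$\PrNeq$/$\PrOrder$ parts), but your argument still goes through because the restriction of any $\mathcal{J}\in\IS_i$ to the $\R_1$-predicates is a single element of $\IS_1$ and distinct elements of $\IS_1$ are incomparable under inclusion.
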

\begin{proof}
The correspondence between interpretations in $\IS_1$ and pairs of the form $(\VT,\FactSet_\D)$ is well defined because $\IS_1=\bigcup_{\VT \in \FunOrders}\FunCompletions(\VT)$ where $\FunCompletions(\VT)$ is the set of all minimal models of $\RuleSet_1$ and $\Database$ that contain the \D-order corresponding to $\VT$. In particular, every $\I\in\IS_1$ must include a  complete \D-completion to satisfy \eqref{rul_DbCompletion} and \eqref{rul_copyPrP}, and by minimality $\I$ does not include any set of the form $\{\PrFreshIDBIn{\PrP}(\Vt), \PrFreshIDBNotIn{\PrP}(\Vt)\}$. 
Moreover, by minimality, $\I$ does not contain any other atom over predicates $\{\PrFirst, \PrLast, \PrOrder, \PrEq, \PrNeq\}$. 

Each $\IS_{i}$ ($2\leq i\leq 5$) is obtained from $\IS_{i-1}$ by applying the function $\FunInterpretation{i}$ to each of its elements and by construction of $\FunInterpretation{i}$, $\I\subseteq\FunInterpretation{i}(\I)$ and $\FunInterpretation{i}(\I)\setminus\I$ does not contain any atoms on predicates from $\mathcal{S}\cup\{\PrDomain,\PrFirst, \PrLast, \PrOrder, \PrEq, \PrNeq\}\cup\{\PrFreshIDBIn{\PrP}, \PrFreshIDBNotIn{\PrP}\mid\PrP \in \mathcal{S}\}$. 
Hence the one-to-one correspondence between elements of $\IS_1$ and pairs of the form $(\VT,\FactSet_\D)$ is preserved when applying $\FunInterpretation{i}$ to elements of $\IS_{i-1}$ to obtain the elements of $\IS_i$ for each $2 \leq i \leq 5$.
\end{proof}

\noindent\textbf{Lemma \ref{lem:univ-model-set-1}.} 
\textit{$\IS_1=\bigcup_{\VT \in \FunOrders}\FunCompletions(\VT)$ is a universal model set of $\R_1$ and $\D$.}

\begin{proof}
Let $\mathfrak{K}$ be the result of some chase tree for $\R_1$ and $\D$ that prioritises the application of
rules in the order of their appearance in Figure~\ref{figure:rule-set-1}. 
By Fact~\ref{fact_chase-univ-model-sets}, $\mathfrak{K}$ is a universal model set for $\R_1$ and $\D$ and we show that each $\mathcal{K}\in\mathfrak{K}$ is isomorphic to a unique $\I\in\IS_1$.
 
We use the one-to-one correspondence between the interpretations in $\IS_1$ and the pairs of the form $(\VT,\FactSet_\D)$ described in Lemma~\ref{app-lem:various-claims-structure}.  
Since $\mathfrak{K}$ is a result of the chase on $\R_1$ and $\D$, the domain of every $\mathcal{K}\in\mathfrak{K}$ contains all nulls in $\EI{\Nulls}{\D}$ 
and exactly two fresh nulls introduced to satisfy rules $\eqref{rul_createFirst}$ and $\eqref{rul_createLast}$, respectively. We name them $\FirstNull$ and $\LastNull$ respectively, so that $\Delta$ is the domain of $\mathcal{K}$, and define a one-to-one correspondence between the interpretations in $\mathfrak{K}$ and the pairs of the form $(\VT,\FactSet_\D)$ as in Lemma \ref{app-lem:various-claims-structure}. 
This can be done because of the following points:
\begin{itemize}
\item Every $\mathcal{K}\in\mathfrak{K}$ includes $\D$ and does not contain any other atom on predicates from $\mathcal{S}$ because such predicates do not occur in rule heads. 

\item  Every $\mathcal{K}\in\mathfrak{K}$ includes $\{\PrDomain(t)\mid t\in\Delta\}$ and does not contain any other atom on predicate $\PrDomain$ (since there are no more nulls).

\item Every $\mathcal{K}\in\mathfrak{K}$ includes a \D-order corresponding to some $\VT  \in \FunOrders$ and does not contain any other atom on predicates from $\{\PrFirst, \PrLast, \PrOrder, \PrEq, \PrNeq\}$: 
\begin{itemize}

\item Let $\sim$ be the relation over $\Delta$ defined by $t\sim t'$ iff $\PrEq(t,t')\in \mathcal{K}$. This is an equivalence relation because (i) $\{\PrDomain(t)\mid t\in\Delta\}\subseteq\mathcal{K}$ and $\mathcal{K}$ satisfies \eqref{rul_DomToEq} (reflexivity); (ii) $\mathcal{K}$ satisfies  \eqref{rul_ReflEq} (symmetry);
and (iii)  $\mathcal{K}$ satisfies \eqref{rul_EqCongruence} instantiated for $\PrR=\PrEq$ (transitivity). The equivalence classes form a partition $T_1,\dots, T_k$ of~$\Delta$ such that $\{\PrEq(t, u) \mid 1 \leq i \leq k; t, u \in T_i\}\subseteq\mathcal{K}$.

\item For every $u,t\in\Delta$, $\mathcal{K}$ contains at least one of $\{\PrEq(u, t), \PrNeq(u, t)\}$ because it satisfies \eqref{rul_EqOrNeq}. 
Moreover, $\mathcal{K}$ cannot contain both $\PrEq(u, t)$ and $\PrNeq(u, t)$ because the rules \eqref{rul_DomToEq}, \eqref{rul_ReflEq}
and \eqref{rul_EqCongruence} have a higher priority than \eqref{rul_EqOrNeq}. 
Hence $\{ \PrNeq(t, u), \PrNeq(u, t) \mid 1 \leq i < j \leq k$, $t \in T_i$, $u \in T_j\} \subseteq\mathcal{K}$.

\item $\mathcal{K}$ includes $\{\PrFirst(u)\mid u\sim \FirstNull\}$ and $\{\PrLast(u)\mid u\sim \LastNull\}$ by definition of $\FirstNull,\LastNull$ and because it satisfies \eqref{rul_EqCongruence}. Assuming w.l.o.g. that $T_1$ is the equivalence class of $\FirstNull$ and $T_k$ that of $\LastNull$: $\{\PrFirst(t) \mid t \in T_1\}\cup\{\PrLast(t) \mid t \in T_k\}\subseteq\mathcal{K}$. 

\item $\mathcal{K}$ does not include any other atom over $\PrFirst$ or $\PrLast$ because rules \eqref{rul_createFirst} and \eqref{rul_createLast} are applied only once.

\item For every $u,t\in\Delta$ such that $ \PrNeq(u, t)\in \mathcal{K}$, $\mathcal{K}$ contains at least one of $\{\PrOrder(u, t),\PrOrder(t, u) \}$ because it satisfies \eqref{rul_LTchoice}.

\item For every $u,t\in\Delta$, if $\PrFirst(u)\in \mathcal{K}$ (i.e. $u\sim\FirstNull$) and $\PrFirst(t)\notin \mathcal{K}$ (i.e. $t\not\sim \FirstNull$) then $\PrOrder(u, t)\in \mathcal{K}$ because of \eqref{rul_FirstLessThan}. 

\item For every $u,t\in\Delta$, if $\PrLast(u)\in \mathcal{K}$ (i.e. $u\sim\LastNull$) and $\PrLast(t)\notin \mathcal{K}$ (i.e. $t\not\sim \LastNull$) then $\PrOrder(t,u)\in \mathcal{K}$ because of \eqref{rul_LessThanLast}.

\item If $\PrOrder(u, t)\in \mathcal{K}$ then $\PrOrder(u', t')\in \mathcal{K}$ for all $u'\sim u$ and $t\sim t'$ because $\mathcal{K}$ satisfies \eqref{rul_EqCongruence}.   

\item $\mathcal{K}$ contains the transitive closure of $\PrOrder$ because it satisfies \eqref{rul_TransLT}. 

\item Since the rules \eqref{rul_FirstLessThan}, \eqref{rul_LessThanLast}, \eqref{rul_DomToEq}, \eqref{rul_ReflEq}, \eqref{rul_EqCongruence}, and \eqref{rul_TransLT} have a higher priority than \eqref{rul_LTchoice}, then $\mathcal{K}$ cannot contain both $\PrOrder(u, t)$ and $\PrOrder(t, u)$. 

\item Hence $T_1,\dots, T_k$ can be ordered in a way such that $\{\PrOrder(t, u) \mid 1 \leq i < j \leq k$, $t \in T_i$, $u \in T_j\}  \subseteq\mathcal{K}$ and the ordered partition corresponds to some $\VT  \in \FunOrders$ as required.
\end{itemize}

\item Every $\mathcal{K}\in\mathfrak{K}$ contains a complete \D-completion $\FactSet_\D$ and does not contain any other atom on predicates from $\{\PrFreshIDBIn{\PrP}, \PrFreshIDBNotIn{\PrP}\mid\PrP \in \mathcal{S}\}$: 
\begin{itemize}
\item $\mathcal{K}$ includes $\{\PrFreshIDBIn{\PrP}(\Vt)\mid \PrP(\Vt)\in \D\}$ because it satisfies \eqref{rul_copyPrP}.
\item For every $\PrP \in \mathcal{S}$ and tuple $\Vt\subseteq\Delta$ of matching arity, $\mathcal{K}$ contains at least one of $\{\PrFreshIDBIn{\PrP}(\Vt),\PrFreshIDBNotIn{\PrP}(\Vt)\}$ because it satisfies  \eqref{rul_DbCompletion}. We show that it does not contain both. 
The rules of the form \eqref{rul_DbCompletion} have the lowest priority and applying such a rule can only trigger the application of some rules of the form \eqref{rul_EqCongruence}, which themselves cannot trigger the application of any rules but \eqref{rul_EqCongruence}. Hence, (i) the application of these rules in the chase tree happens after the construction of the \D-order in $\mathcal{K}$, and (ii) when a rule of the form \eqref{rul_DbCompletion} is applied and adds some $\PrFreshIDBIn{\PrP}(\Vt)$ (resp. $\PrFreshIDBNotIn{\PrP}(\Vt)$), the exhaustive application of  rules \eqref{rul_EqCongruence} adds all $\PrFreshIDBIn{\PrP}(\Vt')$ (resp. $\PrFreshIDBNotIn{\PrP}(\Vt')$) such that $\Vt'\sim\Vt$, where $\Vt'\sim\Vt$ iff $t_i\sim t'_i$ for every $i$. 
Suppose for a contradiction that there exists $\Vu$ such that $\PrFreshIDBIn{\PrP}(\Vu)\in\mathcal{K}$ and $\PrFreshIDBNotIn{\PrP}(\Vu)\in\mathcal{K}$. Since the rules \eqref{rul_DbCompletion} and \eqref{rul_EqCongruence} are the only ones that have predicates from $\{\PrFreshIDBIn{\PrP}, \PrFreshIDBNotIn{\PrP}\mid\PrP \in \mathcal{S}\}$ in the head, 
it follows that there exist $\PrFreshIDBIn{\PrP}(\Vu_1)\in\mathcal{K}$ and $\PrFreshIDBNotIn{\PrP}(\Vu_2)\in\mathcal{K}$ that have both been added by some application of \eqref{rul_DbCompletion} and are such that $\Vu_1\sim\Vu$ and $\Vu_2\sim\Vu$. Assume that $\PrFreshIDBIn{\PrP}(\Vu_1)$ was added first. Then by (ii) $\PrFreshIDBIn{\PrP}(\Vu_2)$ has been added before any new application of \eqref{rul_DbCompletion}, and the rule $\bigwedge_{x \in \Vx} \PrDomain(x) \to \PrFreshIDBIn{\PrP}(\Vx) \vee \PrFreshIDBNotIn{\PrP}(\Vx)$ with the substitution that maps $\Vx$ to $\Vu_2$ was already satisfied, contradicting the introduction of $\PrFreshIDBNotIn{\PrP}(\Vu_2)$. 
\end{itemize}

\item For every $(\VT,\FactSet_\D)$, there exists $\mathcal{K}\in\mathfrak{K}$ that corresponds to $(\VT,\FactSet_\D)$. The existence of $\mathcal{K}$ is witnessed by the path of the chase tree such that for every vertex labelled with $\mathcal{E}$ where a disjunctive rule \eqref{rul_EqOrNeq}, \eqref{rul_LTchoice}, or \eqref{rul_DbCompletion} is applied, the path chooses the child labelled with $\mathcal{C}_i=\mathcal{E}\cup\sigma_i(\Head_i)$ where $\sigma_i(\Head_i)$ is included in $\FactSet_\D$ or in the \D-order corresponding to $\VT$. 
\end{itemize}
 
 Hence, we can define a one-to-one correspondence between interpretations in $\mathfrak{K}$ and interpretations in $\IS_1$ using the corresponding pairs of the form $(\VT,\FactSet_\D)$. 
By construction, every $\mathcal{K}\in\mathfrak{K}$ and $\I\in\IS_1$ that are in correspondence are isomorphic (note that since we choose to name the two fresh nulls $\FirstNull$ and $\LastNull$, they are actually identical). 
\end{proof}

\begin{lemma}\label{app:int2-datalog-first-chase}
Given some $\I\in \IS_1$, $\FunInterpretation{2}(\I)$ is isomorphic to the unique interpretation in a result of the chase over $\R_2$ and $\I$.
\end{lemma}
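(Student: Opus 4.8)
The plan is to exhibit one concrete chase of $\R_2$ over $\I$ whose single result interpretation is isomorphic to $\FunInterpretation{2}(\I)$, using the strategy that always prioritises rule~\eqref{rul_RepEq}. First I would observe that, since $\I\in\IS_1$ is a model of $\R_1$ and none of the fresh predicates $\PrRoot$, $\PrLink$, $\PrHasChild$, $\PrLeaf$, $\PrFreshIDBInTape{\PrP}$, $\PrFreshIDBNotInTape{\PrP}$ appearing in the heads of $\R_2\setminus\R_1$ occurs in any rule body of $\R_1$, no rule of $\R_1$ is ever applicable along this chase; hence only the rules of $\R_2\setminus\R_1$ fire. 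As these are non-disjunctive, the chase tree is a single path and its result consists of the one interpretation obtained as the union of all node labels. Let $\FunOrder(\I)=T_1,\ldots,T_k$ be the ordered partition associated with $\I$ via Lemma~\ref{app-lem:various-claims-structure}, so that the $T_i$ are exactly the $\PrEq$-classes, $\PrFirst$ holds on $T_1$, $\PrLast$ on $T_k$, and $\PrOrder(x,z)$ holds precisely for $x\in T_i$, $z\in T_j$ with $i<j$.

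The core of the argument is an induction on word length that establishes a bijection between the nulls created by the chase and the nodes $\{u_w\mid w\in\mathbf{Z}\}$ of $\FF{Tree}(\I)$. For the base case, rule~\eqref{rul_RootCreation} applies to some $x\in T_1$ and introduces one fresh null with $\PrRoot$ and a $\PrLink$ from $x$; prioritising~\eqref{rul_RepEq} then propagates this link along the whole $\PrEq$-class $T_1$, after which the head of~\eqref{rul_RootCreation} is satisfied for every element of $T_1$ and no further root is created, so the fresh null plays the role of $u_1$. For the inductive step, consider a node $u_w$ already linked to all of $T_{\FF{end}(w)}$. For each class $T_j$ with $j>\FF{end}(w)$, rule~\eqref{rul_NextTreeRep} fires once---because $\PrOrder(x,z)$ holds for $x\in T_{\FF{end}(w)}$ and $z\in T_j$---creating one fresh child with a $\PrHasChild$ edge from $u_w$ and a $\PrLink$ from some $z\in T_j$; prioritising~\eqref{rul_RepEq} again propagates this link to all of $T_j$, so that the applicability check of the restricted chase blocks every further application of~\eqref{rul_NextTreeRep} for the pair $(u_w,T_j)$. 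This child is identified with $u_{wj}$, matching the relation $\{\PrHasChild(u_w,u_{wz})\mid wz\in\mathbf{Z}\}$ of $\FF{Tree}(\I)$.

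Having matched the tree skeleton, I would finish by checking the remaining facts. Rule~\eqref{rul_LeafCreation} adds $\PrLeaf(u_w)$ exactly for those $w$ with $\FF{end}(w)=k$ (the nodes linked to $T_k$), and rules~\eqref{rul_makeTreeIn}/\eqref{rul_makeTreeNotIn} add precisely the facts $\PrFreshIDBInTape{\PrP}(\Vu)$ and $\PrFreshIDBNotInTape{\PrP}(\Vu)$ required by the definition of $\FunInterpretation{2}(\I)$ as the least extension of $\I\cup\FF{Tree}(\I)$ satisfying these two rules. Since $\mathbf{Z}$ is finite, only finitely many nulls and facts are produced, so the chase terminates; fairness (Condition~3 of Definition~\ref{definition_chase}) then holds trivially, and the resulting interpretation is a model of $\R_2$ to which no rule is applicable. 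The bijection $u_w\mapsto(\text{created null})$, extended by the identity on $\I$, is the desired isomorphism onto $\FunInterpretation{2}(\I)$.

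The main obstacle is the inductive step: one must argue rigorously that the restricted-chase applicability test, together with the~\eqref{rul_RepEq} priority, blocks all redundant firings of the existential rules~\eqref{rul_RootCreation} and~\eqref{rul_NextTreeRep}. The delicate point is that $\PrFirst$, $\PrLast$, $\PrEq$, and $\PrOrder$ hold for entire equivalence classes rather than single representatives, so without eagerly propagating each new $\PrLink$ across its class the chase would spawn one redundant node per class element. Showing that the priority discipline collapses these to a single null per intended tree node is exactly what makes the construction yield $\FF{Tree}(\I)$ rather than some larger, merely homomorphically equivalent structure.
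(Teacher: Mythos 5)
Your proposal is correct and follows essentially the same route as the paper's proof: both exhibit a concrete chase that prioritises rule~\eqref{rul_RepEq}, create exactly one fresh null per tree node $u_w$ (the eager $\PrEq$-propagation of $\PrLink$ being what blocks redundant firings of \eqref{rul_RootCreation} and \eqref{rul_NextTreeRep}), and then close off with \eqref{rul_LeafCreation}, \eqref{rul_makeTreeIn}, and \eqref{rul_makeTreeNotIn}. The only difference is presentational: the paper names the fresh nulls directly so the chase result coincides with $\FunInterpretation{2}(\I)$, whereas you build the isomorphism as an explicit bijection via induction on word length, and you spell out the restricted-chase blocking argument that the paper leaves implicit.
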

\begin{proof}
We will let $\FunOrder(\FactSet)= T_1, \ldots, T_k$. 
The interpretation $\FunInterpretation{2}(\I)$ is equal to:
\begin{align*}
& \I \cup \{\PrRoot(u_1)\} \cup
\{\PrHasChild(u_w,u_{wz})\mid wz\in\mathbf{Z}, 2\leq z\leq k\} \cup{}\\
&\{\PrLeaf(u_w)\mid w\in\mathbf{Z}; \FormatFunction{end}(w)=k\} \cup
\{\PrLink(x,u_w)\mid w\in\mathbf{Z}; x\in T_{\FormatFunction{end}(w)}\}\cup{}\\
&
\{\PrFreshIDBInTape{\PrP}(u_{w_1},\dots, u_{w_{\Arity(\PrP)}})\mid \PrFreshIDBIn{\PrP}(t_1,\dots, t_{\Arity(\PrP)})\in\I; w_i\in\mathbf{Z};t_i\in T_{\FormatFunction{end}(w_i)}; 1\leq i\leq \Arity(\PrP)\}\cup{}
\\&
 \{\PrFreshIDBNotInTape{\PrP}(u_{w_1},\dots,u_{w_{\Arity(\PrP)}})\mid \PrFreshIDBNotIn{\PrP}(t_1,\dots, t_{\Arity(\PrP)})\in\I; w_i\in\mathbf{Z};t_i\in T_{\FormatFunction{end}(w_i)}; 1\leq i\leq \Arity(\PrP)\}.
\end{align*}

We describe a chase over $\R_2$ and $\I$ that prioritises rule \eqref{rul_RepEq} step by step (note that if we do not prioritise \eqref{rul_RepEq}, we could get a result which is not isomorphic to $\FunInterpretation{2}(\I)$). Instead of constructing an isomorphism by giving for each fresh null introduced during the chase its counterpart in $\FunInterpretation{2}(\I)$, we directly set the the names of the fresh nulls so that the chase result coincides with $\FunInterpretation{2}(\I)$. Note that no rule in Figure \ref{figure:rule-set-1} is applicable to \I and that applying a rule in Figure \ref{figure:rule-set-2} cannot make a rule in Figure \ref{figure:rule-set-1} applicable.

\begin{itemize}
\item Apply $\Tuple{\Rule, \sigma}$ with $\Rule = \eqref{rul_RootCreation}$ and $\sigma$ the substitution that maps $x$ to some $t \in T_1$ (recall that $\{\PrFirst(t) \mid t \in T_1\}\subseteq\I$). This introduces a fresh null that we call $u_1$, and adds $\PrRoot(u_1)$ and $\PrLink(t,u_1) $.

\item Apply exhaustively \eqref{rul_RepEq}: add $\PrLink(t, u_1)$ for all $t \in T_1$.

\item For $2\leq i \leq k$: 
\begin{itemize}
\item Choose one $t_1\in T_1$ and one $t_i\in T_i$ and 
apply $\Tuple{\Rule, \sigma}$ with $\Rule = \eqref{rul_NextTreeRep}$ and $\sigma$ the substitution that maps $x$ to $t_1$, $v$ to $u_1$, and $z$ to $t_i$ (recall that the atoms on predicate $ \PrOrder$ in $\I$ are $\{\PrOrder(t,u) \mid 1 \leq j < j' \leq k$, $t \in T_j$, $u \in T_j'\}$). This introduces a fresh null that we call $u_{1\,i}$ and adds $\PrHasChild(u_1,u_{1\,i})$, $\PrLink(t_i,u_{1\,i}) $ (note that this is the first fact of the form $\PrLink(t_i,x)$ with $t_i\in T_i$ that we introduce).
\item Apply exhaustively \eqref{rul_RepEq}: add $\PrLink(t, u_{1\,i})$ for all $t\in T_ i$.
\end{itemize}

\item Continue applying \eqref{rul_NextTreeRep} and \eqref{rul_RepEq} exhaustively as done in the previous step, introducing all nulls of the form $u_{w}$ with $w \in\mathbf{Z}$, and adding facts $\{\PrHasChild(u_w,u_{wz})\mid wz\in\mathbf{Z}, 2\leq z\leq k\}$ and $\{\PrLink(x,u_w)\mid w\in\mathbf{Z}; x\in T_{\FormatFunction{end}(w)}\}$.

\item Apply exhaustively \eqref{rul_LeafCreation} and add $\{\PrLeaf(u_w)\mid w\in\mathbf{Z}; \FormatFunction{end}(w)=k\}$  (recall that $\{\PrLast(t) \mid t \in T_k\}\subseteq\I$).

\item Apply exhaustively \eqref{rul_makeTreeIn}: for all $\PrFreshIDBIn{\PrP}(t_1,\dots, t_{\Arity(\PrP)})\in\I$,  add all $\PrFreshIDBInTape{\PrP}(u_{w_1},\dots, u_{w_{\Arity(\PrP)}})$ where $t_i\in T_{\FormatFunction{end}(w_i)}$.

\item Apply exhaustively \eqref{rul_makeTreeNotIn}: for all $\PrFreshIDBNotIn{\PrP}(t_1,\dots, t_{\Arity(\PrP)})\in\I$,  add all $\PrFreshIDBNotInTape{\PrP}(u_{w_1},\dots, u_{w_{\Arity(\PrP)}})$ where $t_i\in T_{\FormatFunction{end}(w_i)}$.
\qedhere
\end{itemize}
\end{proof}

\noindent\textbf{Lemma \ref{lem:univ-model-set-2}.} 
\textit{$\IS_2 = \{\FunInterpretation{2}(\I) \mid \I \in \IS_1\}$ is a universal model set of $\R_2$ and $\D$.}

\begin{proof}
By Lemmas~\ref{lem:univ-model-set-1} and \ref{app:int2-datalog-first-chase}, $\IS_2$ only contains models of $\R_2$ and $\D$. We show that for every model $\M$ of $\R_2$ and $\D$,  there exists $\I_2\in\IS_2$ and a homomorphism 
$\Hom_2 : \I_2 \to \M$. 
\begin{enumerate}
\item Let $\M$ be a model of $\R_2$ and $\D$. \label{lab-Mmodel}
\item By Lemma~\ref{lem:univ-model-set-1}: since $\M$ is a model of $\R_2\supseteq \R_1$ and $\D$, there exist $\I_1\in\IS_1$ and a homomorphism $\Hom_1 : \I_1 \to \M$. \label{lab-hom1}
\item By Lemma~\ref{app:int2-datalog-first-chase}: since $\I_1\in\IS_1$,  there exists a chase over $\R_2$ and $\I_1$ such that the unique interpretation \J in its result is such that there exists an isomorphism $\HomA_2 :\FunInterpretation{2}(\I_1)\to\J$. \label{lab-isog2}
\item By definition of $\J$, and since no rule from $\R_1$ is applied in the chase of over $\R_2$ and $\I_1$ (see proof of Lemma~\ref{app:int2-datalog-first-chase}) and $\R_2\setminus\R_1$ contains no disjunctive rules, it follows that $\J$ is a universal model of $\R_2 \setminus \R_1$ and $\I_1$.\label{lab-univJ}
\item By \eqref{lab-Mmodel}, \eqref{lab-hom1} and \eqref{lab-univJ}, there exists a homomorphism $\HomA_1 :  \J\to\M$ that extends $\Hom_1$. \label{lab-homg2}
\item Let $\I_2=\FunInterpretation{2}(\I_1)\in\IS_2$.  By \eqref{lab-isog2} and \eqref{lab-homg2}, there is an homomorphism 
$\Hom_2 = \HomA_2\circ \HomA_1 : \I_2 \to \M$. \qedhere
\end{enumerate}
\end{proof}

In the next proofs, we will use the following notation. 
Given a chase tree node labelled $\mathcal{C}_i=\mathcal{E}\cup\sigma_i(\Head_i)$ and obtained by applying some $\Tuple{\Rule, \sigma}$ to its parent node labelled by $\mathcal{E}$, 
then for each variable $y$ of $\Rule$ not in the domain of $\sigma$, we denote by $u_{\sigma, \Rule, y}$ the fresh null such that $\sigma_i(y)=u_{\sigma, \Rule, y}$ where $\sigma_i$ is as introduced in Definition~\ref{definition_chase}. 

\begin{lemma}\label{app:int3-datalog-first-chase}
Given some $\I\in \IS_2$, $\FunInterpretation{3}(\I)$ is well-defined and $\FunInterpretation{3}(\I)$ is isomorphic to the unique interpretation in a result of the chase over $\R_3$ and $\I$.
\end{lemma}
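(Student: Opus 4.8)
The plan is to follow the pattern of Lemma~\ref{app:int2-datalog-first-chase}: first argue that $\FunInterpretation{3}(\I)$ is well-defined, and then exhibit one concrete chase over $\R_3$ and $\I$ whose freshly introduced nulls I rename so that its single result interpretation becomes identical to $\FunInterpretation{3}(\I)$. Recall that $\I\in\IS_2$ carries the representative tree of $\FunInterpretation{2}$ with nodes $u_w$ ($w\in\mathbf{Z}$) and edges $\PrHasChild(u_w,u_{wz})$, and that $\J=\I\cup\bigcup_{w\in\mathbf{Z}}\FormatFunction{EncPos}(w)$ already pins down, for every node $u_w$, a little-endian tape over $\PrCell{0}/\PrCell{1}/\PrNext$ whose value is $|w|+1$ and whose endpoints are recorded by $\PrEncoding$.

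For well-definedness I would use the rewriting indicated in the main text. In each rule of $\R_3$ every existential variable also occurs in an atom over a predicate other than $\PrCopy$/$\PrCopyPlusOne$ (a $\PrNext$-, cell-, or $\PrEncoding$-atom), and in $\J$ these atoms pin the variable to the unique corresponding pre-existing tape cell; replacing the existentials accordingly turns $\R_3$ into Datalog. The least fixpoint of these Datalog rules over $\J$ exists, is unique, uses only nulls of $\J$, and satisfies the original $\R_3$; establishing that it extends $\J$ \emph{only} by $\PrCopy$/$\PrCopyPlusOne$ atoms (so that the pinning really is against $\J$'s tape chains) makes it the desired smallest model, i.e.\ $\FunInterpretation{3}(\I)$. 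The crucial input here is that the non-$\PrCopy$ head atoms the rules want to derive are already present in $\J$, which holds precisely because the pre-placed encodings $\FormatFunction{EncPos}$ store exactly the values the rules compute.

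For the chase I would first observe that $\I\models\R_2$ and that no head of a rule in $\R_3\setminus\R_2$ mentions a predicate occurring in a body of $\R_1\cup\R_2$ (the new heads use only $\PrEncoding$, $\PrNext$, cell predicates, $\PrCopy$, $\PrCopyPlusOne$); hence no disjunctive rule is ever re-enabled, the chase tree is a single path, and its result is one interpretation. I then fix a fair chase that applies \eqref{rul_encRoot} once to build the root's two-cell tape ``$01$'', names its nulls to match $\FormatFunction{EncPos}(1)$ (the root has $|w|=1$, value $2$), and thereafter processes nodes in order of increasing $|w|$, applying \eqref{rul_encChild} to each edge $\PrHasChild(u_w,u_{wz})$ and then exhausting the increment rules \eqref{rul_CpyPlusOneCell0}--\eqref{rul_copyRec}, naming each freshly created destination cell as the appropriate $e_{wz}^{i}$. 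An induction on $|w|$ then shows that the tape built for $u_w$ equals $\FormatFunction{EncPos}(w)$: the base case is the root, and the step reduces to the claim that $\PrCopyPlusOne$ applied to the parent tape (value $|w|+1$) produces the tape of value $(|w|+1)+1=|wz|+1$.

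The main obstacle is exactly this arithmetic sub-claim, which underlies both halves of the proof: I must verify that \eqref{rul_CpyPlusOneCell0}--\eqref{rul_copyRec} implement the little-endian binary successor on encodings with no leading zero -- tracking the carry through the $\PrCopyPlusOne$ phase (each $\PrCell{1}$ flips to $\PrCell{0}$ and recurses, the first $\PrCell{0}$ becomes $\PrCell{1}$ and hands off to the verbatim $\PrCopy$ phase via \eqref{rul_CpyPlusOneCell0Next}), the termination at the two-cell base cases \eqref{rul_CpyPlusOneCell0}--\eqref{rul_CpyPlusOneCell1}, and the length-increasing overflow case. Once this is checked, the chase creates exactly the cells $e_{w}^{i}$ of $\bigcup_w\FormatFunction{EncPos}(w)$ and precisely the $\PrCopy$/$\PrCopyPlusOne$ atoms of $\FunInterpretation{3}(\I)$, introducing no superfluous null, so after the chosen renaming the result is identical to $\FunInterpretation{3}(\I)$ and the claimed isomorphism is immediate.
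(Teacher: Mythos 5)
Your proposal is correct and follows essentially the same route as the paper's proof: the paper also establishes well-definedness by writing out $\FunInterpretation{3}(\I)$ explicitly (as $\I$ plus the tapes $\FormatFunction{EncPos}(w)$ plus the $\PrCopy$/$\PrCopyPlusOne$ atoms), and then describes a Datalog-first chase that applies \eqref{rul_encRoot} once at the root and proceeds by induction on $|w|$ along the tree edges, naming each fresh null as the matching $e_{wz}^i$ so the result coincides with $\FunInterpretation{3}(\I)$ on the nose. The arithmetic sub-claim you flag as the main obstacle is precisely what the paper's induction step carries out in detail, with the same case split you describe on the least significant bit of the parent's value: carry propagation via \eqref{rul_CpyPlusOneCell1Next}, handoff to the verbatim-copy phase via \eqref{rul_CpyPlusOneCell0Next} followed by \eqref{rul_copyRec} and \eqref{rul_copyBase}, and the two-cell base/overflow cases \eqref{rul_CpyPlusOneCell0} and \eqref{rul_CpyPlusOneCell1}.
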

\begin{proof}
We will let $\FunOrder(\I)= T_1, \ldots, T_k$. 
The interpretation $\FunInterpretation{3}(\I)$ is equal to the union of the following databases:

\begin{enumerate}
\item \I.
\item For each $w\in\mathbf{Z}$ of length $|w|$,
with $b_1\cdots b_\ell$ the binary representation of $|w|+1$, the database $\FormatFunction{EncPos}(w)$: 
$$\{\PrEncoding(u_w, e_w^1, e_w^\ell)\} \cup \{\PrCell{b_i}(e_w^{i}) \mid 1 \leq i \leq \ell \} \cup{} 
\{\PrNext(e_w^{i-1}, e_w^i) \mid 2 \leq i \leq \ell\}.$$
\item 
For each $w\in\mathbf{Z}$ of length $|w|<k$, and $z\in\{1,\dots, k\}$ such that $wz\in\mathbf{Z}$, 
with $b_1\cdots b_\ell$ the binary representation of $|w|+2$ and $b'_1\cdots b'_{\ell'}$ the binary representation of $|w|+1$, the database $\FormatFunction{CpyPlusOne}(w, wz)$ defined as follows: 
 \begin{align*}
& \{\PrCopyPlusOne(e^i_{w}, e^{\ell'}_{w}, e^j_{wz}, e^\ell_{wz}) \mid \\&\qquad\qquad\text{ the number represented by }b_j \cdots b_{\ell}\text{ is equal to the number represented by }b'_i\cdots b'_{\ell'}\text{ plus one} \}\\
 &\cup\{\PrCopy(e^i_{w}, e^{\ell'}_{w}, e^i_{wz}, e^\ell_{wz}) \mid b_i\cdots b_{\ell}=b'_i\cdots b'_{\ell'}\}.
 \end{align*}
\end{enumerate}

We describe a (Datalog-first) chase over $\R_3$ and $\I$ step by step. We directly set the the names of the fresh nulls so that the chase result coincides with $\FunInterpretation{3}(\I)$. 
Note that no rule in Figures \ref{figure:rule-set-1} or \ref{figure:rule-set-2} is applicable  to \I and that applying a rule in Figure \ref{figure:rule-set-3} cannot make a rule in Figures \ref{figure:rule-set-1} or \ref{figure:rule-set-2} applicable.
\begin{itemize}
\item Apply $\Tuple{\Rule, \sigma}$ with $\Rule =\eqref{rul_encRoot}$ 
and $\sigma(u)=u_1$. This introduces two fresh nulls, $u_{\sigma, \Rule, y_1}=e_1^1$ and $u_{\sigma, \Rule, y_2}=e_1^2$ and adds $\PrEncoding(u_1, e_1^1,e_1^2)$, $\PrCell{0}(e_1^1) $, $\PrNext(e_1^1,e_1^2)$ and $\PrCell{1}(e_1^2) $.
\end{itemize}
We show by induction on $p$ that for every $w\in\mathbf{Z}$ of length $|w|=p$ and $z\in\{1,\dots, k\}$ such that $wz\in\mathbf{Z}$, our chase sequence adds $\FormatFunction{EncPos}(wz)$, and $\FormatFunction{CpyPlusOne}(w, wz)$.
\smallskip

\noindent\emph{Case $p=1$.} In this case, $w=1$ and $z\in\{2,\dots, k\}$.
\begin{itemize}
\item For $2\leq i\leq k$: 
\begin{itemize}
\item Apply $\Tuple{\Rule, \sigma}$ with $\Rule =\eqref{rul_encChild}$ 
and $ \sigma(u)=u_1$, $ \sigma(y_1)=e_1^1$, $ \sigma(y_\_)=e_1^2$, $ \sigma(v)=u_{1\,i}$. This introduces two fresh nulls, $u_{ \sigma, \Rule, z_1}=e_{1\,i}^1$ and $u_{ \sigma, \Rule,  z_\_}=e_{1\,i}^2$ and adds $\PrEncoding(u_{1\,i}, e_{1\,i}^1, e_{1\,i}^2)$ and $\PrCopyPlusOne(e_1^1, e_1^2, e_{1\,i}^1, e_{1\,i}^2)$.

\item Apply \eqref{rul_CpyPlusOneCell0}: add $\PrCell{1}(e_{1\,i}^1)$, $\PrNext(e_{1\,i}^1, e_{1\,i}^2)$, $\PrCell{1}(e_{1\,i}^2)$. 
\end{itemize}
\end{itemize}

\noindent\emph{Induction step:} Assume that our chase sequence already added all required facts for every $w'\in\mathbf{Z}$ of length $|w'|<p$, and $z'\in\{1,\dots, k\}$ such that $w'z'\in\mathbf{Z}$, and consider $w\in\mathbf{Z}$ of length $|w|=p$, and $z\in\{1,\dots, k\}$ such that $wz\in \mathbf{Z}$. Let $b_1\cdots b_\ell$ the binary representation of $p+2$ and $b'_1\cdots b'_{\ell'}$ the binary representation of $p+1$.

\begin{itemize}
\item By induction hypothesis, $\PrEncoding(u_{w}, e_{w}^1, e_{w}^{\ell'})$ was added by the chase sequence. Apply $\Tuple{\Rule, \sigma}$ with $\Rule =\eqref{rul_encChild}$ 
and $ \sigma(u)=u_{w}$, $ \sigma(y_1)=e_{w}^1$, $ \sigma(y_\_)=e_{w}^{\ell'}$, $ \sigma(v)=u_{wz}$. This introduces two fresh nulls, $u_{ \sigma, \Rule, z_1}=e_{wz}^1$ and $u_{ \sigma, \Rule,  z_\_}=e_{wz}^{\ell}$ and adds $\PrEncoding(u_{wz}, e_{wz}^1, e_{wz}^\ell)$ and $\PrCopyPlusOne(e_{w}^1, e_{w}^{\ell'},e_{wz}^1, e_{wz}^{\ell})$.
\item By induction hypothesis, the chase sequence added $\PrCell{b'_i}(e_{w}^{i})$, $1 \leq i \leq \ell'$, and $\PrNext(e_{w}^{i-1}, e_{w}^i)$, $2 \leq i \leq \ell'$. We distinguish two cases, depending on the value of $b'_1$.

(1) In the case where $b'_1=0$, we have $\ell=\ell'$ and $b_1 \cdots b_\ell=1 b'_2\cdots b'_{\ell}$. Moreover, note that $\ell>2$ (otherwise $b'_1 b'_2=0 1$ and $p+1= 2$). 
\begin{itemize}
\item Apply $\Tuple{\Rule, \sigma}$ with $\Rule =\eqref{rul_CpyPlusOneCell0Next}$ 
and $\sigma(y_1)=e_{w}^1$, $\sigma(y_\_)=e_{w}^{\ell}$, $\sigma(y_2)=e_{w}^2$,  $\sigma(y_3)=e_{w}^3$, $\sigma(z_1)=e_{wz}^1$, $\sigma(z_\_)= e_{wz}^\ell$. This introduces a fresh null, $u_{\sigma, \Rule, z_2}=e_{wz}^2$, and adds the atoms $\PrCopy(e_{w}^2, e_{w}^{\ell}, e_{wz}^2, e_{wz}^\ell)$, $\PrCell{1}(e_{wz}^1)$, and $\PrNext(e_{wz}^1, e_{wz}^2)$. 

\item While $\PrCopy(e_{w}^{\ell-1}, e_{w}^{\ell}, e_{wz}^{\ell-1}, e_{wz}^\ell)$ has not been introduced, apply \eqref{rul_copyRec}, introducing fresh nulls $e_{wz}^i$ for $2<i<\ell$ and adding atoms $\PrCopy(e_{w}^i, e_{w}^{\ell}, e_{wz}^i, e_{wz}^\ell)$, $\PrCell{b_{i-1}}(e_{wz}^{i-1})$, and $\PrNext(e_{wz}^{i-1}, e_{wz}^i)$.  

\item Apply  \eqref{rul_copyBase} and add $\PrCell{b_{\ell-1}}(e_{wz}^{\ell-1})$, $\PrNext(e_{wz}^{\ell-1}, e_{wz}^\ell)$, $\PrCell{1}(e_{wz}^\ell)$.
\end{itemize}

(2) In the case where $b'_1=1$, we have $b_1=0$ and the number represented by $b_2\cdots b_\ell$ is equal to the number represented by $b'_2 \cdots b'_{\ell'}$ plus one. 
\begin{itemize}
\item While no atom of the form $\PrCopy(e_{w}^{j}, e_{w}^{\ell'}, e_{wz}^{j}, e_{wz}^\ell)$ or $\PrCopyPlusOne(e_{w}^{\ell'-1}, e_{w}^{\ell'}, e_{wz}^{\ell'-1}, e_{wz}^\ell)$ have been introduced, and for $i$ starting from $2$ and growing by one at each cycle, 
apply the applicable  $\Tuple{\Rule, \sigma}$ among:
\begin{itemize}
\item $\Rule=\eqref{rul_CpyPlusOneCell0Next}$ and 
$\sigma(y_1)=e_{w}^{i-1}$, $\sigma(y_\_)=e_{w}^{\ell'}$, $\sigma(y_2)=e_{w}^i$,  $\sigma(y_3)=e_{w}^{i+1}$, $\sigma(z_1)=e_{wz}^{i-1}$, $\sigma(z_\_)= e_{wz}^\ell$. 
This introduces a fresh null, $u_{\sigma, \Rule, z_2}=e_{wz}^i$, and adds the atoms $\PrCopy(e_{w}^i, e_{w}^{\ell'}, e_{wz}^i, e_{wz}^\ell)$, $\PrCell{1}(e_{wz}^{i-1})=\PrCell{b_{i-1}}(e_{wz}^{i-1})$, and $\PrNext(e_{wz}^{i-1}, e_{wz}^i)$. 

\item $\Rule =\eqref{rul_CpyPlusOneCell1Next}$  and 
$\sigma$ as above. This introduces a fresh null, $u_{\sigma, \Rule, z_2}=e_{wz}^i$, and adds the atoms $\PrCopyPlusOne(e_{w}^i, e_{w}^{\ell'}, e_{wz}^i, e_{wz}^\ell)$, $\PrCell{0}(e_{wz}^{i-1})=\PrCell{b_{i-1}}(e_{wz}^{i-1})$, and $\PrNext(e_{wz}^{i-1}, e_{wz}^i)$. 
\end{itemize}

\item If an atom of the form $\PrCopy(e_{w}^{j}, e_{w}^{\ell'}, e_{wz}^{j}, e_{wz}^\ell)$ is introduced at some point (i.e. the applicable $\Tuple{\Rule, \sigma}$ in the previous step is such that $\Rule=\eqref{rul_CpyPlusOneCell0Next}$), we use a sequence of chase steps similar to the case $b'_1=0$, and obtain the required atoms.

\item Otherwise, $\PrCopyPlusOne(e_{w}^{\ell'-1}, e_{w}^{\ell'}, e_{wz}^{\ell'-1}, e_{wz}^\ell)$ is introduced at the end of the loop. Apply the applicable  $\Tuple{\Rule, \sigma}$ among: 
\begin{itemize}
\item $\Rule=\eqref{rul_CpyPlusOneCell0}$ 
and $\sigma(y_1)= e_{w}^{\ell'-1}$, $\sigma(y_2)= e_{w}^{\ell'}$, $\sigma(z_1)=e_{wz}^{\ell'-1}$, $\sigma(z_\_)=e_{wz}^\ell$ with $\ell'=\ell$. 
This adds $\PrCell{1}(e_{wz}^{\ell-1})$, $\PrNext(e_{wz}^{\ell-1}, e_{wz}^\ell)$, and $\PrCell{1}(e_{wz}^\ell)$.

\item $\Rule=\eqref{rul_CpyPlusOneCell1}$ and $\sigma$ is as above but $\ell=\ell'+1$ so that $\sigma(z_1)=e_{wz}^{\ell'-1}=e_{wz}^{\ell-2}$. 
This introduces a new null $u_{\sigma, \Rule, z_2}=e_{wz}^{\ell-1}$ and 
adds $\PrCell{0}(e_{wz}^{\ell-2})$, $\PrNext(e_{wz}^{\ell-2}, e_{wz}^{\ell-1})$, $\PrCell{0}(e_{wz}^{\ell-1})$, $\PrNext(e_{wz}^{\ell-1}, e_{wz}^\ell)$ and $\PrCell{1}(e_{wz}^\ell)$.\qedhere
\end{itemize}
\end{itemize}
\end{itemize}
\end{proof}

\noindent\textbf{Lemma \ref{lem:univ-model-set-3}.} 
\textit{$\IS_3 = \{\FunInterpretation{3}(\I)\mid\I\in\IS_2\}$ is a universal model set of $\R_3$ and $\D$.}
\begin{proof}
Similar to the proof of Lemma \ref{lem:univ-model-set-2}, using Lemmas~\ref{lem:univ-model-set-2} and \ref{app:int3-datalog-first-chase}. 
\end{proof}

\begin{lemma}\label{app:int4-datalog-first-chase}
Given some $\I\in \IS_3$, $\FunInterpretation{4}(\I)$ is well-defined and $\FunInterpretation{4}(\I)$ is isomorphic to the unique interpretation in a result of the chase over $\R_4$ and $\I$.
\end{lemma}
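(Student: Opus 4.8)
The plan is to follow the template of the proofs of Lemmas~\ref{app:int2-datalog-first-chase} and~\ref{app:int3-datalog-first-chase}. First I would write $\FunInterpretation{4}(\I)$ out explicitly as a union of databases: the interpretation $\I$ (which already carries, for each leaf $u_w$ with $\PrLeaf(u_w)\in\I$, the tree built by $\R_2$ and the position encodings $\PrEncoding(u_{w'},\cdot,\cdot)$ of $\R_3$), together with, for every such leaf, the cells of the starting configuration $\FF{startConf}(\I,u_w)$ and the auxiliary facts over $\PrLoad{\ell}$, $\PrLoad{\PrP}$, $\PrReady{\ell}$, $\PrReady{\PrP}$, $\PrLoadEncoding$, and the additional $\PrCopy$ facts that drive and witness the serialisation. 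Well-definedness of $\FunInterpretation{4}(\I)$ is argued exactly as for $\FunInterpretation{3}(\I)$: writing $\mathcal{J}$ for $\I$ extended by $\FF{startConf}(\I,u_w)$ over all leaves, every relation of $\R_4$ missing from $\mathcal{J}$ can be recovered either by the rules of $\R_4$ themselves or, for the existential rules \eqref{rul_loadAllFacts} and \eqref{rul_encodeFact}, by the simple rewritings mentioned in the main text, none of which introduces fresh nulls; hence the smallest superset of $\mathcal{J}$ satisfying $\R_4$ over the nulls of $\mathcal{J}$ is uniquely determined.

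Next I would exhibit one concrete chase over $\R_4$ and $\I$ and name its fresh nulls so that its result equals $\FunInterpretation{4}(\I)$ on the nose (turning ``isomorphic'' into ``equal'', as in the earlier lemmas). Since $\I\in\IS_3$ already satisfies every disjunctive rule (all of which lie in $\R_1$), and no rule of $\R_3$ is applicable to $\I$ nor re-enabled by any rule of $\R_4\setminus\R_3$, the chase tree is a single path and its result is a single interpretation. Processing the leaves independently, for a fixed leaf $u_w$ rule \eqref{rul_startTape} creates the first cell $t_w^1$ with $\PrHead{\StartingState}(t_w^1)$ and the seed $\PrLoad{1}(u_w,t_w^1,u_w)$. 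The heart of the argument is an induction along the nested iteration driven by \eqref{rul_firstPredEll}--\eqref{rul_endTape}: an outer loop over arities $\ell=1,\dots,\MaxArity$ (advanced by \eqref{rul_tapeNextLevel} and terminated by \eqref{rul_endTape}), a middle loop enumerating the length-$\ell$ parameter vectors $\vec v$ of branch nodes via the ``odometer'' rule \eqref{rul_tapeNextVector}, and an inner loop over the predicates $\PrP^\ell_1,\dots,\PrP^\ell_{\PredsPerArity}$ of arity $\ell$ via \eqref{rul_firstPredEll}--\eqref{rul_lastPredEll}. I would show that absent facts are skipped by \eqref{rul_copyNotInPred} with no cell written, that present facts are serialised as described below, and that the write cursor $t$ advances along a fresh $\PrNext$-chain whose labels, read left to right, coincide symbol by symbol with $\FF{branchTape}(\I,u_w)$, the chain being finally closed by \eqref{rul_endTape} with a trailing $\PrCell{\Blank}$ and $\PrEndTape$, so that the path terminates and reproduces $\FF{startConf}(\I,u_w)$.

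The bulk of the bookkeeping -- and the main obstacle -- is the serialisation of a single present fact and its agreement with the canonical order. When $\PrFreshIDBInTape{\PrP}(\vec v)$ holds, rule \eqref{rul_loadAllFacts} writes the predicate cell $\PrCell{\PrP}(t)$ and lays out, for each argument $v_i$, a segment delimited by $\PrLoadEncoding(v_i,x_i,x_{i+1})$; rule \eqref{rul_encodeFact} then fills each segment with separators $\PrCell{\Separator}$ and a copy -- produced by the inherited $\PrCopy$ machinery of $\R_3$ -- of the binary position encoding attached to $v_i$ by $\PrEncoding$. I would have to verify that this yields precisely the serialisation of $\PrP(\vec v)$ from Section~\ref{sec_prelims}: each node $v_i=u_{w'}$ written as the binary representation of $|w'|+1$, the $\PrCopy$-chains terminating with the intended lengths (reusing the computation of Lemma~\ref{app:int3-datalog-first-chase}), and the segment boundaries $x_i$ lining up so that the concatenated cells form the expected word. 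The most delicate point is to show that the order in which the rules emit facts -- vector by vector within each arity, predicate by predicate within each vector -- matches the order underlying $\FF{branchTape}(\I,u_w)$; this requires care in relating the odometer enumeration \eqref{rul_tapeNextVector}/\eqref{rul_tapeNextLevel} (which climbs from the leaf $u_w$ towards the roots, consistently with $u_{w_1}\prec u_{w_2}$ iff $|w_1|>|w_2|$) to the canonical fact order, aligning $\FF{branchTape}$ with this enumeration where needed. Finally I would check that each of the existential rules \eqref{rul_startTape}, \eqref{rul_loadAllFacts}, and \eqref{rul_encodeFact} fires exactly once per intended match, so that no surplus nulls arise and the unique chase result is exactly $\FunInterpretation{4}(\I)$.
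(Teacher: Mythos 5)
Your plan is correct and follows essentially the same route as the paper's proof: an explicit description of $\FunInterpretation{4}(\I)$ as $\I$ plus $\FF{startConf}(\I,u_w)$ plus the auxiliary $\PrLoad{}$/$\PrReady{}$/$\PrLoadEncoding$/$\PrCopy$ facts, well-definedness via the non-null-introducing rewritings, and then one concrete Datalog-first chase with nulls named so its result coincides with $\FunInterpretation{4}(\I)$, driven by exactly the nested induction (predicates within vectors within arities) you describe. The ``delicate bookkeeping'' you flag is precisely what the paper formalises with the index functions $j^s_S$ and $j^e_S$ (and $J_S$), which record where each fact's serialisation starts and ends in $\FF{branchTape}(\I,u_w)$ and satisfy the recurrences that make the odometer enumeration line up with the canonical order $\prec$.
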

\begin{proof}
Before giving the extension of $\FunInterpretation{4}(\I)$, we introduce some notation. 
Consider some fact of the form $\PrLeaf(u_w)\in\I$ and let $S=\FF{branchTape}(\I,u_w)$. By construction, $S=e_1\cdots e_{|\FF{branchDb}(\mathcal{I},u_w)|}$ where each $e_i$ corresponds to the serialisation of some $\PrP^x_g(u_{w_1}, \dots, u_{w_x})\in \FF{branchDb}(\mathcal{I},u_w)$ and is of the form $\PrP^x_g\Separator b^{1}_1\cdots b^{1}_{\ell_{1}}\Separator\cdots\Separator b^{x}_1\cdots b^{x}_{\ell_{x}}\Separator$ where $b^{i}_1\cdots b^{i}_{\ell_{i}}$ is the binary representation of $|w_i|+1$.
For every $\PrP^x_g\in\mathcal{S}$ and $u_{w_1}, \dots, u_{w_x}\subseteq  \FF{branch}(u_w)$, let 
$j^s_{S}(\PrP^x_g,u_{w_1}, \dots, u_{w_x})$ and $j^e_{S}(\PrP^x_g,u_{w_1}, \dots, u_{w_x})-1$ be the indexes where the serialisation of $\PrP^x_g(u_{w_1}, \dots, u_{w_x})$ starts and ends respectively if  $\PrP^x_g(u_{w_1}, \dots, u_{w_x})\in\FF{branchDb}(\mathcal{I},u_w)$, and otherwise $j^s_{S}(\PrP^x_g,u_{w_1}, \dots, u_{w_x})=j^e_{S}(\PrP^x_g,u_{w_1}, \dots, u_{w_x})$ be the index where starts the serialisation of the first fact of $\FF{branchDb}(\mathcal{I},u_w)$ that follows $\PrP^x_g(u_{w_1}, \dots, u_{w_x})$ according to $\prec$ (and $|S|+1$ if there is no such fact). (Recall that $\prec$ is defined before Lemma \ref{lem:univ-model-set-4}.) 
It is easy to verify that:
\begin{itemize}
\item $j^s_{S}(\PrP^1_1,u_{w})=1$.

\item If $\PrFreshIDBInTape{\PrP^x_g}(u_{w_1}, \dots, u_{w_x})\notin\I$, then  $j^e_{S}(\PrP^x_g,u_{w_1}, \dots, u_{w_x})=j^s_{S}(\PrP^x_g,u_{w_1}, \dots, u_{w_x})$.

\item If $\PrFreshIDBInTape{\PrP^x_g}(u_{w_1}, \dots, u_{w_x})\in\I$, then  $j^e_{S}(\PrP^x_g,u_{w_1}, \dots, u_{w_x})=j^s_{S}(\PrP^x_g,u_{w_1}, \dots, u_{w_x})+\Sigma_{i=1}^x  \ell_{i}+x+2$, where $\ell_i$ is the length of the binary encoding of $|w_i|+1$.

\item If $g<\PredsPerArity$, then $j^e_{S}(\PrP^x_{g},u_{w_1}, \dots, u_{w_x})=j^s_{S}(\PrP^x_{g+1},u_{w_1}, \dots, u_{w_x})$.

\item  If $u_{w_1}, \dots, u_{w_x}\neq u_{1}, \dots, u_{1}$, then $j^e_{S}(\PrP^{x}_{\PredsPerArity},u_{w_1}, \dots, u_{w_x})=j^s_{S}(\PrP^x_1,u_{w_1'}, \dots, u_{w_x'})$, where $u_{w'_1}, \dots, u_{w_x'}$ comes right after $u_{w_1}, \dots, u_{w_x}$ in the lexicographic order according to $\prec$. 

\item  If $x<\MaxArity$, then $j^e_{S}(\PrP^{x}_{\PredsPerArity},u_{1}, \dots, u_{1})=j^s_{S}(\PrP^{x+1}_1,u_{w}, \dots, u_{w})$.
\end{itemize}
Note that $j^e_{S}(\PrP^\MaxArity_\PredsPerArity,u_{1}, \dots, u_{1})=|S|+1$ and that for every $\PrP^x_g(u_{w_1}, \dots, u_{w_x})\in\FF{branchDb}(\mathcal{I},u_w)$, if $b^i_1\cdots b^i_{\ell_i}$ is the binary encoding of $|w_i|+1$ ($1\leq i\leq x$), then :
\begin{itemize}
\item $S[j^s_{S}(\PrP^x_g,u_{w_1}, \dots, u_{w_x})]=\PrP^x_g$, 
\item $S[j^s_{S}(\PrP^x_g,u_{w_1}, \dots, u_{w_x})+1]=\Separator$, 
\item and for $1\leq i\leq x$:
\begin{itemize}
\item $S[j^s_{S}(\PrP^x_g,u_{w_1}, \dots, u_{w_x})+i+\Sigma_{j=1}^{i-1}\ell_j+1]=b^i_1$,
\item$ \dots$, 
\item $S[j^s_{S}(\PrP^x_g,u_{w_1}, \dots, u_{w_x})+i+\Sigma_{j=1}^{i-1}\ell_j+\ell_i]=b^i_{\ell_i}$,
\item $S[j^s_{S}(\PrP^x_g,u_{w_1}, \dots, u_{w_x})+i+\Sigma_{j=1}^{i}\ell_j+1]=\Separator$.
\end{itemize}
\end{itemize}
Finally, let $J_{S}(u_{w'})$ be the set of indexes of $S$ where starts the sequence $\Separator b'_1\cdots b'_{\ell'} \Separator$ where $b'_1\cdots b'_{\ell'}$ is the binary representation of $|w'|+1$. 
The interpretation $\FunInterpretation{4}(\I)$ is equal to the union of the following databases:
\begin{enumerate}
\item  \I;
\item For every $\PrLeaf(u_w)\in\I$ with serialisation $S=\FF{branchTape}(\I,u_w)$, the database $\FF{startConf}(\mathcal{I},u_w)$:
\begin{align*}
&\{\PrLoad{1}(u_w, t_w^1, u_w), \PrHead{\StartingState}(t^1_w), \PrEndTape(t_w^{|S|+1}),\PrCell{\Blank}(t_w^{|S|+1})\} \cup{}\\& \{\PrNext(t_w^{j-1}, t_w^{j}) \mid 2\leq j \leq |S|+1\} \cup
\{\PrCell{a}(t_w^{j}) \mid  1\leq j\leq |S|, a=S[j]\}.
\end{align*}

 \item  For every $\PrLeaf(u_w)\in\I$ with serialisation $S=\FF{branchTape}(\I,u_w)$, the database $\FF{loadS}(\mathcal{I},u_w)$ defined as follows: 
\begin{align*}
\{\PrLoad{x}(u_{w}, t^{j}_{w},  u_{w_1}, \dots, u_{w_x}) \mid& 1\leq x \leq \MaxArity; u_{w_i}\in \FF{branch}(u_w); j=j^s_{S}(\PrP^x_1,u_{w_1}, \dots, u_{w_x})\}
\\\cup
\{\PrReady{x}(u_{w}, t^{j}_{w},  u_{w_1}, \dots, u_{w_x})\mid& 1\leq x \leq \MaxArity;u_{w_i}\in \FF{branch}(u_w); j=j^e_{S}(\PrP^x_\PredsPerArity,u_{w_1}, \dots, u_{w_x})\}
\\\cup
\{ \PrLoad{\PrP^x_g}(u_{w}, t^{j}_{w},  u_{w_1}, \dots, u_{w_x}) \mid& 1\leq g\leq \PredsPerArity;1\leq x \leq \MaxArity; u_{w_i}\in \FF{branch}(u_w); j=j^s_{S}(\PrP^x_g,u_{w_1}, \dots, u_{w_x})\}
\\\cup
\{\PrReady{\PrP^x_g}(u_{w}, t^{j}_{w},  u_{w_1}, \dots, u_{w_x}) \mid& 1\leq g\leq \PredsPerArity;1\leq x \leq \MaxArity; u_{w_i}\in \FF{branch}(u_w);  j=j^e_{S}(\PrP^x_g,u_{w_1}, \dots, u_{w_x})\}\\
\cup
\{\PrLoadEncoding(u_{w'}, t^{j+1}_{w}, t^{j+\ell'+2}_{w}) \mid& 
u_{w'}\in \FF{branch}(u_w); b_1'\cdots b_{\ell'}' \text{ bin. rep. of }|w'|+1;
j\in J_{S}(u_{w'})
\} 
\\\cup 
\{
\PrCopy(e_{w'}^r, e_{w'}^{\ell'}, t^{j+r+1}_{w}, t^{j+\ell'+1}_{w})
\mid& u_{w'}\in \FF{branch}(u_w); b_1'\cdots b_{\ell'}' \text{ bin. rep. of }|w'|+1;
\\&
\qquad 1\leq r\leq \ell'-1;
j-1\in J_{S}(u_{w'})\}
\end{align*}
\end{enumerate}
We describe a (Datalog-first) chase over $\R_4$ and $\I$ step by step. We directly set the the names of the fresh nulls so that the chase result coincides with $\FunInterpretation{4}(\I)$. 
Note that no rule in Figures \ref{figure:rule-set-1}, \ref{figure:rule-set-2}, or \ref{figure:rule-set-3} is applicable  to \I but that applying rule \eqref{rul_encodeFact} may trigger the application of the ``copy'' rules \eqref{rul_copyBase} and \eqref{rul_copyRec} in Figure\ref{figure:rule-set-3}.
\smallskip

For every $\PrLeaf(u_w)\in\I$ with serialisation $S=\FF{branchTape}(\I,u_w)$: 
\begin{itemize}
\item Apply $\Tuple{\Rule, \sigma}$ with $\Rule = \eqref{rul_startTape}$ and $\sigma$ that maps $u$ to $u_{w}$. 
This introduces a fresh null that we call $t^{1}_{w}$ and adds atoms $\PrLoad{1}(u_{w}, t^{1}_{w}, u_{w})$ and $\PrHead{\StartingState}(t^{1}_{w})$. Note that $j^s_{S}(\PrP^1_1,u_{w})=1$ so that we have indeed added $\PrLoad{1}(u_{w}, t^{j^s_{S}(\PrP^1_1,u_{w})}_{w}, u_{w})$.

\item Let $\FF{branch}(u_w)=\{u_{w_1},\dots,u_{w_p}\}$ with $u_{w}=u_{w_1}\prec \dots\prec u_{w_p}=u_1$ (i.e. $|w|=|w_1|>\dots>|w_p|=1$). For $1\leq i\leq p$:
\begin{itemize}
\item Apply \eqref{rul_firstPredEll} to  $\PrLoad{1}(u_{w}, t^{j^s_{S}(\PrP^1_1,u_{w_i})}_{w}, u_{w_i})$ and add $\PrLoad{\PrP^1_1}(u_{w}, t^{j^s_{S}(\PrP^1_1,u_{w_i})}_{w}, u_{w_i})$.

\item For $1\leq g\leq \PredsPerArity$ and $j=j^s_{S}(\PrP^1_g,u_{w_i})$:
\begin{itemize}

\item If $\PrFreshIDBNotInTape{\PrP^1_g}(u_{w_i})\in\I$, apply $\eqref{rul_copyNotInPred}$ and add $ \PrReady{\PrP^1_g}(u_{w}, t^{j}_{w}, u_{w_i})$. Note that $j=j^s_{S}(\PrP^1_g,u_{w_i})=j^e_{S}(\PrP^1_g,u_{w_i})$.

\item Else, $\PrFreshIDBInTape{\PrP^1_g}(u_{w_i})\in\I$: apply $\Tuple{\Rule, \sigma}$ with $\Rule $ the instantiation of $\eqref{rul_loadAllFacts} $ for $\PrP^1_g$, and $\sigma(u)=u_{w}$, $\sigma(t)=t^{j}_{w}$ and $\sigma(v)=u_{w_i}$. This introduces three fresh nulls $u_{\Rule,\sigma,x_1}=t^{j+1}_{w}$, $u_{\Rule,\sigma,x_2}=t^{j+\ell_i+2}_{w}$ and $u_{\Rule,\sigma,y}= t^{j+\ell_i+3}_{w}$, where $b_1^i\cdots b_{\ell_i}^i$ is the binary representation of $|w_i|+1$. 
This adds atoms $\PrCell{\PrP^1_g}(t^{j}_{w})$, $\PrNext(t^{j}_{w}, t^{j+1}_{w})$, $\PrLoadEncoding(u_{w_i}, t^{j+1}_{w}, t^{j+\ell_i+2}_{w})$, $\PrNext(t^{j+\ell_i+2}_{w}, t^{j+\ell_i+3}_{w})$ and $\PrReady{\PrP^1_g}(u_{w},  t^{j+\ell_i+3}_{w}, u_{w_i})$. 
 Note that $j+\ell_i+3=j^s_{S}(\PrP^1_g,u_{w_i})+\ell_i+3=j^e_{S}(\PrP^1_g,u_{w_i})$. 

\item If $g<\PredsPerArity$: 
Apply \eqref{rul_nextPredEll} and add $\PrLoad{\PrP_{g+1}^{1}}(u_{w},  t^{j'}_{w}, u_{w_i})$ with $j'=j$ if $\PrFreshIDBNotInTape{\PrP^1_g}(u_{w_i})\in\I$, $j'=j+\ell_i+3$ otherwise. 
Note that $j'=j^e_{S}(\PrP^1_g,u_{w_i})=j^s_{S}(\PrP^1_{g+1},u_{w_i})$,  independently of whether $\PrFreshIDBInTape{\PrP^1_{g}}(u_{w_i})$ is in $\I$.

\end{itemize}

\item Apply \eqref{rul_lastPredEll} and add $\PrReady{1}(u_{w},  t^{j^e_{S}(\PrP^1_\PredsPerArity,u_{w_i})}_{w}, u_{w_i})$. 

\item If $i<p$ (i.e. $u_{w_i}\neq u_1$): Apply the instantiation of \eqref{rul_tapeNextVector} 
$\PrReady{1}(u, t, v) \wedge \PrHasChild(w,v) \to \PrLoad{1}(u, t, w)$
 and add  $\PrLoad{1}(u_{w}, t^{j'}_{w},u_{w_{i+1}})$ with $j'=j^e_{S}(\PrP^1_\PredsPerArity,u_{w_i})=j^s_{S}(\PrP^1_1,u_{w_{i+1}})$. Recall that $u_{w_{i+1}}$ comes right after $u_{w_{i}}$ in the lexicographic order according to $\prec$.

\end{itemize}

\item At the end of the above loop, we obtain an atom $\PrReady{1}(u_{w},  t^{j^e_{S}(\PrP^1_\PredsPerArity,u_{1})}_{w}, u_1)$. 
Apply the instantiation of \eqref{rul_tapeNextLevel}  $\PrReady{1}(u, t, v) \wedge \PrRoot(v) \to \PrLoad{2}(u, t, u,u)$ and add $\PrLoad{2}(u_{w}, t^{j^e_{S}(\PrP^1_\PredsPerArity,u_{1})}, u_{w}, u_{w}) $. Note that $j^e_{S}(\PrP^1_\PredsPerArity,u_{1})=j^s_{S}(\PrP^2_1,u_{w},u_{w})$. 

\item Repeat the process, so that we obtain the atoms of the following form for $2\leq x\leq \MaxArity$ and $u_{w_1},\dots, u_{w_x}\in \FF{branch}(u_w)$:
\begin{itemize}
\item $\PrLoad{x}(u_{w}, t^{j}_{w}, u_{w_1}, \dots, u_{w_x})$ where $j=j^s_{S}(\PrP^x_1,u_{w_1}, \dots, u_{w_x})$;
\item $\PrReady{x}(u_{w},  t^{j}_{w},  u_{w_1}, \dots, u_{w_x})$ where  $j=j^e_{S}(\PrP^x_\PredsPerArity,u_{w_1}, \dots, u_{w_x})$;
\item  for $1\leq g\leq \PredsPerArity$: 
\begin{itemize}
\item $\PrLoad{\PrP^x_g}(u_{w}, t^{j}_{w},  u_{w_1}, \dots, u_{w_x})$ where $j=j^s_{S}(\PrP^x_g,u_{w_1}, \dots, u_{w_x})$;
\item $ \PrReady{\PrP^x_g}(u_{w}, t^{j}_{w},  u_{w_1}, \dots, u_{w_x})$ where $j=j^e_{S}(\PrP^x_g,u_{w_1}, \dots, u_{w_x})$;
\item for $\PrFreshIDBInTape{\PrP^x_g}(u_{w_1}, \dots, u_{w_x})\in\I$ and $j=j^s_{S}(\PrP^x_g,u_{w_1}, \dots, u_{w_x})$: 
\begin{itemize}
\item $\PrCell{\PrP^x_g}(t^{j}_{w})$, 
\item $\PrNext(t^{j}_{w}, t^{j+1}_{w}),\dots, \PrNext(t^{j+\Sigma_{l=1}^x  \ell_{l}+x+1}_{w}, t^{j+\Sigma_{l=1}^x  \ell_{l}+x+2}_{w})$, 
\item $\PrLoadEncoding(u_{w_1}, t^{j+1}_{w}, t^{j+\ell_{1}+2}_{w}),\dots, \PrLoadEncoding(u_{w_x}, t^{j+\Sigma_{l=1}^{x-1}  \ell_{l}+x}_{w}, t^{j+\Sigma_{l=1}^x  \ell_{l}+x+1}_{w})$.
\end{itemize}
\end{itemize}
\end{itemize}
\item We have obtained an atom $\PrReady{\MaxArity}(u_{w},  t^{j}_{w}, u_1, \dots, u_1)$ with $j=j^e_{S}(\MaxArity,u_{1}, \dots, u_{1})=|S|+1$. Apply \eqref{rul_endTape} and add $\PrCell{\Blank}(t^{|S|+1}_{w})$ and $\PrEndTape(t^{|S|+1}_{w})$.

\item For all $u_{w_i}\in\FF{branch}(u_w)$ and atoms of the form $\PrLoadEncoding(u_{w_i}, t^{j+1}_{w}, t^{j+\ell_i+2}_{w})$: 
\begin{itemize}
\item Apply $\Tuple{\Rule, \sigma}$ with $\Rule =\eqref{rul_encodeFact}$ and $\sigma(v)=u_{w_i}$, $\sigma(x_s)=t^{j+1}_{w}$, $\sigma(x_e)=t^{j+\ell_i+2}_{w}$, $\sigma(y_1)=e_{w_i}^1$ and $\sigma(y_\_)=e_{w_i}^{\ell_i}$ where $b_1^i \cdots b_{\ell_i}^i$ is the binary representation of $|w_i|+1$. 
This introduces two fresh nulls $u_{\Rule,\sigma,z_1}= t^{j+2}_{w}$ and $u_{\Rule,\sigma,z_\_}= t^{j+\ell_i+1}_{w}$ and adds the atoms: $\PrCell{\Separator}(t^{j+1}_{w})$, $\PrNext(t^{j+1}_{w}, t^{j+2}_{w})$, $\PrCopy(e_{w_i}^1, e_{w_i}^{\ell_i}, t^{j+2}_{w}, t^{j+\ell_i+1}_{w})$, $\PrNext(t^{j+\ell_i+1}_{w}, t^{j+\ell_i+2}_{w})$ and $ \PrCell{\Separator}(t^{j+\ell_i+2}_{w})$.

\item Apply exhaustively rules \eqref{rul_copyBase} and \eqref{rul_copyRec}. This process introduces nulls $ t^{j+3},\dots,  t^{j+\ell_i} $ and adds the following atoms (cf. proof of Lemma \ref{app:int3-datalog-first-chase}):
\begin{itemize}
\item $\PrCell{b^i_1}(t^{j+2}_{w}), \dots, \PrCell{b^i_{\ell_i}}(t^{j+\ell_i+1}_{w})$, 
\item $\PrNext(t^{j+2}_{w}, t^{j+3}_{w}), \dots,\PrNext(t^{j+\ell_i}_{w}, t^{j+\ell_i+1}_{w}) $  
\item $\PrCopy(e_{w_i}^2, e_{w_i}^{\ell_i}, t^{j+3}_{w}, t^{j+\ell_i+1}_{w})$, $\dots$, $\PrCopy(e_{w_i}^{\ell_i-1}, e_{w_i}^{\ell_i}, t^{j+\ell_i}_{w}, t^{j+\ell_i+1}_{w})$.
\end{itemize}
\end{itemize}
\end{itemize}
At this point, for every null $u_{w}$ such that $\PrLeaf(u_{w})\in\I$, we have added $\FF{startConf}(\mathcal{I},u_w)$ and $\FF{loadS}(\mathcal{I},u_w)$ and no rule is applicable. 
\end{proof}

\noindent\textbf{Lemma \ref{lem:univ-model-set-4}.} 
\textit{$\IS_4 = \{\FunInterpretation{4}(\I) \mid \I\in\IS_3\}$ is a universal model set of $\R_4$ and $\D$.}

\begin{proof}
Similar to the proof of Lemma \ref{lem:univ-model-set-3}, using Lemmas~\ref{lem:univ-model-set-3} and \ref{app:int4-datalog-first-chase}. 
\end{proof}

Before proving Lemma \ref{lem:univ-model-set-5}, we introduce some notation about TM runs. 
A \emph{configuration} for a TM $\TM = \Tuple{\States, \Alphabet, \TransitionFunction}$ is a finite sequence $\VC = C_1, \ldots, C_k$ where  $C_i = \{q, a\}$ for some $i \leq k$, $q \in \States$, and $a \in \Alphabet$, while for each $j \neq i$, $C_j = \{a_j\}$ for some $a_j \in \Alphabet$. 
The \emph{successor configuration} of $\VC$ under $\TM$, denoted $\TM(\VC)$, is the  configuration $\VC'=C'_1, \ldots, C'_{k+1}$ satisfying, assuming $\TransitionFunction(q,a) = (q',a',l)$, 
\[
C'_j \cap \Alphabet = \left\{ 
\begin{array}{ll}
\{a'\} & \text{ if } j = i ,\\
\{\Blank\} & \text{ if } j = k+1, \\
C_j \cap \Alphabet & \text{ otherwise,}
\end{array}
\right.
\]
while $C'_j \cap \States = \{q'\}$ if $j =  \max(1,i+l)$, otherwise $\emptyset$. 
For a word $\vec{\gamma} = \gamma_1, \ldots, \gamma_k$, let $\FunStartingConf(\TM, \vec{\gamma})$ denote the sequence $\{\StartingState, \gamma_1\}$, $\{\gamma_2\}$, $\ldots, \{\gamma_k\}, \{\Blank\}$.
Let $\FunComputation(\TM, \vec{\gamma}) = \VC_1, \VC_2, \ldots$ be the (possibly finite) configuration sequence satisfying
$\VC_1 = \FunStartingConf(\TM, \vec{\gamma})$ as well as $\VC_{i+1}=\TM(\VC_i)$, whenever defined (otherwise -- and only then -- the sequence ends at $i$). 
Note that \TM halts on $\vec{\gamma}$ if $\FunComputation(\TM, \vec{\gamma})$ is finite, and accepts $\vec{\gamma}$ if the last configuration in $\FunComputation(\TM, \vec{\gamma})$ features \AcceptingState.

\begin{lemma}\label{app:int5-datalog-first-chase}
Given some $\I\in \IS_4$, $\FunInterpretation{5}(\I)$ is isomorphic to the unique interpretation in a result of the chase over $\R_5$ and $\I$.
\end{lemma}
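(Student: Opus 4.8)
The plan is to follow verbatim the template of Lemmas~\ref{app:int2-datalog-first-chase}--\ref{app:int4-datalog-first-chase}: first make $\FF{Run}(u_w)$ fully explicit in terms of the configuration sequence $\FunComputation(\TM,\vec\gamma)$, and then exhibit a single Datalog-first chase branch over $\R_5$ and $\I$ whose fresh nulls are named so that its union coincides on the nose with $\FunInterpretation{5}(\I)$. Fix a leaf $u_w$ with $\PrLeaf(u_w)\in\I$ and set $\vec\gamma=\FF{branchTape}(\I,u_w)$, so that the starting configuration $\FF{startConf}(\I,u_w)$ already present in $\I$ is precisely $\FunStartingConf(\TM,\vec\gamma)$, laid out on the chain $t^1_w,\ldots,t^{|\vec\gamma|+1}_w$. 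Writing $\FunComputation(\TM,\vec\gamma)=\VC_1,\VC_2,\ldots$ (a finite sequence, since $\TM$ halts on all inputs), I would introduce fresh nulls $s^{m,j}_w$ for the $j$-th cell of $\VC_m$, with $s^{1,j}_w=t^j_w$, and define $\FF{Run}(u_w)$ to contain, for every $\VC_m$ of length $|\vec\gamma|+m$: the facts $\PrCell{a}(s^{m,j}_w)$ for the symbol $a$ at position $j$, the fact $\PrHead{q}(s^{m,p}_w)$ for head position $p$ and state $q$, the chain $\PrNext(s^{m,j}_w,s^{m,j+1}_w)$ together with its transitive closure, $\PrEndTape(s^{m,|\vec\gamma|+m}_w)$, the inter-configuration links $\PrStep(s^{m,j}_w,s^{m+1,j}_w)$ whenever $\VC_{m+1}$ exists, and $\PrGoal$ exactly when some $\VC_m$ carries state $\AcceptingState$. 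Then $\FunInterpretation{5}(\I)$ is the union of $\mathcal{J}$ (which adds to $\I$ the transitive closure of all pre-existing $\PrNext$, including the $\R_3$ position encodings) with these $\FF{Run}(u_w)$.

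Next I would describe the chase. Prioritising the transitive-closure rules \eqref{rul_tm_nexttrans_init}--\eqref{rul_tm_nexttrans_rec}, it first saturates $\PrNextTrans$ on the encoding chains and the initial tape; crucially, since $\PrStep$, $\PrHead{q}$ and $\PrEndTape$ occur only on tape cells, and the encoding chains are not $\PrNext$-linked to the tape, none of the genuine TM rules ever fires on an $\R_3$ encoding. I would then argue by induction on $m$ that, once the chase has produced the cells and relations of $\VC_m$ together with its $\PrNextTrans$ closure, applying the TM rules yields exactly $\VC_{m+1}=\TM(\VC_m)$: rule \eqref{rul_tm_new_symb} creates the stepped head cell carrying the rewritten symbol; rules \eqref{rul_tm_right_mem} and \eqref{rul_tm_left_mem} create the stepped copies of every cell strictly right of, respectively strictly left of, the head (using that within a single configuration $\PrNextTrans$ is a strict linear order, so each non-head cell is caught by exactly one of the two rules and the head by neither); rule \eqref{rul_tm_add_mem} appends the fresh trailing blank demanded by $\FunComputation$; rule \eqref{rul_tm_new_nxt} reconstructs the $\PrNext$ chain of $\VC_{m+1}$ from the $\PrStep$ images; and \eqref{rul_tm_move_right}/\eqref{rul_tm_move_left} place the new head according to the transition direction. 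Each introduced null is named as the corresponding $s^{m+1,j}_w$.

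Determinism of $\TM$ guarantees that at each cell exactly one instance of \eqref{rul_tm_new_symb} (resp.\ of the copy rules) applies, so every cell receives exactly one $\PrStep$-successor and the standard-chase restriction neither blocks a needed derivation nor permits a spurious one. Finally \eqref{rul_tm_accept} contributes $\PrGoal$ precisely in the accepting case. The main obstacle I anticipate is the bookkeeping needed to show that the $\PrStep$-image of $\VC_m$, reassembled by \eqref{rul_tm_new_nxt} and \eqref{rul_tm_add_mem}, forms one contiguous $\PrNext$-chain of length $|\vec\gamma|+m+1$ matching $\VC_{m+1}$ -- in particular that the new blank from \eqref{rul_tm_add_mem} attaches at the right end and that no two cells of $\VC_{m+1}$ are conflated. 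This hinges on every cell of $\VC_m$, including the $\PrEndTape$ cell, acquiring its $\PrStep$-successor before \eqref{rul_tm_new_nxt} and \eqref{rul_tm_add_mem} act, which is exactly what the Datalog-first priority on the copy and transitive-closure rules secures.

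Termination of this branch follows because $\FunComputation(\TM,\vec\gamma)$ is finite: at the halting configuration the state is $\AcceptingState$ or $\RejectingState$, which has no transition, so none of \eqref{rul_tm_new_symb}--\eqref{rul_tm_move_left} applies and the chase stops with result $\FunInterpretation{5}(\I)$. Uniqueness of the resulting interpretation follows since the only disjunctive rules of $\R_5$ are those inherited from $\R_1$, and these are already satisfied by the $\R_1$-saturated $\I\in\IS_4$ (by Lemma~\ref{lem:univ-model-set-4}) and stay satisfied because the TM rules produce no $\PrDomain$- or $\PrNeq$-atoms; hence the chase tree is a single path and its result is a singleton. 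Combining this with Lemma~\ref{lem:univ-model-set-4} then gives Lemma~\ref{lem:univ-model-set-5} by the same argument used to derive Lemma~\ref{lem:univ-model-set-4} from Lemma~\ref{app:int4-datalog-first-chase}.
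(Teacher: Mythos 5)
Your proposal is correct and follows essentially the same route as the paper's proof: an explicit description of $\FF{Run}(u_w)$ in terms of $\FunComputation(\TM,\vec\gamma)$, a Datalog-first chase that saturates $\PrNextTrans$ first, fresh nulls named cell-by-cell to match, and an induction on the configuration index showing that the copy rules, \eqref{rul_tm_add_mem}, \eqref{rul_tm_new_nxt}, the closure rules, and the head-placement rules reconstruct exactly $\TM(\VC_m)$. The obstacle you flag (all $\PrStep$-successors, including the $\PrEndTape$ cell's, must exist before \eqref{rul_tm_new_nxt} and \eqref{rul_tm_add_mem} act) is resolved in the paper exactly as you propose, by choosing the application order within each induction step.
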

\begin{proof}
The interpretation $\FunInterpretation{5}(\I)$ is equal to the union of the following interpretations:
\begin{enumerate}
\item  \I;

\item  For each $w\in\mathbf{Z}$ with $b_1 \cdots b_\ell$ the binary representation of $|w| + 1$: 
$\{\PrNextTrans(e_{w}^{i}, e_{w}^j) \mid 1\leq i<j\leq \ell \}$.

\item For each $u_{w} \in \EI{\Nulls}{\I}$ with $S=\FF{branchTape}(\I,u_w)$, $\FunComputation(\TM, S) = \VC_1, \ldots, \VC_\ell$, and $\VC_i = C_{i, 1}, \ldots, C_{i, \vert S \vert + i}$ for each $1 \leq i \leq \ell$, the database $\FF{Run}(u_w)$ defined as follows: 
\begin{align*}
\{&\PrNextTrans(t_{w}^{i}, t_{w}^{j}) \mid 1 \leq i < j \leq \vert S \vert + 1\}\cup~ \\
\{&\PrEndTape(t_{w}^{i, \vert S  \vert + i}) \mid  2 \leq i \leq \ell\} \cup~ \\
\{&\PrNext(t_{w}^{i, j-1}, t_{w}^{i, j}) \mid 2 \leq i \leq \ell, 2 \leq j \leq \vert S  \vert + i\} \cup~ \\
\{&\PrNextTrans(t_{w}^{i, j}, t_{w}^{i, j'}) \mid 1 \leq i \leq \ell, 1 \leq j < j' \leq \vert S \vert + i\}\cup~ \\
\{&\PrStep(t_{w}^{i-1, j}, t_{w}^{i, j}) \mid 3 \leq i \leq \ell, 1 \leq j \leq \vert S \vert + i -1
 \} \cup~ \\
\{&\PrCell{a}(t_{w}^{i, j}) \mid  2 \leq i \leq \ell, 1 \leq j \leq \vert S \vert + i, a \in C_{i, j} \cap \Alphabet\} \cup~\\
\{&\PrHead{q}(t_{w}^{i, j}) \mid  2 \leq i \leq \ell, 1 \leq j \leq \vert S \vert + i, q \in C_{i, j} \cap \States\} \cup~ \\
\{&\PrGoal \mid 1 \leq j \leq \vert S \vert + \ell, \AcceptingState \in C_{\ell, j} \};
\end{align*}
\end{enumerate}

We describe a (Datalog-first) chase over $\R_5$ and $\I$ step by step. We directly set the the names of the fresh nulls so that the chase result coincides with $\FunInterpretation{5}(\I)$. Note that no rule in Figures \ref{figure:rule-set-1}, \ref{figure:rule-set-2}, \ref{figure:rule-set-3}, \ref{figure:rule-set-4}, or \ref{figure:rule-set-4-part} is applicable  to \I and that applying a rule in Figure~\ref{figure:rule-set-5} cannot make a rule in Figures \ref{figure:rule-set-1}, \ref{figure:rule-set-2}, \ref{figure:rule-set-3},  \ref{figure:rule-set-4}, or \ref{figure:rule-set-4-part} applicable.

We first apply exhaustively \eqref{rul_tm_nexttrans_init} and \eqref{rul_tm_nexttrans_rec} which compute a transitive closure on the atoms of the form $\PrNext(e_{w}^{i}, e_{w}^j)$ and $\PrNext(t^{i}_{w}, t^{j}_{w})$ in $\I$. This adds the atoms $\{\PrNextTrans(e_{w}^{i}, e_{w}^j) \mid 1\leq i<j\leq \ell\}$ described in point (2) and $\{\PrNextTrans(t_{w}^{i}, t_{w}^{j}) \mid 1 \leq i < j \leq \vert S \vert + 1\}$ of the first line in point (3).

For every null $u_{w}$ such that $\PrLeaf(u_{w})\in\I$, $S=\FF{branchTape}(\I,u_w)$ and $\FunComputation(\TM, S) = \VC_1, \ldots, \VC_\ell$, we show by induction that for every $2\leq i\leq \ell$ the following atoms belong to the chase result: 
\begin{itemize}
\item $\PrEndTape(t^{i,\vert S\vert+i}_{w})$, 
\item $\PrNext(t^{i,j-1}_{w}, t^{i,j}_{w})$ for $2\leq j\leq \vert S\vert+i$, 
\item $\PrNextTrans(t^{i,j}_{w}, t^{i,j'}_{w})$ for $1\leq j<j'\leq \vert S\vert+i$, 
\item $\PrStep(t_{w}^{i-1, j}, t_{w}^{i, j})$ for $1 \leq j \leq \vert S \vert + i -1$,
\item $\PrCell{a}(t^{i,j}_{w})$ for $1 \leq j \leq \vert S \vert + i, a \in C_{i, j} \cap \Alphabet$,
\item $\PrHead{q}(t^{i,j}_{w})$ for $1 \leq j \leq \vert S \vert + i, q \in C_{i, j} \cap \States$.
\end{itemize}
\noindent\emph{Base case $i=2$.} 
Recall that for every $\PrLeaf(u_w)\in\I$ with serialisation $S=\FF{branchTape}(\I,u_w)$, $\I$ includes:
\begin{align*}
&\{\PrLoad{1}(u_w, t_w^1, u_w), \PrHead{\StartingState}(t^1_w), \PrEndTape(t_w^{|S|+1}),\PrCell{\Blank}(t_w^{|S|+1})\} \cup \{\PrNext(t_w^{j-1}, t_w^{j}) \mid 2\leq j \leq |S|+1\} \cup{}
\\ 
&
\{\PrCell{a}(t_w^{j}) \mid  1\leq j\leq |S|, a=S[j]\}.
\end{align*} 
Moreover $\VC_1=\FunStartingConf(\TM, S)=\{\StartingState, S[1]\}$, $\{S[2]\}$, $\ldots, \{S[|S|]\}, \{\Blank\}$. 
Let $(\StartingState, S[1]) \mapsto (r, b, +1) \in\TransitionFunction$  (recall that we assume that the TM will never attempt to move left on the first
position of the tape). 
By definition, we have $C_{2, 1} \cap\Alphabet=\{b\}$, $C_{2, j} \cap\Alphabet=C_{1, j} \cap\Alphabet =\{S[j]\}$ for every $1<j\leq |S|+1$, and $C_{2, 2} \cap\States=\{r\}$. 
\begin{itemize}
\item For every $1<j\leq |S|+1$, let $a \in C_{1, j} \cap \Alphabet =  C_{2, j} \cap \Alphabet$. 
\begin{itemize}
\item Apply $\Tuple{\Rule, \sigma}$ with $\Rule$ the instantiation of \eqref{rul_tm_right_mem} for transition $(\StartingState, S[1]) \mapsto (r, b, +1)$ and tape symbol $S[j]$, and $\sigma(x)=t^{1}_{w}$, $\sigma(y)=t^{j}_{w}$. This introduces a fresh null $u_{\Rule,\sigma,z}=t^{2,j}_{w}$ and adds $\PrStep(t^{j}_{w}, t^{2,j}_{w})$ and $\PrCell{S[j]}(t^{2,j}_{w})$. 
\end{itemize}

\item Apply $\Tuple{\Rule, \sigma}$ with $\Rule$ the instantiation of $\eqref{rul_tm_new_symb}$ for $(\StartingState, S[1]) \mapsto (r, b, +X)$ and $\sigma(x)=t^{1}_{w}$ . This introduces a fresh null $u_{\Rule,\sigma,z}=t^{2,1}_{w}$ and adds $\PrStep(t^{1}_{w}, t^{2,1}_{w})$ and $\PrCell{b}(t^{2,1}_{w})$. 

\item Apply $\Tuple{\Rule, \sigma}$ with $\Rule =\eqref{rul_tm_add_mem}$ with $\sigma(x)=t^{\vert S\vert+1}_{w}$ and $\sigma(z)=t^{2,\vert S\vert+1}_{w}$. This introduces a fresh null $u_{\Rule,\sigma,v}=t^{2,\vert S\vert+2}_{w}$ and adds $\PrNext(t^{2,\vert S\vert+1}_{w}, t^{2,\vert S\vert+2}_{w})$,  $\PrCell{\Blank}(t^{2,\vert S\vert+2}_{w})$ and $\PrEndTape(t^{2,\vert S\vert+2}_{w})$.

\item Apply exhaustively \eqref{rul_tm_new_nxt}, adding all atoms $\PrNext( t^{2,j-1}_{w},  t^{2,j}_{w})$ for $2\leq j\leq \vert S\vert+2$.

\item Apply exhaustively  \eqref{rul_tm_nexttrans_init} and \eqref{rul_tm_nexttrans_rec}, adding all atoms $\PrNextTrans( t^{2,j}_{w},  t^{2,j'}_{w})$ for $1\leq j<j'\leq \vert S\vert+2$.

\item  Apply \eqref{rul_tm_move_right} and add $\PrHead{r}(t^{2,2}_{w})$.
\end{itemize}

\noindent\emph{Induction step:} Assume the property is true for some $2\leq i<\ell$ and let $j_0$ be the unique $ C_{i, j_0}$ such that $ C_{i, j_0} \cap \States\neq\emptyset$, $\{q_0\}=C_{i, j_0} \cap\States$, $\{a_0\}=C_{i, j_0} \cap\Alphabet$, and $(q_0, a_0) \mapsto (r, b, +X) \in \TransitionFunction$. By definition, we have $C_{i+1, j_0} \cap\Alphabet=\{b\}$, $C_{i+1, j} \cap\Alphabet=C_{i, j} \cap\Alphabet$ for every $j\neq j_0$, and $C_{i+1, j_0+X} \cap\States=\{r\}$. 
\begin{itemize}
\item For every $j> j_0$ (resp. $j<j_0$), let $a \in C_{i, j} \cap \Alphabet =  C_{i+1, j} \cap \Alphabet$. 
\begin{itemize}
\item Apply $\Tuple{\Rule, \sigma}$ with $\Rule$ the instantiation of $\eqref{rul_tm_right_mem}$ (resp. \eqref{rul_tm_left_mem}) for transition $(q_0, a_0) \mapsto (r, b, +X)$ and tape symbol $a$, and $\sigma(x)=t^{i,j_0}_{w}$, $\sigma(y)=t^{i,j}_{w}$. This introduces a fresh null $u_{\Rule,\sigma,z}=t^{i+1,j}_{w}$ and adds $\PrStep(t^{i,j}_{w}, t^{i+1,j}_{w})$ and $\PrCell{a}(t^{i+1,j}_{w})$. 
\end{itemize}

\item Apply $\Tuple{\Rule, \sigma}$ with $\Rule$ the instantiation of $\eqref{rul_tm_new_symb}$ for $(q_0, a_0) \mapsto (r, b, +X)$ and $\sigma(x)=t^{i,j_0}_{w}$ . This introduces a fresh null $u_{\Rule,\sigma,z}=t^{i+1,j_0}_{w}$ and adds $\PrStep(t^{i,j_0}_{w}, t^{i+1,j_0}_{w})$ and $\PrCell{b}(t^{i+1,j_0}_{w})$. 

\item Apply $\Tuple{\Rule, \sigma}$ with $\Rule =\eqref{rul_tm_add_mem}$ with $\sigma(x)=t^{i,\vert S\vert+i}_{w}$ and $\sigma(z)=t^{i+1,\vert S\vert+i}_{w}$. This introduces a fresh null $u_{\Rule,\sigma,v}=t^{i+1,\vert S\vert+i+1}_{w}$ and adds $\PrNext(t^{i+1,\vert S\vert+i}_{w}, t^{i+1,\vert S\vert+i+1}_{w})$,  $\PrCell{\Blank}(t^{i+1,\vert S\vert+i+1}_{w})$ and $\PrEndTape(t^{i+1,\vert S\vert+i+1}_{w})$.

\item Apply exhaustively \eqref{rul_tm_new_nxt}, adding all atoms $\PrNext( t^{i+1,j-1}_{w},  t^{i+1,j}_{w})$ for $2\leq j\leq \vert S\vert+i+1$.

\item Apply exhaustively  \eqref{rul_tm_nexttrans_init} and \eqref{rul_tm_nexttrans_rec}, adding all atoms $\PrNextTrans( t^{i+1,j}_{w},  t^{i+1,j'}_{w})$ for $1\leq j<j'\leq \vert S\vert+i+1$.

\item  If $X=+1$, apply \eqref{rul_tm_move_right} and add $\PrHead{r}(t^{i+1,j_0+X}_{w})$.

\item If $X=-1$, then $j_0>1$ (since the TM never attempts to move left on the first position of the tape): apply \eqref{rul_tm_move_left} and add $\PrHead{r}(t^{i+1,j_0+X}_{w})$.
\end{itemize}
We have shown that all atoms of $\FF{Run}(u_w)$ but $\PrGoal$ belong to the chase result. 
Finally, for every null $u_{w}$ such that $\PrLeaf(u_{w})\in\I$, $S=\FF{branchTape}(\I,u_w)$ and $\FunComputation(\TM, S) = \VC_1, \ldots, \VC_\ell$, if there exists $j$ such that $\PrHead{\AcceptingState}(t^{\ell,j}_{w})\in\FF{Run}(u_w)$, apply \eqref{rul_tm_accept} and add $\PrGoal$.
\end{proof}

\noindent\textbf{Lemma \ref{lem:univ-model-set-5}.} 
\textit{$\IS_5 = \{\FunInterpretation{5}(\I) \mid \I \in \IS_4\}$ is a universal model set of $\R_5$ and $\D$.}

\begin{proof}
Similar to the proof of Lemma \ref{lem:univ-model-set-4}, using Lemmas~\ref{lem:univ-model-set-4} and \ref{app:int5-datalog-first-chase}. 
\end{proof}

\noindent\textbf{Lemma \ref{lemma:soundness}.}
\textit{If $\D\in\Query$, then $\PrGoal\in\mathcal{I}$ for each $\mathcal{I}\in\MS$.}

\begin{proof}
Let $\mathcal{I}\in\MS$. 
We have shown that there is a homomorphism $\D\to\FF{branchDb}(\mathcal{I},u_w)$ for the node $u_w$ where $|w|=|\FunOrder(\mathcal{I})|$. Since $\Query$ is closed under homomorphisms, $\D\in\Query$ implies $\FF{branchDb}(\mathcal{I},u_w)\in\Query$.

Let $\FunComputation(\TM, S) = \VC_1, \ldots, \VC_\ell$ be the computation of \TM on the serialisation $S=\FF{branchTape}(\I,u_w)$ of the database $\FF{branchDb}(\mathcal{I},u_w)$. 
Since $\FF{branchDb}(\mathcal{I},u_w)\in\Query$ and \TM decides $\Query$, $\VC_1, \ldots, \VC_\ell$ is such that $\VC_\ell$ features $\AcceptingState$. 
By construction of $\FF{Run}(u_w)$ (see proof of Lemma \ref{app:int5-datalog-first-chase}), it follows that $\PrGoal \in \FF{Run}(u_w)$. 
Hence $\PrGoal \in \I$.
\end{proof}

\noindent\textbf{Lemma \ref{lemma:completeness}.}
\textit{If $\D\notin\Query$, then $\PrGoal\notin\mathcal{I}$ for some $\mathcal{I}\in\MS$.}

\begin{proof}
Consider some $\mathcal{I}\in\MS$ such that $\FunDatabase(\mathcal{I})=\D$
and $\PrNeq(t, u) \in \mathcal{I}$ for each $t, u \in\EI{\Nulls}{\Database}$ with $t \neq u$ (such a \I exists by Lemma \ref{app-lem:various-claims-structure}).
Let $u_w$ denote the leaf node with $|w|=|\FunOrder(\mathcal{I})|$.
Then $\FF{branchDb}(\mathcal{I},u_w)$ is isomorphic to $\FunDatabase(\mathcal{I})=\D$ and $\FF{branchDb}(\mathcal{I},u_w)\notin\Query$.

Let $\FunComputation(\TM, S) = \VC_1, \ldots, \VC_\ell$ be the computation of \TM on the serialisation $S=\FF{branchTape}(\I,u_w)$ of the database $\FF{branchDb}(\mathcal{I},u_w)$. 
Since $\FF{branchDb}(\mathcal{I},u_w)\notin\Query$ and \TM decides $\Query$, $\VC_1, \ldots, \VC_\ell$ is such that $\VC_\ell$ does not feature $\AcceptingState$. 
By construction of $\FF{Run}(u_w)$ (see proof of Lemma \ref{app:int5-datalog-first-chase}), it follows that $\PrGoal \notin \FF{Run}(u_w)$. 

Moreover, for all other leaf nodes $u_v$ with $\PrLeaf(u_v)\in\mathcal{I}$, there is a
homomorphism $\FF{branchDb}(\mathcal{I},u_v)\to\FF{branchDb}(\mathcal{I},u_w)$.
Since $\Query$ is closed under homomorphisms, \TM does not accept any such $\FF{branchDb}(\mathcal{I},u_v)$, and $\PrGoal\notin\FF{Run}(u_v)$.  
By construction of \MS, \PrGoal occurs in \I iff it occur in some $\FF{Run}(u_v)$, so $\PrGoal\notin\mathcal{I}$.
\end{proof}

\section{Proofs for Section \ref{sec:disjunction removal}}
We fix for this whole appendix part a split $(\Sigma_1,\Sigma_2)$ of $\Sigma$ fulfilling the condition of Item 2 of Lemma~\ref{lemma-disjunction-removal}, and $\Sigma'_1, \Sigma'_2, \Sigma'_3$ built from $\Sigma_1,\Sigma_2$ as described in the body of the paper. 

\subsection*{Proof of Proposition~\ref{proposition-r1-disjunction-removal}}
\begin{definition}[World Structure]
 Let $\I$ be an interpretation of $\Sigma'_1$. The world structure of $\I$ is the graph $(V,E)$ where:
 \begin{itemize}
  \item $V = \{w \mid \PrDone(w) \in \I\}$
  \item $E = \{(w,w')\ \mid\ w \not = w' \wedge \exists \PrP \exists \Vx\  \PrIns{\PrP}(\Vx,w,w') \in \I\}$
 \end{itemize}
\end{definition}

\begin{lemma}
\label{lemma-finite-structure}
For every $\Database$ over $\mathcal{S}_{\mathsf{in}}(\Sigma)$, 
the world structure of a model $\I$ of $\Database$ and $\Sigma'_1$ generated by a chase sequence is a finite tree, such that:
 \begin{itemize}
  \item its root is the unique element $w$ such that $\PrEmpty(w) \in \I$;
  \item if $(w,w') \in E$, then $\mathsf{world}(w') = \mathsf{world}(w) \cup \{\PrP(\Vx)\}$ for some atom $\PrP(\Vx)$, and $\PrP(\Vx) \not \in \mathsf{world}(w)$;
  \item if $\PrInit(w) \in \I$ and $\PrP(\Va) \in \Database \setminus \mathsf{world}(w)$, there exists $w'$ such that $\PrIns{\PrP}(\Va,w,w') \in \I$.
 \end{itemize}
\end{lemma}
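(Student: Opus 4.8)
The plan is to establish all four properties at once by analysing, for a fixed chase result $\I$, which rule can contribute each kind of atom, and then arguing by induction on the order in which fresh world-nulls are introduced. First I would note that a fresh world is created only by \eqref{world-init}, \eqref{collect-p}, \eqref{complete-r1}, or \eqref{complete-r2}: rule \eqref{world-init} has an empty body and, being the sole source of $\PrEmpty$, fires exactly once, producing a unique world $r$ with $\PrEmpty(r)$; each of the other three rules introduces a fresh world $w'$ \emph{together with} a single ``edge atom'' $\PrIns{\PrP}(\vec{x},w,w')$ recording the creating world $w$. Since the only remaining rule with an $\PrIns$ atom in its head is \eqref{propagate-p}, whose conclusions are all of the \emph{reflexive} shape $\PrIns{\cdot}(\cdot,w_2,w_2)$, every non-reflexive atom $\PrIns{\PrP}(\vec{x},w,w')$ with $w\neq w'$ is exactly such an edge atom born with $w'$. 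Hence each non-root world has a unique parent, which already pins down the edge set $E$ and gives in-degree one.

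Next I would characterise $\PrSubs$ and $\mathsf{world}$ jointly. The relation $\PrSubs$ is seeded reflexively at creation and extended only by \eqref{propagate-p}: from $\PrIns{\PrP}(\vec{x},w_0,w_1)$ and $\PrSubs(w_1,w_2)$ it derives $\PrSubs(w_0,w_2)$. Reading edge atoms as parent links, a straightforward induction on derivation length shows that $\PrSubs(v,w)$ holds precisely when $v$ lies on the path from $r$ to $w$ (including $w$). In lockstep, the same applications of \eqref{propagate-p} copy into $w$ a reflexive atom $\PrIns{\PrQ}(\vec{y},w,w)$ for every atom labelling an edge on that path; this yields that $\mathsf{world}(w)$ equals the set of edge labels on the root-to-$w$ path, and therefore $\mathsf{world}(w')=\mathsf{world}(w)\cup\{\PrP(\vec{x})\}$ for every edge $(w,w')$ labelled $\PrP(\vec{x})$, which is the second bullet up to the claim $\PrP(\vec{x})\notin\mathsf{world}(w)$.

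That strictness, together with finiteness, rests on the applicability check of the standard chase (Definition~\ref{definition_chase}). I would observe that every non-root world carries its reflexive fact $\PrSubs(w,w)$, so any instance of \eqref{collect-p}, \eqref{complete-r1}, or \eqref{complete-r2} that would add an atom already in $\mathsf{world}(w)$ is inapplicable, since its existential head is already witnessed by $w'=w$; for the root this cannot occur because $\mathsf{world}(r)=\emptyset$. Thus each edge strictly enlarges the world. Since \eqref{collect-p} only ever adds database atoms and \eqref{complete-r1}/\eqref{complete-r2} only ever add heads of the Datalog rules $\Sigma_1$ -- neither introducing fresh domain elements -- every $\mathsf{world}(w)$ is a subset of one fixed finite set $U$ of atoms over the active domain. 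Strict growth then bounds every path by $|U|$, while the standard-chase check lets each world spawn each possible child atom at most once, bounding the out-degree; a finitely branching tree of bounded depth is finite, giving both the finiteness claim and chase termination.

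It remains to assemble the tree. The unique $\PrEmpty$ world $r$ is the root; a world is a vertex iff $\PrDone$ holds of it, and $\PrDone$ is produced only by \eqref{rule-done} from $\PrEmpty(r)\wedge\PrSubs(r,w)$ (together with $\PrDone(r)$ from \eqref{world-init}), so $V$ consists exactly of $r$ and the worlds $w$ with $\PrSubs(r,w)$, i.e.\ by the ancestor characterisation the worlds reachable from $r$. Connectedness, unique parents, and acyclicity (guaranteed by strict growth of $\mathsf{world}$ along edges) make $(V,E)$ a tree rooted at $r$. The third bullet is then immediate from \eqref{collect-p}: for $\PrInit(w)$ and $\PrP(\Va)\in\Database\setminus\mathsf{world}(w)$, the standard chase must, by fairness, apply \eqref{collect-p} and create the required $w'$. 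I expect the principal difficulty to be the intertwined induction of the second paragraph: the fact that $\PrSubs$ is exactly the ancestor relation and the fact that $\mathsf{world}$ grows by one atom per edge are mutually dependent -- the copying performed by \eqref{propagate-p} is what realises the monotonicity, while the $\PrSubs$-chain is what triggers that copying -- so the invariant has to be stated and advanced simultaneously, with the applicability check invoked at precisely the right moment to block both redundant atoms and non-termination.
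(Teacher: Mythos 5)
Your proposal is correct and follows essentially the same route as the paper's own proof: unique parenthood of worlds from the existentially quantified world argument in \eqref{collect-p}--\eqref{complete-r2}, the $\PrSubs$-chain analysis of \eqref{propagate-p} with its lockstep copying of edge labels, the standard-chase applicability check (with $w'=w$ as head witness) to get strict growth of $\mathsf{world}$ along edges, and the bounded-depth/bounded-branching argument for finiteness. The only notable presentational difference is that the paper crystallizes the timing issue you flag in your closing paragraph as an explicit ``frozen world'' observation---for $w\neq w_\emptyset$, $\PrDone(w)$ requires $\PrSubs(w_\emptyset,w)$, whose derivation via \eqref{propagate-p} forces every ancestor's edge label to have already been copied into $\mathsf{world}(w)$, so $\mathsf{world}(w)$ can never change in any extension of the chase sequence---which is precisely the bridge needed to transfer the creation-time applicability check to the final model $\I$.
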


\begin{proof}
Let us first notice that Rule~(\ref{world-init}) is applicable exactly once, as its frontier is empty. The null it creates, say $w_\emptyset$, is then the only one such that $\PrEmpty(w_\emptyset) \in \I$. Moreover, there is no edge incoming in $w_\emptyset$ in the world structure, as the only rules Rules\ (\ref{collect-p}), (\ref{complete-r1}) and (\ref{complete-r2}) that introduce an atom of the shape $\PrIns{\PrP}(\Vx,w,w')$ with $w \not = w'$ are such that $w'$ is existentially quantified. For the same reason, note that there is at most one edge incoming in any vertex in the world structure. 

Next notice that no new null is created unless Rules\ (\ref{collect-p}), (\ref{complete-r1}) or (\ref{complete-r2}) are applied, and they all must be applied by mapping $w$ to an element of the world structure, as $\PrDone(w)$ is in the body of each of these rules. Hence the world structure is connected, and is thus a tree of root $w_\emptyset$.

Now let us notice that in any interpretation $\I$ generated by a sequence of rule applications, if $\PrDone(w) \in \I$, then $\mathsf{world}_\I(w) = \mathsf{world}_{\I'}(w)$ for any $\I'$ obtained by extending the sequence of rule applications that generated $\I$. Indeed, the only way to derive $\PrDone(w)$ is to apply Rule~(\ref{rule-done}), which requires the atoms $\PrEmpty(w_\emptyset)$ and $\PrSubs(w_\emptyset,w)$, as $w_\emptyset$ is the only element for which $\PrEmpty$ holds. The only way to create an atom of the shape $\PrSubs(w_0,w_2)$ with $w_0 \not = w_2$ is by applying an instantiation of Rule~(\ref{propagate-p}), which can be applied if $w_0$ is a parent of $w_1$ in the world structure, and $w_1$ is such that $\PrSubs(w_1,w_2)$ holds. Hence $\PrDone(w)$ is entailed only when Rule~(\ref{propagate-p}) as been applied by mapping $w_0$ to every ancestor of $w$ and $w_2$ to $w$, which has a effect to ensure that $\mathsf{world}_\I(w)$ contains all the atoms possibly present in $\mathsf{world}_{\I'}(w)$. 

If $(w,w') \in E$, then $w'$ has been created by the application of Rule (\ref{collect-p}), (\ref{complete-r1}) or (\ref{complete-r2}). In all cases, $\PrDone(w)$ must hold at the time of the rule application. If it is by Rule (\ref{collect-p}) and substitution $\sigma$, then $\PrP(\sigma(\Vx))$ cannot belong to $\mathsf{world}(w)$, as this would make Rule~\eqref{collect-p} not applicable with $\sigma$. 
By the sequence of rule applications described before, $\mathsf{world}(w') = \mathsf{world}(w) \cup \{\PrP(\sigma(\Vx))\}$. Rules~(\ref{complete-r1}) and (\ref{complete-r2}) are treated in a similar way.

Both the depth and the arity of the world structure is thus upper bounded by the number of atoms using a predicate appearing in $\Sigma_1 \cup \Sigma_2$ and nulls from $\Database$. 

For the last item, let us notice that if $\PrInit(w) \in \I$, and $\PrP(\Va) \in \Database \setminus \mathsf{world}(w)$, then Rule~(\ref{collect-p}) instantiated for $\PrP$ is applicable by mapping $\Vx$ to $\Va$. 
Applying that rule will create a fresh null $w'$ which will fulfill the conditions stated in the lemma. 
%
%
\end{proof}

\begin{lemma}
\label{lemma-disjunctive-to-hat}
Let $\Database$ be a database over $\mathcal{S}_{\mathsf{in}}(\Sigma)$,
and let $\I$ be a model of $\Database$ and $\Sigma_1'$ generated by a chase sequence. 
Let $v\in\EI{\Nulls}{\I}$ such that $\PrDone(v)\in\I$. 
If a disjunctive rule $\Rule$ is applicable to $\mathsf{world}(v)$ by $\sigma$, then there exist in $\I$ two elements $w_1$ and $w_2$ such that $\PrIns{\PrP_1}(\Vu_1,v,w_1)$ and $\PrIns{\PrP_2}(\Vu_2,v,w_2)$ such that $\PrP_1(\Vu_1)$ and $\PrP_2(\Vu_2)$ are the two atoms created by the application of $\Tuple{\Rule,\sigma}$.
\end{lemma}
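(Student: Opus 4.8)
The plan is to unfold the two definitions involved and then appeal to the fact that a chase result is a genuine model of $\Sigma_1'$. Write the disjunctive rule under consideration as $\Rule : \Body \to \PrP_1(\vec{x_1}) \vee \PrP_2(\vec{x_2})$, so that, by the instantiation policy of Figure~\ref{figure:rule-set-disjunction-removal-r1}, the set $\Sigma_1'$ contains precisely the completion rules \eqref{complete-r1} and \eqref{complete-r2} built from this $\Rule$. The only consequence of applicability of $\Rule$ to $\mathsf{world}(v)$ by $\sigma$ that we shall need is the inclusion $\sigma(\Body) \subseteq \mathsf{world}(v)$; by the defining equation $\mathsf{world}(v) = \{\PrP(\Va) \mid \PrIns{\PrP}(\Va,v,v) \in \I\}$, this inclusion is \emph{equivalent} to having $\PrIns{\PrP}(\sigma(\vec{x}),v,v) \in \I$ for every body atom $\PrP(\vec{x}) \in \Body$.

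First I would verify that the body of \eqref{complete-r1}, namely $\PrDone(w) \wedge \bigwedge_{\PrP(\vec{x}) \in \Body} \PrIns{\PrP}(\vec{x},w,w)$, is satisfied in $\I$ under the substitution sending the world variable $w$ to $v$ and every frontier variable of $\Rule$ to its $\sigma$-image. Its first conjunct holds because $\PrDone(v) \in \I$ by hypothesis, and the remaining conjuncts hold exactly by the translation established above; the same substitution satisfies the body of \eqref{complete-r2}. Since $\I \models \Sigma_1'$ and both rules have an existentially quantified head, there must exist witnesses $w_1$ and $w_2$ with $\PrIns{\PrP_1}(\sigma(\vec{x_1}),v,w_1) \in \I$ and $\PrIns{\PrP_2}(\sigma(\vec{x_2}),v,w_2) \in \I$ (we discard the accompanying $\PrSubs$ atoms, which are not needed here). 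Setting $\Vu_1 = \sigma(\vec{x_1})$ and $\Vu_2 = \sigma(\vec{x_2})$ — which are precisely the tuples of the two head atoms $\PrP_1(\sigma(\vec{x_1}))$ and $\PrP_2(\sigma(\vec{x_2}))$ produced by applying $\Tuple{\Rule,\sigma}$ — yields the claim.

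There is no deep obstacle here: the content of the lemma lies entirely in recognising that \eqref{complete-r1}--\eqref{complete-r2} were designed to fire exactly on a $\PrDone$-world whose encoded content $\mathsf{world}(v)$ matches the body of $\Rule$, mirroring, for disjunctive rules, the role played by \eqref{collect-p} for database facts in the last item of Lemma~\ref{lemma-finite-structure}. The only points requiring care are the purely notational identification of the frontier variables of $\Rule$ with the schematic variables of the rule templates, and the observation that we rely solely on $\I$ being a \emph{model} of $\Sigma_1'$ — we never use the second, non-satisfaction half of applicability, nor that $w_1 \ne w_2$, nor that the successor worlds are fresh, and in particular the hypothesis ``generated by a chase sequence'' is not actually exploited in this argument.
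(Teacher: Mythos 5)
Your proof is correct and follows essentially the same route as the paper's own argument: extend $\sigma$ by mapping the world variable to $v$, observe via $\PrDone(v)\in\I$ and the definition of $\mathsf{world}(v)$ that the bodies of rules \eqref{complete-r1} and \eqref{complete-r2} are matched in $\I$, and invoke $\I\models\Sigma_1'$ to obtain the witnesses $w_1$ and $w_2$. Your closing remarks (that only modelhood, not the chase-sequence hypothesis or the non-satisfaction half of applicability, is used) are accurate and consistent with the paper's proof, which likewise uses nothing more.
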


\begin{proof}
Let us first notice that $\sigma$, extended by mapping $w$ to $v$, is a substitution that maps the bodies of Rules\ (\ref{complete-r1}) and (\ref{complete-r2}) to $\I$, by definition of $\mathsf{world}(v)$ and since $\PrDone(v)\in\I$. As $\I$ is a model of $\Sigma'_1$, the head of these two rules should be also mappable in $\I$, hence there exists $w_1$ and $w_2$ fulfilling the conditions of the lemma. 
\end{proof}

Note that Proposition \ref{proposition-r1-disjunction-removal} is a direct consequence of Lemmas\ \ref{lemma-finite-structure} and \ref{lemma-disjunctive-to-hat}.

\subsection*{Proof of Lemma~\ref{lemma-disjunction-removal}}
%

\begin{definition}[Saturated World of a Null in an Interpretation]
Let $w$ be a null in an interpretation $\I$ such that $\PrSubs(w,w) \in \I$. The saturated world of $w$ is defined by $\mathsf{saturatedWorld}_\I(w) = \{\PrP(\Vx) \mid \PrHatted{\PrP}(\Vx,w) \in \I\}$.
\end{definition}

\begin{lemma}
\label{lemma-context-to-existential}
 If there is a chase tree w.r.t. $(\Sigma'_1 \cup \Sigma'_2)$ and $\Database$ whose unique leaf $\I$ is such that there exists $w^*\in\EI{\Nulls}{\I}$ s.t. $\PrGoal \in \mathsf{saturatedWorld}_\I(w^*)$ and $\PrDone(w^*)\in\I$, then there exists a chase tree w.r.t $\Sigma_2$ and $\mathsf{world}(w^*)$ whose unique leaf contains $\PrGoal$.
\end{lemma}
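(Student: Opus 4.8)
The plan is to show that the restriction of $\I$ to the hatted atoms attached to $w^*$ is, after stripping the hats and the world argument $w^*$, exactly the result of a fair, terminating standard chase of $\Sigma_2$ on $\mathsf{world}(w^*)$; the hypothesis $\PrGoal\in\mathsf{saturatedWorld}_\I(w^*)$ then immediately places $\PrGoal$ in that leaf. Since $\Sigma'_1\cup\Sigma'_2$ contains no disjunction, the chase tree producing $\I$ is a single finite sequence $\Database=\mathcal{J}_0\subseteq\cdots\subseteq\mathcal{J}_m=\I$, and I will work with this sequence throughout. Define the projection $\pi$ on $w^*$-atoms by $\pi(\PrHatted{\PrP}(\vec{x},w^*))=\PrP(\vec{x})$.

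First I would pin down the seed. By Lemma~\ref{lemma-finite-structure}, since $\PrDone(w^*)\in\I$ we have $\PrSubs(w^*,w^*)\in\I$ and $\mathsf{world}(w^*)$ is already complete at the first step $t_0$ at which $\PrDone(w^*)$ holds; moreover each fact $\PrIns{\PrP}(\vec x,w^*,w^*)$ witnessing $\PrP(\vec x)\in\mathsf{world}(w^*)$ is produced by \eqref{propagate-p} together with $\PrHatted{\PrP}(\vec x,w^*)$. Hence at step $t_0$ the set of $w^*$-hatted atoms is exactly $\{\PrHatted{\PrP}(\vec x,w^*)\mid \PrP(\vec x)\in\mathsf{world}(w^*)\}$, whose $\pi$-image is $\mathsf{world}(w^*)$. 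Because \eqref{rule-sigma-2} requires $\PrDone(w^*)$ in its body and \eqref{propagate-p} derives no further $w^*$-hatted atoms once $\mathsf{world}(w^*)$ is fixed, every later $w^*$-hatted atom is created by an instance of \eqref{rule-sigma-2} with $w\mapsto w^*$. The key structural observation is that worlds do not interfere: since \eqref{rule-sigma-2} carries a single frontier variable $w$ identically through body and head, an application with $w\mapsto w^*$ reads and writes only $w^*$-hatted atoms, and the fresh existentials it introduces occur only in $w^*$-hatted atoms. Thus the family of $w^*$-hatted atoms evolves independently of all other worlds.

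Next I would build the $\Sigma_2$ chase by mirroring. Let $t_0<i_1<\cdots<i_p\le m$ list the steps at which \eqref{rule-sigma-2} fires with $w\mapsto w^*$, say using the $\Sigma_2$-rule $\rho_k:\Body\to\exists\vec z.\Head$ under substitution $\tau_k$. Set $\mathcal{C}_0=\mathsf{world}(w^*)$ and obtain $\mathcal{C}_k$ from $\mathcal{C}_{k-1}$ by applying $\rho_k$ under $\pi\circ\tau_k$, sending $\vec z$ to fresh nulls matching those introduced at step $i_k$. I would then prove by induction on $k$ the invariant that $\pi$ is a bijection between the $w^*$-hatted atoms present just before step $i_k$ and $\mathcal{C}_{k-1}$. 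The invariant makes the body of $\rho_k$ match in $\mathcal{C}_{k-1}$ (applicability), and — this is the crucial point for the standard chase — it transfers the head-not-yet-satisfied side condition: \eqref{rule-sigma-2} fired at $i_k$ precisely because $\exists\vec z.\bigwedge\PrHatted{\PrQ}(\vec y,w^*)$ was unsatisfied, which under the bijection is exactly $\exists\vec z.\bigwedge\PrQ(\vec y)$ being unsatisfied in $\mathcal{C}_{k-1}$, so $\Tuple{\rho_k,\pi\circ\tau_k}$ is genuinely applicable. Fairness transfers dually: by fairness of the $\Sigma'$ chase every instance of \eqref{rule-sigma-2} with $w\mapsto w^*$ is eventually inapplicable, i.e.\ its head becomes satisfied, which via the bijection says every $\Sigma_2$ rule/substitution eventually has its head satisfied along the $\mathcal{C}$-sequence, matching condition~(3) of Definition~\ref{definition_chase}. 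Since the $\Sigma'$ chase is finite, $p$ is finite, so this is a terminating chase tree for $\Sigma_2$ and $\mathsf{world}(w^*)$ with unique leaf $\mathcal{C}_p=\pi(\{\,w^*\text{-hatted atoms of }\I\,\})=\mathsf{saturatedWorld}_\I(w^*)$. As $\PrGoal\in\mathsf{saturatedWorld}_\I(w^*)$, we conclude $\PrGoal\in\mathcal{C}_p$, as required.

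The main obstacle I anticipate is establishing and maintaining the bijection invariant cleanly, since it is what lets the restricted-chase applicability condition pass back and forth between the two chases. One must verify that the fresh nulls introduced by \eqref{rule-sigma-2} for $w^*$ are genuinely private to $w^*$ (the independence-of-worlds claim above), so that no spurious identifications or extra body matches arise, and that no $w^*$-hatted atom can be created by any route other than the seed and \eqref{rule-sigma-2}. Handling fairness in both directions at once — applicability forward and exhaustiveness backward — is the part that will need the most care.
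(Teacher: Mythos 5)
Your proof is correct and follows essentially the same route as the paper's: both proceed by induction on the applications of rule \eqref{rule-sigma-2} with $w\mapsto w^*$ (using Lemma~\ref{lemma-finite-structure} to fix the seed $\mathsf{world}(w^*)$ once $\PrDone(w^*)$ holds), mirroring each such application as a chase step of $\Sigma_2$ over $\mathsf{world}(w^*)$. The only notable difference is that you maintain an exact atom-level bijection where the paper maintains a homomorphism $\psi$ with $\psi(\mathsf{saturatedWorld}_{\J}(w^*))\subseteq S$; your stronger invariant is what makes the restricted-chase applicability condition (head not yet satisfied) and the fairness of the constructed $\Sigma_2$-chase transfer explicitly, two points the paper's proof passes over in silence.
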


\begin{proof}

Let us consider a chase tree $T'$ w.r.t. $(\Sigma'_1 \cup \Sigma'_2)$ and $\Database$, and let $w^*$ be such that $\PrDone(w^*)$ is in the result $\I$ of $T'$.
We show that there exists a chase tree $T$ w.r.t $\Sigma_2$ and $\mathsf{world}(w^*)$ such that $\mathsf{saturatedWorld}_\I(w^*)$ is mapped to the label of the (unique) leaf of $T$ by a homomorphism $\psi$. 

For every label $\J$ of a node in $T'$ such that $\PrDone(w^*)\in\J$, we show that there exists a prefix of a chase tree w.r.t. $\Sigma_2$ and $\mathsf{world}(w^*)$ such that $\mathsf{saturatedWorld}_\J(w^*)$ is mapped to the label of the leaf of that prefix by a homomorphism $\psi$. 
We do this by induction on the number of rule applications $\Tuple{\Rule,\Hom}$ with $\Rule$ of the form Rule~(\ref{rule-sigma-2}) 
and $\Hom(w) = w^*$ that are done  in $T'$ before the node labelled by $\J$.

\begin{itemize}
 \item If no rule application of the shape $\Tuple{\Rule,\Hom}$ with $\Hom(w) = w^*$ is performed in $T'$ before $\J$, then $\mathsf{saturatedWorld}_\J(w^*) = \mathsf{world}(w^*)$. We thus define $\psi$ as the identity, and the chase tree prefix consisting of the root labelled by $\mathsf{world}(w^*)$ fulfills the property. 
 
 \item Assume that the property is true for every $\J$ obtained in $T'$ after at most $i-1$ applications of the shape $\Tuple{\Rule',\Hom}$ with $\Rule'$ of the form Rule~(\ref{rule-sigma-2}) 
 and $\Hom(w) = w^*$. 
Let $\J_{i}$ be obtained in $T'$ after $i$ such applications and let $\Tuple{\Rule',\Hom}$ be the last one, with $\Rule'$ being the instantiation of Rule~(\ref{rule-sigma-2}) for $\Rule \in \Sigma_2$. Let $\J_{i-1}$ be the label of the node in $T'$ on which $\Tuple{\Rule',\Hom}$ is applied. By induction assumption, there exist $\psi$ and a prefix of a chase tree for $\Sigma_2$ and $\Database$, resulting in $S_{i-1}$, such that $\psi(\mathsf{saturatedWorld}_{\J_{i-1}}(w^*)) \subseteq S_{i-1}$. 
 Hence $\Tuple{\Rule,\psi\circ\Hom_{\mid \Terms(B_\Rule)}}$, where $B_\Rule$ is the body of $\Rule$, is applicable on $S_{i-1}$, creating a new leaf, labeled by $S_i$, and we extend $\psi$ by mapping every null created by the instantiation of $z_i\in\Vz$ in the application of $\Tuple{\Rule',\Hom}$ to the null created by the instantiation of $z_i\in\Vz$ in the application of $\Tuple{\Rule,\psi\circ\Hom_{\mid \Terms(B_\Rule)}}$. We obtain $\psi(\mathsf{saturatedWorld}_{\J_{i}}(w^*)) \subseteq S_i$. \qedhere
 \end{itemize}
\end{proof}

\begin{lemma}
\label{lemma-hat-reordering}
 Any restricted chase w.r.t $\Sigma'_1 \cup \Sigma'_2 \cup \Sigma'_3$ can be transformed into an equivalent restricted chase in which rules from ${\Sigma'_1}$ are applied before  rules from ${\Sigma'_2}$, which are applied before rules from ${\Sigma'_3}$.
\end{lemma}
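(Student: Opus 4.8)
The plan is to establish a \emph{stratification} of the combined rule set: I will show that the predicate dependency graph respects the order $\Sigma'_1 \prec \Sigma'_2 \prec \Sigma'_3$, and then use a standard adjacent-swap argument to permute any given restricted chase sequence into the layered form. Concretely, I first record which predicates each layer reads and writes. The bodies of $\Sigma'_1$ (rules \eqref{world-init}--\eqref{rule-done}) mention only the world-bookkeeping predicates $\PrInit$, $\PrDone$, $\PrEmpty$, $\PrSubs$, $\PrIns{\PrP}$ together with the bare input predicates $\PrP$, and crucially never a hatted predicate $\PrHatted{\PrP}$; its heads additionally produce $\PrHatted{\PrP}$. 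The single rule \eqref{rule-sigma-2} of $\Sigma'_2$ reads $\PrDone$ and $\PrHatted{\PrP}$ and writes only $\PrHatted{\PrQ}$. Finally $\Sigma'_3$ (rules \eqref{goal-rule-disj}--\eqref{final-rule-disj}) reads $\PrHatted{\PrGoal}$, $\PrIns{\PrP}$, $\PrInit$, and $\PrWorldGoal$, and writes only $\PrWorldGoal$ and the bare goal $\PrGoal$. Comparing writes against reads shows that no head predicate of $\Sigma'_2$ occurs in a body of $\Sigma'_1$, that no head predicate of $\Sigma'_3$ occurs in a body of $\Sigma'_2$, and that the only head predicate of a later layer that can appear in a body of $\Sigma'_1$ is the bare $\PrGoal$ produced by \eqref{final-rule-disj}; setting this one exception aside (handled below), the dependency is forward-only as required.

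Given this, the mechanism is a bubble-sort over rule applications: whenever a higher-layer application is immediately followed, in the chase sequence, by a lower-layer one, I swap the two. I would argue that the swap preserves both restricted-chase applicability and the produced interpretation. For the lower-layer rule to remain applicable after being moved earlier, its trigger must already be present at the earlier point -- guaranteed because the intervening higher-layer step adds no predicate read by the lower layer -- and its head must still be unsatisfied there. This last point is the place requiring care, because $\Sigma'_1$ and $\Sigma'_2$ both emit $\PrHatted{\cdot}$ atoms: however, the only $\Sigma'_1$ rule generating $\PrHatted{\PrP}$ is \eqref{propagate-p}, whose head also contains $\PrIns{\PrP}(\vec{x},w_2,w_2)$ and $\PrSubs(w_0,w_2)$, so a $\Sigma'_2$ step (which emits $\PrHatted{\PrQ}$ alone) can never fully satisfy a pending \eqref{propagate-p} head, and symmetrically a $\Sigma'_1$ step cannot satisfy the combined hatted head of \eqref{rule-sigma-2}. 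For the higher-layer rule being moved later, monotonicity keeps its body present; if its restricted-chase head happens to have become satisfied in the meantime it is simply dropped, which does not change the final interpretation. Each swap strictly decreases the number of out-of-order adjacent pairs, so finitely many swaps yield the desired layered chase.

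The main obstacle is the single potential backward edge in the dependency graph: the bare predicate $\PrGoal$, written by \eqref{final-rule-disj} in $\Sigma'_3$, could in principle re-enter a body of $\Sigma'_1$ through the \eqref{collect-p} instance for $\PrGoal$, threatening strict stratification. I would discharge this by observing that \eqref{collect-p} only ever needs to inject the \emph{database} facts into worlds, and $\PrGoal$ is a derived, non-input predicate that never occurs in any $\D$ over $\mathcal{S}_{\mathsf{in}}(\Sigma)$; the sole occurrence of bare $\PrGoal$ is thus the terminal conclusion of \eqref{final-rule-disj}. Since the lemma only requires an \emph{equivalent} chase -- one yielding the same result, and in particular the same entailment status of $\PrGoal$ -- I can either restrict \eqref{collect-p} to input predicates, or note that once $\PrGoal$ is derived the additional worlds it could spawn are irrelevant to whether $\PrGoal$ is entailed and may be omitted. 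With this caveat settled, the stratification is strict and the swapping argument delivers the layered chase.
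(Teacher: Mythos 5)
Your proposal follows the same overall route as the paper's proof: a predicate-level read/write analysis of the three layers, followed by a reordering argument. One point genuinely improves on the paper. You notice that the bare predicate $\PrGoal$, produced by \eqref{final-rule-disj} in $\Sigma'_3$, re-enters a body of $\Sigma'_1$: since \eqref{collect-p} is instantiated for \emph{all} predicates of $\Sigma_1\cup\Sigma_2$, and $\PrGoal$ occurs in heads of $\Sigma_2$, there is an instance of \eqref{collect-p} whose body contains $\PrGoal$. The paper's proof simply asserts that head predicates of $\Sigma'_3$ do not appear in the lower layers, which is literally false for exactly this instance; your fix (restrict \eqref{collect-p} to input predicates, or observe that worlds spawned after $\PrGoal$ is derived are irrelevant to the entailment of $\PrGoal$) is the right repair and matches the intended reading.

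However, there is a genuine gap at the one place where the layers really interact, and it is the step on which the paper's proof spends its effort. In an adjacent swap of $\alpha\in\Sigma'_2$ (moved later) with $\beta\in\Sigma'_1$ (moved earlier), the problematic direction is not the one you analyse in detail: for $\beta$ moved earlier, ``head still unsatisfied'' is automatic by monotonicity, since it was unsatisfied in the larger interpretation. What needs an argument is that $\beta$'s output cannot satisfy the head of the pending application $\alpha$ of \eqref{rule-sigma-2}, for otherwise the restricted chase would refuse to apply $\alpha$ after the swap. Your justification---``symmetrically a $\Sigma'_1$ step cannot satisfy the combined hatted head of \eqref{rule-sigma-2}''---does not hold up: there is no symmetry, because the head of \eqref{rule-sigma-2} consists \emph{only} of hatted atoms, and \eqref{propagate-p} does emit a hatted atom; e.g., for a Datalog rule of $\Sigma_2$ with a single head atom, an application of \eqref{propagate-p} producing $\PrHatted{\PrQ}(\vec{y},w)$ would complete that head. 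The correct argument, which the paper gives, is the world invariant: \eqref{rule-sigma-2} is applicable only at a world $w^*$ with $\PrDone(w^*)$, and once $\PrDone(w^*)$ has been derived, no further application of a $\Sigma'_1$ rule can add any atom of the form $\PrHatted{\PrP}(\vec{x},w^*)$, since all \eqref{propagate-p} applications into $w^*$ precede the \eqref{rule-done} step deriving $\PrDone(w^*)$. Your fallback---if the moved rule's head has become satisfied, ``it is simply dropped, which does not change the final interpretation''---is also unsound as stated: a dropped application of \eqref{rule-sigma-2} suppresses the fresh nulls it introduces via $\vec{z}$, on which later applications in the sequence may depend, so the remainder of the sequence need not stay valid. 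Replacing the symmetry claim by the $\PrDone$ invariant makes the fallback unnecessary and closes the gap.
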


\begin{proof}
Notice that head predicates of $\Sigma'_3$ do not appear in do not appear in ${\Sigma'_j}$ with $j<3$. Hence rules of $\Sigma'_3$ can be applied last. Note that head predicates of $\Sigma'_2$ do not appear as body predicates of $\Sigma'_1$. Moroever, a rule of $\Sigma'_2$ is applicable by mapping $w$ to $w^*$ only if $\PrDone(w^*)$ as been derived. Further applications of rules of $\Sigma'_1$ cannot add an atom of the shape $\PrHatted{\PrP}(\vec{x},w^*)$, and thus applying them before do not prevent the application of a rule of $\Sigma'_2$.
\end{proof}

%

\begin{proposition}
\label{proposition-completeness-disj-removal}
For every database $\Database$ over $\mathcal{S}_{\mathsf{in}}(\Sigma)$, 
 if $\Sigma_1 \cup \Sigma_2, \Database \models \PrGoal$, then $\Sigma'_1 \cup \Sigma'_2 \cup \Sigma'_3, \Database \models \PrGoal$.
\end{proposition}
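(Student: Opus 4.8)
The plan is to run a fair chase of $\Sigma'=\Sigma'_1\cup\Sigma'_2\cup\Sigma'_3$ on $\Database$ and show that it derives $\PrGoal$; since the chase result is a universal model (Facts~\ref{fact_chase-univ-model-sets} and~\ref{fact_uniModSetCorrect}), this gives $\Sigma',\Database\models\PrGoal$. By Lemma~\ref{lemma-hat-reordering} I may assume the rules of $\Sigma'_1$ are applied first, producing the finite world structure of Proposition~\ref{proposition-r1-disjunction-removal} inside the chase result $\I$, followed by $\Sigma'_2$ and then $\Sigma'_3$. Starting from the empty world $w_\emptyset$ and repeatedly invoking the third item of Lemma~\ref{lemma-finite-structure}, I obtain a done world $w_0$ with $\mathsf{world}(w_0)=\Database$ and $\PrInit(w_0)\in\I$. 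By rule~\eqref{final-rule-disj} it then suffices to derive $\PrWorldGoal(w_0)$.

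The heart of the argument is the claim, proved for every done world $w$: \emph{if $\Sigma_1\cup\Sigma_2,\mathsf{world}(w)\models\PrGoal$, then $\PrWorldGoal(w)\in\I$}. I prove this by well-founded induction on $\mu(w)=N-|\mathsf{world}(w)|$, where $N$ is the finite number of ground atoms over the active domain of $\Database$ and the predicates of $\Sigma_1\cup\Sigma_2$; this measure works because the disjunctive Datalog chase of $\Sigma_1$ introduces no new nulls and strictly enlarges the world at each disjunctive step. First suppose the (non-disjunctive) $\Sigma_2$-chase of $\mathsf{world}(w)$ already derives $\PrGoal$. A routine induction on that chase — the converse of Lemma~\ref{lemma-context-to-existential} — shows that $\Sigma'_2$ derives the corresponding hatted atoms, using that \eqref{propagate-p} seeds $\PrHatted{\PrP}(\vec{x},w)$ for every $\PrP(\vec{x})\in\mathsf{world}(w)$ and that \eqref{rule-sigma-2} faithfully mirrors each rule of $\Sigma_2$ on the hatted predicates indexed by the fixed $w$. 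Hence $\PrHatted{\PrGoal}(w)\in\I$, and \eqref{goal-rule-disj} yields $\PrWorldGoal(w)$.

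Otherwise the $\Sigma_2$-chase of $\mathsf{world}(w)$ does not derive $\PrGoal$, and I argue that $\mathsf{world}(w)$ cannot be $\Sigma_1$-saturated: by the split property, the $\Sigma_2$-chase of a $\Sigma_1$-saturated world is a model of all of $\Sigma$, which by $\Sigma,\mathsf{world}(w)\models\PrGoal$ would have to contain $\PrGoal$, contradicting this case. Thus some disjunctive rule $\Rule\in\Sigma_1$ is applicable to $\mathsf{world}(w)$, creating $\PrP_1(\Va)$ or $\PrP_2(\Vb)$ with $\PrP_1(\Va),\PrP_2(\Vb)\notin\mathsf{world}(w)$. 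By the second item of Proposition~\ref{proposition-r1-disjunction-removal} there are done worlds $w_1,w_2$ with $\PrIns{\PrP_1}(\Va,w,w_1),\PrIns{\PrP_2}(\Vb,w,w_2)\in\I$, $\mathsf{world}(w_1)=\mathsf{world}(w)\cup\{\PrP_1(\Va)\}$ and $\mathsf{world}(w_2)=\mathsf{world}(w)\cup\{\PrP_2(\Vb)\}$. Taking a $\Sigma_1$-chase from $\mathsf{world}(w)$ that applies $\Rule$ at the root, the split property identifies the leaves below either child with the universal model set for $\Sigma$ and $\mathsf{world}(w_i)$; since all leaves contain $\PrGoal$ (Fact~\ref{fact_uniModSetCorrect} applied to $\Sigma,\mathsf{world}(w)\models\PrGoal$), we get $\Sigma,\mathsf{world}(w_i)\models\PrGoal$ for $i=1,2$. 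As $\mathsf{world}(w_i)\supsetneq\mathsf{world}(w)$ we have $\mu(w_i)<\mu(w)$, so the induction hypothesis gives $\PrWorldGoal(w_1),\PrWorldGoal(w_2)\in\I$; since the body of $\Rule$ holds in $w$ the atoms $\PrIns{\PrP}(\vec{x},w,w)$ are present, and \eqref{propagate-ri} fires to derive $\PrWorldGoal(w)$.

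Applying the claim to $w_0$, using the hypothesis $\Sigma_1\cup\Sigma_2,\Database\models\PrGoal$ together with $\mathsf{world}(w_0)=\Database$, yields $\PrWorldGoal(w_0)\in\I$, whence \eqref{final-rule-disj} derives $\PrGoal$ and the proposition follows. I expect the main obstacle to be the inductive step, which must bridge two different notions: the AND/OR decomposition of disjunctive entailment — which rests on the split property (to guarantee that $\Sigma_1$-saturated leaves yield genuine models of $\Sigma$) and on termination of the disjunctive Datalog chase (for well-foundedness of $\mu$) — and the concrete world structure of Proposition~\ref{proposition-r1-disjunction-removal}, which must be shown to contain exactly the two successor worlds needed to trigger~\eqref{propagate-ri}. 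Establishing the converse of Lemma~\ref{lemma-context-to-existential} for the base case is routine but still has to be stated and checked carefully.
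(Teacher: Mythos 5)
Your overall architecture (chase $\Sigma'$, locate a world $w_0$ with $\mathsf{world}(w_0)=\Database$, prove by induction on world size that every entailing world receives $\PrWorldGoal$, finish with rule \eqref{final-rule-disj}) is genuinely different from the paper's proof, which instead fixes one finite chase tree $T$ for $\Tuple{\Sigma,\Database}$ in which all $\Sigma_1$ rules are applied before all $\Sigma_2$ rules (this is exactly what Definition~\ref{def_split} provides) and then simulates that particular tree inside an explicit $\Sigma'$-derivation, propagating $\PrWorldGoal$ bottom-up along its nodes. Your base case (transferring a $\PrGoal$-deriving $\Sigma_2$-chase of $\mathsf{world}(w)$ into the hatted atoms at $w$) and your use of Proposition~\ref{proposition-r1-disjunction-removal} together with rule \eqref{propagate-ri} are sound.

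The genuine gap is the sentence ``by the split property, the $\Sigma_2$-chase of a $\Sigma_1$-saturated world is a model of all of $\Sigma$.'' Definition~\ref{def_split} does not say this: it says that for every database $\D$ over $\mathcal{S}_{\mathsf{in}}(\Sigma)$ and every chase result $\MS$ of $\Tuple{\Sigma,\D}$ \emph{there exists some} $\Sigma_1$-chase result $\MS_1$ whose element-wise $\Sigma_2$-chases yield $\MS$. That licenses your conclusion only for the leaves of that one $\Sigma_1$-chase tree, whereas your induction needs it for \emph{every} $\Sigma_1$-saturated set reachable from $\Database$ by applicable $\Sigma_1$-firings --- sets which, moreover, are not databases over $\mathcal{S}_{\mathsf{in}}(\Sigma)$, the only objects Definition~\ref{def_split} speaks about. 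Because different application orders of disjunctive rules in a restricted chase produce different saturated leaf sets, $\mathsf{world}(w)$ may be a leaf of a ``wrong'' $\Sigma_1$-chase tree, and then nothing in the definition prevents its $\Sigma_2$-chase from re-triggering $\Sigma_1$ rules; in that situation $\PrGoal$ may be entailed only through those re-triggered disjunctions, so neither your base case ($\Sigma_2,\mathsf{world}(w)\models\PrGoal$) nor your inductive case (a $\Sigma_1$-rule applicable to $\mathsf{world}(w)$ itself) applies, and the induction halts. Deriving your strengthened, per-world form of the split property from Definition~\ref{def_split} is a substantive claim that you would have to prove, and it is far from immediate. The natural repair is to run your induction not over arbitrary applicable rules but along the specific $\Sigma_1$-chase tree supplied by Definition~\ref{def_split}: its leaves' $\Sigma_2$-chases are elements of a $\Sigma$-chase result, hence models of $\Sigma$ that contain $\PrGoal$ by Fact~\ref{fact_uniModSetCorrect}, so your base case applies exactly there, and inner nodes are handled by your inductive step via Proposition~\ref{proposition-r1-disjunction-removal}. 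With that change your argument becomes, in essence, the paper's proof recast as an induction over the tree rather than an explicit construction of a $\Sigma'$-derivation.
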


\begin{proof}
 Let us consider a finite a chase tree $T$ proving that $\Sigma_1 \cup \Sigma_2, \Database \models \PrGoal$ and such that rules of $\Sigma_1$ are applied before rules of $\Sigma_2$ (this is possible by definition of $\Sigma_1$ and $\Sigma_2$ that form a split of $\Sigma$). We build a 
 chase tree proving that $\Sigma'_1 \cup \Sigma'_2 \cup \Sigma'_3, \Database \models \PrGoal$ (we actually describe a sequence of rule applications, as $\Sigma'_1 \cup \Sigma'_2 \cup \Sigma'_3$ contains only deterministic rules).
 \begin{itemize}
  \item Apply Rule\ (\ref{world-init}), creating a fresh null $w_\emptyset$ and facts $\PrInit(w_\emptyset)$, $\PrDone(w_\emptyset)$ and $ \PrEmpty(w_\emptyset)$.
  \item Until creation of $w_\Database$, perform the following:
  \begin{itemize}
   \item let $w_{D'}$ be the last introduced null;
   \item let $\PrP(\Va)\in\Database\setminus D'$:
  \begin{itemize}
   \item apply Rule\ (\ref{collect-p}), instantiated for $\PrP$, by mapping $\Vx$ to $\Va$ and $w$ to $w_{D'}$, creating a fresh null $w_{D'\cup\{\PrP(\Va)\}}$;
   \item apply Rule\ (\ref{propagate-p}) as many times as necessary until  {$\PrSubs(w_\emptyset,w_{D'\cup\{\PrP(\Va)\}})$ is created, which makes Rule~(\ref{rule-done}) become applicable by mapping $w'$ to $w_{D'\cup\{\PrP(\Va)\}}$. Then apply Rule~(\ref{rule-done}) and add} $\PrDone(w_{D'\cup\{\PrP(\Va)\}})$.
     \end{itemize}
  \end{itemize}
  \item At the end of the above loop, note that we have created null $w_{\Database}$  and added in particular $\PrDone(w_{\Database})$ as well as $\PrIns{\PrP}(\Va,w_{\Database},w_{\Database})$ and for every $\PrP(\Va)\in\Database$. 
 \item Let $r$ be the root node of $T$. Note that $r$ is labelled by $\Database$. In what follows, we will introduce some nulls $w_{\mathcal{N}}$ such that $\mathcal{N}$ is the label of some node $n$ in $T$ and it will remain true that $\PrDone(w_{\mathcal{N}})$ as well as $\PrIns{\PrP}(\Va,w_{\mathcal{N}},w_{\mathcal{N}})$ 
 for every $\PrP(\Va)\in\mathcal{N}$ are introduced between the creation of $w_{\mathcal{N}}$ and the creation of the next $w_{\mathcal{N}'}$.
 \item Define $\psi(r) = w_{\Database}$ and let $N = \{r\}$ and $L = \emptyset$. Until $N = \emptyset$, perform the following.
 \begin{itemize}
 \item Consider the case where $n$ is a node of $T$ labelled by $\mathcal{N}$ such that $\psi(n) = w_\mathcal{N}$ and $n$ has $2$ children $c_1$ and $c_2$ labelled by $\mathcal{N}_1$ and $\mathcal{N}_2$ respectively, which correspond to the application of $\Tuple{\Rule,\sigma}$ with $\Rule \in \Sigma_1$ of the form $\bigwedge_{\PrP(\vec{x}) \in \Body} \PrP(\vec{x}) \rightarrow \PrP_1(\vec{x_1}) \vee \PrP_2(\vec{x_2})$. 
We have $\mathcal{N}_1=\mathcal{N}\cup\{\PrP_1(\Va_1)\}$ and $\mathcal{N}_2=\mathcal{N}\cup\{\PrP_2(\Va_2)\}$ where $\Va_1=\sigma(\Vx_1)$ and $\Va_2=\sigma(\Vx_2)$.
The instantiations of \ (\ref{complete-r1}) and (\ref{complete-r2}) corresponding to $\Rule$ are applicable through $\sigma'$, which maps $w$ to $w_\mathcal{N}$ and $\Vx$ to $\sigma(\Vx)$.

\begin{itemize}
\item Apply Rule\ (\ref{complete-r1}), introducing a fresh null that we call $w_{\mathcal{N}_1}$ and adding in particular $\PrIns{\PrP_1}(\Va_1,w_{\mathcal{N}_1},w_{\mathcal{N}_1})$. 
Then apply Rule\ (\ref{propagate-p}) as many times as necessary, then Rule~(\ref{rule-done}) so that $\PrDone(w_{\mathcal{N}_1})$ is created. Define $\psi(c_1) = w_{\mathcal{N}_1}$. 
 
\item If Rule\ (\ref{complete-r2}) is not applicable, it implies that $\PrP_1=\PrP_2$ and $\Va_1=\Va_2$, so that 
$\mathcal{N}_1=\mathcal{N}_2$. Set $N = (N \cup \{c_1\}) \setminus \{n\}$, and set $\psi(c_2) = w_{\mathcal{N}_1}$. 

\item Otherwise apply Rule\ (\ref{complete-r2}), introducing a fresh null that we call $w_{\mathcal{N}_2}$ and adding in particular $\PrIns{\PrP_2}(\Va_2,w_{\mathcal{N}_2},w_{\mathcal{N}_2})$.
Then apply Rule\ (\ref{propagate-p}) as many times as necessary, then Rule~(\ref{rule-done}) so that $\PrDone(w_{\mathcal{N}_2})$ is created. 
 Define $\psi(c_2) = w_{\mathcal{N}_2}$ and set $N = (N \cup \{c_1,c_2\}) \setminus \{n\}$. 
 \end{itemize}
 \item If $n$ is a node of $T$ having only one child, set $L = L \cup \{n\}$ and $N = N \setminus \{n\}$. 
 \end{itemize}
 \end{itemize}
At this point, for every $n\in L$, $\psi(n)=w_\mathcal{N}$ and if $\I$ denotes the current set of facts built by our derivation, $\mathsf{saturatedWorld}_\I(\psi(n))=\mathcal{N}$ by construction.

 For all $n \in L$, the next rule applied on $n$ is a rule from $\Sigma_2$, hence 
 no more rule of $\Sigma_1$ are applied on a descendant of $n$ in $T$ (by assumption on $T$). All rule applications below $n$ are thus deterministic, and we ``copy'' that derivation. Let $\psi_n$ be the identity mapping from the label $\mathcal{N}$ of $n$ to $\mathsf{saturatedWorld}_\I(\psi(n))$. By construction, $\psi_n$ is a homomorphism. While we extend $\I$ by applying new rules, we will extend $\psi_n$ into a homomorphism from the label of any descendant $n'$ of $n$ in $T$ to $\mathsf{saturatedWorld}_\I(\psi(n))$. 
 
 \begin{itemize}
 \item For $n'=n$, this is already done.
\item Assume that we have built a derivation such that $\psi_n$ is a homomorphism from the label $\mathcal{N'}$ of some descendant $n'$ of $n$ in $T$ to $\mathsf{saturatedWorld}_\I(\psi(n))$. Let $n''$ be the child of $n'$ in $T$, labelled by $\mathcal{N''}$, and let $\Tuple{\Rule, \sigma}$ be the rule application creating $n''$, where $\Rule\in\Sigma_2$ . 
  Then $\Tuple{\Rule, \psi_n\circ\sigma}$ can be applied to $\mathsf{saturatedWorld}_\I(\psi(n))$, and thus Rule\ \eqref{rule-sigma-2}, 
  instantiated for $\Rule$ is applicable in $\I$ by extending $\psi_n\circ\sigma$ by mapping $w$ to $\psi(n)$. We then define $\psi(n'') = \psi(n)$, and extend $\psi_n$ by mapping each null created in the label of $n''$ by a variable $z_i$ to the null created by variable $z_i$ in the application of Rule \eqref{rule-sigma-2}.
 \end{itemize}

 Let $T'$ be the tree structure having $\psi(r) = w_{\Database}$ as root and where $w_p$ is parent of $w_c$ if $w_c$ has been created by an application of Rule\ (\ref{complete-r1}) or Rule\ (\ref{complete-r2}), 
 that mapped $w$ to $w_p$. 
 Note that all the $w_n$ in this tree structure are exactly the nulls of $\I$ such that there exists a node $n$ in $T$ with $\psi(n)=w_n$, $c_1, c_2$ are children of $n$ in $T$ iff $\psi(c_1), \psi(c_2)$ are children of $\psi(n)$ in $T'$.
  
  Let $\ell$ be a leaf of $T$. By assumption, $\PrGoal$ belongs to the label of $\ell$. Let $n$ be the unique node in $L$ such that $\ell$ is a descendant of $n$. 
 Since $\psi_n$ is a homomorphism from the label of $\ell$ to $\mathsf{saturatedWorld}_\I(\psi(n))$, it holds that $\mathsf{saturatedWorld}_\I(\psi(n))$ contains $\PrGoal$. By definition of $\mathsf{saturatedWorld}_\I$,  it follows that $\PrHatted{\PrGoal}(\psi(n))\in \I$. 
 Since any leaf $w_n$ of $T'$ has an antecedent $n$ by $\psi$ which has a descendant leaf $\ell$ in $T$, then $\PrHatted{\PrGoal}(w)$ holds for each leaf $w$ of $T'$.  
 Extend the derivation as follows:
 \begin{itemize}
  \item Apply Rule\ (\ref{goal-rule-disj}) to each $w$ leaf of $T'$.
  \item If $\PrWorldGoal(w)$ has already been derived for the children $w$ of a node $\psi(n)$ in $T'$, let $\Rule$ be the rule that has been applied by $\sigma$ on $n$ to create its children. Apply Rule (\ref{propagate-ri}) instantiated for $\Rule$ by mapping $w_1$ and $w_2$ to $\psi(c_1)$ and $\psi(c_2)$, where $c_1$ and $c_2$ are the children of $n$ corresponding to the application of $\Rule$. $\PrWorldGoal(\psi(n))$ is thus derived.
  \item By induction on the depth, one derives $\PrWorldGoal(w_\Database)$. As $\PrInit(w_\Database)$ holds, one can apply Rule\ (\ref{final-rule-disj}) and obtain $\PrGoal$.
 \end{itemize}
 We did not apply every possible rule: hence, to build a chase tree, we must ensure fairness (point 3. of the definition of chase tree), by applying all possible remaining rule applications, which concludes the proof.
\end{proof}

To ease that proof, we will actually replace Rule \eqref{propagate-ri} by the following rule:
\begin{align}
\begin{split}
\label{propagate-ri-with-goal}
\PrIns{\PrP_1}(\vec{x}_1,w,w_1)\wedge \PrWorldGoal(w_1) \wedge{} \\
\PrIns{\PrP_2}(\vec{x}_2,w,w_2)\wedge \PrWorldGoal(w_2) \wedge{}\\
 \textstyle\bigwedge_{\PrP(\vec{x}) \in \Body  }  \PrIns{\PrP}(\vec{x},w,w) \\
 &\rightarrow \PrWorldGoal_{\Rule}(w,w_1,w_2)\wedge \PrWorldGoal(w)
\end{split}
\end{align}\\
which has an additional ternary fresh predicate describing the rule $\Rule \in \Sigma_1$ (and the worlds generated by the corresponding rule application) allowing to derive $\PrWorldGoal(w)$. Note that since $\PrWorldGoal_{\Rule}$ not being appearing anywhere else, all the properties shown so far are still valid.

\begin{proposition}
\label{proposition-soundness-disj-removal}
For every database $\Database$ over $\mathcal{S}_{\mathsf{in}}(\Sigma)$, 
if $\Sigma'_1 \cup \Sigma'_2 \cup \Sigma'_3, \Database \models \PrGoal$ then $\Sigma_1 \cup \Sigma_2, \Database \models \PrGoal$.
\end{proposition}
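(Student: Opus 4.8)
The plan is to exploit that $\Sigma'=\Sigma'_1\cup\Sigma'_2\cup\Sigma'_3$ consists solely of (non-disjunctive) existential rules, so that $\Tuple{\Sigma',\Database}$ has a unique chase result $\I$ (up to isomorphism) and $\Sigma',\Database\models\PrGoal$ is equivalent to $\PrGoal\in\I$. Assuming $\PrGoal\in\I$, the only rule of $\Sigma'$ whose head is the nullary $\PrGoal$ is \eqref{final-rule-disj}, so there is an \emph{initial} world $w_0$ with $\PrInit(w_0),\PrWorldGoal(w_0)\in\I$. Throughout I would use the reformulated rule \eqref{propagate-ri-with-goal} in place of \eqref{propagate-ri}, whose extra atom $\PrWorldGoal_{\Rule}(w,w_1,w_2)$ records exactly which disjunctive rule $\Rule\in\Sigma_1$ and which pair of successor worlds justified $\PrWorldGoal(w)$. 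The whole proposition then reduces to the claim that \emph{for every $w$ with $\PrWorldGoal(w)\in\I$ one has $\Sigma_1\cup\Sigma_2,\mathsf{world}(w)\models\PrGoal$}, to be proved by induction on the chase stage at which $\PrWorldGoal(w)$ is first derived (a well-founded order, and all such $w$ satisfy $\PrDone(w)$, so $\mathsf{world}(w)$ and $\mathsf{saturatedWorld}(w)$ are well defined).

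For the induction there are exactly two ways $\PrWorldGoal(w)$ can arise. If it comes from \eqref{goal-rule-disj}, then $\PrHatted{\PrGoal}(w)\in\I$, i.e.\ $\PrGoal\in\mathsf{saturatedWorld}_\I(w)$; after first invoking Lemma~\ref{lemma-hat-reordering} to regard $\I$ as extending a chase result of $\Sigma'_1\cup\Sigma'_2$ (the rules of $\Sigma'_3$ alter neither $\mathsf{world}$ nor $\mathsf{saturatedWorld}$), Lemma~\ref{lemma-context-to-existential} produces a $\Sigma_2$-chase of $\mathsf{world}(w)$ whose leaf contains $\PrGoal$, whence $\Sigma_2,\mathsf{world}(w)\models\PrGoal$ and a fortiori $\Sigma_1\cup\Sigma_2,\mathsf{world}(w)\models\PrGoal$. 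If instead it comes from \eqref{propagate-ri-with-goal}, then $\PrWorldGoal_{\Rule}(w,w_1,w_2)\in\I$ for some rule $\Rule\colon\Body\to\PrP_1(\vec{x}_1)\vee\PrP_2(\vec{x}_2)$ in $\Sigma_1$, with $\PrIns{\PrP_1}(\Va_1,w,w_1),\PrIns{\PrP_2}(\Va_2,w,w_2)\in\I$, with $\PrWorldGoal(w_1),\PrWorldGoal(w_2)\in\I$ derived at strictly earlier stages, and with $\Body$ satisfied in $\mathsf{world}(w)$ under the substitution $\sigma$ sending $\vec{x}_i$ to $\Va_i$ (this is the conjunct $\bigwedge_{\PrP(\vec{x})\in\Body}\PrIns{\PrP}(\vec{x},w,w)$). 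Proposition~\ref{proposition-r1-disjunction-removal} gives $\mathsf{world}(w_i)=\mathsf{world}(w)\cup\{\PrP_i(\Va_i)\}$, and the induction hypothesis yields $\Sigma_1\cup\Sigma_2,\mathsf{world}(w_i)\models\PrGoal$ for $i\in\{1,2\}$.

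To close the inductive step I would argue model-theoretically: let $\M$ be any model of $\Sigma_1\cup\Sigma_2$ and $\mathsf{world}(w)$, witnessed by a homomorphism $h\colon\mathsf{world}(w)\to\M$. Then $h\circ\sigma$ maps $\Body$ into $\M$, so since $\M\models\Rule$ at least one of $\PrP_1(h(\Va_1)),\PrP_2(h(\Va_2))$ lies in $\M$; in the first case $h$ is also a homomorphism $\mathsf{world}(w_1)\to\M$ and in the second a homomorphism $\mathsf{world}(w_2)\to\M$, so $\M$ is a model of the corresponding $\mathsf{world}(w_i)$ and, by the induction hypothesis, $\PrGoal\in\M$. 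As $\M$ was arbitrary this proves $\Sigma_1\cup\Sigma_2,\mathsf{world}(w)\models\PrGoal$, completing the induction. Applying the claim to $w_0$ gives $\Sigma_1\cup\Sigma_2,\mathsf{world}(w_0)\models\PrGoal$; since $w_0$ is initial, $\mathsf{world}(w_0)\subseteq\Database$ (initial worlds accumulate only database facts, as no rule of $\Sigma'$ derives atoms over the original predicates), so every model of $\Sigma_1\cup\Sigma_2$ and $\Database$ is a model of $\mathsf{world}(w_0)$ and hence contains $\PrGoal$, i.e.\ $\Sigma_1\cup\Sigma_2,\Database\models\PrGoal$. I expect the main obstacle to be the bookkeeping that isolates, from the purely deterministic derivation of $\PrWorldGoal(w)$, the precise disjunctive rule and the two genuine child worlds feeding the induction — this is exactly what motivates replacing \eqref{propagate-ri} by \eqref{propagate-ri-with-goal} — together with checking that the child worlds differ from $w$ (so the induction is well-founded) and that $\Rule$ is genuinely applicable to $\mathsf{world}(w)$, both of which follow from the identity $\mathsf{world}(w_i)=\mathsf{world}(w)\cup\{\PrP_i(\Va_i)\}$.
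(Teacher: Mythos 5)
Your proof is correct, and it reaches the conclusion by a genuinely different gluing argument than the paper's, although both rest on the same structural ingredients: the world tree built by $\Sigma'_1$ (Lemma~\ref{lemma-finite-structure} / Proposition~\ref{proposition-r1-disjunction-removal}), the bookkeeping atom $\PrWorldGoal_{\Rule}(w,w_1,w_2)$ of the reformulated rule \eqref{propagate-ri-with-goal}, and Lemma~\ref{lemma-context-to-existential} for worlds carrying $\PrHatted{\PrGoal}$. The paper works top-down: starting from the world on which \eqref{final-rule-disj} fires, it runs a worklist over the world tree and \emph{constructs a chase tree for the disjunctive set} $\Sigma_1\cup\Sigma_2$ over $\Database'=\mathsf{world}(w_{\Database'})$, with children labelled $\mathsf{world}(w_1)$ and $\mathsf{world}(w_2)$; it then expands every leaf by the $\Sigma_2$-chase of Lemma~\ref{lemma-context-to-existential} so that $\PrGoal$ occurs in all leaves, and concludes via the universal-model-set property of chase results. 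Termination of that construction needs finiteness of the world structure, and legality of the construction needs the applicability check (``none of the two head atoms is already in $\mathsf{world}(w)$'') plus a final fairness completion. You instead prove, bottom-up by induction on the stage at which $\PrWorldGoal(w)$ is first derived, that $\Sigma_1\cup\Sigma_2,\mathsf{world}(w)\models\PrGoal$, closing the inductive step purely model-theoretically: any model of $\mathsf{world}(w)$ satisfies some head disjunct of $\Rule$, hence is a model of $\mathsf{world}(w_1)$ or of $\mathsf{world}(w_2)$, and the induction hypothesis applies. This buys you automatic well-foundedness (from the derivation order rather than from finiteness of the world tree) and frees you from verifying that the constructed object is a legitimate chase tree; you are also more explicit than the paper about invoking Lemma~\ref{lemma-hat-reordering} before using Lemma~\ref{lemma-context-to-existential}, which is stated only for chases over $\Sigma'_1\cup\Sigma'_2$. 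What the paper's route buys in exchange is uniformity with the chase machinery used throughout, and an explicit construction that its termination proof (Lemma~\ref{lemma-termination}) later refers back to.

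Two small points, neither fatal. First, your parenthetical justification that $\mathsf{world}(w_0)\subseteq\Database$ because ``no rule of $\Sigma'$ derives atoms over the original predicates'' is not literally true: rule \eqref{final-rule-disj} derives $\PrGoal$, which is an original predicate, so once $\PrGoal$ has been derived, rule \eqref{collect-p} instantiated for $\PrGoal$ can insert $\PrGoal$ into initial worlds created afterwards. This is easily repaired by taking $w_0$ from the \emph{first} application of \eqref{final-rule-disj}: every $\PrIns{\PrGoal}$ edge presupposes that $\PrGoal$ is already present, so no such edge can lie on the (earlier-created) initial chain leading to $w_0$, whence $\PrGoal\notin\mathsf{world}(w_0)$. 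Note, though, that the paper's own proof asserts $\Database'\subseteq\Database$ with no more justification than you give. Second, your base case needs $\PrDone(w)$ in order to apply Lemma~\ref{lemma-context-to-existential}; this does hold for every world null in the final chase result, by fairness of the chase and rule \eqref{rule-done}, but you assert it rather than argue it --- again matching, not falling below, the paper's level of detail.
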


\begin{proof}
 Let $\I$ be a model of $\Database$ and $\Sigma'_1 \cup \Sigma'_2 \cup \Sigma'_3$ obtained through a restricted chase sequence such that $\PrGoal \in \I$. We build a chase tree $T$ for $\Sigma_1 \cup \Sigma_2$ and $\Database'$ for some $\Database' \subseteq \Database$ such that $\PrGoal$ 
  is in the label of every leaf of $T$.
 
 Let $w_{\Database'}\in\EI{\Nulls}{\I}$ be such that Rule\ (\ref{final-rule-disj}) is applied on it, with $\mathsf{world}(w_{\Database'}) = \Database'$. 
 Set $W = \{w_{\Database'}\}$, and perform the following until $W = \emptyset$.
 \begin{itemize}
  \item Let $w \in W$. If $\PrHatted{\PrGoal}(w) \in \I$, set $W = W \setminus \{w\}$.
  \item Otherwise, there must be some $\PrWorldGoal_{\Rule}(w,w_1,w_2) \in \I$, as the only way to derive $\PrWorldGoal(w)$ if $\PrHatted{\PrGoal}(w)$ does not hold is to apply some instantiation for some $\Rule$ of Rule\ (\ref{propagate-ri-with-goal}) by some mapping $\sigma$. Moreover, there must be such an atom for which $w_1$ and $w_2$ are different from $w$. Consider $\Tuple{\Rule,\sigma_{\Terms(\Rule)}}$: it is applicable, as $\sigma$ maps the body of $\Rule$ to $\mathsf{world}(w)$, and none of the two atoms in the head are present in $\mathsf{world}(w)$ (otherwise, $w$ would be equal to $w_1$ or to $w_2$). 
Performing this rule application  adds two children 
to the node of $T$ labelled by  $\mathsf{world}(w)$, 
of respective labels $\mathsf{world}(w_1)$ and $\mathsf{world}(w_2)$. 
Set $W = (W \setminus w) \cup \{w_1,w_2\}$.
 \end{itemize}
As the tree structure of $\I$ is finite, and that at each step, either the size of $W$ is decreasing, or the null $w$ considered is replaced by two nulls $w_1$ and $w_2$ such that $\mathsf{world}(w_1)$ and $\mathsf{world}(w_2)$ are labels of nodes of strictly greater depth than $\mathsf{world}(w)$ in that tree, the above process terminates. Moreover, for each leaf of the considered prefix of $T$, it holds that its label and $\Sigma_2$ entail $\PrGoal$, by Lemma\ \ref{lemma-context-to-existential} and the fact that each leaf is labelled by some 
$\mathsf{world}(w)$ such that $\PrHatted{\PrGoal}(w) \in \I$ (so that $\PrGoal \in \mathsf{saturatedWorld}_\I(w)$). Hence, one can expand $T$ in such a way that all its leaves are labeled by sets containing $\PrGoal$, which concludes the proof.
\end{proof}

\begin{lemma}
 \label{lemma-termination}
For every database $\Database$ over $\mathcal{S}_{\mathsf{in}}(\Sigma)$, $\Tuple{\Sigma'_1\cup\Sigma'_2\cup\Sigma'_3,\Database}$ is chase-terminating. 
\end{lemma}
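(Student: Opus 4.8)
The plan is to bound, uniformly over all fair chase sequences, the set of nulls that $\Sigma'_1\cup\Sigma'_2\cup\Sigma'_3$ can ever introduce on input $\Database$. Since all of these rules are non-disjunctive and range over a fixed finite set of predicates, the chase on $\Database$ is a single sequence of applications, and a finite null domain immediately yields a finite set of derivable atoms and hence termination (each restricted-chase step strictly enlarges the set of atoms present). The argument I would give proceeds layer by layer, exploiting that the three rule sets interact essentially in one direction. First I would observe that the bodies and heads of the null-introducing rules \eqref{world-init}, \eqref{collect-p}, \eqref{complete-r1}, and \eqref{complete-r2} of $\Sigma'_1$ use only predicates produced by $\Sigma'_1$ itself (or occurring in $\Database$), so whether such a rule is applicable---including the restricted-chase check that its head is not yet satisfied---never depends on atoms created by $\Sigma'_2$ or $\Sigma'_3$. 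Consequently, the sub-sequence of $\Sigma'_1$-applications inside any chase of $\Sigma'_1\cup\Sigma'_2\cup\Sigma'_3$ is itself a legal $\Sigma'_1$-chase, which is finite by Proposition~\ref{proposition-r1-disjunction-removal}; by Lemma~\ref{lemma-finite-structure} its world structure is a finite tree, so only finitely many world-nulls $w$ are ever created, and each associated set $\mathsf{world}(w)$ is finite and fixed once $\PrDone(w)$ holds.

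Next I would treat $\Sigma'_2$. Its only null-introducing rule is \eqref{rule-sigma-2}, whose body contains $\PrDone(w)$, so it can fire for a world $w$ only after $\PrDone(w)$ has been derived; at that moment, by Lemma~\ref{lemma-finite-structure}, the atoms $\PrHatted{\PrP}(\vec{x},w)$ for all $\PrP(\vec{x})\in\mathsf{world}(w)$ are already present. Since every instance of \eqref{rule-sigma-2} keeps the world argument $w$ fixed and introduces only fresh existential nulls associated with $w$, the applications of $\Sigma'_2$ for a given $w$ form exactly a restricted chase of $\Sigma_2$ over the fixed, complete database $\mathsf{world}(w)$, and the computations for distinct worlds are independent. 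By the split hypothesis fixed for this appendix (item~(2) of Lemma~\ref{lemma-disjunction-removal}), $\Sigma_2$ is chase-terminating on every such database, so each per-world simulation introduces finitely many nulls; as there are finitely many worlds, $\Sigma'_2$ contributes only finitely many nulls overall. This reduction to independent per-world $\Sigma_2$-chases is the conceptual content of Lemma~\ref{lemma-hat-reordering}, which I would cite to justify that analysing the layers in the order $\Sigma'_1,\Sigma'_2,\Sigma'_3$ is without loss of generality. Finally, the rules of $\Sigma'_3$ (including the fresh-predicate variant \eqref{propagate-ri-with-goal}) contain no existential quantifiers and hence introduce no further nulls, so the total null domain generated by any chase of $\Sigma'_1\cup\Sigma'_2\cup\Sigma'_3$ on $\Database$ is finite, giving termination as explained above.

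The step I expect to require most care is the $\Sigma'_2$ layer, precisely because chase termination is an all-strategy property: I must rule out that an interleaving of the layers turns a per-world simulation into a restricted chase on a still-growing input, for which termination need not be inherited from termination on the completed database. The key point to nail down is that \eqref{rule-sigma-2} is gated by $\PrDone(w)$ and that $\PrDone(w)$ is derived only after $\mathsf{world}(w)$ and all of its $\PrHatted{}$-atoms are complete, so that each simulation genuinely runs on the fixed database $\mathsf{world}(w)$. A secondary point I would verify carefully is that each $\mathsf{world}(w)$ qualifies as a database over $\mathcal{S}_{\mathsf{in}}(\Sigma_2)$ (or at least one to which the split hypothesis applies), which follows from the defining properties of the split $(\Sigma_1,\Sigma_2)$.
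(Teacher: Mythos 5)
Your proof is correct and follows essentially the same route as the paper's: stratify the chase into the layers $\Sigma'_1,\Sigma'_2,\Sigma'_3$ (Lemma~\ref{lemma-hat-reordering}), bound $\Sigma'_1$ by the finite world structure (Lemma~\ref{lemma-finite-structure} / Proposition~\ref{proposition-r1-disjunction-removal}), reduce the $\Sigma'_2$-applications for each world $w$ to a restricted chase of $\Sigma_2$ on the fixed database $\mathsf{world}(w)$ and invoke the split hypothesis, and note that $\Sigma'_3$ introduces no nulls. The only cosmetic difference is that the paper phrases the $\Sigma'_2$ step as a pigeonhole argument (infinitely many applications would concentrate on a single world null $w^*$, yielding via the construction of Proposition~\ref{proposition-soundness-disj-removal} a non-terminating $\Sigma_2$-chase), which is precisely the per-world correspondence you establish directly.
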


\begin{proof}
 By Lemma~\ref{lemma-hat-reordering}, chase sequences can be reordered without changing the number of rule applications such that rules from $\Sigma'_1$ are applied first, then rules from $\Sigma'_2$, then rules from $\Sigma'_3$. Lemma~\ref{lemma-finite-structure} implies that rules from $\Sigma'_1$ cannot be triggered indefinitely. As the tree structure is finite, if there are infinitely many rule applications performed with a rule from $\Sigma'_2$, there must be one world null $w^*$ such that there are infinitely many rules of $\Sigma'_2$ that are performed mapping $w$ to $w^*$. By the construction of Proposition~\ref{proposition-soundness-disj-removal}, it implies that there exists a database over $\mathcal{S}_{\mathsf{in}}(\Sigma_2)$ for which $\Sigma_2$ does not terminate, which is against our assumptions on $\Sigma_2$. Finally, $\Sigma'_3$ does not contain any existentially quantified variable, hence only finitely many rules can be triggered.
\end{proof}

Lemma~\ref{lemma-disjunction-removal} is a direct consequence of Propositions~\ref{proposition-completeness-disj-removal}, \ref{proposition-soundness-disj-removal} and Lemma \ref{lemma-termination}.

\section{Proofs of Section 6}

\noindent\textbf{Lemma \ref{lemma:expressivity-limits-2}.} 
\textit{The set $\mathcal{M}$ is not enumerable up to equivalence.}
\begin{proof}[Proof by Contradiction]~
\begin{enumerate}
\item Suppose for a contradiction that the lemma does not hold.
Then, there is an enumerator \ETM that outputs a sequence of TMs that includes $\mathcal{M}$ up to equivalence.
That is, the enumerator \ETM outputs an infinite sequence $\TM_1, \TM_2, \ldots$ of TMs such that:
\begin{itemize}
\item For each $i \geq 1$, we have that $\TM_i \in \mathcal{M}$.
\item For each $\TM \in \mathcal{M}$, there is some $i \geq 1$ such that $\TM$ and $\TM_i$ are equivalent.
\end{itemize}
Two TM $\TM$ and $\TM'$ are equivalent if, for each word $w$, we have that $\TM$ accepts $w$ iff $\TM'$ accepts $w$.
\item Consider the sequence $p_1, p_2, \ldots$ of natural numbers such that $p_1 = 1$ and $p_i$ is the smallest prime with $p_i > p_{i-1}$ for each $i \geq 2$.
Note that this sequence is infinite by Euclid's theorem.
\item By (2): there is an infinite sequence $\Database_1, \Database_2, \ldots$ of databases such that $\D_i = \{\PrEdge(u_1, u_2)$, $\ldots, \PrEdge(u_{p_{i+1}}, u_1)\}$ for each $i \geq 1$.
\item Consider the TM $\TM_d$ that, on input $w$, performs the computation:
\begin{enumerate}
\item Check if $w$ corresponds to a database $\D$ that only contains facts defined over \PrEdge.
If this is not the case, then \emph{reject.}
\item If $\D$ can be homomorphically embedded into a database such as $\{\PrEdge(u_1, u_2), \ldots, \PrEdge(u_{k-1}, u_k)\}$ where $k$ is smaller or equal than the number of nulls in \D, then \emph{reject}.
\item If $\PrEdge(u, u) \in \D$ for some null $u$, then \emph{accept}.
\item If there is some $i \geq 1$ such that \FirstItem there are less or the same number of nulls in $\D_i$ than in $\D$, \SecondItem $\TM_i$ accepts some serialisation that corresponds to $\D_i$, and \ThirdItem there is a homomorphism $h : \D \to \D_i$; then \emph{reject}.
Otherwise, \emph{accept}.
\end{enumerate}
\item By (4): the TM $\TM_d$ halts on all inputs.
Note the following remarks about instruction (4.d):
\begin{itemize}
\item The TM $\TM_i$ accepts some serialisation that corresponds to $\D_i$ if $\TM_i$ accepts any such serialisation.
Therefore, for each instantiation of $i$, the TM $\TM_d$ only needs to check one (arbitrarily chosen) serialisation of $\D_i$ when executing (4.d).
\item When executing (4.d), the TM $\TM_d$ only needs to check a finite amount of instantiations of $i \geq 1$ before rejecting with confidence due to condition (4.d.i).
\end{itemize}
\item After this enumeration, we prove that $\TM_d$ does satisfy \ThirdItem in Definition~\ref{definition:tm-class} by contradiction.
\item By (4.a), (5), and (6): the TM $\TM_d$ is in $\mathcal{M}$.
\item By (4): $\TM_d$ diagonalises over $\TM_1, \TM_2, \ldots$ and $\D_1, \D_2, \ldots$
That is, for any given $i \geq 1$, $\TM_d$ accepts $\D_i$ iff $\TM_i$ rejects $\D_i$.
\begin{itemize}
\item If $\TM_i$ accepts $\D_i$, then $\TM_d$ rejects $\D_i$ in (4.d).
\item If $\TM_i$ rejects $\D_i$, then $\TM_d$ accepts $\D_i$ in (4.d).
Note that, if we assume that $\TM_d$ rejects $\D_i$ in this case we obtain a contradiction; namely, we can conclude that $p$ evenly divides $q$ for some prime numbers $p, q > 1$ with $p \neq q$.
\end{itemize}
\item Contradiction by (1), (7), and (8): the enumerator \ETM is incomplete since it fails to print out a TM that is equivalent to $\TM_d$, which is in $\mathcal{M}$.
\end{enumerate}

Suppose for a contradiction that there are some words $w$ and $v$ that correspond to some databases \D and \DA such that $\D$ and $\DA$ only contain facts defined over \PrEdge, $\TM_d$ accepts \D, $\TM_d$ rejects \DA, and there is some homomorphism $\Hom : \Database \to \DA$.
We conduct a case-by-case analysis to show that this assumption results in a contradiction:
\begin{itemize}
\item Assume that $\D$ is accepted due to instruction (4.c).
Then, $\PrEdge(u, u) \in \D$ for some null $u$ and hence, $\PrEdge(t, t) \in \DA$ for some null $t$ since $\Hom : \D \to \DA$.
Therefore, $\TM_d$ accepts $\DA$ due to (4.c) ($\lightning$).
\item Assume that $\D$ is accepted due instruction (4.d).
Two possible cases arise:
\begin{itemize}
\item By definition \DA is a database that only contains facts defined over the predicate \PrEdge.
Hence, \DA cannot be rejected due to (4.a).
\item If \DA is rejected due to (4.b), then $\DA$ can be hom-embedded into a path over $\PrEdge$.
Therefore, $\D$ can also be hom-embedded into the same path since $\Hom : \D \to \DA$ and $\TM_d$ rejects $\D$ due to (4.b) ($\lightning$).
\item Assume that \DA is rejected due to instruction (4.d).
\begin{enumerate}
\item There is some $i \geq 1$ and a homomorphism $\HomA$ such that \FirstItem the number of nulls in $i$ is smaller than the number of nulls in \DA, \SecondItem $\HomA : \DA \to \D_i$, and \ThirdItem the TM $\TM_i$ accepts the database $\D_i$.
\item By (1): $\HomA \circ \Hom : \D \to \D_i$.
\item If we assume that $p_{i+1}$ is smaller or equal than the number of nulls in \D, then $\TM_d$ rejects $\D$ due to (4.d) ($\lightning$).
Therefore, we conclude that $p_{i+1}$ is strictly greater than the number of nulls in \D.
\item If we assume that \D can be homomorphically embedded into a path over $\PrEdge$, then $\TM_d$ rejects $\D$ due to (4.b).
Hence, we assume that this is not the case.
\item We obtain a contradiction from (2), (3), and (4).\qedhere
\end{enumerate}
\end{itemize}
\end{itemize}
\end{proof}

\end{document}